\let\pdfoutput=\undefined\fi
\chardef\@x10\chardef\@xv60
\def\tcitime{
\def\@time{%
  \@minute\time\@hour\@minute\divide\@hour\@xv
  \ifnum\@hour<\@x 0\fi\the\@hour:%
  \multiply\@hour\@xv\advance\@minute-\@hour
  \ifnum\@minute<\@x 0\fi\the\@minute
  }}%
\def\x@hyperref#1#2#3{%
   % Turn off various catcodes before reading parameter 4
   \catcode`\~ = 12
   \catcode`\$ = 12
   \catcode`\_ = 12
   \catcode`\# = 12
   \catcode`\& = 12
   \catcode`\% = 12
   \y@hyperref{#1}{#2}{#3}%
}
\def\y@hyperref#1#2#3#4{%
   #2\ref{#4}#3
   \catcode`\~ = 13
   \catcode`\$ = 3
   \catcode`\_ = 8
   \catcode`\# = 6
   \catcode`\& = 4
   \catcode`\% = 14
}
\def\QCTOpt[#1]#2{%
  \def\QCTOptB{#1}
  \def\QCTOptA{#2}
}
\def\QCTNOpt#1{%
  \def\QCTOptA{#1}
  \let\QCTOptB\empty
}
\def\Qct{%
  \@ifnextchar[{%
    \QCTOpt}{\QCTNOpt}
}
\def\QCBOpt[#1]#2{%
  \def\QCBOptB{#1}%
  \def\QCBOptA{#2}%
}
\def\QCBNOpt#1{%
  \def\QCBOptA{#1}%
  \let\QCBOptB\empty
}
\def\Qcb{%
  \@ifnextchar[{%
    \QCBOpt}{\QCBNOpt}%
}
\def\PrepCapArgs{%
  \ifx\QCBOptA\empty
    \ifx\QCTOptA\empty
      {}%
    \else
      \ifx\QCTOptB\empty
        {\QCTOptA}%
      \else
        [\QCTOptB]{\QCTOptA}%
      \fi
    \fi
  \else
    \ifx\QCBOptA\empty
      {}%
    \else
      \ifx\QCBOptB\empty
        {\QCBOptA}%
      \else
        [\QCBOptB]{\QCBOptA}%
      \fi
    \fi
  \fi
}
\def\GRAPHICSPS#1{%
 \ifcase\GRAPHICSTYPE%\GRAPHICSTYPE=0
   \special{ps: #1}%
 \or%\GRAPHICSTYPE=1
   \special{language "PS", include "#1"}%
%%%\or%\GRAPHICSTYPE=2
%%%  #1%
 \fi
}%
\def\graffile#1#2#3#4{%
%%% \ifnum\GRAPHICSTYPE=\tw@
%%%  %Following if using psfig
%%%  \@ifundefined{psfig}{\input psfig.tex}{}%
%%%  \psfig{file=#1, height=#3, width=#2}%
%%% \else
  %Following for all others
  % JCS - added BOXTHEFRAME, see below
    \bgroup
	   \@inlabelfalse
       \leavevmode
       \@ifundefined{bbl@deactivate}{\def~{\string~}}{\activesoff}%
        \raise -#4 \BOXTHEFRAME{%
           \hbox to #2{\raise #3\hbox to #2{\null #1\hfil}}}%
    \egroup
}%
\def\draftbox#1#2#3#4{%
 \leavevmode\raise -#4 \hbox{%
  \frame{\rlap{\protect\tiny #1}\hbox to #2%
   {\vrule height#3 width\z@ depth\z@\hfil}%
  }%
 }%
}%
\let\nographics=\@msidraft
\newif\ifwasdraft
\def\GRAPHIC#1#2#3#4#5{%
   \ifnum\@msidraft=\@ne\draftbox{#2}{#3}{#4}{#5}%
   \else\graffile{#1}{#3}{#4}{#5}%
   \fi
}
\def\addtoLaTeXparams#1{%
    \edef\LaTeXparams{\LaTeXparams #1}}%
\newif\ifBoxFrame \BoxFramefalse
\newif\ifOverFrame \OverFramefalse
\newif\ifUnderFrame \UnderFramefalse
\def\BOXTHEFRAME#1{%
   \hbox{%
      \ifBoxFrame
         \frame{#1}%
      \else
         {#1}%
      \fi
   }%
}
\def\doFRAMEparams#1{\BoxFramefalse\OverFramefalse\UnderFramefalse\readFRAMEparams#1\end}%
\def\readFRAMEparams#1{%
 \ifx#1\end%
  \let\next=\relax
  \else
  \ifx#1i\dispkind=\z@\fi
  \ifx#1d\dispkind=\@ne\fi
  \ifx#1f\dispkind=\tw@\fi
  \ifx#1t\addtoLaTeXparams{t}\fi
  \ifx#1b\addtoLaTeXparams{b}\fi
  \ifx#1p\addtoLaTeXparams{p}\fi
  \ifx#1h\addtoLaTeXparams{h}\fi
  \ifx#1X\BoxFrametrue\fi
  \ifx#1O\OverFrametrue\fi
  \ifx#1U\UnderFrametrue\fi
  \ifx#1w
    \ifnum\@msidraft=1\wasdrafttrue\else\wasdraftfalse\fi
    \@msidraft=\@ne
  \fi
  \let\next=\readFRAMEparams
  \fi
 \next
 }%
\def\IFRAME#1#2#3#4#5#6{%
      \bgroup
      \let\QCTOptA\empty
      \let\QCTOptB\empty
      \let\QCBOptA\empty
      \let\QCBOptB\empty
      #6%
      \parindent=0pt
      \leftskip=0pt
      \rightskip=0pt
      \setbox0=\hbox{\QCBOptA}%
      \@tempdima=#1\relax
      \ifOverFrame
          % Do this later
          \typeout{This is not implemented yet}%
          \show\HELP
      \else
         \ifdim\wd0>\@tempdima
            \advance\@tempdima by \@tempdima
            \ifdim\wd0 >\@tempdima
               \setbox1 =\vbox{%
                  \unskip\hbox to \@tempdima{\hfill\GRAPHIC{#5}{#4}{#1}{#2}{#3}\hfill}%
                  \unskip\hbox to \@tempdima{\parbox[b]{\@tempdima}{\QCBOptA}}%
               }%
               \wd1=\@tempdima
            \else
               \textwidth=\wd0
               \setbox1 =\vbox{%
                 \noindent\hbox to \wd0{\hfill\GRAPHIC{#5}{#4}{#1}{#2}{#3}\hfill}\\%
                 \noindent\hbox{\QCBOptA}%
               }%
               \wd1=\wd0
            \fi
         \else
            \ifdim\wd0>0pt
              \hsize=\@tempdima
              \setbox1=\vbox{%
                \unskip\GRAPHIC{#5}{#4}{#1}{#2}{0pt}%
                \break
                \unskip\hbox to \@tempdima{\hfill \QCBOptA\hfill}%
              }%
              \wd1=\@tempdima
           \else
              \hsize=\@tempdima
              \setbox1=\vbox{%
                \unskip\GRAPHIC{#5}{#4}{#1}{#2}{0pt}%
              }%
              \wd1=\@tempdima
           \fi
         \fi
         \@tempdimb=\ht1
         %\advance\@tempdimb by \dp1
         \advance\@tempdimb by -#2
         \advance\@tempdimb by #3
         \leavevmode
         \raise -\@tempdimb \hbox{\box1}%
      \fi
      \egroup%
}%
\def\DFRAME#1#2#3#4#5{%
  \vspace\topsep
  \hfil\break
  \bgroup
     \leftskip\@flushglue
	 \rightskip\@flushglue
	 \parindent\z@
	 \parfillskip\z@skip
     \let\QCTOptA\empty
     \let\QCTOptB\empty
     \let\QCBOptA\empty
     \let\QCBOptB\empty
	 \vbox\bgroup
        \ifOverFrame 
           #5\QCTOptA\par
        \fi
        \GRAPHIC{#4}{#3}{#1}{#2}{\z@}%
        \ifUnderFrame 
           \break#5\QCBOptA
        \fi
	 \egroup
  \egroup
  \vspace\topsep
  \break
}%
\def\FFRAME#1#2#3#4#5#6#7{%
 %If float.sty loaded and float option is 'h', change to 'H'  (gp) 1998/09/05
  \@ifundefined{floatstyle}
    {%floatstyle undefined (and float.sty not present), no change
     \begin{figure}[#1]%
    }
    {%floatstyle DEFINED
	 \ifx#1h%Only the h parameter, change to H
      \begin{figure}[H]%
	 \else
      \begin{figure}[#1]%
	 \fi
	}
  \let\QCTOptA\empty
  \let\QCTOptB\empty
  \let\QCBOptA\empty
  \let\QCBOptB\empty
  \ifOverFrame
    #4
    \ifx\QCTOptA\empty
    \else
      \ifx\QCTOptB\empty
        \caption{\QCTOptA}%
      \else
        \caption[\QCTOptB]{\QCTOptA}%
      \fi
    \fi
    \ifUnderFrame\else
      \label{#5}%
    \fi
  \else
    \UnderFrametrue%
  \fi
  \begin{center}\GRAPHIC{#7}{#6}{#2}{#3}{\z@}\end{center}%
  \vspace{-11pt}
  \ifUnderFrame
    #4
    \ifx\QCBOptA\empty
      \caption{}%
    \else
      \ifx\QCBOptB\empty
        \caption{\QCBOptA}%
      \else
        \caption[\QCBOptB]{\QCBOptA}%
      \fi
    \fi
    \label{#5}%
  \fi
  \end{figure}%
 }%
\def\makeactives{
  \catcode`\"=\active
  \catcode`\;=\active
  \catcode`\:=\active
  \catcode`\'=\active
  \catcode`\~=\active
}
   \gdef\activesoff{%
      \def"{\string"}%
      \def;{\string;}%
      \def:{\string:}%
      \def'{\string'}%
      \def~{\string~}%
      %\bbl@deactivate{"}%
      %\bbl@deactivate{;}%
      %\bbl@deactivate{:}%
      %\bbl@deactivate{'}%
    }
\def\FRAME#1#2#3#4#5#6#7#8{%
 \bgroup
 \ifnum\@msidraft=\@ne
   \wasdrafttrue
 \else
   \wasdraftfalse%
 \fi
 \def\LaTeXparams{}%
 \dispkind=\z@
 \def\LaTeXparams{}%
 \doFRAMEparams{#1}%
 \ifnum\dispkind=\z@\IFRAME{#2}{#3}{#4}{#7}{#8}{#5}\else
  \ifnum\dispkind=\@ne\DFRAME{#2}{#3}{#7}{#8}{#5}\else
   \ifnum\dispkind=\tw@
    \edef\@tempa{\noexpand\FFRAME{\LaTeXparams}}%
    \@tempa{#2}{#3}{#5}{#6}{#7}{#8}%
    \fi
   \fi
  \fi
  \ifwasdraft\@msidraft=1\else\@msidraft=0\fi{}%
  \egroup
 }%
\def\TEXUX#1{"texux"}
\long\def\QQQ#1#2{%
     \long\expandafter\def\csname#1\endcsname{#2}}%
\long\def\QQA#1#2{}%
\def\QTR#1#2{{\csname#1\endcsname {#2}}}%
\def\EXPAND#1[#2]#3{}%
\def\NOEXPAND#1[#2]#3{}%
\def\LaTeXparent#1{}%
\def\ChildStyles#1{}%
\def\ChildDefaults#1{}%
\def\QTagDef#1#2#3{}%
  \providecommand{\UNICODE}[2][]{\protect\rule{.1in}{.1in}}
  \providecommand{\U}[1]{\protect\rule{.1in}{.1in}}
\def\QQfnmark#1{\footnotemark}
 \def\abstract{%
  \if@twocolumn
   \section*{Abstract (Not appropriate in this style!)}%
   \else \small 
   \begin{center}{\bf Abstract\vspace{-.5em}\vspace{\z@}}\end{center}%
   \quotation 
   \fi
  }%
   \def\registered{\relax\ifmmode{}\r@gistered
                    \else$\m@th\r@gistered$\fi}%
 \def\r@gistered{^{\ooalign
  {\hfil\raise.07ex\hbox{$\scriptstyle\rm\text{R}$}\hfil\crcr
  \mathhexbox20D}}}}{}%
\newdimen\theight
\def\newfmtname{LaTeX2e}
  \DeclareOldFontCommand{\rm}{\normalfont\rmfamily}{\mathrm}
  \DeclareOldFontCommand{\sf}{\normalfont\sffamily}{\mathsf}
  \DeclareOldFontCommand{\tt}{\normalfont\ttfamily}{\mathtt}
  \DeclareOldFontCommand{\bf}{\normalfont\bfseries}{\mathbf}
  \DeclareOldFontCommand{\it}{\normalfont\itshape}{\mathit}
  \DeclareOldFontCommand{\sl}{\normalfont\slshape}{\@nomath\sl}
  \DeclareOldFontCommand{\sc}{\normalfont\scshape}{\@nomath\sc}
\def\alpha{{\Greekmath 010B}}%
\def\beta{{\Greekmath 010C}}%
\def\gamma{{\Greekmath 010D}}%
\def\delta{{\Greekmath 010E}}%
\def\epsilon{{\Greekmath 010F}}%
\def\zeta{{\Greekmath 0110}}%
\def\eta{{\Greekmath 0111}}%
\def\theta{{\Greekmath 0112}}%
\def\iota{{\Greekmath 0113}}%
\def\kappa{{\Greekmath 0114}}%
\def\lambda{{\Greekmath 0115}}%
\def\mu{{\Greekmath 0116}}%
\def\nu{{\Greekmath 0117}}%
\def\xi{{\Greekmath 0118}}%
\def\pi{{\Greekmath 0119}}%
\def\rho{{\Greekmath 011A}}%
\def\sigma{{\Greekmath 011B}}%
\def\tau{{\Greekmath 011C}}%
\def\upsilon{{\Greekmath 011D}}%
\def\phi{{\Greekmath 011E}}%
\def\chi{{\Greekmath 011F}}%
\def\psi{{\Greekmath 0120}}%
\def\omega{{\Greekmath 0121}}%
\def\varepsilon{{\Greekmath 0122}}%
\def\vartheta{{\Greekmath 0123}}%
\def\varpi{{\Greekmath 0124}}%
\def\varrho{{\Greekmath 0125}}%
\def\varsigma{{\Greekmath 0126}}%
\def\varphi{{\Greekmath 0127}}%
\def\nabla{{\Greekmath 0272}}
\def\FindBoldGroup{%
   {\setbox0=\hbox{$\mathbf{x\global\edef\theboldgroup{\the\mathgroup}}$}}%
}
\def\Greekmath#1#2#3#4{%
    \if@compatibility
        \ifnum\mathgroup=\symbold
           \mathchoice{\mbox{\boldmath$\displaystyle\mathchar"#1#2#3#4$}}%
                      {\mbox{\boldmath$\textstyle\mathchar"#1#2#3#4$}}%
                      {\mbox{\boldmath$\scriptstyle\mathchar"#1#2#3#4$}}%
                      {\mbox{\boldmath$\scriptscriptstyle\mathchar"#1#2#3#4$}}%
        \else
           \mathchar"#1#2#3#4% 
        \fi 
    \else 
        \FindBoldGroup
        \ifnum\mathgroup=\theboldgroup % For 2e
           \mathchoice{\mbox{\boldmath$\displaystyle\mathchar"#1#2#3#4$}}%
                      {\mbox{\boldmath$\textstyle\mathchar"#1#2#3#4$}}%
                      {\mbox{\boldmath$\scriptstyle\mathchar"#1#2#3#4$}}%
                      {\mbox{\boldmath$\scriptscriptstyle\mathchar"#1#2#3#4$}}%
        \else
           \mathchar"#1#2#3#4% 
        \fi     	    
	  \fi}
\newif\ifGreekBold  \GreekBoldfalse
\let\SAVEPBF=\pbf
\def\pbf{\GreekBoldtrue\SAVEPBF}%
  \newcounter{equationnumber}  
  \def\mathletters{%
     \addtocounter{equation}{1}
     \edef\@currentlabel{\theequation}%
     \setcounter{equationnumber}{\c@equation}
     \setcounter{equation}{0}%
     \edef\theequation{\@currentlabel\noexpand\alph{equation}}%
  }
    \def\BibTeX{{\rm B\kern-.05em{\sc i\kern-.025em b}\kern-.08em
                 T\kern-.1667em\lower.7ex\hbox{E}\kern-.125emX}}}{}%
\def\AmS{{\protect\usefont{OMS}{cmsy}{m}{n}%
                A\kern-.1667em\lower.5ex\hbox{M}\kern-.125emS}}}{}%
\def\@@eqncr{\let\@tempa\relax
    \ifcase\@eqcnt \def\@tempa{& & &}\or \def\@tempa{& &}%
      \else \def\@tempa{&}\fi
     \@tempa
     \if@eqnsw
        \iftag@
           \@taggnum
        \else
           \@eqnnum\stepcounter{equation}%
        \fi
     \fi
     \global\tag@false
     \global\@eqnswtrue
     \global\@eqcnt\z@\cr}
\def\TCItag{\@ifnextchar*{\@TCItagstar}{\@TCItag}}
\def\@TCItag#1{%
    \global\tag@true
    \global\def\@taggnum{(#1)}%
    \global\def\@currentlabel{#1}}
\def\@TCItagstar*#1{%
    \global\tag@true
    \global\def\@taggnum{#1}%
    \global\def\@currentlabel{#1}}
\def\tint{\msi@int\textstyle\int}%
\def\tiint{\msi@int\textstyle\iint}%
\def\tiiint{\msi@int\textstyle\iiint}%
\def\tiiiint{\msi@int\textstyle\iiiint}%
\def\tidotsint{\msi@int\textstyle\idotsint}%
\def\toint{\msi@int\textstyle\oint}%
\newtoks\temptoksa
\newtoks\temptoksb
\newtoks\temptoksc
\def\msi@int#1#2{%
 \def\@temp{{#1#2\the\temptoksc_{\the\temptoksa}^{\the\temptoksb}}}%   
 \futurelet\@nextcs
 \@int
}
\def\@int{%
   \ifx\@nextcs\limits
      \typeout{Found limits}%
      \temptoksc={\limits}%
	  \let\@next\@intgobble%
   \else\ifx\@nextcs\nolimits
      \typeout{Found nolimits}%
      \temptoksc={\nolimits}%
	  \let\@next\@intgobble%
   \else
      \typeout{Did not find limits or no limits}%
      \temptoksc={}%
      \let\@next\msi@limits%
   \fi\fi
   \@next   
}%
\def\@intgobble#1{%
   \typeout{arg is #1}%
   \msi@limits
}
\def\msi@limits{%
   \temptoksa={}%
   \temptoksb={}%
   \@ifnextchar_{\@limitsa}{\@limitsb}%
}
\def\@limitsa_#1{%
   \temptoksa={#1}%
   \@ifnextchar^{\@limitsc}{\@temp}%
}
\def\@limitsb{%
   \@ifnextchar^{\@limitsc}{\@temp}%
}
\def\@limitsc^#1{%
   \temptoksb={#1}%
   \@ifnextchar_{\@limitsd}{\@temp}%   
}
\def\@limitsd_#1{%
   \temptoksa={#1}%
   \@temp
}
\def\dint{\msi@int\displaystyle\int}%
\def\diint{\msi@int\displaystyle\iint}%
\def\diiint{\msi@int\displaystyle\iiint}%
\def\diiiint{\msi@int\displaystyle\iiiint}%
\def\didotsint{\msi@int\displaystyle\idotsint}%
\def\doint{\msi@int\displaystyle\oint}%
\def\dsum{\mathop{\displaystyle \sum }}%
\def\ExitTCILatex{\makeatother }
\if@compatibility\message{amsmath already loaded}\fi\aftergroup\ExitTCILatex}
\if@compatibility\message{amstex already loaded}\fi\aftergroup\ExitTCILatex}
\if@compatibility\message{amsgen already loaded}\fi\aftergroup\ExitTCILatex}
\let\DOTSI\relax
\def\RIfM@{\relax\ifmmode}%
\def\FN@{\futurelet\next}%
\def\iint{\DOTSI\intno@\tw@\FN@\ints@}%
\def\iiint{\DOTSI\intno@\thr@@\FN@\ints@}%
\def\iiiint{\DOTSI\intno@4 \FN@\ints@}%
\def\idotsint{\DOTSI\intno@\z@\FN@\ints@}%
\def\ints@{\findlimits@\ints@@}%
\newif\iflimtoken@
\newif\iflimits@
\def\findlimits@{\limtoken@true\ifx\next\limits\limits@true
 \else\ifx\next\nolimits\limits@false\else
 \limtoken@false\ifx\ilimits@\nolimits\limits@false\else
 \ifinner\limits@false\else\limits@true\fi\fi\fi\fi}%
\def\multint@{\int\ifnum\intno@=\z@\intdots@                          %1
 \else\intkern@\fi                                                    %2
 \ifnum\intno@>\tw@\int\intkern@\fi                                   %3
 \ifnum\intno@>\thr@@\int\intkern@\fi                                 %4
 \int}%                                                               %5
\def\multintlimits@{\intop\ifnum\intno@=\z@\intdots@\else\intkern@\fi
 \ifnum\intno@>\tw@\intop\intkern@\fi
 \ifnum\intno@>\thr@@\intop\intkern@\fi\intop}%
\def\intic@{%
    \mathchoice{\hskip.5em}{\hskip.4em}{\hskip.4em}{\hskip.4em}}%
\def\negintic@{\mathchoice
 {\hskip-.5em}{\hskip-.4em}{\hskip-.4em}{\hskip-.4em}}%
\def\ints@@{\iflimtoken@                                              %1
 \def\ints@@@{\iflimits@\negintic@
   \mathop{\intic@\multintlimits@}\limits                             %2
  \else\multint@\nolimits\fi                                          %3
  \eat@}%                                                             %4
 \else                                                                %5
 \def\ints@@@{\iflimits@\negintic@
  \mathop{\intic@\multintlimits@}\limits\else
  \multint@\nolimits\fi}\fi\ints@@@}%
\def\intkern@{\mathchoice{\!\!\!}{\!\!}{\!\!}{\!\!}}%
\def\plaincdots@{\mathinner{\cdotp\cdotp\cdotp}}%
\def\intdots@{\mathchoice{\plaincdots@}%
 {{\cdotp}\mkern1.5mu{\cdotp}\mkern1.5mu{\cdotp}}%
 {{\cdotp}\mkern1mu{\cdotp}\mkern1mu{\cdotp}}%
 {{\cdotp}\mkern1mu{\cdotp}\mkern1mu{\cdotp}}}%
\def\RIfM@{\relax\protect\ifmmode}
\def\text{\RIfM@\expandafter\text@\else\expandafter\mbox\fi}
\let\nfss@text\text
\def\text@#1{\mathchoice
   {\textdef@\displaystyle\f@size{#1}}%
   {\textdef@\textstyle\tf@size{\firstchoice@false #1}}%
   {\textdef@\textstyle\sf@size{\firstchoice@false #1}}%
   {\textdef@\textstyle \ssf@size{\firstchoice@false #1}}%
   \glb@settings}
\def\textdef@#1#2#3{\hbox{{%
                    \everymath{#1}%
                    \let\f@size#2\selectfont
                    #3}}}
\newif\iffirstchoice@
\def\Let@{\relax\iffalse{\fi\let\\=\cr\iffalse}\fi}%
\def\vspace@{\def\vspace##1{\crcr\noalign{\vskip##1\relax}}}%
\def\multilimits@{\bgroup\vspace@\Let@
 \baselineskip\fontdimen10 \scriptfont\tw@
 \advance\baselineskip\fontdimen12 \scriptfont\tw@
 \lineskip\thr@@\fontdimen8 \scriptfont\thr@@
 \lineskiplimit\lineskip
 \vbox\bgroup\ialign\bgroup\hfil$\m@th\scriptstyle{##}$\hfil\crcr}%
\def\Sb{_\multilimits@}%
\def\endSb{\crcr\egroup\egroup\egroup}%
\def\Sp{^\multilimits@}%
\newdimen\ex@
\def\rightarrowfill@#1{$#1\m@th\mathord-\mkern-6mu\cleaders
 \hbox{$#1\mkern-2mu\mathord-\mkern-2mu$}\hfill
 \mkern-6mu\mathord\rightarrow$}%
\def\leftarrowfill@#1{$#1\m@th\mathord\leftarrow\mkern-6mu\cleaders
 \hbox{$#1\mkern-2mu\mathord-\mkern-2mu$}\hfill\mkern-6mu\mathord-$}%
\def\leftrightarrowfill@#1{$#1\m@th\mathord\leftarrow
\mkern-6mu\cleaders
 \hbox{$#1\mkern-2mu\mathord-\mkern-2mu$}\hfill
 \mkern-6mu\mathord\rightarrow$}%
\def\overrightarrow{\mathpalette\overrightarrow@}%
\def\overrightarrow@#1#2{\vbox{\ialign{##\crcr\rightarrowfill@#1\crcr
 \noalign{\kern-\ex@\nointerlineskip}$\m@th\hfil#1#2\hfil$\crcr}}}%
\def\overleftarrow{\mathpalette\overleftarrow@}%
\def\overleftarrow@#1#2{\vbox{\ialign{##\crcr\leftarrowfill@#1\crcr
 \noalign{\kern-\ex@\nointerlineskip}$\m@th\hfil#1#2\hfil$\crcr}}}%
\def\overleftrightarrow{\mathpalette\overleftrightarrow@}%
\def\overleftrightarrow@#1#2{\vbox{\ialign{##\crcr
   \leftrightarrowfill@#1\crcr
 \noalign{\kern-\ex@\nointerlineskip}$\m@th\hfil#1#2\hfil$\crcr}}}%
\def\underrightarrow{\mathpalette\underrightarrow@}%
\def\underrightarrow@#1#2{\vtop{\ialign{##\crcr$\m@th\hfil#1#2\hfil
  $\crcr\noalign{\nointerlineskip}\rightarrowfill@#1\crcr}}}%
\def\underleftarrow{\mathpalette\underleftarrow@}%
\def\underleftarrow@#1#2{\vtop{\ialign{##\crcr$\m@th\hfil#1#2\hfil
  $\crcr\noalign{\nointerlineskip}\leftarrowfill@#1\crcr}}}%
\def\underleftrightarrow{\mathpalette\underleftrightarrow@}%
\def\underleftrightarrow@#1#2{\vtop{\ialign{##\crcr$\m@th
  \hfil#1#2\hfil$\crcr
 \noalign{\nointerlineskip}\leftrightarrowfill@#1\crcr}}}%
\def\qopnamewl@#1{\mathop{\operator@font#1}\nlimits@}
\let\nlimits@\displaylimits
\def\setboxz@h{\setbox\z@\hbox}
\def\varlim@#1#2{\mathop{\vtop{\ialign{##\crcr
 \hfil$#1\m@th\operator@font lim$\hfil\crcr
 \noalign{\nointerlineskip}#2#1\crcr
 \noalign{\nointerlineskip\kern-\ex@}\crcr}}}}
 \def\rightarrowfill@#1{\m@th\setboxz@h{$#1-$}\ht\z@\z@
  $#1\copy\z@\mkern-6mu\cleaders
  \hbox{$#1\mkern-2mu\box\z@\mkern-2mu$}\hfill
  \mkern-6mu\mathord\rightarrow$}
\def\leftarrowfill@#1{\m@th\setboxz@h{$#1-$}\ht\z@\z@
  $#1\mathord\leftarrow\mkern-6mu\cleaders
  \hbox{$#1\mkern-2mu\copy\z@\mkern-2mu$}\hfill
  \mkern-6mu\box\z@$}
\def\projlim{\qopnamewl@{proj\,lim}}
\def\injlim{\qopnamewl@{inj\,lim}}
\def\varinjlim{\mathpalette\varlim@\rightarrowfill@}
\def\varprojlim{\mathpalette\varlim@\leftarrowfill@}
\def\varliminf{\mathpalette\varliminf@{}}
\def\varliminf@#1{\mathop{\underline{\vrule\@depth.2\ex@\@width\z@
   \hbox{$#1\m@th\operator@font lim$}}}}
\def\varlimsup{\mathpalette\varlimsup@{}}
\def\varlimsup@#1{\mathop{\overline
  {\hbox{$#1\m@th\operator@font lim$}}}}
\def\align{\@verbatim \frenchspacing\@vobeyspaces \@alignverbatim
You are using the "align" environment in a style in which it is not defined.}
\let\csname endalign*\endcsname =\endtrivlist
\def\alignat{\@verbatim \frenchspacing\@vobeyspaces \@alignatverbatim
You are using the "alignat" environment in a style in which it is not defined.}
\let\csname endalignat*\endcsname =\endtrivlist
\def\xalignat{\@verbatim \frenchspacing\@vobeyspaces \@xalignatverbatim
You are using the "xalignat" environment in a style in which it is not defined.}
\let\csname endxalignat*\endcsname =\endtrivlist
\def\gather{\@verbatim \frenchspacing\@vobeyspaces \@gatherverbatim
You are using the "gather" environment in a style in which it is not defined.}
\let\csname endgather*\endcsname =\endtrivlist
\def\multiline{\@verbatim \frenchspacing\@vobeyspaces \@multilineverbatim
You are using the "multiline" environment in a style in which it is not defined.}
\let\csname endmultiline*\endcsname =\endtrivlist
\def\arrax{\@verbatim \frenchspacing\@vobeyspaces \@arraxverbatim
You are using a type of "array" construct that is only allowed in AmS-LaTeX.}
\def\tabulax{\@verbatim \frenchspacing\@vobeyspaces \@tabulaxverbatim
You are using a type of "tabular" construct that is only allowed in AmS-LaTeX.}
\let\csname endarrax*\endcsname =\endtrivlist
\let\csname endtabulax*\endcsname =\endtrivlist
 \def\endequation{%
     \ifmmode\ifinner % FLEQN hack
      \iftag@
        \addtocounter{equation}{-1} % undo the increment made in the begin part
        $\hfil
           \displaywidth\linewidth\@taggnum\egroup \endtrivlist
        \global\tag@false
        \global\@ignoretrue   
      \else
        $\hfil
           \displaywidth\linewidth\@eqnnum\egroup \endtrivlist
        \global\tag@false
        \global\@ignoretrue 
      \fi
     \else   
      \iftag@
        \addtocounter{equation}{-1} % undo the increment made in the begin part
        \eqno \hbox{\@taggnum}
        \global\tag@false%
        $$\global\@ignoretrue
      \else
        \eqno \hbox{\@eqnnum}% $$ BRACE MATCHING HACK
        $$\global\@ignoretrue
      \fi
     \fi\fi
 } 
 \newif\iftag@ \tag@false
 \def\TCItag{\@ifnextchar*{\@TCItagstar}{\@TCItag}}
 \def\@TCItag#1{%
     \global\tag@true
     \global\def\@taggnum{(#1)}%
     \global\def\@currentlabel{#1}}
 \def\@TCItagstar*#1{%
     \global\tag@true
     \global\def\@taggnum{#1}%
     \global\def\@currentlabel{#1}}
     \def\tag{\@ifnextchar*{\@tagstar}{\@tag}}
     \def\@tag#1{%
         \global\tag@true
         \global\def\@taggnum{(#1)}}
     \def\@tagstar*#1{%
         \global\tag@true
         \global\def\@taggnum{#1}}
\begin{document}

\title{Analysis, detection and control of secure and safe cyber-physical
control systems in a unified framework}
\author{Linlin~Li, Steven~X.~Ding, Maiying Zhong, Dong Zhao, Yang Shi\thanks{%
This work has been supported by the National Natural Science Foundation of
China under Grants 62322303, and 62233012.} \thanks{%
L. Li is with School of Automation and Electrical Engineering, University of
Science and Technology Beijing, Beijing 100083, P. R. China, Email:
linlin.li@ustb.edu.cn} \thanks{%
S. X. Ding is with the Institute for Automatic Control and Complex Systems
(AKS), University of Duisburg-Essen, Germany. Email: steven.ding@uni-due.de} 
\thanks{%
M. Zhong is with the College of Electrical Engineering and Automation,
Shandong University of Science and Technology, Qingdao 266590, China. Email:
mzhong@buaa.edu.cn} \thanks{%
D. Zhao is with School of Cyber Science and Technology, Beihang University,
Beijing 100191, China. E-mail: dzhao@buaa.edu.cn} \thanks{%
Y. Shi is with Department of Mechanical Engineering, University of Victoria,
Victoria, BC, V8W 2Y2, Canada. E-mail: yshi@uvic.ca}}
\maketitle

\begin{abstract}
This paper deals with analysis, simultaneous detection of faults and
attacks, fault-tolerant control and attack-resilient of cyber-physical
control systems. In our recent work, it has been observed that an attack
detector driven by an input residual signal is capable of reliably detecting
attacks. In particular, observing system dynamics from the perspective of
the system input-output signal space reveals that attacks and system
uncertainties act on different system subspaces. These results motivate our
exploration of secure and safe cyber-physical control systems in the unified
framework of control and detection. The unified framework is proposed to
handle control and detection issues uniformly and in subspaces of system
input-output data. Its mathematical and control-theoretic basis is system
coprime factorizations with Bezout identity at its core. We firstly explore
those methods and schemes of the unified framework, which serve as the major
control-theoretic tool in our work. It is followed by re-visiting and
examining established attack detection and resilient control schemes. The
major part of our work is the endeavours to develop a control-theoretic
paradigm, in which analysis, simultaneous detection of faults and attacks,
fault-tolerant and attack-resilient control of cyber-physical control
systems are addressed in a unified manner.
\end{abstract}

%
% paper title
% can use linebreaks \\ within to get better formatting as desired
% Do not put math or special symbols in the title.

% with Application to CSTH Process}%

% The paper headers
%\markboth{Preprint Submitted to TFS}{Preprint Submitted to  TFS}

\begin{IEEEkeywords}
Attack detection, fault detection, fault-tolerant control, resilient control, stealthy attacks, system vulnerability, system privacy-preserving, unified framework
\end{IEEEkeywords}

\section{Introduction}

\IEEEPARstart{T}{oday's} automatic control systems as the centerpiece of
industrial cyber-physical systems (CPSs) are fully equipped with intelligent
sensors, actuators and a modern information infrastructure. It is a logic
consequence of ever increasing demands for system performance and production
efficiency that today's automatic control systems are of an extremely high
degree of integration, automation and complexity \cite{Ding2020}.
Maintaining reliable and safe operations of cyber-physical control systems
(CPCSs) are of elemental importance for optimally managing industrial CPSs
over the whole operation life-cycle. As an indispensable maintenance
functionality, real-time monitoring, fault detection (FD) and fault-tolerant
control (FTC) \cite{CRB2001,PFC00,Blanke06,Ding2008,Ding2020} are widely
integrated in CPCSs and run parallel to the embedded control systems. In a
traditional automatic control system, FD and FTC are mainly dedicated to
maintaining functionalities of sensors and actuators as these key components
become defective \cite{AET_SMO_book_2011,Ding2020}.

Recently, a new type of malfunctions, the so-called cyber-attacks on CPSs,
have drawn attention to the urgent need for developing new monitoring,
diagnosis and resilient control strategies \cite%
{TEIXEIRA-zero-attack_2015,DIBAJI2019-survey,DeruiDing2021-survey,TGXHV2020,Zhou2021IEEE-Proc,ZHANG2021,Survey-Segovia-Ferreira2024}%
. Cyber-attacks can not only considerably affect functionalities of sensors
and actuators, but also impair communications among the system components
and sub-systems. Differing from technical faults, cyber-attacks are
artificially created. As the main categories of malicious attacks,
denial-of-service (DoS) attacks \cite{QLSY-TAC-2018,DIBAJI2019-survey} and
deception attacks \cite{DIBAJI2019-survey}, attract major research
attention. By blocking the communication links of system components over
networks, DoS attacks can degrade system performance \cite{RWDS-AUTO-22018}.
The deception attacks are usually designed to cause system performance
degradation through modifying data packets without being detected and may
lead to immense damages during system operations \cite%
{Survey-attack-detection2018,DIBAJI2019-survey,TGXHV2020}. In particular, it
was shown that attackers with model knowledge can exploit specified system
dynamics to construct perfectly stealthy sensor and actuator attacks that
evade classical Luenberger or Kalman-filter-based residuals \cite%
{Mo2010false,Pasqualetti2013}. These results revealed intrinsic
vulnerabilities of passive observer-based detectors and provided a unifying
framework linking cyber-attack detection to classical concepts from fault
detection. Building on this foundation, a large body of work like
observer-based FD methods, unknown-input observers, parity relations, and
structured observer banks have been proposed to detect the so-called replay,
zero dynamics, covert attacks, false data injection attacks, also called
integrity stealthy attacks \cite%
{TEIXEIRA-zero-attack_2015,MS-TAC-2016,Chen2018attacks,DIBAJI2019-survey,Zhang2020attacks,Yang2022attacks,Shang2022attacks}%
. Complementary to it, active detection approaches such as dynamic
watermarking inject designed excitation signals into the control inputs,
rendering attacks statistically detectable \cite%
{Mo2015-Watermarked-detection}.

As a response to the security issues of CPSs, resilient control, where the
objective is to maintain closed-loop stability and acceptable performance
despite ongoing attacks, has received increasing attention from both the
research and application domains. As appropriate countermeasures, the
so-called resource-aware secure control methods \cite%
{HeemelsCDC2012,DeruiDing2021-survey}, like event-triggered control
algorithms or switching control mechanisms, are prevalent resilient control
schemes. These methods make the CPSs highly resilient against attacks by
means of optimal management of data communications among subsystems in the
CPS. Beyond detection, observer-based methods also play a key role in
resilient control. Typical architectures integrate attack detection with
controller reconfiguration or compensation mechanisms, e.g. via switching
observers and controllers or attack-tolerant state estimators \cite%
{Sandberg2015}. Further methods include model predictive control \cite%
{SZCDCX-AUTO-2024}, adaptive control \cite{AY-IS-2018}, and sliding mode
control \cite{YZDYF-AUTO-2023}.

The argument that a detailed understanding of attack mechanisms is
fundamental for assessing CPS vulnerabilities and informing the development
of defense strategies has spurred significantly recent interest in attack
design \cite%
{FQCTZ-TAC-2020,Shang2022attacks,ChenTAC2018,AY-TAC-2018,GYH-TCNS-2021,Sui2021,ZHOU2023110723,ZhangTAC2023}%
. Attack design presupposes that attackers are in the possession of partial
or full system knowledge. In this context, a further type of cyber-attacks,
the so-called eavesdropping attacks, play a key role \cite%
{DIBAJI2019-survey,Survey-Segovia-Ferreira2024}. Although such attacks do
not cause changes in system dynamics and performance degradation, they
enable adversary to gain system knowledge, which is especially critical when
an attacker uses such knowledge for designing stealthy attacks. This shift
has revealed new and fundamental privacy risks that motivated recent
research on privacy preserving and protection issues \cite%
{HanPrivacy2018,LUreviewARC2019,Kawano2020TAC,Survey-Segovia-Ferreira2024}.

In our recent work, we observed that an observer-based attack detector with
an output residual generator is less capable than an attack detector driven
by an input residual signal \cite{DLautomatica2022}. Stimulated by this
observation, our further exploration revealed meaningful evidence \cite%
{LDZZ2024,Ding2026} that

\begin{itemize}
\item attacks can be equivalently expressed by an unknown input in the
system input, which is exactly the input residual signal, and

\item attacks and system uncertainties, including faults, affect the system
dynamics in a dual manner.
\end{itemize}

In particular, when the system dynamics are examined from the perspective of
the system input-output signal space, it is plainly seen that attacks and
system uncertainties act on different subspaces. This prospect is strongly
obscured, as viewing a CPCS from the perspective of a classic closed-loop
configuration. From that point of view, the influences of attacks and system
uncertainties on the system dynamic are fully corrupted, which considerably
hinder a reliable attack detection and even simultaneous detection of
attacks and faults. These results have motivated us to explore secure and
safe CPCSs in the unified framework of control and detection \cite%
{LDZZ2024,Ding2026}.

On account of the cognition that both feedback control and fault detection
systems are driven by (output) residual signal, the unified framework has
been initially proposed to manage control and detection issues uniformly, in
which control and detection problems are handled in subspaces of system
input-output data, rather than in a closed-loop configuration. Its
groundwork is the output-input residual signals, serving as indicators that
fully characterize the system nominal behavior and uncertain dynamics,
including uncertainties induced by faults and cyber-attacks. The
mathematical and control-theoretic basis of the framework is coprime
factorizations of plant models and feedback controllers, where the Bezout
identity, defining a unimodular linear isomorphism between the input-output
signals and the residual signals, is at its core \cite{Ding2026}.

Inspired by these observations, the main objective of this paper is to
leverage the unified framework of control and detection to study issues of
analysis, simultaneous detection, fault-tolerant and resilient control of
cyber-physical control systems under attacks and faults. Our work comprises
three steps. We will firstly explore those methods and schemes in the
unified framework, which serve as the major control-theoretic tool to deal
with the related issues. It is followed by re-visiting and examining state
of the art methods of the relevant techniques, in particular attack
detection and resilient control schemes. The last step, the major part of
our work, is the endeavours to develop a control-theoretic paradigm, in
which analysis, simultaneous detection of faults and attacks, fault-tolerant
and attack-resilient control of CPCSs are unified addressed. Our work will
deliver the following contributions:

\begin{itemize}
\item Establishment of control-theoretic setting for the subsequent work on
secure CPCSs with system image and residual subspaces in its core
(Definitions \ref{Def3-1}-\ref{Def3-3}), gap metric as the distance between
two subspaces, Bezout identities and the associated transformations, and
coprime factorizations of the plant and controllers (Theorem \ref{Theo3-1}
and Corollary \ref{Co3-1}) as the basic mathematical and control-theoretic
tool;

\item Exploration of secure CPCS configurations, particularly the
Plug-and-Play (PnP) structure that builds the basic system configuration for
the proposed fault-tolerant and attack-resilient control scheme (Subsection %
\ref{subsection3-3-1});

\item Alternative and enhanced forms of three well-established attack
detection and resilient control schemes, the MTD and watermark methods
(Subsections \ref{subsection3-3-2}-\ref{subsection3-3-3}), the auxiliary
system-based detection method (Subsection \ref{subsectionV-A-2}), and
characterization of zero dynamics attacks in the unified framework
(Subsection \ref{subsection3-3-4});

\item Introduction of the concept of attack information potential between
the plant and control station that builds the basis for attack detection and
resilient control study (\ref{subsectionIV-B-A}), and proof of two
elementary forms of the system dynamic under cyber-attacks, (i) on the plant
side (Theorem \ref{Theo4-3}), and (ii) on the control station side (Theorem %
\ref{Theo4-4});

\item Demonstration of the duality between the faulty dynamic and attack
dynamic (Subsection \ref{subsectionIV-B-C});

\item Introduction of the concepts of attack detectability, as a general and
detection method-independent definition of stealthy attacks, the actuability
of attacks that is overlooked in the literature, but important for
addressing attack design issues, and proof of the existence conditions
(Subsection \ref{subsectionIV-C-A});

\item Exploration of image attacks as a general form of undetectable
(stealthy) attacks that simplifies (stealthy) attacks design (Subsection \ref%
{subsectionIV-C-A});

\item Introduction of attack stability margin as a measure of the system
vulnerability (Subsection \ref{subsectionIV-C-B}) and proof of a sufficient
condition of the system vulnerability under image attacks (Theorem \ref%
{Theo4-8});

\item Exploration of three schemes of simultaneous detection of faults and
attacks (Subsection \ref{SubsectionV-A});

\item Exploration of a fault-tolerant and attack-resilient control system (%
\ref{SubsectionV-B}) with a performance analysis (Theorems \ref{Theo5-1}-\ref%
{Theo5-2}) and a proof of the stability condition (Theorem \ref{Theo5-3});

\item Exploration of a subspace-based privacy-preserving scheme whose core
is a privacy filter generating an auxiliary signal (\ref{SubsectionV-C});

\item Introduction of the concepts of system-level privacy and $\mathcal{L}%
_{2}$-privacy based on the similarity of two system subspaces and measured
by gap metric (Definitions \ref{Def5-1}-\ref{Def5-2});

\item Formulation and solution of the $\mathcal{L}_{2}$-privacy-filter
design as an optimization problem (Theorem \ref{Theo5-5}).
\end{itemize}

\noindent \textbf{Notations}. Throughout this paper, standard notations
known in control theory and linear algebra are adopted. In addition, $\ell
_{2}$ is the space of square-summable sequences over $\left( -\infty ,\infty
\right) ,$ $\ell _{2}\left( [0,\infty )\right) $ is its subspace of causal
sequences. $\mathcal{H}_{2}$ is the Hardy space of $z$-transforms of
sequences in $\ell _{2}\left( [0,\infty )\right) .$ $\mathcal{RH}_{\infty }$ 
$\left( \mathcal{RL}_{\infty }\right) $ denotes the space of real-rational,
proper transfer functions with poles in the unit disk (without poles on $%
\left\vert z\right\vert =1$). $G^{\ast }(z)=G^{T}\left( z^{-1}\right) $ is
the conjugate of $G\left( z\right) .$

\section{Preliminaries and Problem Formulation}

\label{sec2}

\subsection{Configuration of feedback control systems}

The linear discrete-time invariant (LDTI) feedback control systems are
considered in this work and modelled by%
\begin{equation}
\left\{ 
\begin{array}{c}
y\left( z\right) =G\left( z\right) u\left( z\right) +r_{y,0}\left( z\right)
\\ 
u\left( z\right) =K\left( z\right) y\left( z\right) +r_{u,0}\left( z\right) ,%
\end{array}%
\right.  \label{eq2-1}
\end{equation}%
where $u\in \mathbb{R}^{p}$ and $y\in \mathbb{R}^{m}$ represent the plant
input and output vectors, respectively. $G\left( z\right) $ is the transfer
function matrix with the minimal state space realisation $\left(
A,B,C,D\right) $ described by 
\begin{equation}
\left\{ 
\begin{array}{l}
x(k+1)=Ax(k)+Bu(k) \\ 
y(k)=Cx(k)+Du(k)+r_{y,0}(k).%
\end{array}%
\right.  \label{eq2-2}
\end{equation}%
Here, $x\in \mathbb{R}^{n}$ is the state vector, matrices $A,B,C,D$ are
known and of appropriate dimensions. The transfer function matrix $K\left(
z\right) $ denotes an LDTI control system that will be described in the next
subsection, and $\left( r_{u,0},r_{y,0}\right) $ model a given control
command and uncertainties in the plant, respectively, and will be specified
in the sequel.

\subsection{System coprime factorizations \label{SubsecII-A}}

Coprime factorizations of transfer function matrices \cite{Tay1998,Zhou96}
play an essential role in our subsequent study and serves as a basic tool.
Given the plant model $G,$ the factorizations%
\begin{equation}
G(z)=N(z)M^{-1}(z)=\hat{M}^{-1}(z)\hat{N}(z),  \label{eq2-3}
\end{equation}%
$M(z),N(z),\hat{M}(z),\hat{N}(z)\in \mathcal{RH}_{\infty },$ are called
right coprime factorization (RCF) and left coprime factorization (LCF) of $%
G(z),$ respectively, if there exist $X(z),Y(z),\hat{X}(z),\hat{Y}(z)\in 
\mathcal{RH}_{\infty }$ such that 
\begin{equation}
\left[ 
\begin{array}{cc}
X(z) & Y(z)%
\end{array}%
\right] \left[ 
\begin{array}{c}
M(z) \\ 
N(z)%
\end{array}%
\right] =I,\left[ 
\begin{array}{cc}
-\hat{N}(z) & \hat{M}(z)%
\end{array}%
\right] \left[ 
\begin{array}{c}
-\hat{Y}(z) \\ 
\hat{X}(z)%
\end{array}%
\right] =I.  \label{eq2-4}
\end{equation}%
Note that $\left( X(z),Y(z)\right) $ and $\left( \hat{X}(z),\hat{Y}%
(z)\right) $ are left- and right coprime pairs (LCP and RCP), respectively.
It is known that 
\begin{equation}
\left[ 
\begin{array}{cc}
X(z) & Y(z)%
\end{array}%
\right] \left[ 
\begin{array}{c}
-\hat{Y}(z) \\ 
\hat{X}(z)%
\end{array}%
\right] =0\Longleftrightarrow X(z)\hat{Y}(z)=Y(z)\hat{X}(z).  \label{eq2-5}
\end{equation}%
The double Bezout identity \cite{Tay1998,Zhou96} gives a compact form of (%
\ref{eq2-3})-(\ref{eq2-4}),%
\begin{equation}
\left[ 
\begin{array}{cc}
X(z) & \text{ }Y(z) \\ 
-\hat{N}(z) & \text{ }\hat{M}(z)%
\end{array}%
\right] \left[ 
\begin{array}{cc}
M(z) & \text{ }-\hat{Y}(z) \\ 
N(z) & \text{ }\hat{X}(z)%
\end{array}%
\right] =I.  \label{eq2-6}
\end{equation}%
The Bezout identity (\ref{eq2-6}) plays an essential role in our subsequent
work. In particular, the state space realizations of the eight transfer
functions in (\ref{eq2-6}) and the associated alternative system
representations build the cornerstone of our work.

\textbf{The RCF }$\left( M(z),N(z)\right) $\textbf{\ is a state feedback
control system }given by%
\begin{gather}
\left\{ 
\begin{array}{l}
x(k+1)=A_{F}x(k)+BVv(k),A_{F}=A+BF \\ 
\left[ 
\begin{array}{c}
u(k) \\ 
y(k)%
\end{array}%
\right] =\left[ 
\begin{array}{c}
Fx(k)+Vv(k) \\ 
C_{F}x(k)+DVv(k)%
\end{array}%
\right] ,C_{F}=C+DF,%
\end{array}%
\right.  \label{eq2-7} \\
\left[ 
\begin{array}{c}
u(z) \\ 
y(z)%
\end{array}%
\right] =I_{G}(z)v(z),I_{G}(z)=\left[ 
\begin{array}{c}
M(z) \\ 
N(z)%
\end{array}%
\right] ,  \label{eq2-7a} \\
M=\left( A_{F},BV,F,V\right) ,N=\left( A_{F},BV,C_{F},DV\right) .
\label{eq2-8}
\end{gather}%
The system (\ref{eq2-7})/(\ref{eq2-7a}) is also called stable image
representation (SIR)\textbf{.} As the dual system of the SIR, \textbf{the
LCF }$\left( \hat{M}(z),\hat{N}(z)\right) $\textbf{\ is an observer-based
residual generator }described by%
\begin{gather}
\left\{ 
\begin{array}{l}
\hat{x}(k+1)=A_{L}\hat{x}(k)+B_{K}\left[ 
\begin{array}{c}
u(k) \\ 
y(k)%
\end{array}%
\right] ,A_{L}=A-LC \\ 
r_{y}(k)=C_{K}\hat{x}(k)+D_{K}\left[ 
\begin{array}{c}
u(k) \\ 
y(k)%
\end{array}%
\right] ,C_{K}=WC \\ 
B_{K}=\left[ 
\begin{array}{cc}
B_{L} & \text{ }-L%
\end{array}%
\right] ,B_{L}=B-LD,D_{K}=\left[ 
\begin{array}{cc}
WD & \text{ }W%
\end{array}%
\right] ,%
\end{array}%
\right.  \label{eq2-9} \\
r_{y}(z)=K_{G}(z)\left[ 
\begin{array}{c}
u(z) \\ 
y(z)%
\end{array}%
\right] ,K_{G}(z)=\left[ 
\begin{array}{cc}
-\hat{N}(z) & \text{ }\hat{M}\left( z\right)%
\end{array}%
\right] ,  \label{eq2-10} \\
\hat{M}=\left( A_{L},-L,C_{K},W\right) ,\hat{N}=\left(
A_{L},B_{L},C_{K},WD\right) ,  \label{eq2-8a}
\end{gather}%
which is called stable kernel representation (SKR). The RCP and LCP, $\left( 
\hat{X}(z),\hat{Y}(z)\right) $ and $\left( X(z),Y(z)\right) ,$ are \textbf{%
observer-based state feedback control and input residual generator systems},
respectively, and modelled by%
\begin{gather}
\left\{ 
\begin{array}{l}
\hat{x}(k+1)=A_{F}\hat{x}(k)+LW^{-1}r_{y}(k), \\ 
\left[ 
\begin{array}{c}
u(k) \\ 
y(k)%
\end{array}%
\right] =\left[ 
\begin{array}{c}
F \\ 
C_{F}%
\end{array}%
\right] \hat{x}(k)+\left[ 
\begin{array}{c}
0 \\ 
W^{-1}%
\end{array}%
\right] r_{y}(k),%
\end{array}%
\right.  \label{eq2-11} \\
\left[ 
\begin{array}{c}
u(z) \\ 
y(z)%
\end{array}%
\right] =\left[ 
\begin{array}{c}
-\hat{Y}(z) \\ 
\hat{X}(z)%
\end{array}%
\right] r_{y}(z),  \label{eq2-12} \\
\hat{Y}=\left( A_{F},-LW^{-1},F,0\right) ,\hat{X}=\left(
A_{F},LW^{-1},C_{F},W^{-1}\right) ,  \label{eq2-8d} \\
\left\{ 
\begin{array}{l}
\hat{x}(k+1)=A_{L}\hat{x}(k)+B_{K}\left[ 
\begin{array}{c}
u(k) \\ 
y(k)%
\end{array}%
\right] \\ 
r_{u}(k)=V^{-1}F\hat{x}(k)+\left[ 
\begin{array}{cc}
V^{-1} & \text{ }0%
\end{array}%
\right] \left[ 
\begin{array}{c}
u(k) \\ 
y(k)%
\end{array}%
\right] ,%
\end{array}%
\right.  \label{eq2-13} \\
r_{u}(z)=\left[ 
\begin{array}{cc}
X(z) & \text{ }Y\left( z\right)%
\end{array}%
\right] \left[ 
\begin{array}{c}
u(z) \\ 
y(z)%
\end{array}%
\right] ,  \label{eq2-14} \\
X=\left( A_{L},-B_{L},V^{-1}F,V^{-1}\right) ,Y=\left(
A_{L},-L,V^{-1}F,0\right) .  \label{eq2-8c}
\end{gather}%
Here, $F,L$ are selected such that $A_{F}$ and $A_{L}$ are Schur matrices,
and $W$ and $V$ are invertible matrices \cite{Ding2020,Zhou96}.

\begin{remark}
For the sake of simplicity but without loss of generality, it is often assumed in
our subsequent work that $W=I,V=I.$ 
\end{remark}

Notice that (\ref{eq2-12})\textbf{\ }and (\ref{eq2-14})\textbf{\ }imply 
\begin{gather*}
u(z)=-\hat{Y}(z)r_{y}(z)=-\hat{Y}(z)\hat{X}^{-1}(z)y(z), \\
X(z)u(z)+Y(z)y(z)=r_{u}(z)=0 \\
\Longrightarrow u(z)=-X^{-1}(z)Y(z)y(z).
\end{gather*}%
That is, the control law $K(z)$ in (\ref{eq2-1}) is given by 
\begin{equation}
K(z)=-\hat{Y}(z)\hat{X}^{-1}(z)=-X^{-1}(z)Y(z)  \label{eq2-18}
\end{equation}%
with $\left( \hat{X}(z),\hat{Y}(z)\right) $ and $\left( X(z),Y(z)\right) $
as its RCF and LCF. In other words, the feedback controller in (\ref{eq2-18}%
) is equivalent to an observer-based state feedback controller or an input
residual generator 
\begin{equation}
r_{u}(z)=u(z)-F\hat{x}(z)=X(z)u(z)+Y\left( z\right) y(z).  \label{eq2-15}
\end{equation}%
It is of remarkable interest to notice that both residual generators, (\ref%
{eq2-10})\ and (\ref{eq2-14}), share the same state observer.

We now introduce a special type of LCF and RCF of the plant dynamic. They
are called normalized coprime factorisations, denoted by $\left( \hat{M}_{0},%
\hat{N}_{0}\right) $ and $\left( M_{0},N_{0}\right) ,$ respectively, and
satisfy%
\begin{align}
\left[ 
\begin{array}{cc}
\hat{N}_{0}(z) & \text{ }\hat{M}_{0}(z)%
\end{array}%
\right] \left[ 
\begin{array}{c}
\hat{N}_{0}^{\ast }(z) \\ 
\hat{M}_{0}^{\ast }(z)%
\end{array}%
\right] & =I,  \label{eq2-15a} \\
\left[ 
\begin{array}{cc}
M_{0}^{\ast }(z) & \text{ }N_{0}^{\ast }(z)%
\end{array}%
\right] \left[ 
\begin{array}{c}
M_{0}(z) \\ 
N_{0}(z)%
\end{array}%
\right] & =I.  \label{eq2-15b}
\end{align}%
The state space realizations of $\left( \hat{M}_{0},\hat{N}_{0}\right) $ and 
$\left( M_{0},N_{0}\right) $ are summarized in the following lemma \cite%
{Hoffmann1996}.

\begin{lemma}
\label{Le2-1}Given the system model (\ref{eq2-3}), $\left( \hat{M}_{0},\hat{N%
}_{0}\right) $ and $\left( M_{0},N_{0}\right) $ are LCP and RCP with $%
L=L_{0},W=W_{0},F=F_{0}$ and $V=V_{0},$ as given below:%
\begin{align}
L_{0}& =\left( BD^{T}+APC^{T}\right) \left( I+DD^{T}+CPC^{T}\right) ^{-1},
\label{N-observer gain} \\
W_{0}& =\left( I+DD^{T}+CPC^{T}\right) ^{-1/2},  \label{N-post-filter} \\
F_{0}& =-\left( I+D^{T}D+B^{T}SB\right) ^{-1}\left( D^{T}C+B^{T}SA\right) ,
\label{N-feedback gain} \\
V_{0}& =\left( I+D^{T}D+B^{T}SB\right) ^{-1/2}.  \label{N-pre-filter}
\end{align}%
Here, $P>0,S>0$ solve the following Riccati equations respectively,%
\begin{align}
P& =APA^{T}+BB^{T}-L_{0}\left( I+DD^{T}+CPC^{T}\right) L_{0}^{T},
\label{Ricatti eq-a} \\
S& =A^{T}SA+C^{T}C-F_{0}^{T}\left( I+D^{T}D+B^{T}SB\right) F_{0}.
\label{Ricatti eq-b}
\end{align}
\end{lemma}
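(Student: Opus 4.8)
The plan is to verify the two normalization identities (\ref{eq2-15b}) and (\ref{eq2-15a}) directly from the state-space realizations (\ref{eq2-8}) and (\ref{eq2-8a}), after specializing the free gains to (\ref{N-observer gain})--(\ref{N-pre-filter}). The one tool I would invoke is the discrete-time inner/co-inner characterization: a stable $\Phi=\left( A_{\Phi},B_{\Phi},C_{\Phi},D_{\Phi}\right) $ with $A_{\Phi}$ Schur satisfies $\Phi^{\ast}\Phi=I$ if and only if there exists $S=S^{T}>0$ with
\[
A_{\Phi}^{T}SA_{\Phi}-S+C_{\Phi}^{T}C_{\Phi}=0,\quad B_{\Phi}^{T}SA_{\Phi}+D_{\Phi}^{T}C_{\Phi}=0,\quad B_{\Phi}^{T}SB_{\Phi}+D_{\Phi}^{T}D_{\Phi}=I,
\]
while the transposed statement, carried by a Gramian $P=P^{T}>0$, characterizes co-inner functions $\Psi\Psi^{\ast}=I$. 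Everything then reduces to matching these three algebraic conditions against the given gain formulas and Riccati equations.

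For the right factorization I would take $\Phi=I_{G_{0}}$ with the realization read off from (\ref{eq2-7a})--(\ref{eq2-8}),
\[
I_{G_{0}}=\left( A_{F_{0}},\ BV_{0},\ \begin{bmatrix}F_{0}\\ C_{F_{0}}\end{bmatrix},\ \begin{bmatrix}V_{0}\\ DV_{0}\end{bmatrix}\right) ,\quad A_{F_{0}}=A+BF_{0},\ C_{F_{0}}=C+DF_{0},
\]
so that $\Phi^{\ast}\Phi=M_{0}^{\ast}M_{0}+N_{0}^{\ast}N_{0}$. The feedthrough condition collapses to $V_{0}^{T}\left( I+D^{T}D+B^{T}SB\right) V_{0}=I$, which forces the symmetric inverse square root (\ref{N-pre-filter}); the cross-term condition collapses to $\left( I+D^{T}D+B^{T}SB\right) F_{0}=-\left( D^{T}C+B^{T}SA\right) $, which is exactly (\ref{N-feedback gain}); and substituting these into the first (Lyapunov-type) condition and completing the square eliminates all $F_{0}$-linear terms, leaving $S=A^{T}SA+C^{T}C-F_{0}^{T}\left( I+D^{T}D+B^{T}SB\right) F_{0}$, which is precisely (\ref{Ricatti eq-b}). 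The left factorization is handled by the dual co-inner test applied to $\Psi=\left[ \hat{N}_{0}\ \ \hat{M}_{0}\right] =\left( A_{L_{0}},\left[ B_{L_{0}}\ {-}L_{0}\right] ,W_{0}C,\left[ W_{0}D\ W_{0}\right] \right) $ with $A_{L_{0}}=A-L_{0}C,\ B_{L_{0}}=B-L_{0}D$; the same three steps (now transposed) yield $W_{0}$ as in (\ref{N-post-filter}), $L_{0}$ as in (\ref{N-observer gain}), and the Gramian equation (\ref{Ricatti eq-a}). Note the sign of $\hat{N}_{0}$ is immaterial here since (\ref{eq2-15a}) only involves $\hat{N}_{0}\hat{N}_{0}^{\ast}+\hat{M}_{0}\hat{M}_{0}^{\ast}$.

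What remains is to guarantee that the objects produced are genuine (stable, coprime) factorizations and not merely formal identities. Since $\left( A,B,C,D\right) $ is minimal by assumption, $(A,B)$ is stabilizable and $(A,C)$ detectable, so each of the two discrete-time algebraic Riccati equations (\ref{Ricatti eq-a})--(\ref{Ricatti eq-b}) admits a unique stabilizing solution $P>0$, $S>0$; stabilizing means exactly that $A_{F_{0}}=A+BF_{0}$ and $A_{L_{0}}=A-L_{0}C$ are Schur. This places all eight transfer functions in $\mathcal{RH}_{\infty}$ and, because the realizations (\ref{eq2-8})--(\ref{eq2-8a}) are the general coprime constructions of Subsection~\ref{SubsecII-A} evaluated at the stabilizing gains, the Bezout identity (\ref{eq2-6}) is inherited, so $\left( M_{0},N_{0}\right) $ and $\left( \hat{M}_{0},\hat{N}_{0}\right) $ are indeed RCF and LCF of $G$.

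The algebra in the two inner/co-inner tests is routine completion of squares, so I expect the main obstacle to be the existence-and-stability step for the Riccati solutions: one must confirm from minimality that the stabilizing solutions $P,S>0$ exist and are the ones rendering $A_{F_{0}},A_{L_{0}}$ Schur, since only then do the realizations live in $\mathcal{RH}_{\infty}$ and qualify as coprime factorizations. This is the point where the hypotheses on $\left( A,B,C,D\right) $ are genuinely used; the normalization identities themselves follow mechanically once $P$ and $S$ are in hand.
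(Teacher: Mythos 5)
Your proposal is correct, but note that the paper does not prove Lemma~\ref{Le2-1} at all: it imports the result directly from \cite{Hoffmann1996} without argument. Your self-contained route via the discrete-time inner/co-inner (all-pass) characterization is the standard way to establish it, and the algebra goes through exactly as you sketch: for $I_{G_{0}}=\left( A_{F_{0}},BV_{0},\left[ F_{0}^{T}\ C_{F_{0}}^{T}\right] ^{T},\left[ V_{0}^{T}\ V_{0}^{T}D^{T}\right] ^{T}\right) $ the feedthrough condition gives $V_{0}^{T}\left( I+D^{T}D+B^{T}SB\right) V_{0}=I$, the cross condition gives (\ref{N-feedback gain}), and substituting the latter into the Lyapunov condition cancels the two linear-in-$F_{0}$ terms against $+F_{0}^{T}\left( I+D^{T}D+B^{T}SB\right) F_{0}$ to leave exactly (\ref{Ricatti eq-b}); the transposed computation with the observability pair $\left( A_{L_{0}},\left[ B_{L_{0}}\ {-}L_{0}\right] \right) $ yields (\ref{N-observer gain}), (\ref{N-post-filter}) and (\ref{Ricatti eq-a}). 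Your closing remark correctly identifies the only nontrivial hypothesis: minimality of $\left( A,B,C,D\right) $ gives controllability and observability, hence unique stabilizing solutions $P>0$, $S>0$ of the two DAREs with $A_{F_{0}}$ and $A_{L_{0}}$ Schur, which is what places the factors in $\mathcal{RH}_{\infty }$ and lets the Bezout identity (\ref{eq2-6}) carry over so that $\left( M_{0},N_{0}\right) $ and $\left( \hat{M}_{0},\hat{N}_{0}\right) $ are genuine coprime factorizations rather than formal ones. The sign convention on $\hat{N}_{0}$ is indeed immaterial for (\ref{eq2-15a}). In short, what the citation buys the paper is brevity; what your argument buys is a verifiable, self-contained derivation consistent with the realizations (\ref{eq2-8}) and (\ref{eq2-8a}) already set up in Subsection~\ref{SubsecII-A}.
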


The normalized SIR and SKR can be interpreted as a linear quadratic (LQ)
controller and a least squares (LS) estimator, respectively \cite{Tay1998},
and will serve as a useful tool in our subsequent study.

\begin{remark}
Hereafter, we may drop the domain variable $z$\ or $k$ when there is no risk
of confusion.
\end{remark}

\subsection{Youla parameterization and its observer-based realization}

Youla parameterization is a well-established method that, for given the LCFs
and RCFs of the plant $G$ and controller $K,$ parametrizes all stabilizing
controllers as 
\begin{equation}
\setlength{\abovedisplayskip}{5pt}\setlength{\belowdisplayskip}{5pt}K\hspace{%
-2pt}=\hspace{-3pt}-\hspace{-2pt}(\hat{Y}\hspace{-2pt}-\hspace{-2pt}MQ)(\hat{%
X}\hspace{-2pt}+\hspace{-2pt}NQ)^{-1}\hspace{-2pt}=\hspace{-3pt}-\hspace{-2pt%
}(X\hspace{-2pt}+\hspace{-2pt}Q\hat{N})^{-1}(Y\hspace{-2pt}-\hspace{-2pt}Q%
\hat{M})  \label{eq-Youla}
\end{equation}%
where $Q\in \mathcal{RH}_{\infty }$ is the parameter system \cite%
{Tay1998,Zhou96}. For our investigation purpose, the observer-based
realization of (\ref{eq-Youla}) is presented \cite{Ding2020}, 
\begin{gather}
\left\{ 
\begin{array}{l}
\hat{x}(k+1)=A_{F}\hat{x}(k)+Br_{Q}(k)+LW^{-1}r_{y}(k) \\ 
u(k)=F\hat{x}(k)+r_{Q}(k),%
\end{array}%
\right.  \label{eq2-17} \\
r_{Q}=Qr_{y}=Q\left( \hat{M}y-\hat{N}u\right) ,  \notag
\end{gather}%
which is equivalent to%
\begin{equation*}
\left[ 
\begin{array}{c}
u \\ 
y%
\end{array}%
\right] =\left[ 
\begin{array}{c}
-\hat{Y}+\hspace{-2pt}MQ \\ 
\hat{X}+\hspace{-2pt}NQ%
\end{array}%
\right] r_{y}
\end{equation*}%
with $r_{y}$ acting as the latent variable. Now, let the observer (\ref%
{eq2-17})\ be equivalently written into the observer form (\ref{eq2-13}). It
turns out%
\begin{equation}
u-F\hat{x}=Xu+Yy=Qr_{y},  \label{eq2-17a}
\end{equation}%
which is the input residual realization of the Youla parameterization (\ref%
{eq-Youla}).

\subsection{Problem formulation}

\begin{figure}[t]
\centering\includegraphics[width=7cm]{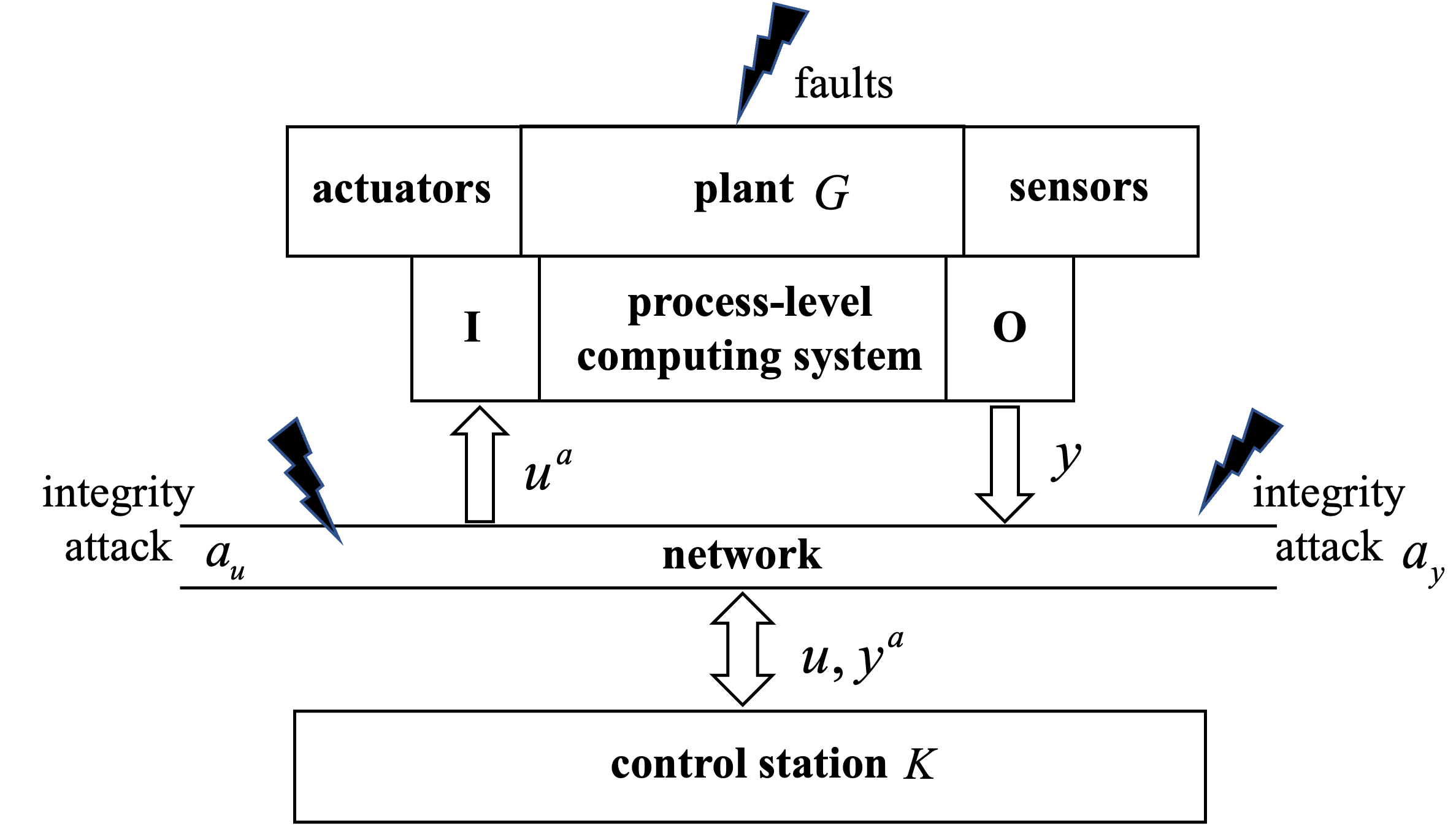}
\caption{The CPCS under consideration}
\label{Fig1}
\end{figure}
As schematically sketched in Fig. \ref{Fig1}, the feedback control system (%
\ref{eq2-1}) is implemented in a CPCS configuration, where the plant $G$ is
connected to the control station $K$ by a communication network. The control
station calculates $u$ and transmits it to the plant. The plant receives
control signal $u^{a}$ and sends the measurement output $y$ to the control
station. Specifically, the CPCS model is described by 
\begin{equation}
\left\{ 
\begin{array}{l}
y\left( z\right) =G\left( z\right) u^{a}\left( z\right)
+f_{0}(z)+r_{y,0}\left( z\right) \\ 
u\left( z\right) =K\left( z\right) y^{a}\left( z\right) +r_{u,0}\left(
z\right) ,%
\end{array}%
\right.  \label{eq2-20}
\end{equation}%
where $K$ is the feedback control law subject to (\ref{eq2-18}), $r_{u,0}=%
\bar{v}$ is the reference signal, $f_{0}$ denotes the influence of faults on
the system dynamic that will be specified in the sequel, and 
\begin{equation}
u^{a}=u+a_{u},y^{a}=y+a_{y}  \label{eq2-22}
\end{equation}%
represent the input and output signals transmitted over the network and
corrupted by the injected attack signals $\left( a_{u},a_{y}\right) .$ On
the assumption that the attacker is able to access the input and output
(transmission) channels and inject signals into the channels, it is supposed
that the cyber-attacks $\left( a_{u},a_{y}\right) $ are modelled by 
\begin{equation}
\left[ 
\begin{array}{c}
a_{u}(z) \\ 
a_{y}(z)%
\end{array}%
\right] =\Pi ^{a}(z)\left[ 
\begin{array}{c}
{u}(z) \\ 
y(z)%
\end{array}%
\right] +\left[ 
\begin{array}{c}
E_{u}\eta _{u}(z) \\ 
E_{y}\eta _{y}(z)%
\end{array}%
\right]  \label{eq2-21}
\end{equation}%
where $\Pi ^{a}$ is a stable dynamic system with $(u,y)$ as the input
variables, $\left( E_{u},E_{y}\right) $ are structure matrices indicating
which channels of the input $u$ and output $y$ are attacked, and $\left(
\eta _{u},\eta _{y}\right) $ are $\ell _{2}$-bounded, unknown and
independent of $(u,y).$ In this regard, $\left( E_{u}\eta _{u},E_{y}\eta
_{y}\right) $ and $\Pi ^{a}\left[ 
\begin{array}{cc}
u^{T} & y^{T}%
\end{array}%
\right] ^{T}$ are called additive and multiplicative attacks, respectively.

\begin{remark}
\label{Rem2-3}The term multiplicative attacks is rarely used in the
literature on secure CPCSs research. We adopt it to emphasize the fact that
such attacks are driven by the process data. Differing from additive
attacks, they affect system eigen-dynamics. For instance, a replay attack
defined by 
\begin{equation*}
y^{a}(k)=y\left( k-T\right) ,k>T(>0),
\end{equation*}%
is a multiplicative attack \cite{DLautomatica2022}, that is, 
\begin{gather*}
a_{y}(z)=-y(z)+z^{-T}y(z)\Longrightarrow y^{a}=z^{-T}y, \\
\Pi ^{a}=\left[ 
\begin{array}{c}
\Pi _{u}^{a} \\ 
\Pi _{y}^{a}%
\end{array}%
\right] ,\Pi _{y}^{a}=\left[ 
\begin{array}{cc}
0 & \text{ }z^{-T}-I%
\end{array}%
\right] 
\end{gather*}%
with $\Pi _{u}^{a}$ arbitrarily designed by attackers. DoS attacks are
random multiplicative attacks as well. In fact, multiplicative attacks are
often implicitly adopted for attack design \cite{TCNS2017Guo,LIU2022674}. 
\end{remark}

For the sake of simplicity, the uncertainty $r_{y,0}$ is assumed to be the
output of an LTI system driven by additive disturbances. Corresponding to
the state space realization of $G,$ (\ref{eq2-2}), the dynamic of $r_{y,0}$
is modelled by 
\begin{equation}
r_{y,0}=G_{d}d,G_{d}=C\left( zI-A\right) ^{-1}E_{d}+F_{d},  \label{eq2-23}
\end{equation}%
where two types of the unknown input $d\in \mathbb{R}^{k_{d}}$ are under
consideration,%
\begin{equation}
\left[ 
\begin{array}{c}
E_{d} \\ 
F_{d}%
\end{array}%
\right] d=\left\{ 
\begin{array}{l}
\left[ 
\begin{array}{c}
w \\ 
v%
\end{array}%
\right] ,\text{ process and measurement noises} \\ 
\left[ 
\begin{array}{c}
\bar{E}_{d}d_{P} \\ 
\bar{F}_{d}d_{S}%
\end{array}%
\right] ,\left\Vert d\right\Vert _{2}=\sqrt{\left\Vert d_{P}\right\Vert
_{2}^{2}+\left\Vert d_{S}\right\Vert _{2}^{2}}\leq \delta _{d}.\text{ }%
\end{array}%
\right.  \label{eq2-23a}
\end{equation}%
$w\sim \mathcal{N}(0,\Sigma _{w}),\nu \sim \mathcal{N}(0,\Sigma _{\nu })$
are white noise series, $\delta _{d}$ is the known upper-bound and $\bar{E}%
_{d},\bar{F}_{d}$ are known matrices. The vector $f_{0}$ represents the
influence of faults on the plant dynamic, which is modelled by 
\begin{equation}
f_{0}(z)=\Pi _{0}^{f}(z)\left[ 
\begin{array}{c}
{u}(z) \\ 
y(z)%
\end{array}%
\right] +{\bar{f}}_{0}(z).  \label{eq2-24}
\end{equation}%
The fault model (\ref{eq2-24}) is a general form of multiplicative faults.
Here, $\Pi _{0}^{f}$ is assumed to be a stable unknown dynamic system and
the additive fault ${\bar{f}}_{0}$ is $\ell _{2}$-bounded.

\begin{remark}
The additive fault affects the system dynamic in a way like the unknown
disturbances modelled by (\ref{eq2-23}). In fact, in the context of fault
diagnosis and fault-tolerant control, multiplicative faults are considerably
critical, because they might impair the system eigen-dynamic and even cause
instability. 
\end{remark}

Along the lines of the three-step explorations procedure towards the
objective of this paper, the following problems will be addressed. At first,
the core ideas, basic methods, and analysis and design schemes are examined
aiming at exploring those, which are particularly appropriate for studying
CPCSs under faults and attacks modelled by (\ref{eq2-20})-(\ref{eq2-24}).
This work will be complemented by reviewing well-established attack
detection and resilient control methods. In particular, it is of importance
to re-formulate these methods in the unified framework of control and
detection, and to examine how far they might be generalized and improved.
The main part of our work is the endeavour to establish a control-theoretic
paradigm, which enables, in a unified manner,

\begin{itemize}
\item analyzing (i) CPCS's structural and information properties, (ii)
system behavior under faults and attacks, (iii) existence conditions of
stealthy attacks, expressed in terms of the system dynamic and information
model and independent of applied detection schemes, and (iii) specified
system vulnerability;

\item developing schemes of simultaneous detection of faults and attacks;

\item exploring fault-tolerant and attack-resilient feedback control systems
with an analysis of the achievable control performance and stability
conditions, and finally

\item addressing system privacy issues in the unified framework of control
and detection.
\end{itemize}

\section{The unified framework of control and detection, and applications to
secure CPCSs}

The unified framework is proposed to deal with control and detection issues
uniformly. In this framework, the control and detection problems are handled
in the space of the system input and output data $\left( u,y\right) .$ Its
groundwork is both the output and input residual signals. Instead of the
loop model (\ref{eq2-20}), the so-called kernel-based system model is
introduced, based on which the system image and residual subspaces are
defined. Thereby, the input and output residual pair $\left(
r_{u},r_{y}\right) $ serves as latent variables that comprise information of
the system dynamic, including the nominal system behavior and uncertainties.
Consequently, the overall system dynamic, during nominal operation and under
faults and attacks, is fully characterized by the residual pair $\left(
r_{u},r_{y}\right) $ and the associated subspaces. In this context,
detection of faults and cyber-attacks, and fault-tolerant and
attack-resilient control can be addressed uniformly. In fact, both the
controllers and detectors are residual-driven, whose core is a (single)
observer. In the framework, the Bezout identity serves as the core
mathematical tool. On its basis, the coprime factorizations of the plant and
controller can be parameterized by a dynamic system triple $\left(
R(z),Q(z),T(z)\right) .$ On the other hand, the system image and residual
subspaces are invariant with respect to the dynamic system triple. This
property is particularly useful to design and construct secure CPCSs.

In this section, we firstly introduce the alternative loop model form, the
kernel-based model, and the concepts of system image and residual subspaces.
They build the basis of our subsequent work. We will then parameterize the
coprime factorizations of the plant and controller, and highlight their
relations to the system image and residual subspaces. Finally, we will
leverage the aforementioned results, serving as a control-theoretic and
mathematical tool, to propose alternative methods to some established
schemes in the research of secure CPCSs.

\subsection{Alternative loop model forms and signal subspaces\label%
{Subsec3-1}}

Consider the control system model (\ref{eq2-20}) with $K$ subject to (\ref%
{eq2-18}). On the assumption that the system is operating attack-free, let
us rewrite the model into 
\begin{equation}
\left\{ 
\begin{array}{l}
r_{y}=\hat{M}y-\hat{N}u=\hat{M}\left( r_{y,0}+f_{0}\right) \\ 
r_{u}=Xu+Yy=Xr_{u,0}=:v,%
\end{array}%
\right.  \label{eq3-1}
\end{equation}%
whose state space representation is given by 
\begin{align}
\hat{x}(k+1)& =A_{L}\hat{x}(k)+B_{K}\left[ 
\begin{array}{c}
u(k) \\ 
y(k)%
\end{array}%
\right] ,  \label{eq3-1b} \\
\left[ 
\begin{array}{c}
r_{u}(k) \\ 
r_{y}(k)%
\end{array}%
\right] & =\left[ 
\begin{array}{c}
V^{-1}F \\ 
C_{K}%
\end{array}%
\right] \hat{x}(k)+\left[ 
\begin{array}{cc}
V^{-1} & \text{ }0 \\ 
WD & \text{ }W%
\end{array}%
\right] \left[ 
\begin{array}{c}
u(k) \\ 
y(k)%
\end{array}%
\right] .  \notag
\end{align}%
The model (\ref{eq3-1}) is called kernel-based model. Note that the model (%
\ref{eq3-1}) is equivalent to 
\begin{gather*}
\left\{ 
\begin{array}{l}
W^{-1}r_{y}=y-\hat{y} \\ 
Vr_{u}=u-\hat{u}%
\end{array}%
\right. \Leftrightarrow \left\{ 
\begin{array}{l}
y=\hat{y}+W^{-1}r_{y} \\ 
u=\hat{u}+Vr_{u},%
\end{array}%
\right. \\
\left[ 
\begin{array}{c}
\hat{u}(k) \\ 
\hat{y}(k)%
\end{array}%
\right] =\left[ 
\begin{array}{c}
F \\ 
C%
\end{array}%
\right] \hat{x}(k)+\left[ 
\begin{array}{c}
0 \\ 
D%
\end{array}%
\right] u(k).
\end{gather*}%
That means, the kernel model is equivalent to the original loop model (\ref%
{eq2-20}). The kernel-based model (\ref{eq3-1}) is characterized by
modelling any uncertainties in the control loop in term of the residual
signals that are online computed. It is capable of fully describing the
system dynamic online and under all possible (incl. under faults and
attacks) operating conditions. Moreover, its core consists of a single
observer, i.e. the state observer (\ref{eq3-1b}). Accordingly, the needed
computation is moderate.

Attributed to the Bezout identity (\ref{eq2-6}), the model (\ref{eq3-1}) is
recast into%
\begin{equation}
\left[ 
\begin{array}{c}
u \\ 
y%
\end{array}%
\right] =\left[ 
\begin{array}{c}
M \\ 
N%
\end{array}%
\right] v+\left[ 
\begin{array}{c}
-\hat{Y} \\ 
\hat{X}%
\end{array}%
\right] r_{y}.  \label{eq3-11}
\end{equation}%
The models (\ref{eq3-1}) and (\ref{eq3-11}) are two dual representation
forms of the closed-loop (\ref{eq2-20}) during attack-free operations.
Observe that the output residual $r_{y},$ 
\begin{align}
r_{y}& =\hat{M}\left( r_{y,0}+f_{0}\right) =\hat{N}_{d}d+f,  \label{eq3-11b}
\\
\hat{N}_{d}& =\hat{M}G_{d}=\left( A-LC,E_{d}-LF_{d},C,F_{d}\right) ,  \notag
\\
f& =\Pi ^{f}\left[ 
\begin{array}{c}
{u} \\ 
y%
\end{array}%
\right] +\bar{f},\Pi ^{f}=\hat{M}\Pi _{0}^{f},\bar{f}=\hat{M}\bar{f}_{0}, 
\notag
\end{align}%
includes exclusively uncertain and faulty dynamics in the plant. In
contrast, the first term in (\ref{eq3-11}) describes the nominal system
dynamic. In other words, the model (\ref{eq3-11}) highlights a notable fact
that the uncertainties (including faults) and the reference signal $v$ act
on two different subspaces in the system input-output data space $\left(
u,y\right) $. In this context, we introduce the following definitions.

\begin{definition}
\label{Def3-1} Given the RCP and LCP $\left( M,N\right) $ and $\left( \hat{M},\hat{N}%
\right) $ of the plant $G,$ the $\mathcal{H}_{2}$ subspaces $\mathcal{I}_{G}$
and $\mathcal{K}_{G}$,%
\begin{align}
\mathcal{I}_{G}& =\left\{ \left[ 
\begin{array}{c}
u \\ 
y%
\end{array}%
\right] :\left[ 
\begin{array}{c}
u \\ 
y%
\end{array}%
\right] \hspace{-2pt}=\hspace{-2pt}\left[ 
\begin{array}{c}
M \\ 
N%
\end{array}%
\right] v,v\in \mathcal{H}_{2}^{p}\right\} ,  \label{eq3-12} \\
\mathcal{K}_{G}& =\left\{ \left[ 
\begin{array}{c}
u \\ 
y%
\end{array}%
\right] :\left[ 
\begin{array}{cc}
-\hat{N} & \hat{M}%
\end{array}%
\right] \left[ 
\begin{array}{c}
u \\ 
y%
\end{array}%
\right] =0,\left[ 
\begin{array}{c}
u \\ 
y%
\end{array}%
\right] \in \mathcal{H}_{2}^{m+p}\right\} ,  \label{eq3-13}
\end{align}%
are called (plant) image and kernel subspaces, respectively. 
\end{definition}

It is a well-known result that $\mathcal{I}_{G}=\mathcal{K}_{G}$ \cite%
{Vinnicombe-book}. The complementary subspace of $\mathcal{I}_{G},\mathcal{I}%
_{G}^{\bot },$ is of considerable importance in our subsequent work. It is
apparent from the relation $\hat{N}M-\hat{M}N=0$ that $\mathcal{I}_{G}^{\bot
}$ is characterized by 
\begin{equation*}
\forall \left[ 
\begin{array}{c}
u \\ 
y%
\end{array}%
\right] \in \mathcal{I}_{G}^{\bot },r_{y}=\left[ 
\begin{array}{cc}
-\hat{N} & \hat{M}%
\end{array}%
\right] \left[ 
\begin{array}{c}
u \\ 
y%
\end{array}%
\right] \neq 0.
\end{equation*}%
On the basis of the Bezout identity (\ref{eq2-6}) and considering that $%
r_{y} $ reflects the uncertain and faulty dynamics in the plant, we
introduce the concept of residual subspace as the complementary subspace of $%
\mathcal{I}_{G}.$

\begin{definition}
\label{Def3-2}Given the RCP $\left( \hat{X},\hat{Y}\right) $ of $K,$ the $%
\mathcal{H}_{2}$ subspace%
\begin{equation}
\mathcal{R}_{G}=\left\{ \left[ 
\begin{array}{c}
u \\ 
y%
\end{array}%
\right] :\left[ 
\begin{array}{c}
u \\ 
y%
\end{array}%
\right] =\left[ 
\begin{array}{c}
-\hat{Y} \\ 
\hat{X}%
\end{array}%
\right] r,r\in \mathcal{H}_{2}^{m}\right\} ,
\end{equation}%
is called canonical residual subspace. 
\end{definition}

\begin{remark}
In the above definition, we've abused the expression \textit{canonical
residual subspace} to distinguish it from the general form of residual
subspace to be introduced in the next subsection.
\end{remark}

It is noteworthy that the canonical residual subspace is the image subspace
of the controller as well, i.e.%
\begin{equation*}
\mathcal{I}_{K}=\left\{ \left[ 
\begin{array}{c}
u \\ 
y%
\end{array}%
\right] :\left[ 
\begin{array}{c}
u \\ 
y%
\end{array}%
\right] =\left[ 
\begin{array}{c}
-\hat{Y} \\ 
\hat{X}%
\end{array}%
\right] r,r\in \mathcal{H}_{2}^{m}\right\} .
\end{equation*}%
Apparently, the residual subspace $\mathcal{R}_{G}$ is the complement of $%
\mathcal{I}_{G},$ and 
\begin{equation}
\mathcal{H}_{2}^{p+m}=\mathcal{I}_{G}\oplus \mathcal{R}_{G}=\mathcal{I}%
_{G}\oplus \mathcal{I}_{K}.  \label{eq3-16}
\end{equation}%
The system response (\ref{eq3-11}) is indeed the consequence of (\ref{eq3-16}%
) and it is sufficient to use the observer-based residual generator (\ref%
{eq2-10}) to detect faults, which pinpoints the data in the residual
subspace.

It is known that $\mathcal{I}_{G}$ and $\mathcal{R}_{G}$ are closed
subspaces in $\mathcal{H}_{2}$ space \cite{Vinnicombe-book}. Hence, by means
of the orthogonal projection method and the concept gap metric \cite%
{Vinnicombe-book} we are able to measure the similarity of two subspaces.
The orthogonal projection method and gap metric are a mathematical tool that
plays an essential role in the unified framework of control and detection 
\cite{LD-Automatica-2020,Ding2026,DL2026}. Below, we summarize some basic
definitions, concepts and computations of the orthogonal projection method,
which will be applied in our subsequent work. For details, the reader is
referred to \cite{Georgiou88,Feintuch_book,Vinnicombe-book}. Let $\left(
M_{0},N_{0}\right) $ be the normalized RCP of $G.$ The operator $\mathcal{P}%
_{\mathcal{I}_{G}},$ 
\begin{equation*}
\mathcal{P}_{\mathcal{I}_{G}}:=\left[ 
\begin{array}{c}
M_{0} \\ 
N_{0}%
\end{array}%
\right] \mathcal{P}_{\mathcal{H}_{2}}\left[ 
\begin{array}{c}
M_{0} \\ 
N_{0}%
\end{array}%
\right] ^{\ast }:\mathcal{H}_{2}\rightarrow \mathcal{H}_{2},
\end{equation*}%
defines an orthogonal projection onto the system image subspace $\mathcal{I}%
_{G}$, where $\mathcal{P}_{\mathcal{H}_{2}}$ is the operator of the
orthogonal projection onto $\mathcal{H}_{2}$ \cite{Georgiou88}. Given system
data $\left( u,y\right) ,$ the distance from $\left( u,y\right) $ to $%
\mathcal{I}_{G}$ is defined as 
\begin{align*}
\setlength{\abovedisplayskip}{6pt}\setlength{\belowdisplayskip}{6pt}%
dist\left( \left[ 
\begin{array}{c}
u \\ 
y%
\end{array}%
\right] ,\mathcal{I}_{G}\right) & =\inf_{\left[ 
\begin{array}{c}
u_{\mathcal{I}_{G}} \\ 
y_{\mathcal{I}_{G}}%
\end{array}%
\right] \in \mathcal{I}_{G}}\left\Vert \left[ 
\begin{array}{c}
u \\ 
y%
\end{array}%
\right] -\left[ 
\begin{array}{c}
u_{\mathcal{I}_{G}} \\ 
y_{\mathcal{I}_{G}}%
\end{array}%
\right] \right\Vert _{2} \\
& =\left\Vert \left( \mathcal{I}-\mathcal{P}_{\mathcal{I}_{G}}\right) \left[ 
\begin{array}{c}
u \\ 
y%
\end{array}%
\right] \right\Vert _{2}.
\end{align*}%
It holds (Pythagorean equation) 
\begin{equation}
\left\Vert \left[ 
\begin{array}{c}
u \\ 
y%
\end{array}%
\right] \right\Vert _{2}^{2}=\left\Vert \mathcal{P}_{\mathcal{I}_{G}}\left[ 
\begin{array}{c}
u \\ 
y%
\end{array}%
\right] \right\Vert _{2}^{2}+\left\Vert \left( \mathcal{I}-\mathcal{P}_{%
\mathcal{I}_{G}}\right) \left[ 
\begin{array}{c}
u \\ 
y%
\end{array}%
\right] \right\Vert _{2}^{2}.  \label{eq3-50}
\end{equation}%
For two systems $G_{1}$ and $G_{2},$ the gap metric between $\mathcal{I}%
_{G_{1}},\mathcal{I}_{G_{2}}\in \mathcal{H}_{2}^{m+p}$ is defined by%
\begin{eqnarray*}
\delta \left( \mathcal{I}_{G_{1}},\mathcal{I}_{G_{2}}\right) &=&\max \left\{ 
\vec{\delta}\left( \mathcal{I}_{G_{1}},\mathcal{I}_{G_{2}}\right) ,\vec{%
\delta}\left( \mathcal{I}_{G_{2}},\mathcal{I}_{G_{1}}\right) \right\} \\
&=&\left\Vert \mathcal{P}_{\mathcal{I}_{G_{1}}}-\mathcal{P}_{\mathcal{I}%
_{G_{2}}}\right\Vert ,
\end{eqnarray*}%
where $\vec{\delta}\left( \mathcal{\cdot },\mathcal{\cdot }\right) $ is
called directed gap, $\mathcal{P}_{\mathcal{I}_{G_{1}}}:\mathcal{H}%
_{2}\rightarrow \mathcal{H}_{2},\mathcal{P}_{\mathcal{I}_{G_{2}}}:\mathcal{H}%
_{2}\rightarrow \mathcal{H}_{2},$ are orthogonal projection operators,
respectively, and $\left\Vert \mathcal{P}_{\mathcal{I}_{G_{1}}}-\mathcal{P}_{%
\mathcal{I}_{G_{2}}}\right\Vert $ is the operator norm \cite%
{Georgiou88,Feintuch_book,Vinnicombe-book}. The gap metric $\delta \left( 
\mathcal{I}_{G_{1}},\mathcal{I}_{G_{2}}\right) $ is of the properties 
\begin{gather}
\delta \left( \mathcal{I}_{G_{1}},\mathcal{I}_{G_{2}}\right) \in \lbrack
0,1],\delta \left( \mathcal{I}_{G_{1}},\mathcal{I}_{G_{2}}\right)
=1\Longrightarrow \mathcal{I}_{G_{1}}\bot \mathcal{I}_{G_{2}},
\label{eq3-51a} \\
\delta \left( \mathcal{I}_{G_{1}},\mathcal{I}_{G_{2}}\right)
=0\Longrightarrow \mathcal{I}_{G_{1}}=\mathcal{I}_{G_{2}},  \label{eq3-51b}
\end{gather}%
and for $\delta \left( \mathcal{I}_{G_{1}},\mathcal{I}_{G_{2}}\right) \in
\left( 0,1\right) ,$%
\begin{equation*}
\delta \left( \mathcal{I}_{G_{1}},\mathcal{I}_{G_{2}}\right) =\vec{\delta}%
\left( \mathcal{I}_{G_{1}},\mathcal{I}_{G_{2}}\right) =\vec{\delta}\left( 
\mathcal{I}_{G_{2}},\mathcal{I}_{G_{1}}\right) .
\end{equation*}

\subsection{Parameterization of the coprime factorizations and Bezout
identity\label{SubsecIII-B}}

Notice that the RCFs and LCFs of $G$ and $K$ are parameterized by four
matrices $\left( F,L,V,W\right) .$ The following theorem describes the
corresponding variations in the RCFs and LCFs caused by varying these four
parameter matrices.

\begin{theorem}
\label{Theo3-1}Given $\left( M_{i},N_{i}\right) ,\left( \hat{M}_{i},\hat{N}%
_{i}\right) ,\left( \hat{X}_{i},\hat{Y}_{i}\right) ,\left(
X_{i},Y_{i}\right) $ satisfying (\ref{eq2-8}), (\ref{eq2-8a}), (\ref{eq2-8d}%
) and (\ref{eq2-8c}), respectively, $i=1,2,$ it holds%
\begin{gather}
\left[ 
\begin{array}{cc}
\hat{M}_{2} & \text{ }\hat{N}_{2}%
\end{array}%
\right] =R_{21}\left[ 
\begin{array}{cc}
\hat{M}_{1} & \text{ }\hat{N}_{1}%
\end{array}%
\right] ,  \label{eq3-33} \\
\left[ 
\begin{array}{c}
M_{2} \\ 
N_{2}%
\end{array}%
\right] =\left[ 
\begin{array}{c}
M_{1} \\ 
N_{1}%
\end{array}%
\right] T_{21},  \label{eq3-34} \\
\left[ 
\begin{array}{cc}
X_{1} & \text{ }Y_{1}%
\end{array}%
\right] =T_{21}\left[ 
\begin{array}{cc}
X_{2} & \text{ }Y_{2}%
\end{array}%
\right] +\bar{T}_{21}\left[ 
\begin{array}{cc}
-\hat{N}_{2} & \text{ }\hat{M}_{2}%
\end{array}%
\right] ,  \label{eq3-35} \\
\left[ 
\begin{array}{c}
-\hat{Y}_{1} \\ 
\hat{X}_{1}%
\end{array}%
\right] =\left[ 
\begin{array}{c}
-\hat{Y}_{2} \\ 
\hat{X}_{2}%
\end{array}%
\right] R_{21}+\left[ 
\begin{array}{c}
M_{2} \\ 
N_{2}%
\end{array}%
\right] \bar{R}_{21},  \label{eq3-36}
\end{gather}%
where $R_{21}=R_{12}^{-1}$ and $T_{21}=T_{12}^{-1}$ are invertible over $%
\mathcal{RH}_{\infty },$
\begin{gather*}
R_{21}=W_{2}\left( I-C\left( zI-A_{L_{2}}\right) ^{-1}\left(
L_{2}-L_{1}\right) \right) W_{1}^{-1}\in \mathcal{RH}_{\infty }, \\
T_{21}=V_{1}^{-1}\left( I+\left( F_{2}-F_{1}\right) \left(
zI-A_{F_{2}}\right) ^{-1}B\right) V_{2}\in \mathcal{RH}_{\infty }, \\
\bar{T}_{21}=V_{1}^{-1}\left( 
\begin{array}{c}
\left( F_{2}-F_{1}\right) \left( zI-A_{F_{2}}\right) ^{-1}L_{2}- \\ 
F_{1}\left( zI-A_{L_{1}}\right) ^{-1}\left( L_{1}-L_{2}\right) 
\end{array}%
\right) \in \mathcal{RH}_{\infty }, \\
\bar{R}_{21}=\left( 
\begin{array}{c}
\left( F_{1}-F_{2}\right) \left( zI-A_{F_{1}}\right) ^{-1}L_{1}- \\ 
F_{2}\left( zI-A_{L_{2}}\right) ^{-1}\left( L_{2}-L_{1}\right) 
\end{array}%
\right) W_{1}^{-1}\in \mathcal{RH}_{\infty }.
\end{gather*}
\end{theorem}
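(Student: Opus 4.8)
The plan is to regard the eight transfer matrices attached to each index $i=1,2$ as a single pair of mutually inverse unimodular matrices over $\mathcal{RH}_{\infty}$, and to track how that pair transforms when the parameters $(F,L,V,W)$ change. Writing the two factors of the double Bezout identity (\ref{eq2-6}) as
\[
\Xi_i=\begin{bmatrix} X_i & Y_i \\ -\hat{N}_i & \hat{M}_i\end{bmatrix},\qquad
\Psi_i=\begin{bmatrix} M_i & -\hat{Y}_i \\ N_i & \hat{X}_i\end{bmatrix},
\]
identity (\ref{eq2-6}) reads $\Xi_i\Psi_i=I$, and since both factors are square, $\Psi_i\Xi_i=I$ as well. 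In this language the four claimed identities (\ref{eq3-33})--(\ref{eq3-36}) assert exactly that
\[
\Xi_1=\Theta_{21}\,\Xi_2,\qquad \Psi_1=\Psi_2\,\Theta_{21}^{-1},\qquad
\Theta_{21}=\begin{bmatrix} T_{21} & \bar{T}_{21} \\ 0 & R_{12}\end{bmatrix},
\]
the lower-left zero block reflecting that the two stable kernel rows $[-\hat{N}_i\ \hat{M}_i]$ are related by a left factor alone. This observation is the organizing principle of the whole proof.

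First I would establish (\ref{eq3-33}) and (\ref{eq3-34}) from the non-uniqueness of coprime factorizations. Since both $(\hat{M}_i,\hat{N}_i)$ are LCFs of the same $G$, one has $\hat{N}_i=\hat{M}_iG$, whence $[\hat{M}_2\ \hat{N}_2]=\hat{M}_2\hat{M}_1^{-1}[\hat{M}_1\ \hat{N}_1]$ and thus $R_{21}=\hat{M}_2\hat{M}_1^{-1}$; dually $T_{21}=M_1^{-1}M_2$ from $N_i=GM_i$. To recover the explicit state-space forms I would use $\hat{M}_i=W_i\bigl(I-C(zI-A_{L_i})^{-1}L_i\bigr)$ and $M_i=\bigl(I+F_i(zI-A_{F_i})^{-1}B\bigr)V_i$ from (\ref{eq2-8})--(\ref{eq2-8a}), invert the inner factors by the standard rule (their inverses are $I+C(zI-A)^{-1}L_i$ and $I-F_i(zI-A)^{-1}B$, using $A=A_{L_i}+L_iC=A_{F_i}-BF_i$), and collapse the products with the resolvent identity $\bigl(I-(zI-A_{L_2})^{-1}L_2C\bigr)(zI-A)^{-1}=(zI-A_{L_2})^{-1}$ and its state-feedback dual. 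These two cancellations are routine and yield the displayed $R_{21}$ and $T_{21}$, together with their invertibility over $\mathcal{RH}_{\infty}$ (poles at $A_{L_2}$, resp. $A_{F_2}$, Schur, and inverses at $A_{L_1}$, resp. $A_{F_1}$).

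Next I would obtain (\ref{eq3-35}) and (\ref{eq3-36}) from the block structure rather than from scratch. Because the second rows of $\Xi_1,\Xi_2$ are already tied by $R_{12}=R_{21}^{-1}$ via (\ref{eq3-33}), the ansatz $\Xi_1=\Theta_{21}\Xi_2$ holds for a block upper-triangular $\Theta_{21}$ precisely when its $(1,1)$ block is $T_{21}$ and its $(1,2)$ block $\bar{T}_{21}$ matches the first row; equating $[X_1\ Y_1]$ with $T_{21}[X_2\ Y_2]+\bar{T}_{21}[-\hat{N}_2\ \hat{M}_2]$ is (\ref{eq3-35}), which defines $\bar{T}_{21}$. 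Relation (\ref{eq3-36}) then comes for free by inversion: $\Psi_1=\Xi_1^{-1}=\Xi_2^{-1}\Theta_{21}^{-1}=\Psi_2\Theta_{21}^{-1}$, and the block-inverse formula forces $\bar{R}_{21}=-T_{12}\bar{T}_{21}R_{21}$, which is exactly the consistency condition hidden in $\Theta_{21}\Theta_{21}^{-1}=I$. So the logical content of (\ref{eq3-35})--(\ref{eq3-36}) reduces to one explicit state-space computation of $\bar{T}_{21}$, with $\bar{R}_{21}$ following algebraically.

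The main obstacle is this last computation. Unlike $R_{21},T_{21}$, the cross terms couple the two distinct dynamics $A_{F_i}$ and $A_{L_j}$, so $[X_1\ Y_1]-T_{21}[X_2\ Y_2]$ produces transfer functions carrying mixed state matrices that must be consolidated into the single pairing $(F_2-F_1)(zI-A_{F_2})^{-1}L_2-F_1(zI-A_{L_1})^{-1}(L_1-L_2)$. I expect to carry out this consolidation by repeatedly invoking $A=A_{F_i}-BF_i=A_{L_i}+L_iC$ to rewrite one resolvent in terms of the other and to cancel the spurious $(zI-A)^{-1}$ factors, exactly as in the $R_{21},T_{21}$ steps but with substantially more bookkeeping; checking afterwards that $-T_{12}\bar{T}_{21}R_{21}$ collapses to the displayed $\bar{R}_{21}$ then serves as an internal consistency test on the algebra.
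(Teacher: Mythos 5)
Your proposal is correct, and there is nothing in the paper to compare it against in detail: the paper declares (\ref{eq3-33})--(\ref{eq3-34}) to be known results and refers the reader to \cite{DLautomatica2022,Ding2026} for (\ref{eq3-35})--(\ref{eq3-36}), so your argument is a genuine self-contained derivation rather than a paraphrase. The block-unimodular framing is sound and, if anything, you undersell it: with $\Xi_i\Psi_i=\Psi_i\Xi_i=I$, the matrix $\Theta_{21}=\Xi_1\Xi_2^{-1}=\Xi_1\Psi_2$ is \emph{automatically} block upper-triangular, since its $(2,1)$ block is $-\hat N_1M_2+\hat M_1N_2=R_{12}(-\hat N_2M_2+\hat M_2N_2)=0$, and multiplying out $\Xi_1\Psi_2$ hands you closed forms for every block at once: $T_{21}=X_1M_2+Y_1N_2$, $\bar T_{21}=Y_1\hat X_2-X_1\hat Y_2$, $(2,2)$ block $=\hat N_1\hat Y_2+\hat M_1\hat X_2=R_{12}$, and from $\Theta_{21}^{-1}=\Xi_2\Psi_1$ also $\bar R_{21}=Y_2\hat X_1-X_2\hat Y_1$. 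This removes most of your "main obstacle": instead of consolidating $[X_1\ Y_1]-T_{21}[X_2\ Y_2]$, compute $Y_1\hat X_2-X_1\hat Y_2$ directly from the realizations (\ref{eq2-8d}), (\ref{eq2-8c}); the mixed resolvents collapse through the single identity $L_1C_{F_2}+B_{L_1}F_2=L_1C+BF_2=(zI-A_{L_1})-(zI-A_{F_2})$, exactly parallel to the cancellations you already perform for $R_{21}$ and $T_{21}$, and the symmetric expression for $\bar R_{21}$ makes the consistency check $\bar R_{21}=-T_{12}\bar T_{21}R_{21}$ redundant rather than an extra burden. One bookkeeping caveat so that you do not chase a phantom error: carried out with general weights, this computation yields $\bar T_{21}$ with a trailing factor $W_2^{-1}$ and $\bar R_{21}$ with a leading factor $V_2^{-1}$, which are absent from the theorem's displayed formulas; this is consistent with the paper's standing simplification $W=I$, $V=I$ (Remark 1), but your algebra will not reproduce the display verbatim otherwise.
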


While (\ref{eq3-33})-(\ref{eq3-34}) are known result, the proof of (\ref%
{eq3-35})-(\ref{eq3-36}) can be found in \cite{DLautomatica2022,Ding2026}.

An immediate result of Theorem \ref{Theo3-1} is that the closed-loop dynamic
(\ref{eq3-11}) is subject to 
\begin{gather}
\left[ 
\begin{array}{c}
u \\ 
y%
\end{array}%
\right] =\left[ 
\begin{array}{c}
M_{1} \\ 
N_{1}%
\end{array}%
\right] v+\left[ 
\begin{array}{c}
-\hat{Y}_{1} \\ 
\hat{X}_{1}%
\end{array}%
\right] r_{y}  \notag \\
=\left[ 
\begin{array}{c}
M_{2} \\ 
N_{2}%
\end{array}%
\right] T_{12}v+\left( \left[ 
\begin{array}{c}
-\hat{Y}_{2} \\ 
\hat{X}_{2}%
\end{array}%
\right] R_{21}+\left[ 
\begin{array}{c}
M_{2} \\ 
N_{2}%
\end{array}%
\right] \bar{R}_{21}\right) r_{y}  \notag \\
=\left[ 
\begin{array}{c}
M_{2} \\ 
N_{2}%
\end{array}%
\right] v_{2}+\left( \left[ 
\begin{array}{c}
-\hat{Y}_{2} \\ 
\hat{X}_{2}%
\end{array}%
\right] +\left[ 
\begin{array}{c}
M_{2} \\ 
N_{2}%
\end{array}%
\right] Q_{2}\right) r_{y,2},  \label{eq3-15} \\
v_{2}=T_{12}v,r_{y,2}=R_{21}\left[ 
\begin{array}{cc}
-\hat{N}_{1} & \text{ }\hat{M}_{1}%
\end{array}%
\right] \left[ 
\begin{array}{c}
u \\ 
y%
\end{array}%
\right] ,Q_{2}=\bar{R}_{21}R_{21}^{-1}.  \notag
\end{gather}%
We are interested in a special case that is useful for our subsequent study.

\begin{corollary}
\label{Co3-1}Given $\left( \hat{M},\hat{N}\right) ,\left( M,N\right) ,\left(
X,Y\right) $ and $\left( \hat{X},\hat{Y}\right) ,$ with parameter matrices $%
\left( F,L,V,W\right) ,$ and their normalized realizations, $\left( \hat{M}%
_{0},\hat{N}_{0}\right) ,\left( M_{0},N_{0}\right) ,\left(
X_{0},Y_{0}\right) $, and associated with them, $\left( \hat{X}_{0},\hat{Y}_{0}\right) ,$ it holds%
\begin{gather}
\left[ 
\begin{array}{c}
u \\ 
y%
\end{array}%
\right] =\left[ 
\begin{array}{c}
M \\ 
N%
\end{array}%
\right] v+\left[ 
\begin{array}{c}
-\hat{Y} \\ 
\hat{X}%
\end{array}%
\right] r_{y}  \notag \\
=\left[ 
\begin{array}{c}
M_{0} \\ 
N_{0}%
\end{array}%
\right] v_{0}+\left( \left[ 
\begin{array}{c}
-\hat{Y}_{0} \\ 
\hat{X}_{0}%
\end{array}%
\right] +\left[ 
\begin{array}{c}
M_{0} \\ 
N_{0}%
\end{array}%
\right] Q\right) r_{0},  \label{eq3-17} \\
\left[ 
\begin{array}{c}
v \\ 
r_{y}%
\end{array}%
\right] =\left[ 
\begin{array}{cc}
X & \text{ }Y \\ 
-\hat{N} & \text{ }\hat{M}%
\end{array}%
\right] \left[ 
\begin{array}{c}
u \\ 
y%
\end{array}%
\right] \Longleftrightarrow   \notag \\
\left[ 
\begin{array}{c}
v_{0}-Rr_{0} \\ 
r_{0}%
\end{array}%
\right] =\left[ 
\begin{array}{cc}
X_{0}-R\hat{N}_{0} & \text{ }Y_{0}+R\hat{M}_{0} \\ 
-\hat{N}_{0} & \text{ }\hat{M}_{0}%
\end{array}%
\right] \left[ 
\begin{array}{c}
u \\ 
y%
\end{array}%
\right] ,  \label{eq3-17a} \\
v_{0}=\left[ 
\begin{array}{cc}
X_{0} & \text{ }Y_{0}%
\end{array}%
\right] \left[ 
\begin{array}{c}
u \\ 
y%
\end{array}%
\right] ,r_{0}=\left[ 
\begin{array}{cc}
-\hat{N}_{0} & \text{ }\hat{M}_{0}%
\end{array}%
\right] \left[ 
\begin{array}{c}
u \\ 
y%
\end{array}%
\right] ,  \notag \\
Q=\bar{R}_{0}R_{0}^{-1},R=T_{0}^{-1}\bar{T}_{0},  \notag \\
R_{0}=W_{0}\left( I-C\left( zI-A_{L_{0}}\right) ^{-1}\left( L_{0}-L\right)
\right) W^{-1},  \notag \\
\bar{R}_{0}=\left( 
\begin{array}{c}
\left( F-F_{0}\right) \left( zI-A_{F}\right) ^{-1}L- \\ 
F_{0}\left( zI-A_{L_{0}}\right) ^{-1}\left( L_{0}-L\right) 
\end{array}%
\right) W^{-1},  \notag \\
T_{0}=V^{-1}\left( I+\left( F_{0}-F\right) \left( zI-A_{F_{0}}\right)
^{-1}B\right) V_{0},  \notag \\
\bar{T}_{0}=V^{-1}\left( 
\begin{array}{c}
\left( F_{0}-F\right) \left( zI-A_{F_{0}}\right) ^{-1}L_{0}- \\ 
F\left( zI-A_{L}\right) ^{-1}\left( L-L_{0}\right) 
\end{array}%
\right) .  \notag
\end{gather}
\end{corollary}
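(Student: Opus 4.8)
My plan is to read Corollary \ref{Co3-1} as nothing more than Theorem \ref{Theo3-1} evaluated at a specific pair of factorization families: take ``system $1$'' to be the general octuple carrying the free parameters $(F,L,V,W)$ and ``system $2$'' to be the normalized octuple $(M_0,N_0),(\hat M_0,\hat N_0),(X_0,Y_0),(\hat X_0,\hat Y_0)$ whose realizations are supplied by Lemma \ref{Le2-1} with $(F_0,L_0,V_0,W_0)$. The first, bookkeeping, step is to substitute $W_2=W_0,L_2=L_0,L_1=L,W_1=W$ into the formula for $R_{21}$ and $F_2=F_0,F_1=F,V_1=V,V_2=V_0$ into those for $T_{21},\bar T_{21},\bar R_{21}$; these substitutions reproduce verbatim the expressions for $R_0,T_0,\bar T_0,\bar R_0$ stated in the corollary, and they carry over the invertibility of $R_{21},T_{21}$ over $\mathcal{RH}_\infty$ to $R_0,T_0$. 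Because everything downstream is a change of variables by $R_0$ and $T_0$, this unimodularity is exactly what legitimizes it.

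Second, I would obtain the image-side identity (\ref{eq3-17}) directly from (\ref{eq3-15}), which the text has already derived from Theorem \ref{Theo3-1} for an arbitrary pair of indices. Specializing $(1,2)\mapsto(\text{general},0)$ gives $v_2=T_{12}v=T_0^{-1}v$, $r_{y,2}=R_0[\,-\hat N\ \ \hat M\,]\binom{u}{y}=:r_0$, and $Q_2=\bar R_0R_0^{-1}=Q$, so (\ref{eq3-15}) becomes precisely (\ref{eq3-17}). Equivalently, one can substitute $\binom{M}{N}=\binom{M_0}{N_0}T_0^{-1}$ from (\ref{eq3-34}) and $\binom{-\hat Y}{\hat X}=\binom{-\hat Y_0}{\hat X_0}R_0+\binom{M_0}{N_0}\bar R_0$ from (\ref{eq3-36}) into the attack-free representation (\ref{eq3-11}), set $r_0=R_0r_y$, and collect the two terms proportional to $\binom{M_0}{N_0}$; the coefficient $\bar R_0R_0^{-1}=Q$ emerges automatically as the Youla-type correction attached to $r_0$.

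Third, for the residual-side equivalence (\ref{eq3-17a}) I would left-multiply the general kernel map $\binom{v}{r_y}=\left[\begin{array}{cc}X & Y\\ -\hat N & \hat M\end{array}\right]\binom{u}{y}$ by the block-diagonal unimodular factor $\mathrm{diag}(T_0^{-1},R_0)$. The second block row is immediate from (\ref{eq3-33}): $R_0[\,-\hat N\ \ \hat M\,]=[\,-\hat N_0\ \ \hat M_0\,]$, so the lower component is $r_0=R_0r_y$. For the upper block row I use (\ref{eq3-35}) in the rearranged form $T_0^{-1}[\,X\ \ Y\,]=[\,X_0\ \ Y_0\,]+R[\,-\hat N_0\ \ \hat M_0\,]$ with $R=T_0^{-1}\bar T_0$, which is exactly the first row $[\,X_0-R\hat N_0\ \ Y_0+R\hat M_0\,]$ of the transformed matrix. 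Since $T_0$ and $R_0$ are invertible over $\mathcal{RH}_\infty$, the original and transformed maps carry the same information, which is the content of the ``$\Longleftrightarrow$''.

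The step I expect to be the real obstacle is keeping the change of variables $(v,r_y)\leftrightarrow(v_0,r_0)$ internally consistent across (\ref{eq3-17}) and (\ref{eq3-17a}). The SIR coefficient appearing in (\ref{eq3-17}) is $T_0^{-1}v$, whereas the quantity $v_0=[\,X_0\ \ Y_0\,]\binom{u}{y}$ used on the right of (\ref{eq3-17a}) differs from it by a correction in $r_0$; reconciling the two forces a compatibility relation between the image-side parameter $Q=\bar R_0R_0^{-1}$ and the residual-side parameter $R=T_0^{-1}\bar T_0$, a relation enforced by the Bezout identity (\ref{eq2-6}) together with $X_0M_0+Y_0N_0=I$ and $X_0\hat Y_0=Y_0\hat X_0$. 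I would therefore finish by verifying this compatibility explicitly and using it to pin down the signs of the $\pm R$ entries, treating agreement of the resulting expression with the plain normalized Bezout identity as the final consistency check that no correction term has been dropped.
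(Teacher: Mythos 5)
Your proposal is correct and follows the paper's own route: the paper's proof of this corollary is literally the one-line remark that it follows from Theorem \ref{Theo3-1} and (\ref{eq3-15}), and your specialization of the index pair to (general, normalized) — checking that $R_{21},T_{21},\bar{T}_{21},\bar{R}_{21}$ reduce verbatim to $R_0,T_0,\bar{T}_0,\bar{R}_0$ and then invoking (\ref{eq3-33})--(\ref{eq3-36}) for the residual-side map — is exactly that argument spelled out. The compatibility relation you anticipate in your final step does hold and is the right closing check: writing the transformed Bezout factors as block-triangular multiples of the normalized ones and multiplying out the Bezout identity yields $T_0\bar{R}_0+\bar{T}_0R_0=0$, i.e.\ $\bar{R}_0R_0^{-1}=-T_0^{-1}\bar{T}_0$, which is precisely what reconciles the latent variables appearing in (\ref{eq3-17}) and (\ref{eq3-17a}).
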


\begin{proof}
The proof follows from Theorem \ref{Theo3-1} and (\ref{eq3-15}).
\end{proof}

Observe that\ the second term of (\ref{eq3-15}) is a special case of the
observer-based Youla parameterization (\ref{eq2-17}). In fact, on account of
the general form of the Bezout identity \cite{Zhou96},%
\begin{equation}
\left[ 
\begin{array}{cc}
X+Q\hat{N} & \text{ }Y-Q\hat{M} \\ 
-\hat{N} & \text{ }\hat{M}%
\end{array}%
\right] \left[ 
\begin{array}{cc}
M & \text{ }-\hat{Y}+MQ \\ 
N & \text{ }\hat{X}+NQ%
\end{array}%
\right] =I  \label{eq2-6a}
\end{equation}%
for $Q\in \mathcal{RH}_{\infty },$ the loop model (\ref{eq2-1}), the models (%
\ref{eq3-1}) and (\ref{eq3-11}) are given by%
\begin{gather}
\left\{ 
\begin{array}{l}
y=Gu+r_{y,0} \\ 
u=Ky+r_{u,0},r_{u,0}=\left( X+Q\hat{N}\right) ^{-1}v,%
\end{array}%
\right.  \label{eq2-1a} \\
\left\{ 
\begin{array}{l}
\hat{M}y-\hat{N}u=r_{y} \\ 
Xu+Yy=r_{u},r_{u}=v+Qr_{y},%
\end{array}%
\right.  \label{eq3-1a} \\
\left[ 
\begin{array}{c}
u \\ 
y%
\end{array}%
\right] =\left[ 
\begin{array}{c}
M \\ 
N%
\end{array}%
\right] v+\left( \left[ 
\begin{array}{c}
-\hat{Y} \\ 
\hat{X}%
\end{array}%
\right] +\left[ 
\begin{array}{c}
M \\ 
N%
\end{array}%
\right] Q\right) r_{y}.  \label{eq3-11a}
\end{gather}%
Here, the feedback control law $K$ is described by (\ref{eq-Youla}). It
should be emphasized that the controller (\ref{eq2-1a}) can be realized in
the observer-based form (\ref{eq2-17}), which plays a special role in our
subsequent work.

On account of the Bezout identity (\ref{eq2-6a}) and loop dynamic (\ref%
{eq3-11a}), the canonical residual subspace introduced in Definition \ref%
{Def3-2} is extended to the following general form.

\begin{definition}
\label{Def3-3}Given the RCPs $\left( M,N\right) $ and $\left( \hat{X},\hat{Y}%
\right) $ of $G$ and $K,$ respectively, the $\mathcal{H}_{2}$ subspace $%
\mathcal{R}_{G}$,%
\begin{equation}
\mathcal{R}_{G}=\left\{ 
\begin{array}{c}
\left[ 
\begin{array}{c}
u \\ 
y%
\end{array}%
\right] :\left[ 
\begin{array}{c}
u \\ 
y%
\end{array}%
\right] =\left( \left[ 
\begin{array}{c}
-\hat{Y} \\ 
\hat{X}%
\end{array}%
\right] +\left[ 
\begin{array}{c}
M \\ 
N%
\end{array}%
\right] Q\right) r \\ 
r\in \mathcal{H}_{2}^{m},Q\in \mathcal{RH}_{\infty }%
\end{array}%
\right\} ,  \label{eq3-14}
\end{equation}%
is called residual subspace. 
\end{definition}

Corollary \ref{Co3-2} is an immediate result of Theorem \ref{Theo3-1}.

\begin{corollary}
\label{Co3-2} Given the RCPs $\left( M,N\right) $ and $\left( \hat{X},\hat{Y}%
\right) $ of $%
G$ and $K$, the image subspace (\ref{eq3-12}) and residual
subspace (\ref%
{eq3-14}) are invariant with respect to $\left(
F,L,V,W\right) .$
\end{corollary}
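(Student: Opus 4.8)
The plan is to establish the invariance by comparing two arbitrary parameter choices and showing they generate the same subspaces. Label by $(F_i,L_i,V_i,W_i)$, $i=1,2$, two quadruples of parameters and denote by $(M_i,N_i)$, $(\hat{X}_i,\hat{Y}_i)$ the associated factorizations, exactly as in Theorem~\ref{Theo3-1}; write $\mathcal{I}_G^{(i)}$ and $\mathcal{R}_G^{(i)}$ for the corresponding subspaces (\ref{eq3-12}) and (\ref{eq3-14}). Since $(F,L,V,W)$ are arbitrary, it suffices to prove $\mathcal{I}_G^{(1)}=\mathcal{I}_G^{(2)}$ and $\mathcal{R}_G^{(1)}=\mathcal{R}_G^{(2)}$. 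For brevity write $\Phi_i=\left[\begin{array}{c}M_i\\ N_i\end{array}\right]$ and $\Psi_i=\left[\begin{array}{c}-\hat{Y}_i\\ \hat{X}_i\end{array}\right]$, so that (\ref{eq3-34}) reads $\Phi_2=\Phi_1 T_{21}$ and (\ref{eq3-36}) reads $\Psi_1=\Psi_2 R_{21}+\Phi_2\bar{R}_{21}$, with $T_{21}$ and $R_{21}$ unimodular over $\mathcal{RH}_\infty$ by Theorem~\ref{Theo3-1}.

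For the image subspace I would argue directly from $\Phi_2=\Phi_1 T_{21}$: every element $\Phi_2 v$ with $v\in\mathcal{H}_2^p$ equals $\Phi_1(T_{21}v)$, and since multiplication by the unimodular $T_{21}$ is a bijection of $\mathcal{H}_2^p$ onto itself (with inverse multiplication by $T_{12}=T_{21}^{-1}\in\mathcal{RH}_\infty$), the generated sets coincide, i.e. $\mathcal{I}_G^{(2)}=\mathcal{I}_G^{(1)}$. For the residual subspace, a generic element of $\mathcal{R}_G^{(1)}$ is $(\Psi_1+\Phi_1 Q_1)r$ with $Q_1\in\mathcal{RH}_\infty$ and $r\in\mathcal{H}_2^m$. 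Substituting $\Psi_1=\Psi_2 R_{21}+\Phi_2\bar{R}_{21}$ and $\Phi_1=\Phi_2 T_{12}$ yields $\Psi_1+\Phi_1 Q_1=\Psi_2 R_{21}+\Phi_2(\bar{R}_{21}+T_{12}Q_1)$, and factoring the unimodular $R_{21}$ on the right gives
\[
\Psi_1+\Phi_1 Q_1=\left(\Psi_2+\Phi_2 Q_2\right)R_{21},\qquad Q_2:=(\bar{R}_{21}+T_{12}Q_1)R_{21}^{-1}\in\mathcal{RH}_\infty .
\]
Hence $(\Psi_1+\Phi_1 Q_1)r=(\Psi_2+\Phi_2 Q_2)(R_{21}r)$, which is a member of $\mathcal{R}_G^{(2)}$.

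The only step needing care — and the place I expect the main work — is to verify that this correspondence is a genuine bijection, so that the inclusion is in fact an equality rather than one subspace merely sitting inside the other. As $r$ ranges over $\mathcal{H}_2^m$ the transformed generator $R_{21}r$ ranges over all of $\mathcal{H}_2^m$, because $R_{21}$ is unimodular; and as $Q_1$ ranges over $\mathcal{RH}_\infty$ the affine map $Q_1\mapsto Q_2=(\bar{R}_{21}+T_{12}Q_1)R_{21}^{-1}$ is a bijection of $\mathcal{RH}_\infty$ onto itself, its inverse being $Q_2\mapsto T_{21}(Q_2 R_{21}-\bar{R}_{21})$, which remains in $\mathcal{RH}_\infty$ since $T_{21},R_{21},\bar{R}_{21}\in\mathcal{RH}_\infty$. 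This establishes $\mathcal{R}_G^{(1)}\subseteq\mathcal{R}_G^{(2)}$, and the symmetric argument obtained by interchanging the indices $1$ and $2$ gives the reverse inclusion, hence $\mathcal{R}_G^{(1)}=\mathcal{R}_G^{(2)}$. Because the two realizations were attached to arbitrary parameter quadruples through Theorem~\ref{Theo3-1}, both the image subspace (\ref{eq3-12}) and the residual subspace (\ref{eq3-14}) are invariant with respect to $(F,L,V,W)$, as claimed.
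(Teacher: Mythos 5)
Your proof is correct and follows exactly the route the paper intends: the paper omits the argument, stating only that the corollary "follows directly from (\ref{eq3-34}) and (\ref{eq3-36}) in Theorem \ref{Theo3-1} as well as Definitions \ref{Def3-2} and \ref{Def3-3}," and your write-up simply fills in those details. The factorization $\Psi_1+\Phi_1Q_1=(\Psi_2+\Phi_2Q_2)R_{21}$ with the explicit bijection $Q_1\mapsto Q_2$ over $\mathcal{RH}_\infty$ and the unimodularity of $T_{21},R_{21}$ is precisely the verification the authors leave to the reader, so nothing is missing.
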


The proof follows directly from (\ref{eq3-34}) and (\ref{eq3-36}) in Theorem %
\ref{Theo3-1} as well as Definitions \ref{Def3-2} and \ref{Def3-3}. Thus, it
is omitted.

\subsection{Applications to secure CPCS study}

In the literature on secure CPCSs, several active defence schemes have
emerged to detect cyber-attacks, in particular those stealthy attacks \cite%
{Survey-Segovia-Ferreira2024}. Among others, Moving Target Defense (MTD) 
\cite{Kanellopoulos2019MTD}, dynamic watermarking \cite%
{Mo2015-Watermarked-detection}, and auxiliary subsystem \cite%
{MT-method-CDC2015,Zhang-CDC2017} methods are well-established. In this
subsection, the aforementioned results are leveraged to the MTD and
watermark methods, while the auxiliary subsystem scheme will be addressed in
Subsection \ref{SubsectionV-A}. We will also exemplify the concept of
output-nulling invariant subspace \cite{Anderson1975}, which builds the
control-theoretic basis for designing zero-dynamics attacks, a well-studied
class of stealthy attacks \cite{Sui2021}, in the unified framework of
control and detection. To begin with, the unified framework is exploited
aiming at secure CPCS configurations.

\subsubsection{Secure CPCS configurations\label{subsection3-3-1}}

In their survey paper \cite{Survey-Segovia-Ferreira2024}, Segovia-Ferreira
et al. pointed out that modifying the system architecture, e.g. using
different hardware, software, firmware, or protocols, is capable of
improving the resilience of the system to absorb or survive the attack
impact. To author's best knowledge, there are rarely reported efforts on
modifying control system configurations for the attack-resilient purpose. We
notice that the observer-based realization of a Youla parameterized
controller (\ref{eq2-17}) enables two-site control architectures. On this
basis, it is possible to implement a controller in two separate units
running on the plant and the control station, respectively. Such
architectures naturally obscure the overall loop dynamic behavior from an
attacker. Specifically, the observer-based realization (\ref{eq2-17}) offers
two possible variations,

\begin{itemize}
\item Variation I: on the plant side, the observer-based system%
\begin{equation*}
\left\{ 
\begin{array}{l}
\hat{x}(k+1)=A_{L}\hat{x}(k)+B_{K}\left[ 
\begin{array}{c}
u(k) \\ 
y(k)%
\end{array}%
\right] \\ 
u(k)=F\hat{x}(k)+r_{Q}(k)+v(k) \\ 
r_{y}(k)=C_{K}\hat{x}(k)+D_{K}\left[ 
\begin{array}{c}
u(k) \\ 
y(k)%
\end{array}%
\right] ,%
\end{array}%
\right.
\end{equation*}%
is implemented and the residual $r_{y}$ is transmitted to the control
station, and on the control station, $Qr_{y}+v$ is calculated and
transmitted to the plant;

\item Variation II: on the control station side, the system 
\begin{equation}
\left\{ 
\begin{array}{l}
\hat{x}_{0}(k+1)=A_{L}\hat{x}_{0}(k)+B_{K}\left[ 
\begin{array}{c}
u_{0}(k) \\ 
y(k)%
\end{array}%
\right] \\ 
u_{0}(k)=F\hat{x}_{0}(k)+v(k)%
\end{array}%
\right.  \label{eq3-31}
\end{equation}%
is realized and $u_{0}$ is transmitted to the plant, where the system 
\begin{equation*}
\left\{ 
\begin{array}{l}
\hat{x}(k+1)=A_{L}\hat{x}(k)+B_{K}\left[ 
\begin{array}{c}
u(k) \\ 
y(k)%
\end{array}%
\right] \\ 
r_{y}(k)=W\left( y(k)-C\hat{x}(k)-Du(k)\right) \\ 
u(k)=u_{0}(k)+r_{Q}(k)%
\end{array}%
\right.
\end{equation*}%
and $Qr_{y}$ are implemented, and the output $y$ is sent to the control
station.
\end{itemize}

Variation I is of the advantages that (i) higher security, since the
essential system performance like the system stability is guaranteed also in
case of communication failure, (ii) system privacy preserving, where the
residual signal $r_{y}$ instead of $y$ is transmitted over the communication
network. A potential and significant application is the cloud
computing-based fault-tolerant control of CPCSs \cite{Salah2025}, where the
fault-tolerant control system $Qr_{y}$ is online computed in the cloud
computing system. With the aid of Variation II, the degree of design freedom
can be remarkably increased so that capable attack and fault detection,
tolerant and resilient control systems are designed, as exemplified in the
subsequent sections. Note that the controller (\ref{eq3-31}) implemented on
the control station can be alternatively an output feedback controller $%
u_{0}=K_{0}y+\bar{v}.$ It is noteworthy that the above two system
configurations are often adopted in the context of plug-and-play (PnP)
control systems. In engineering practice, control systems are often equipped
with a default controller whose configuration and parameters are fixed. In
order to meet the control performance requirements, the PnP configuration
and control strategy are often employed to recover the system performance
when the system is suffered from faults and disturbances \cite%
{LLDYP-2019,LYKDP-2018,Salah2025}. The following theorem is an alternative
form of the Youla parameterization that serves as a theoretic basis of a PnP
controller.

\begin{theorem}
\label{theo3-2}Given the feedback control system (\ref{eq2-1}) equipped with
a default (stabilizing) controller $u_{0}=K_{0}y,$ then 
\begin{equation}
u=u_{0}+Qr_{y},r_{y}=\hat{M}y-\hat{N}u,  \label{eq3-17}
\end{equation}%
$Q\in \mathcal{RH}_{\infty },$ gives a parameterization form of all
stabilizing controllers, where $\left( \hat{M},\hat{N}\right) $ is an LCP of 
$G.$ 
\end{theorem}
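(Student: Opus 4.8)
The plan is to establish the two directions of the parameterization separately: that every $Q\in\mathcal{RH}_{\infty}$ yields a stabilizing controller, and that every stabilizing controller is reproduced by some $Q\in\mathcal{RH}_{\infty}$. Throughout I would use the standing fact that, since $K_{0}$ is stabilizing, it admits left and right coprime factorizations $K_{0}=-\hat{Y}\hat{X}^{-1}=-X^{-1}Y$ which, together with the plant factors $\left( M,N\right) ,\left( \hat{M},\hat{N}\right) $, complete the double Bezout identity (\ref{eq2-6}). These factors, and the single shared observer (\ref{eq3-1b}) underlying both residual generators (\ref{eq2-10}) and (\ref{eq2-14}), are the objects the argument turns on.

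For the direction that every admissible $Q$ is stabilizing, I would test internal stability of $\left( G,K\right) $ by injecting disturbances $d_{u},d_{y}$ at the plant input and output, i.e.\ $y=G\left( u+d_{u}\right) +d_{y}$, closed by $u=K_{0}y+Qr_{y}$ with $r_{y}=\hat{M}y-\hat{N}u$. The key computation is that, because $\hat{M}G=\hat{N}$, the residual collapses to $r_{y}=\hat{N}d_{u}+\hat{M}d_{y}$, a \emph{stable} image of $\left( d_{u},d_{y}\right) $ since $\hat{M},\hat{N}\in\mathcal{RH}_{\infty}$. Hence the extra feedback term enters the loop only as the exogenous, $\mathcal{RH}_{\infty}$-bounded signal $w=Q\left( \hat{N}d_{u}+\hat{M}d_{y}\right) $ injected at the controller output, so that the perturbed loop is nothing but the internally stable loop $\left( G,K_{0}\right) $ driven by the additional stable input $w$. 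Composing the stable closed-loop maps of $\left( G,K_{0}\right) $ with the stable map $\left( d_{u},d_{y}\right) \mapsto w$ then shows all transfer functions from $\left( d_{u},d_{y}\right) $ to the internal signals are stable.

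For completeness, I would realize the default action through the observer it \emph{shares} with the residual generator: writing $K_{0}$ in its coprime-factor/observer form lets the default command be $u_{0}=F\hat{x}$, with $\hat{x}$ governed by (\ref{eq3-1b}) and $r_{y}$ read off the same observer. Then $u=u_{0}+Qr_{y}$ is exactly the input-residual realization (\ref{eq2-17a}) of the observer-based Youla controller (\ref{eq2-17}), namely $u-F\hat{x}=Qr_{y}$. Since (\ref{eq-Youla})/(\ref{eq2-17}) already parameterize \emph{all} stabilizing controllers as $Q$ ranges over $\mathcal{RH}_{\infty}$, every stabilizing $K$ is recovered by taking its Youla parameter relative to $K_{0}$ and substituting it into $u=u_{0}+Qr_{y}$; the general Bezout identity (\ref{eq2-6a}) confirms the resulting $K$ stabilizes $G$, re-deriving the first direction as well.

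The step I expect to be the main obstacle is completeness, and specifically the faithful identification of the default action $u_{0}$ with the observer output $F\hat{x}$ rather than with a static output feedback. The delicate point is that $F\hat{x}$ depends on both $u$ and $y$ through the shared observer (\ref{eq3-1b}), whereas a literal reading of $u_{0}=K_{0}y$ as static feedback would, after eliminating $u$, give $K=\left( I+Q\hat{N}\right) ^{-1}\left( K_{0}+Q\hat{M}\right) $ whose reachable set is only the controllers whose Youla parameter lies in $X\cdot\mathcal{RH}_{\infty}$, a proper subfamily unless $X$ is unimodular. I would therefore make explicit that $r_{y}=\hat{M}y-\hat{N}u$ and the default feedback are driven by one and the same observer, and use (\ref{eq2-6})---in particular $\hat{N}\hat{Y}+\hat{M}\hat{X}=I$---to verify that $u=u_{0}+Qr_{y}$ reproduces the Youla form (\ref{eq-Youla}) with the parameter $Q$ itself, thereby closing the argument.
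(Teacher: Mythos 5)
Your sufficiency argument (every $Q\in\mathcal{RH}_{\infty}$ yields a stabilizing controller) is valid and genuinely different from the paper's. The paper proves the whole theorem in one algebraic pass: it writes $K_{0}=-(X+Q_{0}\hat{N})^{-1}(Y-Q_{0}\hat{M})$, takes an arbitrary stabilizing $K$ with Youla parameter $Q_{0}+\bar{Q}$, and extracts $r_{y}$ from the controller equation to arrive at $u=K_{0}y+Qr_{y}$ under the substitution $\bar{Q}=-(X+Q_{0}\hat{N})Q$. Your disturbance-injection argument, which uses $\hat{M}G=\hat{N}$ to collapse the residual to $r_{y}=\hat{N}d_{u}+\hat{M}d_{y}$ so that the term $Qr_{y}$ enters the internally stable loop $\left( G,K_{0}\right) $ only as a stable exogenous signal, obtains sufficiency without ever invoking $K_{0}$'s Youla parameter; it is more elementary and self-contained.

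On completeness, the obstacle you flag is real, and the paper's own proof does not resolve it: the substitution $\bar{Q}=-(X+Q_{0}\hat{N})Q$ maps $\mathcal{RH}_{\infty }$ into $\mathcal{RH}_{\infty }$ but is onto only if $(X+Q_{0}\hat{N})^{-1}\in \mathcal{RH}_{\infty }$, i.e. only if the coprime denominator of $K_{0}$ is unimodular (equivalently, $K_{0}$ is itself a stable system). For a literal output feedback $u_{0}=K_{0}y$ with $K_{0}$ unstable but stabilizing, the reachable Youla parameters form the proper subset $Q_{0}+(X+Q_{0}\hat{N})\mathcal{RH}_{\infty }$, essentially as you compute (with your $X$ replaced by the denominator $X+Q_{0}\hat{N}$ of $K_{0}$'s own LCF), so the family misses stabilizing controllers. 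Your proposed repair---realizing $u_{0}$ as $F\hat{x}$ from the observer (\ref{eq3-1b}) driven by the \emph{total} input $u$---does restore completeness, but it proves the observer-based/PnP realization rather than the literal statement; the two realizations differ precisely because the observer sees $u=u_{0}+Qr_{y}$ and not $u_{0}$. This reading is consistent with how the result is used later, where the case $(X+Q_{1}\hat{N})^{-1}\in \mathcal{RH}_{\infty }$ is singled out as an additional assumption in (\ref{eq4-36}). In short: your sufficiency proof is a sound alternative; your completeness discussion correctly isolates the one step the paper glosses over, and a finished proof must either adopt the shared-observer realization or add the hypothesis that $X+Q_{0}\hat{N}$ is unimodular over $\mathcal{RH}_{\infty }$.
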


\begin{proof}
In \cite{LYKDP-2018}, a proof based on the observer-based implementation
form is given. Below, an alternative proof is delineated using the kernel
model (\ref{eq3-1}). As stabilizing controller, $K_{0}$ can be written as $%
K_{0}\hspace{-2pt}=-(X+\hspace{-2pt}Q_{0}\hat{N})^{-1}(Y\hspace{-2pt}-%
\hspace{-2pt}Q_{0}\hat{M}).$ Let the Youla parameterization form of $K$ be
written into%
\begin{equation*}
K=-(X+\hspace{-2pt}Q_{0}\hat{N}+\bar{Q}\hat{N})^{-1}(Y\hspace{-2pt}-\hspace{%
-2pt}Q_{0}\hat{M}-\bar{Q}\hat{M}),\bar{Q}\in \mathcal{RH}_{\infty }.
\end{equation*}%
It turns out%
\begin{gather*}
(X+\hspace{-2pt}Q_{0}\hat{N}+\bar{Q}\hat{N})u=-(Y\hspace{-2pt}-\hspace{-2pt}%
Q_{0}\hat{M}-\bar{Q}\hat{M})y\Longrightarrow \\
u=-(X+\hspace{-2pt}Q_{0}\hat{N})^{-1}\left( (Y\hspace{-2pt}-\hspace{-2pt}%
Q_{0}\hat{M})y+\bar{Q}r_{y}\right) .
\end{gather*}%
Imposing $\bar{Q}=-(X+\hspace{-2pt}Q_{0}\hat{N})Q,Q\in \mathcal{RH}_{\infty
},$ finally results in the parameterization form (\ref{eq3-17}).
\end{proof}

The parameterization (\ref{eq3-17}) will be adopted in our subsequent work
on fault-tolerant and attack-resilient control.

\subsubsection{An application to MTD technique\label{subsection3-3-2}}

Roughly speaking, the MTD technique introduces time-varying or deliberately
switched system configurations to obscure system information over time \cite%
{Kanellopoulos2019MTD,MT-method-IEEE-TAC2021}. \cite%
{Survey-Segovia-Ferreira2024} describes the MTD technique as a strategy of
increasing the complexity and cost of attack design and limiting the
exposure of the system vulnerabilities. Below, we briefly highlight the
advantageous application of the unified framework to the design of MTD
schemes.

Consider the closed-loop dynamic (\ref{eq3-11a}) with the stabilizing
controller (\ref{eq-Youla}). According to Corollary \ref{Co3-2}, the process
data $\left( u,y\right) $ are invariant with respect to the variations of
the four parameters $\left( F,L,V,W\right) .$ Specifically, it follows from
Theorem \ref{Theo3-1} that the variations of $\left( F,L,V,W\right) $ can be
represented by a pre-filter $T(z),$ a post-filter $R(z)$ and a
parameterization matrix $Q(z).$ Since 
\begin{align*}
G& =\left( R\hat{M}\right) ^{-1}R\hat{N}=NT\left( MT\right) ^{-1}, \\
K& =\left( -\hat{Y}R^{-1}+MQ\right) \left( \hat{X}R^{-1}+NQ\right) ^{-1} \\
& =\left( T^{-1}X-Q\hat{N}\right) ^{-1}\left( T^{-1}Y+Q\hat{M}\right) ,
\end{align*}%
with $\left( R,T\right) \in \mathcal{RH}_{\infty }$ being invertible over $%
\mathcal{RH}_{\infty },$ the closed-loop dynamic (\ref{eq3-11a}) is
generally parameterized by a triple of $\mathcal{RH}_{\infty }$-systems $%
\left( Q,R,T\right) $ with $\left( R,T\right) $ invertible over $\mathcal{RH}%
_{\infty }.$ This result can be further extended to linear time-varying
(LTV) systems. To be specific, it is a known result that the general form of
the Bezout identity (\ref{eq2-6a}) exists for LTV systems as well with $%
\mathcal{Q}$ as the parameterization matrix, and the coprime factorizations
presented in Subsection \ref{SubsecII-A} can be well extended to the case
with LTV post- and pre-filters, $\mathcal{R}$ and $\mathcal{T}$ \cite%
{Feintuch_book,LDZ2024}, where $\left( \mathcal{Q},\mathcal{R},\mathcal{T}%
\right) $ are the operators representing stable LTV systems with the state
space representations%
\begin{align*}
\mathcal{Q}& :\left\{ 
\begin{array}{l}
x_{Q}(k+1)=A_{Q}(k)x_{Q}(k)+B_{Q}(k)r_{y}(k) \\ 
r_{Q}(k)=C_{Q}(k)x_{Q}(k)+D_{Q}(k)r_{y}(k),%
\end{array}%
\right. \\
\mathcal{R}& :\left\{ 
\begin{array}{l}
x_{R}(k+1)=A_{R}(k)x_{R}(k)+B_{R}(k)r_{y}(k) \\ 
r_{R}(k)=C_{R}(k)x_{R}(k)+D_{R}(k)r_{y}(k),%
\end{array}%
\right. \\
\mathcal{T}& :\left\{ 
\begin{array}{l}
x_{T}(k+1)=A_{T}(k)x_{T}(k)+B_{T}(k)v(k) \\ 
\bar{v}(k)=C_{T}(k)x_{T}(k)+D_{T}(k)v(k),%
\end{array}%
\right.
\end{align*}%
and $\left( \mathcal{R},\mathcal{T}\right) $ invertible. With the aid of the
above results, we are able to design more capable MTD systems. For instance,
the LTV post- and pre-filters $\left( \mathcal{R},\mathcal{T}\right) $
generalize time-varying input/output transformations $\left(
R(k),T(k)\right) $, a state of the art MTD scheme applied in the research of
secure CPCSs \cite{Survey-Segovia-Ferreira2024}. In this case, the
controller and observer invert $\left( \mathcal{R},\mathcal{T}\right) $
dynamically, while the attacker sees a time-varying system. The use of LTV
parameter system $\mathcal{Q}$ injects unpredictability in the closed-loop
dynamic so that it is hard for attackers to identify the plant model, the
controller and observer. Mathematically, $\left( \mathcal{Q},\mathcal{R},%
\mathcal{T}\right) $ build an admissible set in the Banach space of bounded
causal linear operators, which is vastly larger than the set formed by
time-varying input/output transformations $\left( R(k),T(k)\right) .$

\subsubsection{An application to watermark technique\label{subsection3-3-3}}

The basic idea of watermarking consists in injecting an excitation signal as
a watermark, often a stochastic series, into the control inputs. The
injected watermark causes measurable correlations that expose malicious
tampering when the outputs fail to reflect the watermark \cite%
{Mo2015-Watermarked-detection}. The watermark technique is a popular method
of detecting cyber-attacks, in particular dealing with the so-called replay
attacks. Below, we propose a watermark-based detection scheme in the unified
framework.

Suppose that the CPCS with noises modelled by (\ref{eq2-23a}) is described
by its kernel model 
\begin{equation*}
\left\{ 
\begin{array}{l}
r_{y}=\hat{M}_{n}y-\hat{N}_{n}u=r_{y,n} \\ 
r_{u}=Xu+Yy=X_{n}u+Y_{n}y+Q_{n}r_{y,n},%
\end{array}%
\right. 
\end{equation*}%
where $\left( \hat{M}_{n},\hat{N}_{n}\right) ,\left( X_{n},Y_{n}\right) $
are the LCPs of the plant and the controller with Kalman-filter gain $%
L=L_{K},r_{y,n}\sim \mathcal{N}(0,\Sigma _{r_{y,n}})$ is the generated
innovation series, and $r_{u}$ is given by 
\begin{gather*}
u=Ky=-X^{-1}Yy\Longrightarrow  \\
r_{u}=Xu+Yy=X_{n}u+Y_{n}y+Q_{n}r_{y,n}.
\end{gather*}%
Here, the last equation is attributed to Theorem \ref{Theo3-1} with 
\begin{equation*}
Q_{n}=-F\left( zI-A_{L}\right) ^{-1}(L-L_{K}),
\end{equation*}%
and $\left( F,L\right) $ denoting the setting of $\left( X,Y\right) $. The
system dynamic is described by 
\begin{gather}
\left[ 
\begin{array}{c}
u \\ 
y%
\end{array}%
\right] =\left[ 
\begin{array}{c}
M \\ 
N%
\end{array}%
\right] r_{u_{0}}+\left[ 
\begin{array}{c}
-\hat{Y}_{n}+MQ_{n} \\ 
\hat{X}_{n}+NQ_{n}%
\end{array}%
\right] r_{y,n},  \notag \\
\Longrightarrow y(k)\sim \mathcal{N}(r_{u_{0}}(k),\Sigma
_{y_{r}}(k)),r_{u_{0}}=X_{n}u+Y_{n}y  \label{eq3-18a}
\end{gather}%
with $\Sigma _{y_{r}}$ as the covariance matrix of $y_{r},$%
\begin{equation}
y_{r}=\left( \hat{X}_{n}+NQ_{n}\right) r_{y,n}.  \label{eq3-18b}
\end{equation}%
On the assumption that the CPCS is working in the steady state, a replay
attack is modelled by (refer to Remark \ref{Rem2-3}) 
\begin{equation}
y^{a}(k)=y(k)+a_{y}(k),a_{y}(k)=-y(k)+y\left( k-T\right) .  \label{eq3-18c}
\end{equation}%
We now configure the CPCS as follows. Suppose that the control station
receives $y^{a}$ from the plant and generates the residual and control
signals, 
\begin{equation*}
r_{Q}=Q\left[ 
\begin{array}{cc}
-\hat{N}_{n} & \text{ }\hat{M}_{n}%
\end{array}%
\right] \left[ 
\begin{array}{c}
u \\ 
y^{a}%
\end{array}%
\right] ,u=Ky^{a},
\end{equation*}%
and send the signal $u_{Q},$ 
\begin{equation*}
u_{Q}=u+r_{Q},
\end{equation*}%
to the plant, where $r_{Q}$ serves as a watermark signal with $Q\in \mathcal{%
RH}_{\infty }.$ On the plant side, the input signal $u$ is recovered by 
\begin{eqnarray*}
u &=&u_{Q}^{a}-Qr_{y}, \\
r_{y} &=&\left[ 
\begin{array}{cc}
-\hat{N}_{n} & \text{ }\hat{M}_{n}%
\end{array}%
\right] \left[ 
\begin{array}{c}
u_{Q}^{a}-Qr_{y} \\ 
y%
\end{array}%
\right] .
\end{eqnarray*}%
To detect replay attacks, a detector, 
\begin{equation}
r_{u}=\left[ 
\begin{array}{cc}
X_{n} & \text{ }Y_{n}%
\end{array}%
\right] \left[ 
\begin{array}{c}
u_{Q}^{a}-Qr_{y} \\ 
y%
\end{array}%
\right] -Q_{n}r_{y},  \label{eq3-18}
\end{equation}%
is implemented on the plant side. It is apparent that in case of
attack-free, $r_{u}=0.$ Against it, under replay attacks it turns out%
\begin{align}
r_{u}& =\left[ 
\begin{array}{cc}
X_{n} & \text{ }Y_{n}%
\end{array}%
\right] \left[ 
\begin{array}{c}
Ky^{a}+r_{Q}+a_{u}-Qr_{y} \\ 
y%
\end{array}%
\right] -Q_{n}r_{y}  \label{eq3-20} \\
& =X_{n}a_{u}-Y_{n}a_{y}+\left( Q-Q_{n}\right) \left( \hat{N}_{n}a_{u}+\hat{M%
}_{n}a_{y}\right)   \notag \\
& =\left[ 
\begin{array}{cc}
X_{n}-\bar{Q}\hat{N}_{n} & \text{ }Y_{n}+\bar{Q}\hat{M}_{n}%
\end{array}%
\right] \left[ 
\begin{array}{c}
a_{u} \\ 
z^{-T}y%
\end{array}%
\right] ,\bar{Q}=Q-Q_{n}.  \label{eq3-19}
\end{align}%
It is of interest to notice that the closed-loop dynamic under the attacks
is governed by%
\begin{equation*}
\left[ 
\begin{array}{c}
u \\ 
y%
\end{array}%
\right] =\left[ 
\begin{array}{c}
M \\ 
N%
\end{array}%
\right] r_{u}+\left[ 
\begin{array}{c}
-\hat{Y}_{n}+MQ_{n} \\ 
\hat{X}_{n}+NQ_{n}%
\end{array}%
\right] r_{y,n},
\end{equation*}%
which yields, recalling (\ref{eq3-18a})-(\ref{eq3-18c}), 
\begin{gather*}
r_{u}(k)\sim \mathcal{N}(r_{a_{u}}(k),\Sigma _{r_{a_{y}}}(k)), \\
r_{a_{u}}=\left( X_{n}-\bar{Q}\hat{N}_{n}\right) a_{u},r_{a_{y}}=\left(
Y_{n}+\bar{Q}\hat{M}_{n}\right) r_{y,a}, \\
r_{y,a}\sim \mathcal{N}(0,2\Sigma _{r_{y,n}}),
\end{gather*}%
where $\Sigma _{r_{a_{y}}}$ is the covariance matrix of $r_{a_{y}}.$ Note
that the above relations are attributed to the facts that 
\begin{gather*}
a_{y}(k)=-y(k)+y\left( k-T\right) =r_{y,a}(k)\sim \mathcal{N}(0,2\Sigma
_{r_{y,n}}), \\
\mathcal{E}y(k)=\mathcal{E}y(k-T)=r_{u_{0}},
\end{gather*}%
and $a_{u}$ is a deterministic signal. As a result, the detection logic is
defined as%
\begin{equation*}
r_{u}=\left\{ 
\begin{array}{l}
0,\text{ attack-free} \\ 
\sim \mathcal{N}(r_{a_{u}}(k),\Sigma _{r_{a_{y}}}(k))\neq 0,\text{ alarm.}%
\end{array}%
\right. 
\end{equation*}%
We would like to call the reader's attention to a useful by-product of the
detection scheme. Selecting $\bar{Q}$ such that 
\begin{equation*}
\left\Vert \left[ 
\begin{array}{cc}
X_{n} & \text{ }Y_{n}%
\end{array}%
\right] -\bar{Q}\left[ 
\begin{array}{cc}
-\hat{N}_{n} & \text{ }\hat{M}_{n}%
\end{array}%
\right] \right\Vert _{\infty }\rightarrow \min 
\end{equation*}%
enhances the attack resilience of the system.

\subsubsection{Zero-dynamics\label{subsection3-3-4}}

Zero-dynamics attacks are a type of stealthy attacks that have been
intensively researched in the literature, dedicated to attack detection and
design \cite{TEIXEIRA-zero-attack_2015,Hoehn2016,Sui2021}. Generally
speaking, this type of attacks exploits the plant's inherent zero dynamics
to remain perfectly stealthy against (output) residual-based detectors. As
well delineated in \cite{Sui2021}, the control-theoretic basis for
constructing zero-dynamics attacks is the output-nulling invariant subspace 
\cite{Anderson1975}. This motivates us to examine the relation between the
output-nulling invariant subspace and the RCF/LCF of $G$ as well as the
associated image/kernel subspaces. To this end, consider the output-nulling
invariant subspace $\mathcal{V}$ in the sense of \cite{Anderson1975}, for
some $u,v\in \mathcal{V}$%
\begin{equation*}
Av+Bu\mathcal{\subset V},Cv+Du=0,
\end{equation*}%
which is equivalent to 
\begin{equation}
\left( A+BF\right) \mathcal{V\subset V},\left( C+DF\right) \mathcal{V}=0,
\label{eq3-21}
\end{equation}%
for some $F.$ Subsequently, associated to each $x\in \mathcal{V},$ we have
an input-output pair $\left( u,y\right) $ so that $y=0.$ According to (\ref%
{eq3-21}), this can be expressed in terms of the RCF of $G$ as 
\begin{equation*}
u=Mv,Nv=0\Longrightarrow y=NM^{-1}u=0.
\end{equation*}%
Thus, the set of the input and output pairs associated with the
output-nulling invariant subspace corresponds to a subspace in the image
subspace $\mathcal{I}_{G}$%
\begin{equation*}
\left\{ 
\begin{array}{c}
\left[ 
\begin{array}{c}
u \\ 
y%
\end{array}%
\right] :\left[ 
\begin{array}{c}
u \\ 
y%
\end{array}%
\right] =\left[ 
\begin{array}{c}
M \\ 
N%
\end{array}%
\right] v, \\ 
Nv=0,v\neq 0%
\end{array}%
\right\} \subset \mathcal{I}_{G}.
\end{equation*}%
The dual form is the input-nulling invariant subspace. The associated
input-output pair $\left( u,y\right) $ satisfies, for some $u,$ 
\begin{equation*}
\hat{N}u=0,\hat{M}y=\hat{N}u=0\Longrightarrow y=0.
\end{equation*}%
It is apparent that the set of all associated input-output pairs builds a
subspace in the kernel subspace $\mathcal{K}_{G},$%
\begin{equation*}
\left\{ 
\begin{array}{c}
\left[ 
\begin{array}{c}
u \\ 
y%
\end{array}%
\right] :\left[ 
\begin{array}{cc}
-\hat{N} & \text{ }\hat{M}%
\end{array}%
\right] \left[ 
\begin{array}{c}
u \\ 
y%
\end{array}%
\right] =0, \\ 
\hat{N}u=0,u\neq 0%
\end{array}%
\right\} \subset \mathcal{K}_{G}.
\end{equation*}%
Motivated by the recent endeavours to study local stealthy attacks \cite%
{TEIXEIRA-zero-attack_2015,MikhaylenkoTAC2022,Sui2021}, we now extend the
aforementioned results to the case, when only partial input and output
channels are under consideration. Specifically, let 
\begin{equation}
B=\left[ 
\begin{array}{cc}
B_{1} & \text{ }B_{2}%
\end{array}%
\right] ,C=\left[ 
\begin{array}{c}
C_{1} \\ 
C_{2}%
\end{array}%
\right] ,D=\left[ 
\begin{array}{cc}
D_{11} & \text{ }D_{12} \\ 
D_{21} & \text{ }D_{22}%
\end{array}%
\right]  \label{eq3-23}
\end{equation}%
where the input channels modelled by $B_{1}$ are of interest, e.g.
representing the control inputs that are attacked, and $\left( C_{1},\left[ 
\begin{array}{cc}
D_{11} & \text{ }D_{12}%
\end{array}%
\right] \right) $ represents the output channels that should not be affected
by the inputs acted on the input channels $B_{1}$ like attacks. Below, we
examine the existence conditions expressed in terms of RCF/LCF and
image/kernel subspaces schematically and without rigorous mathematical
details. For our purpose, corresponding to (\ref{eq3-23}) $\left( u,y\right) 
$ are written into%
\begin{equation*}
u=\left[ 
\begin{array}{c}
u_{1} \\ 
u_{2}%
\end{array}%
\right] =\left[ 
\begin{array}{c}
F_{1}x+v_{1} \\ 
0%
\end{array}%
\right] ,y=\left[ 
\begin{array}{c}
y_{1} \\ 
y_{2}%
\end{array}%
\right] ,
\end{equation*}%
which yields an SIR of $G,$%
\begin{gather*}
\left[ 
\begin{array}{c}
u \\ 
y%
\end{array}%
\right] =\left[ 
\begin{array}{c}
M \\ 
N%
\end{array}%
\right] v=\left[ 
\begin{array}{c}
M_{1} \\ 
N_{1} \\ 
N_{2}%
\end{array}%
\right] v_{1}, \\
M_{1}=\left[ 
\begin{array}{c}
F_{1}\left( zI-A-B_{1}F_{1}\right) ^{-1}B_{1}+I \\ 
0%
\end{array}%
\right] , \\
\left[ 
\begin{array}{c}
N_{1} \\ 
N_{2}%
\end{array}%
\right] =\left[ 
\begin{array}{c}
\left( C_{1}+D_{11}F_{1}\right) \left( zI-A-B_{1}F_{1}\right)
^{-1}B_{1}+D_{11} \\ 
\left( C_{2}+D_{21}F_{1}\right) \left( zI-A-B_{1}F_{1}\right)
^{-1}B_{1}+D_{21}%
\end{array}%
\right] .
\end{gather*}%
As a result, $y_{1}=0,$ if and only if, for some $v_{1}\neq 0,N_{1}v_{1}=0.$
It is obvious that%
\begin{equation*}
\left\{ 
\begin{array}{c}
\left[ 
\begin{array}{c}
u \\ 
y%
\end{array}%
\right] :\left[ 
\begin{array}{c}
u \\ 
y%
\end{array}%
\right] =\left[ 
\begin{array}{c}
M \\ 
N%
\end{array}%
\right] v=\left[ 
\begin{array}{c}
u_{1} \\ 
0 \\ 
0 \\ 
y_{2}%
\end{array}%
\right] , \\ 
N_{1}v_{1}=0,v_{1}\neq 0%
\end{array}%
\right\} \subset \mathcal{I}_{G}.
\end{equation*}%
The dual results are summarized as follows. For some $u_{1}\neq 0,u_{2}=0,$%
\begin{gather*}
\hat{N}u=\left[ 
\begin{array}{c}
\hat{N}_{1}u_{1} \\ 
\hat{N}_{2}u_{1}%
\end{array}%
\right] =\left[ 
\begin{array}{c}
0 \\ 
\hat{N}_{2}u_{1}%
\end{array}%
\right] , \\
\hat{M}y=\left[ 
\begin{array}{cc}
\hat{M}_{1} & \text{ }\hat{M}_{2}%
\end{array}%
\right] \left[ 
\begin{array}{c}
y_{1} \\ 
y_{2}%
\end{array}%
\right] =\left[ 
\begin{array}{cc}
\hat{M}_{11} & \text{ }\hat{M}_{12} \\ 
\hat{M}_{21} & \text{ }\hat{M}_{22}%
\end{array}%
\right] \left[ 
\begin{array}{c}
y_{1} \\ 
y_{2}%
\end{array}%
\right] , \\
\hat{N}_{1}=C_{1}\left( zI-A+L_{1}C_{1}\right) ^{-1}\left(
B_{1}-L_{1}D_{11}\right) +D_{11}, \\
\hat{N}_{2}=C_{2}\left( zI-A+L_{1}C_{1}\right) ^{-1}\left(
B_{1}-L_{1}D_{21}\right) +D_{21}, \\
\hat{M}_{1}=\left[ 
\begin{array}{c}
I \\ 
0%
\end{array}%
\right] -C\left( zI-A+L_{1}C_{1}\right) ^{-1}L_{1},\hat{M}_{2}=\left[ 
\begin{array}{c}
0 \\ 
I%
\end{array}%
\right] ,
\end{gather*}%
which finally results in 
\begin{gather*}
\hat{N}u=\hat{M}y=\left[ 
\begin{array}{c}
0 \\ 
y_{2}%
\end{array}%
\right] \Longrightarrow y_{1}=0,y_{2}=\hat{N}_{2}u_{1} \\
\left\{ 
\begin{array}{c}
\left[ 
\begin{array}{c}
u \\ 
y%
\end{array}%
\right] :\left[ 
\begin{array}{cc}
-\hat{N} & \text{ }\hat{M}%
\end{array}%
\right] \left[ 
\begin{array}{c}
u \\ 
y%
\end{array}%
\right] =0\Leftrightarrow \\ 
\left[ 
\begin{array}{cc}
-\hat{N}_{2} & \text{ }I%
\end{array}%
\right] \left[ 
\begin{array}{c}
u_{1} \\ 
y_{2}%
\end{array}%
\right] =0,\hat{N}_{1}u_{1}=0,u_{1}\neq 0%
\end{array}%
\right\} \subset \mathcal{K}_{G}.
\end{gather*}%
The above results are achieved on the assumption that merely system model
knowledge $\left( A,B_{1},C_{1},D_{11}\right) $ is used. In Subsection \ref%
{Subsection IV-C}, application of the above results to constructing
zero-dynamics attacks will be addressed.

\section{Analysis of System Dynamics under Cyber-attacks and Faults}

\label{sec3}In this section, dynamics of the system (\ref{eq2-20}) and their
impairment under cyber-attacks and faults are firstly analyzed. On this
basis, the issues of stealthy attacks like definitions, detection conditions
and their construction are closely examined.

\subsection{Analysis of faulty dynamics}

We begin with an analysis of the closed-loop dynamic under faults modelled
by (\ref{eq2-24}).

\begin{theorem}
\label{Theo4-1}Given the kernel-based loop model (\ref{eq3-1}) with the
fault modelled by (\ref{eq2-24}), then the closed-loop system is stable if 
\begin{equation}
\Phi ^{f}=\left( I-\Pi ^{f}\left[ 
\begin{array}{c}
-\hat{Y} \\ 
\hat{X}%
\end{array}%
\right] \right) ^{-1}\in \mathcal{RH}_{\infty }.  \label{eq3-2}
\end{equation}%
Moreover, the closed-loop dynamic is governed by 
\begin{gather}
\left[ 
\begin{array}{c}
u \\ 
y%
\end{array}%
\right] =\left[ 
\begin{array}{c}
M \\ 
N%
\end{array}%
\right] v+\left[ 
\begin{array}{c}
-\hat{Y} \\ 
\hat{X}%
\end{array}%
\right] \left( \Phi ^{f}\hat{N}_{d}d+\Phi ^{f}\bar{f}+\Psi ^{f}v\right) ,
\label{eq3-3} \\
\Psi ^{f}=\Phi ^{f}\Pi ^{f}\left[ 
\begin{array}{c}
M \\ 
N%
\end{array}%
\right] .  \notag
\end{gather}%
\end{theorem}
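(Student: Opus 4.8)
The plan is to obtain (\ref{eq3-3}) by closing a single algebraic loop in the output residual $r_{y}$, and then to read off the stability condition (\ref{eq3-2}) from the requirement that this loop be well-posed over $\mathcal{RH}_{\infty}$. The two ingredients are the fault-inclusive expression (\ref{eq3-11b}) for the residual, $r_{y}=\hat{N}_{d}d+\Pi^{f}\left[\begin{smallmatrix} u \\ y \end{smallmatrix}\right]+\bar{f}$, and the dual representation (\ref{eq3-11}), which writes $\left[\begin{smallmatrix} u \\ y \end{smallmatrix}\right]=\left[\begin{smallmatrix} M \\ N \end{smallmatrix}\right]v+\left[\begin{smallmatrix} -\hat{Y} \\ \hat{X} \end{smallmatrix}\right]r_{y}$ in terms of the Bezout blocks.

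First I would substitute the dual representation (\ref{eq3-11}) into the multiplicative term $\Pi^{f}\left[\begin{smallmatrix} u \\ y \end{smallmatrix}\right]$. Since $\left[\begin{smallmatrix} u \\ y \end{smallmatrix}\right]$ depends linearly on $r_{y}$ through the block $\left[\begin{smallmatrix} -\hat{Y} \\ \hat{X} \end{smallmatrix}\right]$, this yields a self-referential identity for $r_{y}$ whose $r_{y}$-dependent part is $\Pi^{f}\left[\begin{smallmatrix} -\hat{Y} \\ \hat{X} \end{smallmatrix}\right]r_{y}$ and whose driving terms are $\hat{N}_{d}d$, $\Pi^{f}\left[\begin{smallmatrix} M \\ N \end{smallmatrix}\right]v$ and $\bar{f}$. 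Collecting the $r_{y}$ terms gives $\bigl(I-\Pi^{f}\left[\begin{smallmatrix} -\hat{Y} \\ \hat{X} \end{smallmatrix}\right]\bigr)r_{y}=\hat{N}_{d}d+\Pi^{f}\left[\begin{smallmatrix} M \\ N \end{smallmatrix}\right]v+\bar{f}$, and inverting the left-hand operator --- which is exactly $\Phi^{f}$ of (\ref{eq3-2}) --- produces $r_{y}=\Phi^{f}\hat{N}_{d}d+\Phi^{f}\bar{f}+\Psi^{f}v$ with $\Psi^{f}=\Phi^{f}\Pi^{f}\left[\begin{smallmatrix} M \\ N \end{smallmatrix}\right]$. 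This is precisely the parenthesized expression in (\ref{eq3-3}); back-substituting it into (\ref{eq3-11}) delivers (\ref{eq3-3}) verbatim.

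For the stability claim I would note that every block in the resulting map from $(d,v,\bar{f})$ to $(u,y)$ is already in $\mathcal{RH}_{\infty}$: the Bezout factors $M,N,\hat{X},\hat{Y}$ lie in $\mathcal{RH}_{\infty}$ by construction since their realizations (\ref{eq2-8}) and (\ref{eq2-8d}) are built on the Schur matrix $A_{F}$; $\hat{N}_{d}=\left(A-LC,E_{d}-LF_{d},C,F_{d}\right)$ is stable because $A-LC=A_{L}$ is Schur; and $\Pi^{f}=\hat{M}\Pi_{0}^{f}$ is stable by the standing assumption on the multiplicative fault. Hence the only transfer matrix whose stability is not automatic is the loop-closing inverse $\Phi^{f}$, and once $\Phi^{f}\in\mathcal{RH}_{\infty}$ both $\Psi^{f}$ and the entire response (\ref{eq3-3}) inherit stability, so (\ref{eq3-2}) is sufficient.

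The step I would treat most carefully is the upgrade from input--output stability of (\ref{eq3-3}) to genuine closed-loop (internal) stability. The kernel model is driven by the single observer (\ref{eq3-1b}) whose state matrix $A_{L}$ is already Schur, so the observer introduces no hidden unstable modes; the only feedback path capable of destabilizing the interconnection is the multiplicative-fault loop through $\Pi^{f}$ and $\left[\begin{smallmatrix} -\hat{Y} \\ \hat{X} \end{smallmatrix}\right]$. I would make this explicit by observing that well-posedness of that loop is equivalent to invertibility of $I-\Pi^{f}\left[\begin{smallmatrix} -\hat{Y} \\ \hat{X} \end{smallmatrix}\right]$ over $\mathcal{RH}_{\infty}$, i.e.\ to $\Phi^{f}\in\mathcal{RH}_{\infty}$, which precludes any unstable pole--zero cancellation in closing the loop and thereby turns input--output stability into closed-loop stability.
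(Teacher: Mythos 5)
Your proof is correct and takes essentially the same approach as the paper's: both substitute the faulty residual expression (\ref{eq3-11b}) into the dual loop representation (\ref{eq3-11}) and close the resulting algebraic loop through the multiplicative fault. The only difference is where the loop is closed --- you solve the fixed-point equation for $r_{y}$, so that $\Phi^{f}$ of (\ref{eq3-2}) appears directly and (\ref{eq3-3}) follows by back-substitution, whereas the paper solves for $(u,y)$ and reaches (\ref{eq3-3}) via the push-through identity; your explicit remarks that all remaining blocks are already in $\mathcal{RH}_{\infty}$ and that well-posedness of the fault loop upgrades input--output to internal stability make precise what the paper leaves implicit.
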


\begin{proof}
It follows from (\ref{eq3-11}) that 
\begin{equation*}
\left[ 
\begin{array}{c}
u \\ 
y%
\end{array}%
\right] =\left[ 
\begin{array}{c}
M \\ 
N%
\end{array}%
\right] v+\left[ 
\begin{array}{c}
-\hat{Y} \\ 
\hat{X}%
\end{array}%
\right] r_{y}.
\end{equation*}%
During faulty operations, 
\begin{gather*}
r_{y}=\hat{M}\left( r_{y,0}+f_{0}\right) =\hat{N}_{d}d+\Pi ^{f}\left[ 
\begin{array}{c}
{u} \\ 
y%
\end{array}%
\right] +\bar{f}\Longrightarrow \\
\left[ 
\begin{array}{c}
u \\ 
y%
\end{array}%
\right] =\left[ 
\begin{array}{c}
M \\ 
N%
\end{array}%
\right] v+\left[ 
\begin{array}{c}
-\hat{Y} \\ 
\hat{X}%
\end{array}%
\right] \left( \hat{N}_{d}d+\bar{f}+\Pi ^{f}\left[ 
\begin{array}{c}
{u} \\ 
y%
\end{array}%
\right] \right) \Longrightarrow \\
\left[ 
\begin{array}{c}
u \\ 
y%
\end{array}%
\right] =\left( I-\left[ 
\begin{array}{c}
-\hat{Y} \\ 
\hat{X}%
\end{array}%
\right] \Pi ^{f}\right) ^{-1}\left( 
\begin{array}{c}
\left[ 
\begin{array}{c}
M \\ 
N%
\end{array}%
\right] v \\ 
+\left[ 
\begin{array}{c}
-\hat{Y} \\ 
\hat{X}%
\end{array}%
\right] \left( \hat{N}_{d}d+\bar{f}\right)%
\end{array}%
\right) ,
\end{gather*}%
leading to (\ref{eq3-3}). It is clear that the closed-loop is stable if (\ref%
{eq3-2}) is true.
\end{proof}

As expected, equation (\ref{eq3-3}) showcases that the uncertain dynamics
caused by the unknown input $d$ and fault $f$ act on the system residual
subspace. Their influence on the closed-loop dynamic is decoupled from the
nominal system response 
\begin{equation*}
\left[ 
\begin{array}{c}
u_{n} \\ 
y_{n}%
\end{array}%
\right] :=\left[ 
\begin{array}{c}
M \\ 
N%
\end{array}%
\right] v.
\end{equation*}%
Consequently, it is sufficient to detect the fault using the output residual 
$r_{y}$, that is%
\begin{equation}
r_{y}=\Phi ^{f}\hat{N}_{d}d+\Phi ^{f}\bar{f}+\Psi ^{f}v.
\end{equation}%
Note that $r_{y}$ acts as the latent variable building the residual subspace 
$\mathcal{R}_{G}$ defined in Definition \ref{Def3-2}.

Next, an alternative form of the closed-loop dynamic is derived, which
underlines the influence of the faults on the system dynamic and is useful
for our subsequent study on secure CPCSs.

\begin{theorem}
\label{Theo4-2}The kernel-based loop model (\ref{eq3-1}) with the fault
modelled by (\ref{eq2-24}) can be equivalently expressed by 
\begin{gather}
\left\{ 
\begin{array}{l}
y\left( z\right) =G_{\Delta }\left( z\right) u\left( z\right) +r_{y,f}\left(
z\right)  \\ 
u\left( z\right) =K\left( z\right) y\left( z\right) +r_{u,0}\left( z\right) ,%
\end{array}%
\right.   \label{eq3-5} \\
G_{\Delta }=\left( \hat{M}+\Delta _{\hat{M}}\right) ^{-1}\left( \hat{N}%
+\Delta _{\hat{N}}\right)   \notag \\
=\left( N+\Delta _{N}\right) \left( M+\Delta _{M}\right) ^{-1},
\label{eq3-6} \\
\left[ 
\begin{array}{c}
\Delta _{M} \\ 
\Delta _{N}%
\end{array}%
\right] =\left[ 
\begin{array}{c}
-\hat{Y}\Psi ^{f} \\ 
\hat{X}\Psi ^{f}%
\end{array}%
\right] ,\left[ 
\begin{array}{cc}
\Delta _{\hat{N}} & \text{ }-\Delta _{\hat{M}}%
\end{array}%
\right] =\Pi ^{f},  \label{eq3-6a} \\
r_{y,f}=\left( \hat{M}+\Delta _{\hat{M}}\right) ^{-1}\left( \hat{N}_{d}d+%
\bar{f}\right) .  \notag
\end{gather}
\end{theorem}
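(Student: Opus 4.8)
The plan is to exhibit the faulty kernel model (\ref{eq3-1}) as a nominal feedback loop around a perturbed plant $G_{\Delta}$, obtaining the \emph{left} coprime factorization of $G_{\Delta}$ directly from the kernel equations and the \emph{right} coprime factorization by regrouping the closed-loop dynamic (\ref{eq3-3}) of Theorem \ref{Theo4-1}; the Bezout identity (\ref{eq2-6}) then reconciles the two factorizations. Throughout, write $\Pi^{f}=\begin{bmatrix} \Pi_{u}^{f} & \Pi_{y}^{f}\end{bmatrix}$ conformably with $(u,y)$, so that by (\ref{eq3-6a}) we have $\Delta_{\hat{N}}=\Pi_{u}^{f}$ and $\Delta_{\hat{M}}=-\Pi_{y}^{f}$.

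First I would establish the LCF form. Substituting the fault expansion $r_{y}=\hat{N}_{d}d+\Pi^{f}\begin{bmatrix} u \\ y\end{bmatrix}+\bar{f}$ from (\ref{eq3-11b}) into the kernel equation $\hat{M}y-\hat{N}u=r_{y}$ and moving the data-dependent term $\Pi^{f}\begin{bmatrix} u \\ y\end{bmatrix}$ to the left-hand side yields $(\hat{M}+\Delta_{\hat{M}})y-(\hat{N}+\Delta_{\hat{N}})u=\hat{N}_{d}d+\bar{f}$. Solving for $y$ gives $y=G_{\Delta}u+r_{y,f}$ with $G_{\Delta}=(\hat{M}+\Delta_{\hat{M}})^{-1}(\hat{N}+\Delta_{\hat{N}})$ and $r_{y,f}=(\hat{M}+\Delta_{\hat{M}})^{-1}(\hat{N}_{d}d+\bar{f})$, which is precisely the plant side of (\ref{eq3-5})--(\ref{eq3-6}); the controller equation $u=Ky+r_{u,0}$ is untouched. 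This recovers the left factorization and the stated expression for $r_{y,f}$.

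Next I would read off the RCF form from Theorem \ref{Theo4-1}. Regrouping (\ref{eq3-3}) by collecting the coefficient of $v$ gives $\begin{bmatrix} u \\ y\end{bmatrix}=\begin{bmatrix} M-\hat{Y}\Psi^{f} \\ N+\hat{X}\Psi^{f}\end{bmatrix}v+\begin{bmatrix} -\hat{Y} \\ \hat{X}\end{bmatrix}\Phi^{f}(\hat{N}_{d}d+\bar{f})$, so the nominal image map $v\mapsto(u,y)$ is carried by $\begin{bmatrix} M+\Delta_{M} \\ N+\Delta_{N}\end{bmatrix}$ with $\Delta_{M},\Delta_{N}$ exactly as in (\ref{eq3-6a}). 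Setting $d=0$ and $\bar{f}=0$ removes the residual term and leaves $u=(M+\Delta_{M})v$, $y=(N+\Delta_{N})v$, whence $y=(N+\Delta_{N})(M+\Delta_{M})^{-1}u=G_{\Delta}u$, the right factorization.

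The main obstacle is to confirm that these two factorizations define the \emph{same} $G_{\Delta}$, and this is where the Bezout identity is indispensable. I would verify the cross-relation $(\hat{M}+\Delta_{\hat{M}})(N+\Delta_{N})=(\hat{N}+\Delta_{\hat{N}})(M+\Delta_{M})$. Expanding both products and cancelling $\hat{M}N=\hat{N}M$ (the $(2,1)$ block of (\ref{eq2-6})), the claim collapses to $(\hat{M}\hat{X}+\hat{N}\hat{Y})\Psi^{f}+(\Pi_{u}^{f}\hat{Y}-\Pi_{y}^{f}\hat{X})\Psi^{f}=\Pi^{f}\begin{bmatrix} M \\ N\end{bmatrix}$. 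The $(2,2)$ block of (\ref{eq2-6}), namely $\hat{M}\hat{X}+\hat{N}\hat{Y}=I$, reduces the left side to $(I+\Pi_{u}^{f}\hat{Y}-\Pi_{y}^{f}\hat{X})\Psi^{f}=(\Phi^{f})^{-1}\Psi^{f}$, using $(\Phi^{f})^{-1}=I+\Pi_{u}^{f}\hat{Y}-\Pi_{y}^{f}\hat{X}$; the definition $\Psi^{f}=\Phi^{f}\Pi^{f}\begin{bmatrix} M \\ N\end{bmatrix}$ from Theorem \ref{Theo4-1} gives $\Pi^{f}\begin{bmatrix} M \\ N\end{bmatrix}=(\Phi^{f})^{-1}\Psi^{f}$, so both sides coincide. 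Well-posedness of the two inverses, $(\hat{M}+\Delta_{\hat{M}})^{-1}$ and $(M+\Delta_{M})^{-1}$ in $\mathcal{RH}_{\infty}$, follows from the stability condition $\Phi^{f}\in\mathcal{RH}_{\infty}$ of Theorem \ref{Theo4-1}, which closes the argument.
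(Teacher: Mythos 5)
Your proposal is correct and follows essentially the same route as the paper: the LCF form is obtained by absorbing $\Pi^{f}\left[\begin{smallmatrix}u\\y\end{smallmatrix}\right]$ into the kernel equation exactly as in the paper, and your verification of the cross-relation $(\hat{M}+\Delta_{\hat{M}})(N+\Delta_{N})=(\hat{N}+\Delta_{\hat{N}})(M+\Delta_{M})$ is the same Bezout computation the paper performs when it shows $\left[\begin{smallmatrix}-\hat{N}-\Delta_{\hat{N}} & \hat{M}+\Delta_{\hat{M}}\end{smallmatrix}\right]\left(\left[\begin{smallmatrix}M\\N\end{smallmatrix}\right]+\left[\begin{smallmatrix}\Delta_{M}\\ \Delta_{N}\end{smallmatrix}\right]\right)=0$, reducing in both cases to $(\Phi^{f})^{-1}\Psi^{f}=\Pi^{f}\left[\begin{smallmatrix}M\\N\end{smallmatrix}\right]$. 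Your preliminary derivation of $(\Delta_{M},\Delta_{N})$ by regrouping (\ref{eq3-3}) is a nice motivation the paper leaves implicit, but it does not change the substance of the argument.
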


\begin{proof}
It follows from (\ref{eq3-1}) that 
\begin{gather*}
r_{y}=\left[ 
\begin{array}{cc}
-\hat{N} & \text{ }\hat{M}%
\end{array}%
\right] \left[ 
\begin{array}{c}
u \\ 
y%
\end{array}%
\right] =\hat{N}_{d}d+\Pi ^{f}\left[ 
\begin{array}{c}
{u} \\ 
y%
\end{array}%
\right] +\bar{f} \\
\Longrightarrow \left( \left[ 
\begin{array}{cc}
-\hat{N} & \text{ }\hat{M}%
\end{array}%
\right] -\Pi ^{f}\right) \left[ 
\begin{array}{c}
u \\ 
y%
\end{array}%
\right] =\hat{N}_{d}d+\bar{f}, \\
Xu+Yy=r_{u}=v,
\end{gather*}%
from which (\ref{eq3-5}) with 
\begin{equation*}
y=G_{\Delta }u+r_{y,f}\left( z\right) ,G_{\Delta }=\left( \hat{M}+\Delta _{%
\hat{M}}\right) ^{-1}\left( \hat{N}+\Delta _{\hat{N}}\right)
\end{equation*}%
is proven. We now examine (\ref{eq3-6}), which is equivalent to 
\begin{equation*}
\left[ 
\begin{array}{cc}
-\hat{N}-\Delta _{\hat{N}} & \text{ }\hat{M}+\Delta _{\hat{M}}%
\end{array}%
\right] \left( \left[ 
\begin{array}{c}
M \\ 
N%
\end{array}%
\right] +\left[ 
\begin{array}{c}
\Delta _{M} \\ 
\Delta _{N}%
\end{array}%
\right] \right) =0.
\end{equation*}%
Notice that 
\begin{gather*}
\left[ 
\begin{array}{cc}
-\Delta _{\hat{N}} & \text{ }\Delta _{\hat{M}}%
\end{array}%
\right] \left[ 
\begin{array}{c}
M \\ 
N%
\end{array}%
\right] =-\Pi ^{f}\left[ 
\begin{array}{c}
M \\ 
N%
\end{array}%
\right] , \\
\left[ 
\begin{array}{cc}
-\Delta _{\hat{N}} & \text{ }\Delta _{\hat{M}}%
\end{array}%
\right] \left[ 
\begin{array}{c}
\Delta _{M} \\ 
\Delta _{N}%
\end{array}%
\right] =-\Pi ^{f}\left[ 
\begin{array}{c}
-\hat{Y} \\ 
\hat{X}%
\end{array}%
\right] \Psi ^{f}, \\
\left[ 
\begin{array}{cc}
-\hat{N} & \text{ }\hat{M}%
\end{array}%
\right] \left[ 
\begin{array}{c}
\Delta _{M} \\ 
\Delta _{N}%
\end{array}%
\right] =\Psi ^{f}\Longrightarrow \\
\left[ 
\begin{array}{cc}
-\hat{N}-\Delta _{\hat{N}} & \text{ }\hat{M}+\Delta _{\hat{M}}%
\end{array}%
\right] \left( \left[ 
\begin{array}{c}
M \\ 
N%
\end{array}%
\right] +\left[ 
\begin{array}{c}
\Delta _{M} \\ 
\Delta _{N}%
\end{array}%
\right] \right) \\
=\left( I-\Pi ^{f}\left[ 
\begin{array}{c}
-\hat{Y} \\ 
\hat{X}%
\end{array}%
\right] \right) \Psi ^{f}-\Pi ^{f}\left[ 
\begin{array}{c}
M \\ 
N%
\end{array}%
\right] =0.
\end{gather*}%
Hence, (\ref{eq3-6}) is proven.
\end{proof}

Theorem \ref{Theo4-2} implies that the variation caused by the fault (\ref%
{eq2-24}) can be equivalently modelled by the coprime factor uncertainty of
the plant \cite{Vinnicombe-book}. In robust control theory, $\left( \Delta _{%
\hat{M}},\Delta _{\hat{N}}\right) $ and $\left( \Delta _{M},\Delta
_{N}\right) $ are called coprime factor uncertainty of the plant \cite%
{Vinnicombe-book}, i.e. 
\begin{align*}
I_{G}& =\left[ 
\begin{array}{c}
M \\ 
N%
\end{array}%
\right] +\Delta I_{G},\Delta I_{G}=\left[ 
\begin{array}{c}
\Delta _{M} \\ 
\Delta _{N}%
\end{array}%
\right] , \\
K_{G}& =\left[ 
\begin{array}{cc}
-\hat{N} & \text{ }\hat{M}%
\end{array}%
\right] +\Delta K_{G},\Delta K_{G}=\left[ 
\begin{array}{cc}
-\Delta _{\hat{N}} & \text{ }\Delta _{\hat{M}}%
\end{array}%
\right] .
\end{align*}

\subsection{Analysis of system dynamics under cyber-attacks}

In this section, we consider the CPCSs (\ref{eq2-20}) with $K$ subject to (%
\ref{eq2-18}) and attack model (\ref{eq2-21}). It is supposed that the
system is operating under fault-free conditions.

\subsubsection{The information patterns and essential models\label%
{subsectionIV-B-A}}

The available data on the plant and control station sides are different.
While the accessible input-output data on the plant side are $\left(
u^{a},y\right) ,$ the control station has the input-output data $\left(
u,y^{a}\right) $ available. This information mismatching results in attack
information potential 
\begin{equation}
IP_{a}:=\left[ 
\begin{array}{c}
u^{a} \\ 
y%
\end{array}%
\right] -\left[ 
\begin{array}{c}
u \\ 
y^{a}%
\end{array}%
\right] =\left[ 
\begin{array}{c}
a_{u} \\ 
-a_{y}%
\end{array}%
\right]  \label{eq4-3a}
\end{equation}%
that indicates the existence of cyber-attacks. In order to avoid
nomenclatural inconsistency due to\ the information mismatching, we define
the residual $\left( r_{u},r_{y}\right) $ in the sequel by means of the
kernel-based model according to the information pattern on the plant side, 
\begin{equation}
\left\{ 
\begin{array}{l}
\hat{M}y-\hat{N}u^{a}=r_{y} \\ 
Xu^{a}+Yy=r_{u}%
\end{array}%
\right. ,r_{y}=\hat{M}r_{y,0}=\hat{N}_{d}d,r_{u}=v.  \label{eq4-2}
\end{equation}%
The loop dynamic expressed in terms of $\left( u^{a},y\right) \ $is
governed, considering the information pattern on the control station and
information potential (\ref{eq4-3a}), by 
\begin{gather}
\left[ 
\begin{array}{cc}
X & \text{ }Y \\ 
-\hat{N} & \text{ }\hat{M}%
\end{array}%
\right] \left[ 
\begin{array}{c}
u^{a} \\ 
y%
\end{array}%
\right] =\left[ 
\begin{array}{c}
Xa_{u}-Ya_{y}+v \\ 
r_{y}%
\end{array}%
\right] \Longrightarrow  \notag \\
\left[ 
\begin{array}{c}
u^{a} \\ 
y%
\end{array}%
\right] =\left[ 
\begin{array}{c}
M \\ 
N%
\end{array}%
\right] \left( Xa_{u}-Ya_{y}+v\right) +\left[ 
\begin{array}{c}
-\hat{Y} \\ 
\hat{X}%
\end{array}%
\right] r_{y}.  \label{eq4-3}
\end{gather}%
Attributed to the Bezout identity (\ref{eq2-6}), the attacks can be
expressed by%
\begin{gather}
\left[ 
\begin{array}{c}
a_{u} \\ 
-a_{y}%
\end{array}%
\right] =\left[ 
\begin{array}{c}
M \\ 
N%
\end{array}%
\right] a_{r_{u}}+\left[ 
\begin{array}{c}
-\hat{Y} \\ 
\hat{X}%
\end{array}%
\right] a_{r_{y}},  \label{eq4-4} \\
a_{r_{u}}=\left[ 
\begin{array}{cc}
X & \text{ }Y%
\end{array}%
\right] \left[ 
\begin{array}{c}
a_{u} \\ 
-a_{y}%
\end{array}%
\right] ,a_{r_{y}}=\left[ 
\begin{array}{cc}
-\hat{N} & \text{ }\hat{M}%
\end{array}%
\right] \left[ 
\begin{array}{c}
a_{u} \\ 
-a_{y}%
\end{array}%
\right] ,  \label{eq4-4a}
\end{gather}%
which models the influence of $\left( a_{u},a_{y}\right) $ on the image and
residual subspaces $\mathcal{I}_{G}$ and $\mathcal{R}_{G},$ respectively.
Relations (\ref{eq4-3a})-(\ref{eq4-4}) are essential for our subsequent
work. Note that, attribute to (\ref{eq4-4}), the attack information
potential can be expressed by 
\begin{equation}
IP_{a}=\left[ 
\begin{array}{c}
M \\ 
N%
\end{array}%
\right] a_{r_{u}}+\left[ 
\begin{array}{c}
-\hat{Y} \\ 
\hat{X}%
\end{array}%
\right] a_{r_{y}}.  \label{eq4-3b}
\end{equation}

\subsubsection{System dynamics under cyber-attacks\label{subsectionIV-B-B}}

We now examine the loop dynamic under cyber-attacks modelled by (\ref{eq2-21}%
).

\begin{theorem}
\label{Theo4-3}Given the loop model (\ref{eq4-2}) and the attack model (\ref%
{eq2-21}), the plant dynamics under additive and multiplicative attacks, 
\begin{equation}
\left[ 
\begin{array}{c}
a_{u} \\ 
a_{y}%
\end{array}%
\right] =\left[ 
\begin{array}{c}
E_{u}\eta _{u} \\ 
E_{y}\eta _{y}%
\end{array}%
\right] ,\left[ 
\begin{array}{c}
a_{u} \\ 
a_{y}%
\end{array}%
\right] =\Pi ^{a}\left[ 
\begin{array}{c}
{u} \\ 
y%
\end{array}%
\right] ,  \label{eq4-7}
\end{equation}%
are respectively governed by 
\begin{gather}
\left[ 
\begin{array}{c}
u^{a} \\ 
y%
\end{array}%
\right] =\left[ 
\begin{array}{c}
M \\ 
N%
\end{array}%
\right] \left( v+\bar{a}_{r_{u}}\right) +\left[ 
\begin{array}{c}
-\hat{Y} \\ 
\hat{X}%
\end{array}%
\right] r_{y},  \label{eq4-8} \\
\bar{a}_{r_{u}}=\left[ 
\begin{array}{cc}
X & \text{ }Y%
\end{array}%
\right] \left[ 
\begin{array}{c}
E_{u}\eta _{u}(z) \\ 
-E_{y}\eta _{y}(z)%
\end{array}%
\right] ,  \notag \\
\left[ 
\begin{array}{c}
u^{a} \\ 
y%
\end{array}%
\right] =\left[ 
\begin{array}{c}
M \\ 
N%
\end{array}%
\right] \left( v_{\Delta }+\Psi ^{a}r_{y}\right) +\left[ 
\begin{array}{c}
-\hat{Y} \\ 
\hat{X}%
\end{array}%
\right] r_{y},  \label{eq4-9} \\
v_{\Delta }=\left( I-\left[ 
\begin{array}{cc}
X & \text{ }Y%
\end{array}%
\right] \Phi ^{a}\left[ 
\begin{array}{c}
M \\ 
N%
\end{array}%
\right] \right) ^{-1}v,  \notag \\
\Psi ^{a}=\left( I-\left[ 
\begin{array}{cc}
X & \text{ }Y%
\end{array}%
\right] \Phi ^{a}\left[ 
\begin{array}{c}
M \\ 
N%
\end{array}%
\right] \right) ^{-1}\left[ 
\begin{array}{cc}
X & \text{ }Y%
\end{array}%
\right] \Phi ^{a}\left[ 
\begin{array}{c}
-\hat{Y} \\ 
\hat{X}%
\end{array}%
\right] ,  \notag \\
\Phi ^{a}=\left[ 
\begin{array}{cc}
I & \text{ }0 \\ 
0 & \text{ }-I%
\end{array}%
\right] \Pi ^{a}\left( I+\left[ 
\begin{array}{cc}
I & \text{ }0 \\ 
0 & \text{ }0%
\end{array}%
\right] \Pi ^{a}\right) ^{-1}.  \notag
\end{gather}
\end{theorem}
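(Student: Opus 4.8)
The plan is to take the master relation (\ref{eq4-3}), which already expresses $\begin{bmatrix} u^a & y \end{bmatrix}^T$ in image/residual coordinates through the attack-induced image component $Xa_u-Ya_y$, and to evaluate that component separately for the two attack classes in (\ref{eq4-7}). The additive case is immediate; the multiplicative case requires first eliminating the implicit dependence of $\Pi^a$ on the control-station data $(u,y)$, and then resolving the resulting self-referential equation in $\begin{bmatrix} u^a & y \end{bmatrix}^T$.

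For the additive attack I would simply substitute $a_u=E_u\eta_u$ and $a_y=E_y\eta_y$ into the image-coordinate term of (\ref{eq4-3}). Writing $Xa_u-Ya_y=\begin{bmatrix} X & Y \end{bmatrix}\begin{bmatrix} E_u\eta_u & -E_y\eta_y \end{bmatrix}^T=\bar a_{r_u}$ identifies the claimed signal and yields (\ref{eq4-8}) directly, since $r_u=v$ from (\ref{eq4-2}).

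For the multiplicative attack the first step is to re-express the driving signal $(u,y)$ in terms of the plant-side data $(u^a,y)$. Since $u=u^a-a_u$ while $y$ is unchanged, one has $\begin{bmatrix} u & y \end{bmatrix}^T=\begin{bmatrix} u^a & y \end{bmatrix}^T-\begin{bmatrix} I & 0 \\ 0 & 0 \end{bmatrix}\begin{bmatrix} a_u & a_y \end{bmatrix}^T$. Substituting into $\begin{bmatrix} a_u & a_y \end{bmatrix}^T=\Pi^a\begin{bmatrix} u & y \end{bmatrix}^T$ gives an implicit equation $\left(I+\Pi^a\begin{bmatrix} I & 0 \\ 0 & 0 \end{bmatrix}\right)\begin{bmatrix} a_u & a_y \end{bmatrix}^T=\Pi^a\begin{bmatrix} u^a & y \end{bmatrix}^T$, whose solution, after the push-through identity $(I+\Pi^a B)^{-1}\Pi^a=\Pi^a(I+B\Pi^a)^{-1}$ with $B=\mathrm{diag}(I,0)$, produces exactly $\begin{bmatrix} a_u & -a_y \end{bmatrix}^T=\Phi^a\begin{bmatrix} u^a & y \end{bmatrix}^T$ with $\Phi^a$ as defined. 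Inserting this into (\ref{eq4-3}) makes the image component $\begin{bmatrix} X & Y \end{bmatrix}\Phi^a\begin{bmatrix} u^a & y \end{bmatrix}^T$ depend on the unknown $\begin{bmatrix} u^a & y \end{bmatrix}^T$ itself; collecting terms and solving the fixed-point equation introduces the factor $(I-\Gamma)^{-1}$ with $\Gamma=\begin{bmatrix} X & Y \end{bmatrix}\Phi^a\begin{bmatrix} M & N \end{bmatrix}^T$, from which the stated $v_\Delta$ and $\Psi^a$ fall out after the elementary identity $(I-\Gamma)^{-1}\Gamma+I=(I-\Gamma)^{-1}$.

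The main obstacle is bookkeeping in the multiplicative case: correctly tracking which residual channel each signal lives in (in particular the sign flip on $a_y$), applying the push-through identity to match the one-sided placement of the inverse in the definition of $\Phi^a$, and verifying that the two inverses $\left(I+\mathrm{diag}(I,0)\,\Pi^a\right)^{-1}$ and $(I-\Gamma)^{-1}$ are well defined over $\mathcal{RH}_\infty$ under the standing stability assumption on $\Pi^a$; the remaining manipulations are routine algebra on the Bezout pairs of (\ref{eq2-6}).
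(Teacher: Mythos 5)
Your proposal is correct and follows essentially the same route as the paper's proof: the additive case is read off directly from (\ref{eq4-3}), and the multiplicative case is handled by rewriting the attack as $\left[\begin{smallmatrix}a_{u}\\-a_{y}\end{smallmatrix}\right]=\Phi^{a}\left[\begin{smallmatrix}u^{a}\\y\end{smallmatrix}\right]$ (your push-through computation is exactly the "routine computations" the paper leaves implicit) and then solving the resulting fixed-point equation in the image coordinate. The only difference is that you spell out the derivation of $\Phi^{a}$ and the identity $(I-\Gamma)^{-1}\Gamma+I=(I-\Gamma)^{-1}$ explicitly, which the paper compresses.
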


\begin{proof}
The dynamic under additive attacks (\ref{eq4-8}) follows immediately from (%
\ref{eq4-3}). We prove (\ref{eq4-9}), the system response to the
multiplicative attacks. Observe that 
\begin{equation*}
\left[ 
\begin{array}{c}
a_{u} \\ 
-a_{y}%
\end{array}%
\right] =\left[ 
\begin{array}{cc}
I & \text{ }0 \\ 
0 & \text{ }-I%
\end{array}%
\right] \Pi ^{a}\left[ 
\begin{array}{c}
u^{a}-a_{u} \\ 
y%
\end{array}%
\right] .
\end{equation*}%
By some routine computations, it turns out 
\begin{equation}
\left[ 
\begin{array}{c}
a_{u} \\ 
-a_{y}%
\end{array}%
\right] =\Phi ^{a}\left[ 
\begin{array}{c}
u^{a} \\ 
y%
\end{array}%
\right] .  \label{eq4-15}
\end{equation}%
Substituting it into (\ref{eq4-3}) yields 
\begin{equation*}
\left[ 
\begin{array}{c}
u^{a} \\ 
y%
\end{array}%
\right] =\left[ 
\begin{array}{c}
M \\ 
N%
\end{array}%
\right] \left( \left[ 
\begin{array}{cc}
X & \text{ }Y%
\end{array}%
\right] \Phi ^{a}\left[ 
\begin{array}{c}
u^{a} \\ 
y%
\end{array}%
\right] +v\right) +\left[ 
\begin{array}{c}
-\hat{Y} \\ 
\hat{X}%
\end{array}%
\right] r_{y},
\end{equation*}%
from which (\ref{eq4-9}) follows.
\end{proof}

The system responses (\ref{eq4-8}) and (\ref{eq4-9}) demonstrate that the
cyber-attacks act on the system image subspace, and their influence is
decoupled from the residual behavior 
\begin{equation*}
\left[ 
\begin{array}{c}
u_{r}^{a} \\ 
y_{r}%
\end{array}%
\right] =\left[ 
\begin{array}{c}
-\hat{Y} \\ 
\hat{X}%
\end{array}%
\right] r_{y}.
\end{equation*}%
On this account, it is impossible to detect cyber-attacks on the plant side
by means of the output residual generator. Instead, the input residual 
\begin{equation*}
r_{u}=Xu^{a}+Yy
\end{equation*}%
is a capable detector of the cyber-attacks.

Considering that attack detection is generally performed on the control
station, we now examine the system dynamic in the information pattern $%
\left( u,y^{a}\right) ,$ i.e. from the viewpoint of the control station. It
follows from the information potential (\ref{eq4-3a}) and (\ref{eq4-3})-(\ref%
{eq4-4}) that%
\begin{equation}
\left[ 
\begin{array}{c}
u \\ 
y^{a}%
\end{array}%
\right] =\left[ 
\begin{array}{c}
M \\ 
N%
\end{array}%
\right] v+\left[ 
\begin{array}{c}
-\hat{Y} \\ 
\hat{X}%
\end{array}%
\right] \left( r_{y}-a_{r_{y}}\right) .  \label{eq4-5}
\end{equation}%
According to (\ref{eq4-15}), 
\begin{gather}
\left[ 
\begin{array}{c}
a_{u} \\ 
-a_{y}%
\end{array}%
\right] =\Phi ^{a}\left( \left[ 
\begin{array}{c}
u \\ 
y^{a}%
\end{array}%
\right] +\left[ 
\begin{array}{c}
a_{u} \\ 
-a_{y}%
\end{array}%
\right] \right) +\left[ 
\begin{array}{c}
\bar{\eta}_{u} \\ 
\bar{\eta}_{y}%
\end{array}%
\right]  \label{eq4-15a} \\
\Longrightarrow \left[ 
\begin{array}{c}
a_{u} \\ 
-a_{y}%
\end{array}%
\right] =\left( I-\Phi ^{a}\right) ^{-1}\left( \Phi ^{a}\left[ 
\begin{array}{c}
u \\ 
y^{a}%
\end{array}%
\right] +\left[ 
\begin{array}{c}
\bar{\eta}_{u} \\ 
\bar{\eta}_{y}%
\end{array}%
\right] \right) ,  \notag \\
\left[ 
\begin{array}{c}
\bar{\eta}_{u} \\ 
\bar{\eta}_{y}%
\end{array}%
\right] =\left[ 
\begin{array}{cc}
I & \text{ }0 \\ 
0 & \text{ }-I%
\end{array}%
\right] \left( I+\Pi ^{a}\left[ 
\begin{array}{cc}
I & \text{ }0 \\ 
0 & \text{ }0%
\end{array}%
\right] \right) ^{-1}\left[ 
\begin{array}{c}
E_{u}\eta _{u} \\ 
E_{y}\eta _{y}%
\end{array}%
\right] .  \notag
\end{gather}%
Substituting it into $a_{r_{y}}$ in (\ref{eq4-5}) yields 
\begin{gather}
\left[ 
\begin{array}{c}
u \\ 
y^{a}%
\end{array}%
\right] =\Gamma ^{-1}\left( \left[ 
\begin{array}{c}
M \\ 
N%
\end{array}%
\right] v+\left[ 
\begin{array}{c}
-\hat{Y} \\ 
\hat{X}%
\end{array}%
\right] \left( r_{y}-\Theta \left[ 
\begin{array}{c}
\bar{\eta}_{u} \\ 
\bar{\eta}_{y}%
\end{array}%
\right] \right) \right) ,  \label{eq4-17} \\
\Gamma =I+\left[ 
\begin{array}{c}
-\hat{Y} \\ 
\hat{X}%
\end{array}%
\right] \left[ 
\begin{array}{cc}
-\hat{N} & \text{ }\hat{M}%
\end{array}%
\right] \left( I-\Phi ^{a}\right) ^{-1}\Phi ^{a},  \label{eq4-19} \\
\Theta =\left[ 
\begin{array}{cc}
-\hat{N} & \text{ }\hat{M}%
\end{array}%
\right] \left( I-\Phi ^{a}\right) ^{-1},  \notag
\end{gather}%
from which we have the following theorem.

\begin{theorem}
\label{Theo4-4}Given the loop model (\ref{eq4-2}) and the attack model (\ref%
{eq2-21}), then it holds%
\begin{gather}
\left[ 
\begin{array}{c}
u \\ 
y^{a}%
\end{array}%
\right] =\left[ 
\begin{array}{c}
M \\ 
N%
\end{array}%
\right] v+\left[ 
\begin{array}{c}
-\hat{Y} \\ 
\hat{X}%
\end{array}%
\right] \left( v_{\Delta }^{a}+\Gamma ^{a}r_{y}+\bar{a}_{\Delta }\right) ,
\label{eq4-17b} \\
v_{\Delta }^{a}=-\Gamma ^{-1}\left[ 
\begin{array}{c}
-\hat{Y} \\ 
\hat{X}%
\end{array}%
\right] \left[ 
\begin{array}{cc}
-\hat{N} & \text{ }\hat{M}%
\end{array}%
\right] \left( I-\Phi ^{a}\right) ^{-1}\Phi ^{a}\left[ 
\begin{array}{c}
M \\ 
N%
\end{array}%
\right] v,  \notag \\
\Gamma ^{a}=\left( I+\left[ 
\begin{array}{cc}
-\hat{N} & \text{ }\hat{M}%
\end{array}%
\right] \left( I-\Phi ^{a}\right) ^{-1}\Phi ^{a}\left[ 
\begin{array}{c}
-\hat{Y} \\ 
\hat{X}%
\end{array}%
\right] \right) ^{-1},  \notag \\
\bar{a}_{\Delta }=-\Gamma ^{a}\Theta \left[ 
\begin{array}{c}
\bar{\eta}_{u} \\ 
\bar{\eta}_{y}%
\end{array}%
\right] .  \notag
\end{gather}
\end{theorem}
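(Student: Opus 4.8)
My plan is to read the statement as an explicit-solution reformulation of the implicit relation (\ref{eq4-17}), which is already assembled in the paragraph preceding the theorem; accordingly I would (i) re-derive (\ref{eq4-17}) compactly from (\ref{eq4-5}) and the attack expression (\ref{eq4-15a}), and then (ii) solve it for $\bigl[\begin{smallmatrix}u\\y^{a}\end{smallmatrix}\bigr]$ by inverting $\Gamma$ through a push-through/Woodbury argument that keeps the attack contribution confined to the residual subspace. Throughout I would abbreviate the three coprime-factor blocks as $I_{G}=\bigl[\begin{smallmatrix}M\\N\end{smallmatrix}\bigr]$, $I_{K}=\bigl[\begin{smallmatrix}-\hat{Y}\\\hat{X}\end{smallmatrix}\bigr]$, $K_{G}=\bigl[\begin{smallmatrix}-\hat{N}&\hat{M}\end{smallmatrix}\bigr]$.

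First I would write (\ref{eq4-5}) as $\bigl[\begin{smallmatrix}u\\y^{a}\end{smallmatrix}\bigr]=I_{G}v+I_{K}(r_{y}-a_{r_{y}})$. Using $a_{r_{y}}=K_{G}\bigl[\begin{smallmatrix}a_{u}\\-a_{y}\end{smallmatrix}\bigr]$ from (\ref{eq4-4a}) together with the closed form (\ref{eq4-15a}) for $\bigl[\begin{smallmatrix}a_{u}\\-a_{y}\end{smallmatrix}\bigr]$, and setting $\Xi:=K_{G}(I-\Phi^{a})^{-1}\Phi^{a}$ and $\Theta:=K_{G}(I-\Phi^{a})^{-1}$, one obtains $a_{r_{y}}=\Xi\bigl[\begin{smallmatrix}u\\y^{a}\end{smallmatrix}\bigr]+\Theta\bar{\eta}$ with $\bar{\eta}=\bigl[\begin{smallmatrix}\bar{\eta}_{u}\\\bar{\eta}_{y}\end{smallmatrix}\bigr]$. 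Substituting this and collecting the $\bigl[\begin{smallmatrix}u\\y^{a}\end{smallmatrix}\bigr]$-dependence on the left yields $\Gamma\bigl[\begin{smallmatrix}u\\y^{a}\end{smallmatrix}\bigr]=I_{G}v+I_{K}(r_{y}-\Theta\bar{\eta})$ with $\Gamma=I+I_{K}\Xi$, which is exactly (\ref{eq4-17}) and the $\Gamma$ of (\ref{eq4-19}).

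The crux is then to invert $\Gamma=I+I_{K}\Xi$ without ever forming a $(p+m)\times(p+m)$ inverse explicitly. Since the perturbation factors through the residual column $I_{K}$, I would use the push-through identity $\Gamma I_{K}=I_{K}(I+\Xi I_{K})=I_{K}(\Gamma^{a})^{-1}$, where $\Gamma^{a}=(I+\Xi I_{K})^{-1}$ is precisely the $m\times m$ operator of the statement; hence $\Gamma^{-1}I_{K}=I_{K}\Gamma^{a}$. Applying this to the residual term gives $\Gamma^{-1}I_{K}(r_{y}-\Theta\bar{\eta})=I_{K}(\Gamma^{a}r_{y}-\Gamma^{a}\Theta\bar{\eta})=I_{K}(\Gamma^{a}r_{y}+\bar{a}_{\Delta})$ with $\bar{a}_{\Delta}=-\Gamma^{a}\Theta\bar{\eta}$. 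For the nominal term I would invoke the matching Woodbury form $\Gamma^{-1}=I-I_{K}\Gamma^{a}\Xi$, so that $\Gamma^{-1}I_{G}v=I_{G}v-I_{K}\Gamma^{a}\Xi I_{G}v=I_{G}v+v_{\Delta}^{a}$ with $v_{\Delta}^{a}=-\Gamma^{-1}I_{K}\Xi I_{G}v$, which is the quantity in the statement once $\Xi$ is spelled out. Adding the two pieces reproduces (\ref{eq4-17b}).

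The step I expect to be the real obstacle — and the real content — is this inversion of $\Gamma$: treated head-on it is opaque, and the computation becomes transparent only once one recognizes that the coprime-factor structure installed via the Bezout identity (\ref{eq2-6}) makes $\Gamma$ a rank-$m$ perturbation of the identity acting through the residual column $I_{K}$, so that Woodbury collapses the entire inverse onto the $m$-dimensional $\Gamma^{a}$. The two bookkeeping points to verify are that the push-through and Woodbury forms are mutually consistent (both descend from $\Gamma I_{K}=I_{K}(\Gamma^{a})^{-1}$), and that $\Gamma^{a}\in\mathcal{RH}_{\infty}$, i.e. $I+\Xi I_{K}$ is invertible over $\mathcal{RH}_{\infty}$; this is the well-posedness condition guaranteeing that (\ref{eq4-17b}) is a genuine stable image/residual decomposition, and it is the attack-side analogue of the stability requirement $\Phi^{f}\in\mathcal{RH}_{\infty}$ in Theorem \ref{Theo4-1}.
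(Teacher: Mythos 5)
Your proposal is correct and follows essentially the same route as the paper, which simply declares that (\ref{eq4-17b}) ``follows from (\ref{eq4-17}) by some routine computations''; your push-through/Woodbury identities $\Gamma^{-1}I_{K}=I_{K}\Gamma^{a}$ and $\Gamma^{-1}=I-I_{K}\Gamma^{a}\Xi$ are exactly those routine computations made explicit, and your recapitulation of (\ref{eq4-17}) from (\ref{eq4-5}) and (\ref{eq4-15a}) matches the paper's derivation. One bookkeeping remark: your calculation correctly produces $\Gamma^{-1}I_{G}v=I_{G}v-I_{K}\,\Gamma^{a}\Xi I_{G}v$, so the quantity that belongs \emph{inside} the residual-column parentheses of (\ref{eq4-17b}) is the $m$-dimensional signal $-\Gamma^{a}\Xi I_{G}v$, whereas the paper's displayed formula for $v_{\Delta}^{a}=-\Gamma^{-1}I_{K}\Xi I_{G}v$ already carries the factor $I_{K}$ and is $(p+m)$-dimensional -- a notational inconsistency in the statement that your derivation silently resolves but does not flag.
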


\begin{proof}
The equation (\ref{eq4-17b}) follows from (\ref{eq4-17}) by some routine
computations.
\end{proof}

The result (\ref{eq4-17b}) showcases that, due to the information potential (%
\ref{eq4-3a}), the cyber-attacks, faults and unknown input act exclusively
on the residual subspace of $\left( u,y^{a}\right) ,$ which is the
information pattern on the control station side, and are decoupled from the
image subspace.

\subsubsection{Duality between system dynamics under faults and attacks\label%
{subsectionIV-B-C}}

Theorems \ref{Theo4-2}-\ref{Theo4-3} build the basis for our endeavours to
leverage the unified framework to address simultaneous detection,
fault-tolerant and resilient control of faults and attacks. Before that, we
firstly reveal the duality between system dynamics under faults and attacks,
which is helpful for the subsequent work.

\begin{theorem}
\label{Theo4-5} The closed-loop model (\ref{eq2-20}) under cyber-attackers
modelled by (\ref{eq2-21}) is equivalent to 
\begin{gather}
\left\{ 
\begin{array}{l}
y=Gu^{a}+r_{y,0} \\ 
u^{a}=K_{\Delta }y+r_{u,a},%
\end{array}%
\right.   \label{eq4-10} \\
r_{u,a}=\left( X+\Delta _{X}\right) ^{-1}\left( v+a_{\eta }\right) ,a_{\eta
}=\left[ 
\begin{array}{cc}
X & \text{ }Y%
\end{array}%
\right] \left[ 
\begin{array}{c}
\bar{\eta}_{u} \\ 
-\bar{\eta}_{y}%
\end{array}%
\right] ,  \label{eq4-13} \\
K_{\Delta }=-\left( X+\Delta _{X}\right) ^{-1}\left( Y+\Delta _{Y}\right) ,
\label{eq4-11} \\
\left[ 
\begin{array}{cc}
\Delta _{X} & \text{ }\Delta _{Y}%
\end{array}%
\right] =-\left[ 
\begin{array}{cc}
X & \text{ }Y%
\end{array}%
\right] \Phi ^{a}.  \label{eq4-12}
\end{gather}
\end{theorem}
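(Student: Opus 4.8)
The plan is to obtain (\ref{eq4-10}) as a mere re-packaging of the already-established loop identity (\ref{eq4-3}), reading the attacks off as a coprime-factor perturbation of the \emph{controller}, in exact duality with Theorem \ref{Theo4-2}, where the faults appeared as a coprime-factor perturbation of the \emph{plant}. The first equation of (\ref{eq4-10}), $y=Gu^{a}+r_{y,0}$, is nothing but the plant equation of (\ref{eq2-20}) under the fault-free assumption $f_{0}=0$, so no work is needed there. All the content sits in the controller relation: I must show that $u=Ky^{a}+r_{u,0}$, once expressed in the plant-side information pattern $\left(u^{a},y\right)$, collapses to $u^{a}=K_{\Delta}y+r_{u,a}$ with $K_{\Delta}$ and $r_{u,a}$ as in (\ref{eq4-11}) and (\ref{eq4-13}).

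First I would take the top row of (\ref{eq4-3}), namely $Xu^{a}+Yy=Xa_{u}-Ya_{y}+v=\left[\begin{array}{cc}X & Y\end{array}\right]\left[\begin{array}{c}a_{u}\\ -a_{y}\end{array}\right]+v$, as the starting point. The key step is to express $\left[\begin{array}{c}a_{u}\\ -a_{y}\end{array}\right]$ through $\left[\begin{array}{c}u^{a}\\ y\end{array}\right]$ for the \emph{combined} additive-plus-multiplicative attack (\ref{eq2-21}), generalising the purely multiplicative identity (\ref{eq4-15}). Substituting $u=u^{a}-a_{u}$ into (\ref{eq2-21}) and solving the resulting linear relation for the attack vector gives $\left[\begin{array}{c}a_{u}\\ a_{y}\end{array}\right]=\left(I+\Pi^{a}\left[\begin{array}{cc}I & 0\\ 0 & 0\end{array}\right]\right)^{-1}\left(\Pi^{a}\left[\begin{array}{c}u^{a}\\ y\end{array}\right]+\left[\begin{array}{c}E_{u}\eta_{u}\\ E_{y}\eta_{y}\end{array}\right]\right)$. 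Premultiplying by $\left[\begin{array}{cc}I & 0\\ 0 & -I\end{array}\right]$ and invoking the push-through identity $\left(I+\Pi^{a}S\right)^{-1}\Pi^{a}=\Pi^{a}\left(I+S\Pi^{a}\right)^{-1}$ with $S=\left[\begin{array}{cc}I & 0\\ 0 & 0\end{array}\right]$ lets me recognise the definition of $\Phi^{a}$ from Theorem \ref{Theo4-3} in the multiplicative part and the vector $\left[\begin{array}{c}\bar{\eta}_{u}\\ \bar{\eta}_{y}\end{array}\right]$ of (\ref{eq4-15a}) in the additive part, producing $\left[\begin{array}{c}a_{u}\\ -a_{y}\end{array}\right]=\Phi^{a}\left[\begin{array}{c}u^{a}\\ y\end{array}\right]+\left[\begin{array}{c}\bar{\eta}_{u}\\ \bar{\eta}_{y}\end{array}\right]$.

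Then I would substitute this into the top row, obtaining $Xu^{a}+Yy=\left[\begin{array}{cc}X & Y\end{array}\right]\Phi^{a}\left[\begin{array}{c}u^{a}\\ y\end{array}\right]+a_{\eta}+v$, with the additive part collected into the term $a_{\eta}$ of (\ref{eq4-13}). Moving the multiplicative term to the left and setting $\left[\begin{array}{cc}\Delta_{X} & \Delta_{Y}\end{array}\right]=-\left[\begin{array}{cc}X & Y\end{array}\right]\Phi^{a}$ as in (\ref{eq4-12}) yields $\left(X+\Delta_{X}\right)u^{a}+\left(Y+\Delta_{Y}\right)y=v+a_{\eta}$, which is exactly $u^{a}=K_{\Delta}y+r_{u,a}$ after left-multiplication by $\left(X+\Delta_{X}\right)^{-1}$. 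This closes the equivalence.

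The step I expect to be the main obstacle is the well-posedness and coprimeness bookkeeping rather than the algebra. I must verify that $\left(I+S\Pi^{a}\right)^{-1}$ exists in $\mathcal{RH}_{\infty}$, so that $\Phi^{a}$, hence $\left[\begin{array}{cc}\Delta_{X} & \Delta_{Y}\end{array}\right]$, is stable and $\left(X+\Delta_{X},Y+\Delta_{Y}\right)$ is a genuine left-coprime pair; only then is $K_{\Delta}=-\left(X+\Delta_{X}\right)^{-1}\left(Y+\Delta_{Y}\right)$ a well-defined proper controller and the factorisation an honest LCF. This is precisely what makes $\left(\Delta_{X},\Delta_{Y}\right)$ the controller-side dual of the plant coprime-factor uncertainty $\left(\Delta_{\hat{M}},\Delta_{\hat{N}}\right)$ of Theorem \ref{Theo4-2}; everything else is the same substitution already prepared in (\ref{eq4-3})--(\ref{eq4-15a}).
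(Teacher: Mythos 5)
Your proposal is correct and takes essentially the same route as the paper's proof: both reduce the combined additive-plus-multiplicative attack model to the single relation (\ref{eq4-16}), expressing $(a_{u},-a_{y})$ as $\Phi ^{a}$ acting on $(u^{a},y)$ plus $(\bar{\eta}_{u},\bar{\eta}_{y})$ (the paper invokes (\ref{eq4-15a}), you re-derive it via $u=u^{a}-a_{u}$ and the push-through identity), and then substitute into the $Xu^{a}+Yy$ row of (\ref{eq4-3}) to read off $(\Delta _{X},\Delta _{Y})$ and $r_{u,a}$. Your closing remark on well-posedness of $\Phi ^{a}$ and coprimeness of $(X+\Delta _{X},Y+\Delta _{Y})$ addresses a point the paper leaves implicit, but it does not alter the argument.
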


\begin{proof}
On account of (\ref{eq4-15a}), the attack model (\ref{eq2-21}) is recast
into 
\begin{equation}
\left[ 
\begin{array}{c}
a_{u} \\ 
-a_{y}%
\end{array}%
\right] =\Phi ^{a}\left[ 
\begin{array}{c}
u^{a} \\ 
y%
\end{array}%
\right] +\left[ 
\begin{array}{c}
\bar{\eta}_{u} \\ 
\bar{\eta}_{y}%
\end{array}%
\right] .  \label{eq4-16}
\end{equation}%
Similar to the proof of Theorem \ref{Theo4-3}, consider 
\begin{gather*}
\left[ 
\begin{array}{c}
u^{a} \\ 
y%
\end{array}%
\right] =\left[ 
\begin{array}{c}
M \\ 
N%
\end{array}%
\right] \left( \left[ 
\begin{array}{cc}
X & \text{ }Y%
\end{array}%
\right] \Phi ^{a}\left[ 
\begin{array}{c}
u^{a} \\ 
y%
\end{array}%
\right] +a_{\eta }+v\right) \\
+\left[ 
\begin{array}{c}
-\hat{Y} \\ 
\hat{X}%
\end{array}%
\right] r_{y},a_{\eta }=\left[ 
\begin{array}{cc}
X & \text{ }Y%
\end{array}%
\right] \left[ 
\begin{array}{c}
\bar{\eta}_{u} \\ 
-\bar{\eta}_{y}%
\end{array}%
\right] ,
\end{gather*}%
and write it into 
\begin{equation*}
\left( \left[ 
\begin{array}{cc}
X & \text{ }Y%
\end{array}%
\right] -\left[ 
\begin{array}{cc}
X & \text{ }Y%
\end{array}%
\right] \Phi ^{a}\right) \left[ 
\begin{array}{c}
u^{a} \\ 
y%
\end{array}%
\right] =a_{\eta }+v.
\end{equation*}%
Defining $r_{u,a},\left[ 
\begin{array}{cc}
\Delta _{X} & \text{ }\Delta _{Y}%
\end{array}%
\right] $ according to (\ref{eq4-13})-(\ref{eq4-12}) yields%
\begin{equation*}
u^{a}=-\left( X+\Delta _{X}\right) ^{-1}\left( Y+\Delta _{Y}\right)
y+r_{u,a}.
\end{equation*}%
The theorem is proven.
\end{proof}

Comparing with the result in Theorem \ref{Theo4-2} makes it painfully clear
that the faults and cyber-attacks impair the closed-loop dynamic in a dual
manner. Fig. 2 and Fig. 3 sketch the duality of the equivalent system
configurations under additive and multiplicative faults and cyber-attacks,
respectively. This important property insightfully reveals the structural
relations between the faults and cyber-attacks. It inspires us to design
detection and control systems towards simultaneous detection, tolerant and
resilient control of the faults and cyber-attacks in a dual manner, which
builds the control-theoretic foundation for establishing a novel framework
of designing fault-tolerant and attack-resilient control systems in a
unified way.

\begin{figure}[t]
\centering\includegraphics[width=5.5cm]{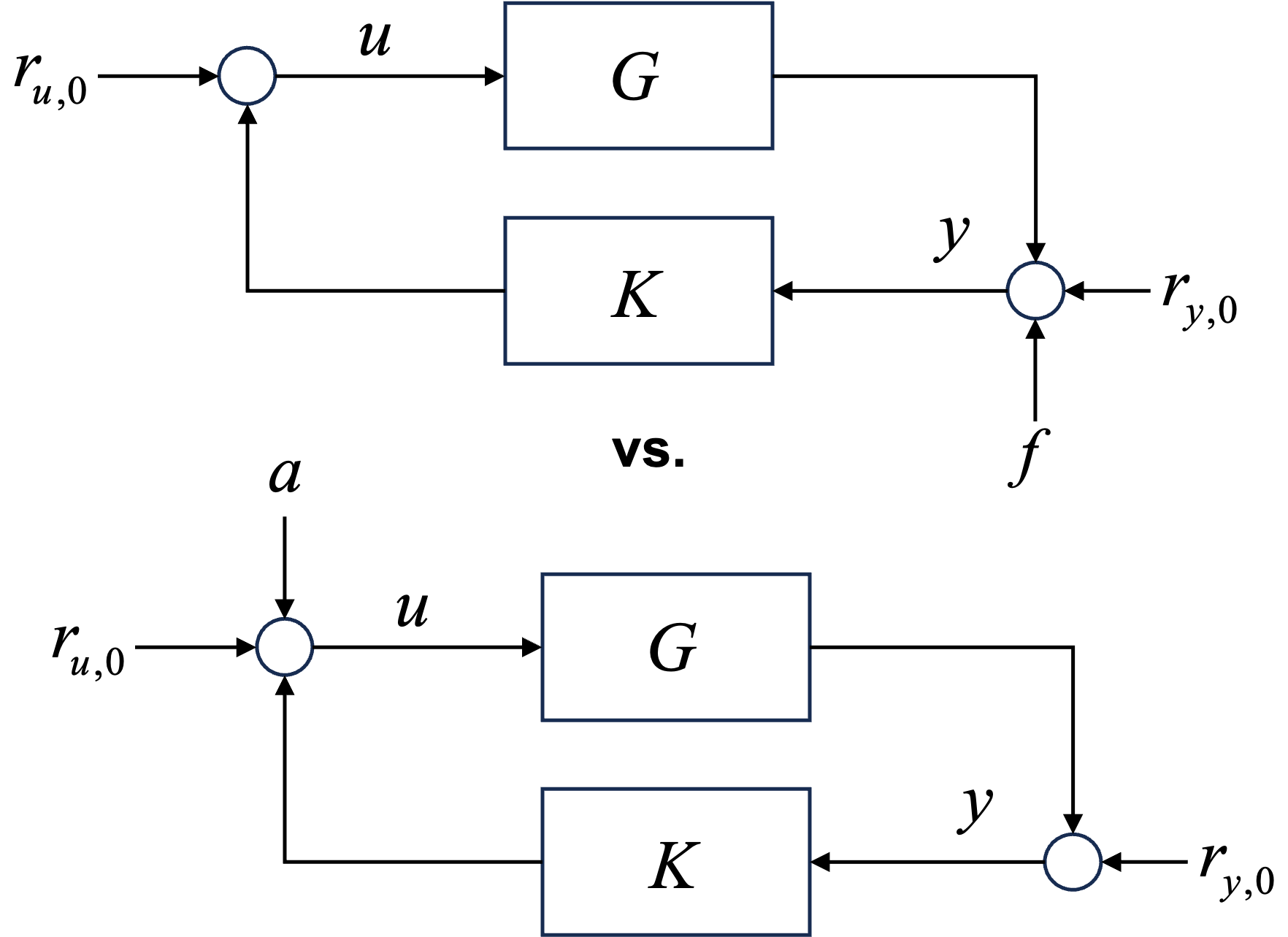}
\caption{The duality of the equivalent system configurations under the
additive faults and cyber-attacks}
\label{Fig2}
\end{figure}

\begin{figure}[t]
\centering\includegraphics[width=5.5cm]{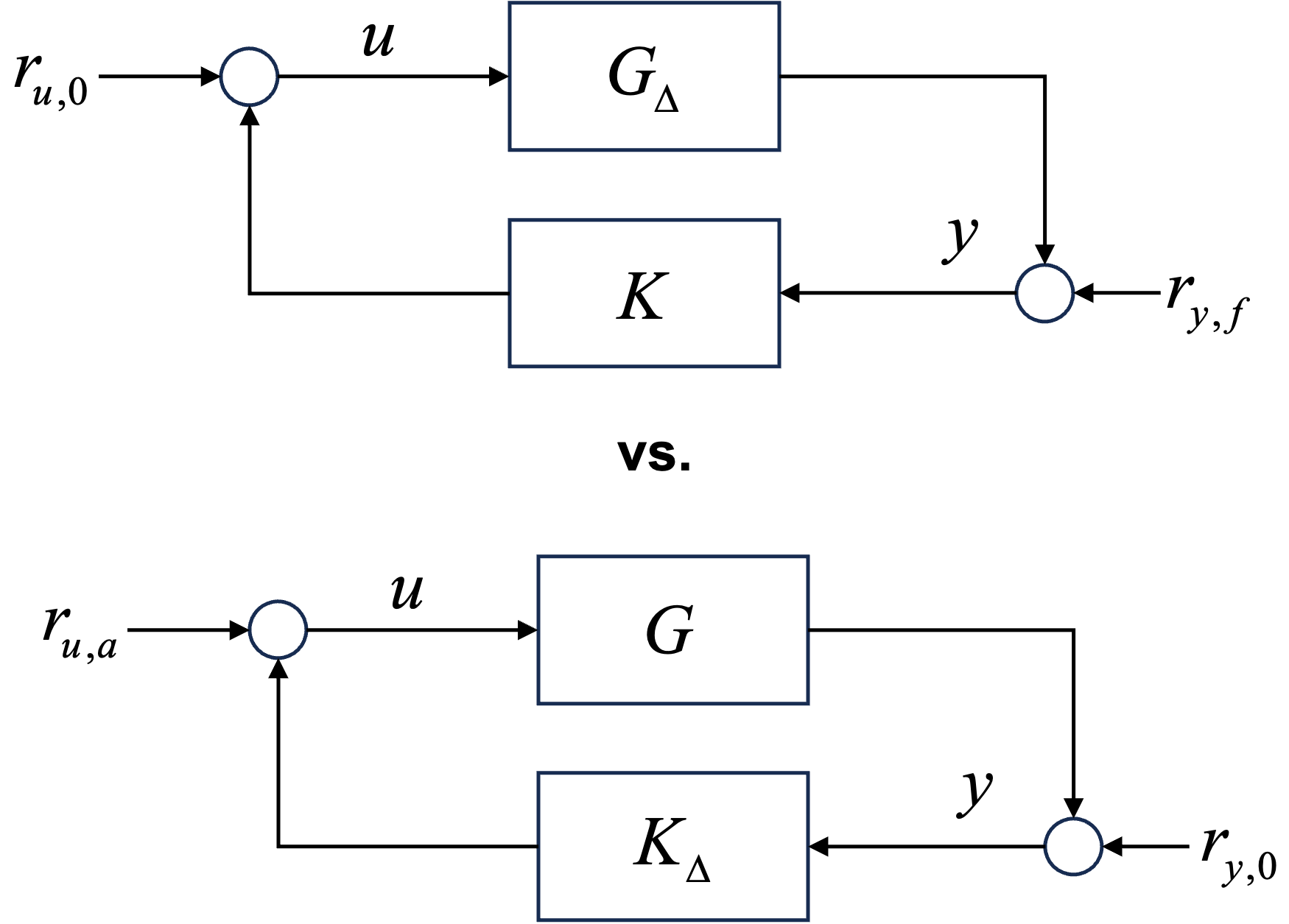}
\caption{The duality of the equivalent system configurations under the
multiplicative faults and cyber-attacks}
\label{Fig3}
\end{figure}

\subsection{Detectability and actuability of attacks, and vulnerability
analysis\label{Subsection IV-C}}

Issues associated with the concept stealthy attacks have continually
attracted remarkable research attention since the publication of several
real cases of cyber-attacks on industrial infrastructures \cite%
{DIBAJI2019-survey,DeruiDing2021-survey,TGXHV2020,Zhou2021IEEE-Proc}. These
are, among others, detection and design of stealthy attacks, vulnerability
of CPCSs. In this subsection, the analysis results presented in the previous
sections are leveraged to reveal new aspects of stealthy cyber-attacks,
viewed in the unified framework of control and detection. The focus is on
alternative definitions, existence conditions of stealthy attacks and
vulnerability analysis.

\subsubsection{Definitions and existence conditions of detectability and
actuability of attacks\label{subsectionIV-C-A}}

In the literature, stealthy attacks, for instance, replay \cite%
{Mo2015-Watermarked-detection}, zero dynamics \cite%
{TEIXEIRA-zero-attack_2015} and covert \cite{Smith2015} attacks, are usually
defined with respect to a certain type of detector \cite%
{DIBAJI2019-survey,DeruiDing2021-survey,TGXHV2020,Zhou2021IEEE-Proc,Survey-Segovia-Ferreira2024}%
. As detectors, observers or Kalman-filters serve as residual generators. It
is assumed, often implicitly, that the detection is performed in the
information pattern $\left( u,y^{a}\right) ,$ i.e. on the control station
side. In our recent work, the concept of kernel attacks was introduced \cite%
{DLautomatica2022}, which can be equivalently expressed as the so-called
image attacks \cite{LiAutomatica_2026}, thanks to the equivalence between
the kernel and image subspaces, $\mathcal{I}_{G}=\mathcal{K}_{G},$ as
described in Subsection \ref{Subsec3-1}. The concept of kernel/image attacks
not only gives a uniform form of stealthy attacks like covert, zero dynamics
and replay attacks, but also, and more importantly, characterizes stealthy
attacks independent of the used detection schemes. Instead, kernel/image
attacks are defined in terms of plant model knowledge, $G$ or $\left(
A,B,C,D\right) ,$ and on the assumption of the information pattern $\left(
u,y^{a}\right) .$ Below, the concept of detectability of attacks is
introduced. It is not only an alternative to stealthy attacks, but also more
general and independent of the type of detectors. In other words, any
stealthy attack should be an undetectable attack, but a stealthy attack
under the use of an observer-based detector may be detected using other
types of detectors.

We would like to call the reader's attention to (\ref{eq4-3}) and Theorem %
\ref{Theo4-3}. It can be observed that a certain class of attacks is
effectless. Although they act on the system, they do not cause any change in
the system dynamic. In the literature, this fact is obviously overlooked.
This motivates us to introduce the concept of actuability of attacks, a dual
form to the undetectability of attacks. For our purpose, we consider the
system model (\ref{eq2-1}) under cyber-attacks with the attack model (\ref%
{eq2-21}), 
\begin{equation}
\left\{ 
\begin{array}{l}
y=Gu^{a}+r_{y,0},G=\left( A,B,C,D\right) \\ 
u=Ky^{a}+r_{u,0},K=-\hat{Y}\hat{X}^{-1}=X^{-1}Y.%
\end{array}%
\right.  \label{eq4-18}
\end{equation}

\begin{definition}
\label{De4-1}Given the control system (\ref{eq4-18}), the attack pair $%
\left( a_{u},a_{y}\right) $ is called undetectable, if 
\begin{equation}
\left[ 
\begin{array}{c}
u^{I} \\ 
y^{I}%
\end{array}%
\right] \equiv \left[ 
\begin{array}{c}
u_{0} \\ 
y_{0}%
\end{array}%
\right] ,  \label{eq4-23a}
\end{equation}%
where $\left( u_{0},y_{0}\right) $ represent the system dynamic under
attack-free operation conditions, and $\left( u^{I},y^{I}\right) $ denote
the accessible input-output data $\left( u,y\right) $ with respect to the
information pattern, 
\begin{equation*}
\left( u^{I},y^{I}\right) =\left\{ 
\begin{array}{l}
\left( u,y^{a}\right) ,\text{ detection on the control station side} \\ 
\left( u^{a},y\right) ,\text{ detection on the plant side.}%
\end{array}%
\right. 
\end{equation*}
\end{definition}

Equation (\ref{eq4-23a}) means, the attack does not cause any changes
comparing with the attack-free dynamic, 
\begin{equation*}
\left[ 
\begin{array}{c}
u_{0} \\ 
y_{0}%
\end{array}%
\right] =\left[ 
\begin{array}{c}
M \\ 
N%
\end{array}%
\right] v+\left[ 
\begin{array}{c}
-\hat{Y} \\ 
\hat{X}%
\end{array}%
\right] r_{y}.
\end{equation*}

\begin{definition}
\label{De4-2}Given the control system (\ref{eq4-18}), the attack pair $%
\left( a_{u},a_{y}\right) $ is called actuable, if 
\begin{equation}
\left[ 
\begin{array}{c}
u^{a} \\ 
y%
\end{array}%
\right] \not\equiv \left[ 
\begin{array}{c}
u_{0} \\ 
y_{0}%
\end{array}%
\right] .  \label{eq4-23b}
\end{equation}
\end{definition}

It is apparent that the undetectability and actuability given in the above
definitions are exclusively expressed in terms of the system behavior. Now,
we are in a good position to address the existence conditions of
undetectable attacks and actuable attacks. For the former, it is assumed
that attackers possess model knowledge $\left( A,B,C,D\right) $ of the plant.

\begin{theorem}
\label{Theo4-6}Given the control system (\ref{eq4-2}) and the attack model (%
\ref{eq2-21}), the attacks $\left( a_{u},a_{y}\right) $ are undetectable
under the information pattern $\left( u,y^{a}\right) $ if and only if%
\begin{equation}
\left[ 
\begin{array}{c}
a_{u} \\ 
-a_{y}%
\end{array}%
\right] =\left[ 
\begin{array}{c}
M \\ 
N%
\end{array}%
\right] \xi ,  \label{eq4-24a}
\end{equation}%
where $\xi $ is an arbitrary $\mathcal{L}_{2}$ bounded signal. 
\end{theorem}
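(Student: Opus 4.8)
The plan is to reduce the statement to a single coprimeness fact by working in the control-station information pattern, where the data under attack is given explicitly by (\ref{eq4-5}),
\[
\begin{bmatrix} u \\ y^{a} \end{bmatrix} = \begin{bmatrix} M \\ N \end{bmatrix} v + \begin{bmatrix} -\hat{Y} \\ \hat{X} \end{bmatrix}\left( r_{y} - a_{r_{y}} \right),
\]
with $a_{r_{y}}$ defined in (\ref{eq4-4a}). The attack-free trajectory $(u_{0},y_{0})$ of Definition \ref{De4-1} is the same expression with $a_{r_{y}}=0$. Subtracting, undetectability in the sense of (\ref{eq4-23a}) holds if and only if the residual-subspace term vanishes, i.e. $\begin{bmatrix} -\hat{Y} \\ \hat{X} \end{bmatrix} a_{r_{y}} = 0$. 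Hence the entire proof hinges on translating this identity into a condition on $\left( a_{u},a_{y}\right)$.

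First I would show that the right coprime pair $\begin{bmatrix} -\hat{Y} \\ \hat{X} \end{bmatrix}$ is injective on $\mathcal{H}_{2}$. This is immediate from the Bezout identity (\ref{eq2-6}): its lower-right block reads $\begin{bmatrix} -\hat{N} & \hat{M} \end{bmatrix}\begin{bmatrix} -\hat{Y} \\ \hat{X} \end{bmatrix}=I$, so the pair admits a left inverse, and the displayed equation forces $a_{r_{y}}=0$. By (\ref{eq4-4a}) this says exactly $\begin{bmatrix} -\hat{N} & \hat{M} \end{bmatrix}\begin{bmatrix} a_{u} \\ -a_{y} \end{bmatrix}=0$, that is, $\begin{bmatrix} a_{u} \\ -a_{y} \end{bmatrix}\in\mathcal{K}_{G}$.

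To reach the claimed form (\ref{eq4-24a}) I would then invoke the equivalence $\mathcal{I}_{G}=\mathcal{K}_{G}$ recorded after Definition \ref{Def3-1}, so that membership in $\mathcal{K}_{G}$ is the same as admitting a representation $\begin{bmatrix} a_{u} \\ -a_{y} \end{bmatrix}=\begin{bmatrix} M \\ N \end{bmatrix}\xi$ for some $\mathcal{L}_{2}$-bounded $\xi$; alternatively, and more directly, the attack decomposition (\ref{eq4-4}) shows that with $a_{r_{y}}=0$ only the image component $\begin{bmatrix} M \\ N \end{bmatrix}a_{r_{u}}$ remains, so one may take $\xi=a_{r_{u}}$. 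For the converse I would substitute (\ref{eq4-24a}) into (\ref{eq4-4a}) and use the lower-left block $\hat{M}N=\hat{N}M$ of (\ref{eq2-6}) to obtain $a_{r_{y}}=\left(-\hat{N}M+\hat{M}N\right)\xi=0$, which by (\ref{eq4-5}) returns the attack-free data and thus undetectability.

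The argument is short and I anticipate no serious obstacle; the one step that genuinely carries the weight is the injectivity of $\begin{bmatrix} -\hat{Y} \\ \hat{X} \end{bmatrix}$, since without left-invertibility a vanishing image of $a_{r_{y}}$ under this operator would not force $a_{r_{y}}$ itself to be zero. It is precisely here that coprimeness of the factorization, encoded in the Bezout identity, is indispensable, and I would make sure to state it explicitly rather than treating the cancellation as automatic.
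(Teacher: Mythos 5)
Your proposal is correct and follows essentially the same route as the paper: both reduce undetectability via (\ref{eq4-5}) to the vanishing of $a_{r_{y}}$ and then expand $\left[\begin{smallmatrix} a_{u} \\ -a_{y}\end{smallmatrix}\right]$ in the Bezout basis $\left[\begin{smallmatrix} M & -\hat{Y} \\ N & \hat{X}\end{smallmatrix}\right]$ to conclude that only the image component survives. Your one addition—explicitly justifying via the block identity $\left[\begin{smallmatrix} -\hat{N} & \hat{M}\end{smallmatrix}\right]\left[\begin{smallmatrix} -\hat{Y} \\ \hat{X}\end{smallmatrix}\right]=I$ that $\left[\begin{smallmatrix} -\hat{Y} \\ \hat{X}\end{smallmatrix}\right]a_{r_{y}}=0$ forces $a_{r_{y}}=0$—is a step the paper leaves implicit, and it is worth stating.
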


\begin{proof}
Since $\forall \xi $%
\begin{equation*}
\left[ 
\begin{array}{cc}
-\hat{N} & \text{ }\hat{M}%
\end{array}%
\right] \left[ 
\begin{array}{c}
M \\ 
N%
\end{array}%
\right] \xi =0,
\end{equation*}%
according (\ref{eq4-5}), it is apparent that (\ref{eq4-24a}) is a sufficient
condition. To prove the necessity, we examine the existence condition for 
\begin{equation}
\left[ 
\begin{array}{cc}
-\hat{N} & \text{ }\hat{M}%
\end{array}%
\right] \left[ 
\begin{array}{c}
a_{u} \\ 
-a_{y}%
\end{array}%
\right] =0  \label{eq4-25}
\end{equation}%
following Definition \ref{De4-1} and (\ref{eq4-5}). Let $\left(
a_{u},a_{y}\right) $ be $\mathcal{L}_{2}$ bounded, which can be written as 
\begin{equation}
\left[ 
\begin{array}{c}
a_{u} \\ 
-a_{y}%
\end{array}%
\right] =\left[ 
\begin{array}{cc}
M & \text{ }-\hat{Y} \\ 
N & \text{ }\hat{X}%
\end{array}%
\right] \left[ 
\begin{array}{c}
\xi _{1} \\ 
\xi _{2}%
\end{array}%
\right]
\end{equation}%
for some $\mathcal{L}_{2}$ bounded $\left( \xi _{1},\xi _{2}\right) .$
Consequently, 
\begin{equation*}
\left[ 
\begin{array}{cc}
-\hat{N} & \text{ }\hat{M}%
\end{array}%
\right] \left[ 
\begin{array}{c}
a_{u} \\ 
-a_{y}%
\end{array}%
\right] =\xi _{2},
\end{equation*}%
from which it can be immediately concluded that (\ref{eq4-25}) is true only
if $\xi _{2}=0,$ which implies 
\begin{equation*}
\left[ 
\begin{array}{c}
a_{u} \\ 
-a_{y}%
\end{array}%
\right] =\left[ 
\begin{array}{c}
M \\ 
N%
\end{array}%
\right] \xi _{1}=:\left[ 
\begin{array}{c}
M \\ 
N%
\end{array}%
\right] \xi .
\end{equation*}
\end{proof}

\begin{remark}
Since the stability of the closed-loop is assured, the signal $\xi $ can be
a function of $\left( u,y\right) $ as well. That is, the result (\ref{eq4-24a}%
) holds for both additive and multiplication attacks. 
\end{remark}

Note that (\ref{eq4-24a}) implies that the attack pair $\left(
a_{u},a_{y}\right) $ is undetectable, when 
\begin{equation}
\left[ 
\begin{array}{c}
a_{u} \\ 
-a_{y}%
\end{array}%
\right] \in \mathcal{I}_{G}.
\end{equation}%
Hence, we call undetectable attacks image attacks as well to emphasize the
fact that they lie in the system image subspace $\mathcal{I}_{G}$ \cite%
{DLautomatica2022,LiAutomatica_2026}. Recall that the system image subspace
is invariant with respect to the gain matrix $F.$ Thus, it is sufficient for
attackers to construct the image subspace using model knowledge $\left(
A,B,C,D\right) .$ In fact, \cite{DLautomatica2022} firstly proposed the
concept kernel attacks for the undetectable (stealthy) attacks, as it was
proven that $\left( a_{u},a_{y}\right) $ should belong to the kernel
subspace $\mathcal{K}_{G},$ i.e. satisfy (\ref{eq4-25}), and then introduced
the concept of image attacks using the equivalence $\mathcal{K}_{G}=\mathcal{%
I}_{G}.$ It was proven that the covert, zero dynamics and replay attacks are
kernel attackers \cite{DLautomatica2022}. It is noteworthy that the
aforementioned results are true for any class of cyber-attacks, including
those that attack a part of the input and output channels. Specifically, let
the input and output channels $\left( B,C,D\right) $ be split into two
groups as described by (\ref{eq3-23}) with the attacked channels $\left(
B_{1},D_{21},C_{2}\right) $ and unattacked channels $\left(
B_{2},D_{22},\right) $ and $\left( C_{1},\left[ 
\begin{array}{cc}
D_{11} & \text{ }D_{12}%
\end{array}%
\right] \right) .$ Correspondingly, write 
\begin{equation*}
a_{u}=\left[ 
\begin{array}{c}
a_{u_{1}} \\ 
0%
\end{array}%
\right] ,a_{y}=\left[ 
\begin{array}{c}
0 \\ 
a_{y_{1}}%
\end{array}%
\right] ,
\end{equation*}%
and the existence conditions for the kernel and image attacks are
respectively,%
\begin{eqnarray*}
\left[ 
\begin{array}{cc}
-\hat{N} & \text{ }\hat{M}%
\end{array}%
\right] \left[ 
\begin{array}{c}
a_{u_{1}} \\ 
0 \\ 
0 \\ 
a_{y_{1}}%
\end{array}%
\right] &=&0\Longleftrightarrow \left[ 
\begin{array}{c}
a_{u_{1}} \\ 
0 \\ 
0 \\ 
a_{y_{1}}%
\end{array}%
\right] \in \mathcal{K}_{G}, \\
\text{for some }\xi ,\left[ 
\begin{array}{c}
a_{u_{1}} \\ 
0 \\ 
0 \\ 
a_{y_{1}}%
\end{array}%
\right] &=&\left[ 
\begin{array}{c}
M \\ 
N%
\end{array}%
\right] \xi \in \mathcal{I}_{G}.
\end{eqnarray*}%
As described in Subsection \ref{subsection3-3-4}, zero dynamics attacks
studied in \cite{TEIXEIRA-zero-attack_2015,Hoehn2016,Sui2021} are a special
case of the above result.

We now attend to the attack actuability.

\begin{theorem}
\label{Theo4-7}Given the control system (\ref{eq4-2}) and the attack model (%
\ref{eq2-21}), the attacks $\left( a_{u},a_{y}\right) $ are unactuable if
and only if 
\begin{equation}
\left[ 
\begin{array}{c}
a_{u} \\ 
-a_{y}%
\end{array}%
\right] =\left[ 
\begin{array}{c}
-\hat{Y} \\ 
\hat{X}%
\end{array}%
\right] r,  \label{eq4-40}
\end{equation}%
where $r\neq 0$ is an arbitrary $\mathcal{L}_{2}$ bounded signal. 
\end{theorem}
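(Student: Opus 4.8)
The plan is to mirror the proof of Theorem~\ref{Theo4-6}, exploiting the duality between the image subspace $\mathcal{I}_G$ and the residual subspace $\mathcal{R}_G$. First I would restate the unactuability condition of Definition~\ref{De4-2} as an algebraic condition on the residual. From the plant-side dynamic (\ref{eq4-3}), together with $a_{r_u}=\left[ \begin{array}{cc} X & Y \end{array}\right]\left[ \begin{array}{c} a_u \\ -a_y \end{array}\right]$ as in (\ref{eq4-4a}), the plant-side data are
\begin{equation*}
\left[ \begin{array}{c} u^a \\ y \end{array}\right] = \left[ \begin{array}{c} M \\ N \end{array}\right]\left( a_{r_u}+v\right) + \left[ \begin{array}{c} -\hat{Y} \\ \hat{X} \end{array}\right] r_y .
\end{equation*}
Since $r_y=\hat{N}_d d$ and the reference $v$ are not altered by the attack, the attack-free response $\left( u_0,y_0\right)$ is recovered exactly by setting $a_{r_u}=0$; subtracting gives
\begin{equation*}
\left[ \begin{array}{c} u^a \\ y \end{array}\right]-\left[ \begin{array}{c} u_0 \\ y_0 \end{array}\right]=\left[ \begin{array}{c} M \\ N \end{array}\right] a_{r_u}.
\end{equation*}
Because $\left[ \begin{array}{cc} X & Y \end{array}\right]\left[ \begin{array}{c} M \\ N \end{array}\right]=I$ by the Bezout identity (\ref{eq2-6}), the factor $\left[ \begin{array}{c} M \\ N \end{array}\right]$ has full column rank over $\mathcal{RH}_\infty$, so the attacks are unactuable if and only if $a_{r_u}=\left[ \begin{array}{cc} X & Y \end{array}\right]\left[ \begin{array}{c} a_u \\ -a_y \end{array}\right]=0$.

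Next I would settle the two implications. For sufficiency, substituting (\ref{eq4-40}) and invoking the orthogonality relation (\ref{eq2-5}), namely $\left[ \begin{array}{cc} X & Y \end{array}\right]\left[ \begin{array}{c} -\hat{Y} \\ \hat{X} \end{array}\right]=0$, yields $a_{r_u}=0$ at once. For necessity, I would reuse the expansion device from Theorem~\ref{Theo4-6}: any $\mathcal{L}_2$-bounded pair can be written as
\begin{equation*}
\left[ \begin{array}{c} a_u \\ -a_y \end{array}\right]=\left[ \begin{array}{cc} M & -\hat{Y} \\ N & \hat{X} \end{array}\right]\left[ \begin{array}{c} \xi_1 \\ \xi_2 \end{array}\right],
\end{equation*}
which is always possible because this matrix is unimodular with inverse $\left[ \begin{array}{cc} X & Y \\ -\hat{N} & \hat{M} \end{array}\right]$ by (\ref{eq2-6}). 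Applying $\left[ \begin{array}{cc} X & Y \end{array}\right]$ and using (\ref{eq2-6}) gives $a_{r_u}=\xi_1$, so unactuability forces $\xi_1=0$ and hence $\left[ \begin{array}{c} a_u \\ -a_y \end{array}\right]=\left[ \begin{array}{c} -\hat{Y} \\ \hat{X} \end{array}\right]\xi_2$, which is (\ref{eq4-40}) with $r=\xi_2$. The stipulation $r\neq 0$ merely excludes the trivial no-attack case, since $\left[ \begin{array}{c} -\hat{Y} \\ \hat{X} \end{array}\right]$ has full column rank.

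The calculations are routine once the reduction to $a_{r_u}=0$ is secured, so I expect the only delicate point to be the justification that $v$ and $r_y$ coincide in the attacked and attack-free comparisons even for multiplicative attacks, where $\left( a_u,a_y\right)$ depend on $\left( u,y\right)$. I would handle this exactly as in the remark following Theorem~\ref{Theo4-6}: closed-loop stability is maintained, so $a_{r_u}$ may be regarded as an $\mathcal{L}_2$ signal (possibly a function of $\left( u,y\right)$), and the equivalence between $a_{r_u}=0$ and unchanged plant-side data is self-consistent in both directions. The upshot is that the unactuable attacks are precisely those lying in the residual subspace $\mathcal{R}_G$ spanned by $\left[ \begin{array}{c} -\hat{Y} \\ \hat{X} \end{array}\right]$, the exact dual of the image-subspace characterization of undetectable attacks in Theorem~\ref{Theo4-6}.
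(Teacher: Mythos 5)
Your proof is correct and follows essentially the same route as the paper's: sufficiency via the Bezout orthogonality $\left[\begin{array}{cc} X & Y\end{array}\right]\left[\begin{array}{c} -\hat{Y}\\ \hat{X}\end{array}\right]=0$, and necessity by expanding the attack pair through the unimodular matrix $\left[\begin{array}{cc} M & -\hat{Y}\\ N & \hat{X}\end{array}\right]$ and concluding $r_{1}=0$. Your added step making explicit that unactuability is equivalent to $a_{r_{u}}=0$ (via the full column rank of $\left[\begin{array}{c} M\\ N\end{array}\right]$) is a useful clarification of what the paper leaves implicit in its appeal to (\ref{eq4-3}), but it is not a different argument.
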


\begin{proof}
Since $\forall r$%
\begin{equation*}
\left[ 
\begin{array}{cc}
X & \text{ }Y%
\end{array}%
\right] \left[ 
\begin{array}{c}
-\hat{Y} \\ 
\hat{X}%
\end{array}%
\right] r=0,
\end{equation*}%
it follows from (\ref{eq4-3}) that (\ref{eq4-40}) is a sufficient condition.
To prove the necessity, let $\left( a_{u},a_{y}\right) $ be 
\begin{equation*}
\left[ 
\begin{array}{c}
a_{u} \\ 
-a_{y}%
\end{array}%
\right] =\left[ 
\begin{array}{cc}
M & \text{ }-\hat{Y} \\ 
N & \text{ }\hat{X}%
\end{array}%
\right] \left[ 
\begin{array}{c}
r_{1} \\ 
r_{2}%
\end{array}%
\right]
\end{equation*}%
for some $\mathcal{L}_{2}$ bounded $\left( r_{1},r_{2}\right) .$
Consequently, 
\begin{equation*}
\left[ 
\begin{array}{cc}
X & \text{ }Y%
\end{array}%
\right] \left[ 
\begin{array}{c}
a_{u} \\ 
-a_{y}%
\end{array}%
\right] =r_{1}.
\end{equation*}%
As a result, according to (\ref{eq4-3}), the attacks are unactuable only if $%
r_{1}=0,$ which implies 
\begin{equation*}
\left[ 
\begin{array}{c}
a_{u} \\ 
-a_{y}%
\end{array}%
\right] =\left[ 
\begin{array}{c}
-\hat{Y} \\ 
\hat{X}%
\end{array}%
\right] r_{2}.
\end{equation*}%
The theorem is thus proven.
\end{proof}

Remember that $\left( \hat{X},-\hat{Y}\right) $ spans the image subspace of
the controller $\mathcal{I}_{K}.$ Hence, the condition (\ref{eq4-40}) is
equivalent to 
\begin{equation*}
\left[ 
\begin{array}{c}
a_{u} \\ 
-a_{y}%
\end{array}%
\right] \in \mathcal{I}_{K},
\end{equation*}%
which implies, on the other hand, if%
\begin{equation*}
\left[ 
\begin{array}{cc}
X & \text{ }Y%
\end{array}%
\right] \left[ 
\begin{array}{c}
a_{u} \\ 
-a_{y}%
\end{array}%
\right] \neq 0,
\end{equation*}%
then $\left( a_{u},a_{y}\right) $ are actuable.

It is interesting to notice the following two claims as the immediate
results of Theorems \ref{Theo4-6}-\ref{Theo4-7}.

\begin{corollary}
\label{Co4-1}Given the control system (\ref{eq4-2}) and the attack model (%
\ref{eq2-21}), (i) any actuable attacks $\left( a_{u},a_{y}\right) $ are
detectable in the information pattern $\left( u^{a},y\right) ,$ (ii) the
closed-loop dynamic $\left( u^{a},y\right) $ under any image attack, 
\begin{equation*}
\left[ 
\begin{array}{c}
a_{u} \\ 
-a_{y}%
\end{array}%
\right] =\left[ 
\begin{array}{c}
M \\ 
N%
\end{array}%
\right] \xi ,
\end{equation*}%
is governed by 
\begin{equation}
\left[ 
\begin{array}{c}
u^{a} \\ 
y%
\end{array}%
\right] =\left[ 
\begin{array}{c}
M \\ 
N%
\end{array}%
\right] \left( \xi +v\right) +\left[ 
\begin{array}{c}
-\hat{Y} \\ 
\hat{X}%
\end{array}%
\right] r_{y}.  \label{eq4-41}
\end{equation}
\end{corollary}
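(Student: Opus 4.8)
The plan is to read both claims directly off the plant-side representation (\ref{eq4-3}) together with the Bezout identity (\ref{eq2-6}), so that no new computation is required beyond matching each characterization to the correct information pattern. The key structural fact I would exploit is that, by the top-left block of (\ref{eq2-6}), $\left[ \begin{array}{cc} X & Y \end{array}\right]$ is a left inverse of $\left[ \begin{array}{c} M \\ N \end{array}\right]$ (indeed $XM+YN=I$), so the column operator $\left[ \begin{array}{c} M \\ N \end{array}\right]$ is injective over $\mathcal{H}_2$.

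For part (i), I would first subtract the attack-free baseline from (\ref{eq4-3}). Setting $a_{u}=a_{y}=0$ in (\ref{eq4-3}) recovers $\left( u_{0},y_{0}\right)$, and the difference is
\begin{equation*}
\left[ \begin{array}{c} u^{a} \\ y \end{array}\right] - \left[ \begin{array}{c} u_{0} \\ y_{0} \end{array}\right] = \left[ \begin{array}{c} M \\ N \end{array}\right] \left[ \begin{array}{cc} X & Y \end{array}\right] \left[ \begin{array}{c} a_{u} \\ -a_{y} \end{array}\right].
\end{equation*}
By injectivity of the left factor this difference vanishes if and only if $\left[ \begin{array}{cc} X & Y \end{array}\right]\left[ \begin{array}{c} a_{u} \\ -a_{y} \end{array}\right]=0$, which by Theorem \ref{Theo4-7} is precisely the unactuability condition. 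Hence the plant-side data $\left( u^{a},y\right)$ deviates from the attack-free data exactly when the attack is actuable. Since detectability in the pattern $\left( u^{a},y\right)$ is, per Definition \ref{De4-1}, the negation of $\left( u^{a},y\right)\equiv\left( u_{0},y_{0}\right)$, this yields that every actuable attack is detectable on the plant side (and conversely), establishing (i).

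For part (ii), I would substitute the image-attack form $\left[ \begin{array}{c} a_{u} \\ -a_{y} \end{array}\right]=\left[ \begin{array}{c} M \\ N \end{array}\right]\xi$ into the reference channel of (\ref{eq4-3}). The coefficient multiplying $\left[ \begin{array}{c} M \\ N \end{array}\right]$ there is $Xa_{u}-Ya_{y}=\left[ \begin{array}{cc} X & Y \end{array}\right]\left[ \begin{array}{c} a_{u} \\ -a_{y} \end{array}\right]$, and applying the Bezout identity (\ref{eq2-6}) gives $\left[ \begin{array}{cc} X & Y \end{array}\right]\left[ \begin{array}{c} M \\ N \end{array}\right]=XM+YN=I$, so this coefficient collapses to $\xi$. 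The entire attack contribution is thereby absorbed into the image subspace alongside the reference $v$, and (\ref{eq4-3}) becomes (\ref{eq4-41}).

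I do not anticipate a genuine obstacle, since both parts are corollaries of (\ref{eq4-3}), the Bezout identity, and Theorems \ref{Theo4-6}--\ref{Theo4-7}. The only point demanding care is the bookkeeping of information patterns: actuability and the representation (\ref{eq4-3}) both live on the plant side $\left( u^{a},y\right)$, whereas the undetectability characterization of Theorem \ref{Theo4-6} is stated on the control-station side $\left( u,y^{a}\right)$. One must therefore invoke Theorem \ref{Theo4-7}, not Theorem \ref{Theo4-6}, for part (i), and keep the sign convention $\left[ \begin{array}{c} a_{u} \\ -a_{y} \end{array}\right]$ consistent throughout so that the left-inverse relation $\left[ \begin{array}{cc} X & Y \end{array}\right]\left[ \begin{array}{c} -\hat{Y} \\ \hat{X} \end{array}\right]=0$ (from (\ref{eq2-5})) is applied with the correct orientation.
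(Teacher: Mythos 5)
Your proposal is correct and takes essentially the same route as the paper's own (very terse) proof: claim (i) from the plant-side representation (\ref{eq4-3}) combined with Theorem \ref{Theo4-7} and Definitions \ref{De4-1}--\ref{De4-2}, and claim (ii) by substituting the image-attack form into (\ref{eq4-3}) and collapsing the coefficient via $\left[\begin{array}{cc}X & Y\end{array}\right]\left[\begin{array}{c}M \\ N\end{array}\right]=I$. Your explicit injectivity argument for $\left[\begin{array}{c}M \\ N\end{array}\right]$ and your remark on keeping the information patterns straight simply spell out what the paper leaves implicit.
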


\begin{proof}
The claim (i) follows immediately from (\ref{eq4-3}), Theorem \ref{Theo4-7},
and Definition \ref{De4-1}. The claim (ii) with relation (\ref{eq4-41}) is
the result of Theorem \ref{Theo4-6} and the relation%
\begin{equation*}
\left[ 
\begin{array}{cc}
X & \text{ }Y%
\end{array}%
\right] \left[ 
\begin{array}{c}
M \\ 
N%
\end{array}%
\right] =I.
\end{equation*}
\end{proof}

The claim (i) is useful for detecting (stealthy) attacks, as will be
addressed in Subsection \ref{SubsectionV-A}, while the claim (ii) is
essential for dealing with attack design issues. In fact, the condition (\ref%
{eq4-25}) for kernel attacks builds the control-theoretic basis for attack
detection, and its dual (and equivalent) form, the condition (\ref{eq4-24a})
for the image attacks, is a capable model for designing cyber-attacks. We
would like to call the reader's attention to the fact that ANY (actuable)
attacks $\left( a_{u},a_{y}\right) $ act on the system dynamic $\left(
u^{a},y\right) $ via the signal $r_{a_{u}},$ namely 
\begin{equation}
\left[ 
\begin{array}{c}
u^{a} \\ 
y%
\end{array}%
\right] =\left[ 
\begin{array}{c}
M \\ 
N%
\end{array}%
\right] \left( r_{a_{u}}+v\right) +\left[ 
\begin{array}{c}
-\hat{Y} \\ 
\hat{X}%
\end{array}%
\right] r_{y},  \label{eq4-42}
\end{equation}%
Moreover, write $\left( a_{u},a_{y}\right) $ as 
\begin{equation*}
\left[ 
\begin{array}{c}
a_{u} \\ 
-a_{y}%
\end{array}%
\right] =\left[ 
\begin{array}{cc}
M & \text{ }-\hat{Y} \\ 
N & \text{ }\hat{X}%
\end{array}%
\right] \left[ 
\begin{array}{c}
\xi _{1} \\ 
\xi _{2}%
\end{array}%
\right] ,
\end{equation*}%
the closed-loop dynamic is governed by 
\begin{equation}
\left[ 
\begin{array}{c}
u^{a} \\ 
y%
\end{array}%
\right] =\left[ 
\begin{array}{c}
M \\ 
N%
\end{array}%
\right] \left( \xi _{1}+v\right) +\left[ 
\begin{array}{c}
-\hat{Y} \\ 
\hat{X}%
\end{array}%
\right] r_{y},  \label{eq4-43}
\end{equation}%
independent of $\xi _{2}.$ In other words, \textbf{it is sufficient to
construct image (stealthy) attacks exclusively.}

\begin{remark}
\label{Rem4-2}In our aforementioned discussion, the attack detectability and
actuability have been defined and analyzed in the context that cyber-attacks
are dedicated to degrade system control performance. Notice that in
industrial practice, system monitoring and fault diagnosis are often
implemented on the side of control station and in the information pattern $%
\left( u,y^{a}\right) .$ According to Theorem \ref{Theo4-4}, 
\begin{align*}
\left[ 
\begin{array}{c}
u \\ 
y^{a}%
\end{array}%
\right] & =\left[ 
\begin{array}{c}
M \\ 
N%
\end{array}%
\right] v+\left[ 
\begin{array}{c}
-\hat{Y} \\ 
\hat{X}%
\end{array}%
\right] \left( r_{y}-a_{r_{y}}\right) \Longrightarrow  \\
r_{y,cs}& =\left[ 
\begin{array}{cc}
-\hat{N} & \text{ }\hat{M}%
\end{array}%
\right] \left[ 
\begin{array}{c}
u \\ 
y^{a}%
\end{array}%
\right] =r_{y}-a_{r_{y}}.
\end{align*}%
Consequently, a fault diagnosis based on the residual $r_{y,cs}$ generated
at the control station can be significantly disturbed by the attack signal $%
a_{r_{y}}.$ Suppose that cyber-attacks are directed at the fault diagnosis
system rather than the control performance. Attackers may design the attack
pair $\left( a_{u},a_{y}\right) $ in such a way that 
\begin{equation*}
r_{a_{u}}=Xa_{u}-Ya_{y}=0,a_{r_{y}}=-\hat{M}a_{y}+\hat{N}a_{u}\neq 0,
\end{equation*}%
i.e. $\left( a_{u},a_{y}\right) $ is undetectable by means of the residual $%
r_{u},$ and actuable via the residual $r_{y,cs}.$ This can remarkably
increase false alarm rate, and even result in alarm flooding \cite{Wang2016}%
. The consequence is strongly reduced production rate and, when
fault-tolerant control action is triggered, degradation of control
performance as well. This engineering practical aspect is totally overlooked
in the research on secure CPCSs.
\end{remark}

\subsubsection{Vulnerability analysis and image attack design\label%
{subsectionIV-C-B}}

In the literature, in the context of vulnerability analysis, major attention
has been mainly paid to degrade the tracking performance, enlarge the state
estimation error, and increase linear quadratic Gaussian (LQG) control
performance \cite{JY-TC-2024,Shang2022attacks,ZhangTAC2023,LY-TAC-2022},
while limit research efforts have been made on deteriorating the feedback
control performance \cite{MS-TAC-2016}. Furthermore, most of the existing
attack design schemes are implemented based on some strict assumptions,
which limits the applications.

In this subsection, we analyze system vulnerability and delineate a scheme
of designing image attacks. Apart from the above-mentioned sufficiency of
constructing image attacks, further two advantages of this design scheme are
(i) the design is performed in the context of a constriant-free optimization
problem with respect to the system vulnerability, (ii) instead of $\left(
a_{u},a_{y}\right) \in \mathbb{R}^{p+m},$ the latent variable $\xi \in 
\mathbb{R}^{p}$ serves as the optimization variable, which may considerably
reduce optimization computations.

\begin{remark}
In the interature, optimal design of stealthy attacks is typically
formulated as an optimization problem with the stealthiness as a constraint.
It can be proven that such constraint conditions can be reformulated as the
kernel condition (\ref{eq4-25}) for strictly stealthy attacks or  
\begin{equation*}
\left\Vert \left[ 
\begin{array}{cc}
-\hat{N} & \text{ }\hat{M}%
\end{array}%
\right] \left[ 
\begin{array}{c}
a_{u} \\ 
-a_{y}%
\end{array}%
\right] \right\Vert \leq \varepsilon 
\end{equation*}%
for stealthy attacks, where $\left\Vert \cdot \right\Vert $ denotes some
norm or index, and $\varepsilon $ is a small constant.
\end{remark}

Vulnerability analysis is receiving increasing attention in the research
area of secure CPCSs and attack design \cite%
{MS-TAC-2016,Sui2021,LIUautomatica2024,ZHANGautomatica2024}. Informally
speaking, vulnerability describes attacker's ability to induce significant
degradation of a control performance like optimal control and estimation
objective, while remaining within the system's admissible operating range.
Below, we firstly introduce the concept of attack stability margin and, on
this basis, define the system vulnerability. In this context, image attack
design is addressed. To this end, we consider the control system (\ref{eq4-2}%
) and the attack model (\ref{eq2-21}), and assume that attackers possess the
process knowledge $\left( A,B,C,D\right) $ and are able to access the system
data $\left( u,y\right) $.

Theorem \ref{Theo4-5} reveals that the impact of cyber-attacks on the CPCS
dynamic is equivalent to variations in controller. Specifically, the altered
control gain is subject to%
\begin{equation*}
K_{\Delta }=-\left( X+\Delta _{X}\right) ^{-1}\left( Y+\Delta _{Y}\right) ,
\end{equation*}%
with $\left( \Delta _{X},\Delta _{Y}\right) $ denoting the variation caused
by attacks. One critical endangerment of such variations is the system
stability. In robust control theory, stability margin is a well-established
concept of quantifying the impact of model uncertainty on the closed-loop
stability \cite{KP-book,Zhou96}. According to Theorems \ref{Theo4-3} and \ref%
{Theo4-5}, attacks exclusively act on the system image subspace, the
stability margin of the system dynamic under attacks is to be defined in
terms of $\left( M,N\right) .$ Specifically, dual to the stability margin
with respect to the model uncertainty defined by \cite{KP-book,Zhou96} 
\begin{equation*}
\left\Vert \left[ 
\begin{array}{c}
K \\ 
I%
\end{array}%
\right] \left( I-GK\right) ^{-1}\hat{M}^{-1}\right\Vert _{\infty
}^{-1}=\left\Vert \left[ 
\begin{array}{c}
-\hat{Y} \\ 
\hat{X}%
\end{array}%
\right] \right\Vert _{\infty }^{-1},
\end{equation*}%
the loop dynamic%
\begin{gather}
\left[ 
\begin{array}{c}
I \\ 
G%
\end{array}%
\right] \left( I-K_{\Delta }G\right) ^{-1}\left( X+\Delta _{X}\right) ^{-1} 
\notag \\
=\left[ 
\begin{array}{c}
M \\ 
N%
\end{array}%
\right] \left( \left( X+\Delta _{X}\right) M+\left( Y+\Delta _{Y}\right)
N\right) ^{-1}  \notag \\
=\left[ 
\begin{array}{c}
M \\ 
N%
\end{array}%
\right] \left( I+\left[ 
\begin{array}{cc}
\Delta _{X} & \text{ }\Delta _{Y}%
\end{array}%
\right] \left[ 
\begin{array}{c}
M \\ 
N%
\end{array}%
\right] \right) ^{-1}  \label{eq4-45}
\end{gather}%
is conceded, from which it can be concluded that the loop dynamic is stable
when 
\begin{gather*}
\left\Vert \left[ 
\begin{array}{cc}
\Delta _{X} & \text{ }\Delta _{Y}%
\end{array}%
\right] \left[ 
\begin{array}{c}
M \\ 
N%
\end{array}%
\right] \right\Vert _{\infty }<1\Longrightarrow \\
\left\Vert \left[ 
\begin{array}{cc}
\Delta _{X} & \text{ }\Delta _{Y}%
\end{array}%
\right] \right\Vert _{\infty }\left\Vert \left[ 
\begin{array}{c}
M \\ 
N%
\end{array}%
\right] \right\Vert _{\infty }<1.
\end{gather*}%
In this context, we introduce the definition of attack stability margin.

\begin{definition}
\label{De4-3}Given the attacked CPCS model,   
\begin{equation}
\left\{ 
\begin{array}{l}
y=Gu^{a}+r_{y,0} \\ 
u^{a}=K_{\Delta }y+r_{u,a},%
\end{array}%
\right.   \label{eq4-44}
\end{equation}%
where $K_{\Delta }$ and $r_{u,a}$ are described in Theorem \ref{Theo4-5},
the value $b_{a},$ 
\begin{equation}
b_{a}=\left\Vert \left[ 
\begin{array}{c}
M \\ 
N%
\end{array}%
\right] \right\Vert _{\infty }^{-1},  \label{eq4-46}
\end{equation}%
is called attack stability margin. 
\end{definition}

It is apparent that the loop dynamic may become unstable, when $\left\Vert %
\left[ 
\begin{array}{cc}
\Delta _{X} & \text{ }\Delta _{Y}%
\end{array}%
\right] \right\Vert _{\infty }\geq b_{a}$. In other words, the value 
\begin{equation*}
\left\Vert \left[ 
\begin{array}{c}
M \\ 
N%
\end{array}%
\right] \right\Vert _{\infty }=b_{a}^{-1}
\end{equation*}%
reflects the degree of vulnerability. We now delineate a design scheme of
image attacks and the vulnerability in the context of system stability.

Let an image attack be described by%
\begin{equation*}
\left[ 
\begin{array}{c}
a_{u} \\ 
a_{y}%
\end{array}%
\right] =\Pi ^{a}\left[ 
\begin{array}{c}
u \\ 
y%
\end{array}%
\right] ,\Pi ^{a}:=\left[ 
\begin{array}{c}
M_{a} \\ 
N_{a}%
\end{array}%
\right] \Pi _{u}^{a},
\end{equation*}%
where $\left( M_{a},N_{a}\right) $ represents an SIR of the plant
constructed by the attackers with parameters $\left( F_{a},V_{a}\right) $,
and $\Pi _{u}^{a}$ is to be determined. The attack model (\ref{eq4-15}) is
then further written into 
\begin{gather*}
\left[ 
\begin{array}{c}
a_{u} \\ 
a_{y}%
\end{array}%
\right] =\Pi ^{a}\left( I+\left[ 
\begin{array}{cc}
I & 0 \\ 
0 & 0%
\end{array}%
\right] \Pi ^{a}\right) ^{-1}\left[ 
\begin{array}{c}
u^{a} \\ 
y%
\end{array}%
\right] \\
=\left[ 
\begin{array}{c}
M_{a} \\ 
N_{a}%
\end{array}%
\right] \Pi _{u}^{a}\left( I+\left[ 
\begin{array}{c}
M_{a} \\ 
0%
\end{array}%
\right] \Pi _{u}^{a}\right) ^{-1}\left[ 
\begin{array}{c}
u^{a} \\ 
y%
\end{array}%
\right] .
\end{gather*}%
Attributed to Theorem \ref{Theo3-1}, $\left( M,N\right) $ and $\left(
X,Y\right) $ are subject to 
\begin{gather*}
\left[ 
\begin{array}{c}
M \\ 
N%
\end{array}%
\right] =\left[ 
\begin{array}{c}
M_{a} \\ 
N_{a}%
\end{array}%
\right] T_{a},T_{a}\in \mathcal{RH}_{\infty }, \\
\left[ 
\begin{array}{cc}
X & \text{ }Y%
\end{array}%
\right] \left[ 
\begin{array}{c}
M_{a} \\ 
N_{a}%
\end{array}%
\right] =T_{a}^{-1},T_{a}^{-1}\in \mathcal{RH}_{\infty },
\end{gather*}%
for some $T_{a}.$ It follows from Theorem \ref{Theo4-5}%
\begin{gather}
\left( I+\left[ 
\begin{array}{cc}
\Delta _{X} & \text{ }\Delta _{Y}%
\end{array}%
\right] \left[ 
\begin{array}{c}
M \\ 
N%
\end{array}%
\right] \right) ^{-1}=\left( I-\left[ 
\begin{array}{cc}
X & \text{ }Y%
\end{array}%
\right] \Phi ^{a}\left[ 
\begin{array}{c}
M \\ 
N%
\end{array}%
\right] \right) ^{-1}  \notag \\
=\left( I-T_{a}^{-1}\Pi _{u}^{a}\left( I+\left[ 
\begin{array}{c}
M_{a} \\ 
0%
\end{array}%
\right] \Pi _{u}^{a}\right) ^{-1}\left[ 
\begin{array}{c}
M_{a} \\ 
N_{a}%
\end{array}%
\right] T_{a}\right) ^{-1}  \notag \\
=T_{a}^{-1}\left( I-\Pi _{u}^{a}\left( I+\left[ 
\begin{array}{c}
M_{a} \\ 
0%
\end{array}%
\right] \Pi _{u}^{a}\right) ^{-1}\left[ 
\begin{array}{c}
M_{a} \\ 
N_{a}%
\end{array}%
\right] \right) ^{-1}T_{a}.  \label{eq4-47}
\end{gather}%
Consequently, it is sufficient to consider 
\begin{equation}
\left( I-\Pi _{u}^{a}\left( I+\left[ 
\begin{array}{c}
M_{a} \\ 
0%
\end{array}%
\right] \Pi _{u}^{a}\right) ^{-1}\left[ 
\begin{array}{c}
M_{a} \\ 
N_{a}%
\end{array}%
\right] \right) ^{-1}  \label{eq4-50}
\end{equation}%
for examining the system stability. Note that the expression (\ref{eq4-50})
exclusively comprises the transfer functions designed by the attackers. Now,
impose 
\begin{equation*}
\Pi _{u}^{a}\left( I+\left[ 
\begin{array}{c}
M_{a} \\ 
0%
\end{array}%
\right] \Pi _{u}^{a}\right) ^{-1}=\left[ 
\begin{array}{cc}
\Pi _{1}^{a} & \text{ }\Pi _{2}^{a}%
\end{array}%
\right] \left[ 
\begin{array}{cc}
X_{a} & \text{ }Y_{a} \\ 
-\hat{N}_{a} & \text{ }\hat{M}_{a}%
\end{array}%
\right] ,
\end{equation*}%
$\left( \Pi _{1}^{a},\Pi _{2}^{a}\right) \in \mathcal{RH}_{\infty },$
leading to 
\begin{eqnarray*}
\Pi _{u}^{a} &=&\left( I-\left[ 
\begin{array}{cc}
\Pi _{1}^{a} & \text{ }\Pi _{2}^{a}%
\end{array}%
\right] \left[ 
\begin{array}{c}
X_{a} \\ 
-\hat{N}_{a}%
\end{array}%
\right] M_{a}\right) ^{-1}\cdot \\
&&\left[ 
\begin{array}{cc}
\Pi _{1}^{a} & \text{ }\Pi _{2}^{a}%
\end{array}%
\right] \left[ 
\begin{array}{cc}
X_{a} & \text{ }Y_{a} \\ 
-\hat{N}_{a} & \text{ }\hat{M}_{a}%
\end{array}%
\right] .
\end{eqnarray*}%
Hereby, $\left( X_{a},Y_{a}\right) $ and $\left( \hat{M}_{a},\hat{N}%
_{a}\right) $ are the LCPs of the controller and plant set by the attackers,
and $\left( \Pi _{1}^{a},\Pi _{2}^{a}\right) $ are selected so that 
\begin{equation}
\left( I-\left[ 
\begin{array}{cc}
\Pi _{1}^{a} & \text{ }\Pi _{2}^{a}%
\end{array}%
\right] \left[ 
\begin{array}{c}
X_{a} \\ 
-\hat{N}_{a}%
\end{array}%
\right] M_{a}\right) ^{-1}\in \mathcal{RH}_{\infty }.  \label{eq4-48}
\end{equation}%
Consequently, 
\begin{equation}
I-\Pi _{u}^{a}\left( I+\left[ 
\begin{array}{c}
M_{a} \\ 
0%
\end{array}%
\right] \Pi _{u}^{a}\right) ^{-1}\left[ 
\begin{array}{c}
M_{a} \\ 
N_{a}%
\end{array}%
\right] =I-\Pi _{1}^{a}.  \label{eq4-49}
\end{equation}%
We summarize the above results in the following theorem and definition.

\begin{theorem}
\label{Theo4-8}Given the attacked CPCS modelled by (\ref{eq4-44}) in
Definition \ref{De4-3}, if there exists LCPs and RCPs of the plant and
controller, $\left( \hat{M}_{a},\hat{N}_{a}\right) ,\left(
X_{a},Y_{a}\right) ,\left( M_{a},N_{a}\right) ,$ and $\left( \Pi
_{1}^{a},\Pi _{2}^{a}\right) \in \mathcal{RH}_{\infty }$ so that (i) the
condition (\ref{eq4-48}) holds, and (ii) the system $I-\Pi _{1}^{a}$ has a
zero in the closed right half-plane (an unstable zero), the loop dynamic
(\ref{eq4-45}) is unstable. 
\end{theorem}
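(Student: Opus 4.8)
The plan is to show that the stability of the loop dynamic (\ref{eq4-45}) is governed \emph{solely} by whether $I-\Pi_1^a$ is invertible over $\mathcal{RH}_\infty$, and then to use hypothesis (ii) to exhibit an unstable pole in that inverse. Most of the algebra is already assembled in the derivation preceding the statement: equations (\ref{eq4-45})--(\ref{eq4-49}) reduce the critical factor $\left(I+\left[\begin{smallmatrix}\Delta_X & \Delta_Y\end{smallmatrix}\right]\left[\begin{smallmatrix}M\\N\end{smallmatrix}\right]\right)^{-1}$ to the conjugated form $T_a^{-1}\left(I-\Pi_1^a\right)^{-1}T_a$. So the first step is simply to collect these identities and record that, since $T_a,T_a^{-1}\in\mathcal{RH}_\infty$ by Theorem~\ref{Theo3-1} and since condition (i), namely (\ref{eq4-48}), guarantees $\Pi_u^a\in\mathcal{RH}_\infty$, the critical factor is stable if and only if $\left(I-\Pi_1^a\right)^{-1}\in\mathcal{RH}_\infty$. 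This last equivalence is clean because the conjugation by $T_a$ is an $\mathcal{RH}_\infty$-similarity: membership in $\mathcal{RH}_\infty$ is preserved in both directions.

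The second step is the crucial linkage, ensuring that the instability of the (square) critical factor genuinely propagates to the full (tall) loop dynamic (\ref{eq4-45}), i.e. that no cancellation with the stable image factor $\left[\begin{smallmatrix}M\\N\end{smallmatrix}\right]$ can hide it. I would base this on the left-coprimeness encoded in the Bezout identity (\ref{eq2-6}): since $\left[\begin{smallmatrix}X & Y\end{smallmatrix}\right]\left[\begin{smallmatrix}M\\N\end{smallmatrix}\right]=I$ with $\left[\begin{smallmatrix}X & Y\end{smallmatrix}\right]\in\mathcal{RH}_\infty$, the image factor admits a stable left inverse. Hence left-multiplying (\ref{eq4-45}) by $\left[\begin{smallmatrix}X & Y\end{smallmatrix}\right]$ returns exactly the critical factor, so the loop dynamic is stable if and only if the critical factor is. Chaining this with the first step yields that the loop dynamic (\ref{eq4-45}) is stable if and only if $\left(I-\Pi_1^a\right)^{-1}\in\mathcal{RH}_\infty$.

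The final step invokes hypothesis (ii). Because $I-\Pi_1^a$ is a square stable transfer matrix, the poles of its inverse coincide with the zeros of $\det\left(I-\Pi_1^a\right)$; an unstable (transmission) zero of $I-\Pi_1^a$ — the stated unstable-zero condition, read in the discrete-time setting as a zero on or outside the unit circle — therefore produces an unstable pole of $\left(I-\Pi_1^a\right)^{-1}$, so $\left(I-\Pi_1^a\right)^{-1}\notin\mathcal{RH}_\infty$. By the equivalence just established, the loop dynamic (\ref{eq4-45}) is unstable, as claimed.

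I expect the main obstacle to be the non-cancellation argument of the second step rather than the reductions of the first, which are essentially bookkeeping over (\ref{eq4-47})--(\ref{eq4-49}). The delicate point is that the unstable pole of $\left(I-\Pi_1^a\right)^{-1}$ must survive both the $T_a$-conjugation and the passage from the square critical factor to the tall product; the $\mathcal{RH}_\infty$-similarity disposes of the former and the Bezout left-invertibility of the latter. Alongside this, I would verify under (i) that $\Phi^a$ and the realization of $\Pi_u^a$ (hence the inverse $\left(I+\left[\begin{smallmatrix}M_a\\0\end{smallmatrix}\right]\Pi_u^a\right)^{-1}$ appearing in (\ref{eq4-49})) are well-posed, so that the closed loop under the designed image attack is actually formed and the chain of equalities leading to $I-\Pi_1^a$ is valid.
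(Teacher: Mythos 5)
Your proposal is correct and follows essentially the same route as the paper, whose proof is simply a pointer to the chain of identities (\ref{eq4-47})--(\ref{eq4-50}) and (\ref{eq4-49}) reducing the critical factor to $T_a^{-1}\left(I-\Pi_1^a\right)^{-1}T_a$. The two points you add explicitly — that the Bezout left inverse $\left[\begin{smallmatrix}X & Y\end{smallmatrix}\right]$ prevents the instability from being cancelled by the tall factor $\left[\begin{smallmatrix}M\\N\end{smallmatrix}\right]$, and that an unstable zero of the square stable system $I-\Pi_1^a$ yields an unstable pole of its inverse — are exactly the steps the paper leaves implicit, and your handling of them is sound.
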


\begin{proof}
The proof follows from (\ref{eq4-47})-(\ref{eq4-50}) and (\ref{eq4-49}).
\end{proof}

\begin{definition}
\label{De4-4}Given the attacked CPCS model (\ref{eq4-44}) and suppose that
the condition (\ref{eq4-48}) is satisfied, the system $\left( I-\Pi
_{1}^{a}\right) ^{-1}$ is called vulnerability model.
\end{definition}

It is worth emphasizing that all relevant transfer functions in (\ref{eq4-48}%
)-(\ref{eq4-49}) can be arbitrarily designed by the attackers. They pose no
strict conditions, as demonstrated by the following example.

\begin{example}
Consider a stable plant system $G\in \mathcal{RH}_{\infty }^{m\times
p},m\geq p.$ It is apparent that for $F_{a}=0,V_{a}=I,$ 
\begin{equation*}
\left( X_{a},Y_{a}\right) =\left( I,0\right) ,\left( M_{a},N_{a}\right)
=\left( I,G\right) .
\end{equation*}%
Accordingly, 
\begin{equation*}
I-\left[ 
\begin{array}{cc}
\Pi _{1}^{a} & \text{ }\Pi _{2}^{a}%
\end{array}%
\right] \left[ 
\begin{array}{c}
X_{a} \\ 
-\hat{N}_{a}%
\end{array}%
\right] M_{a}=I-\Pi _{1}^{a}+\Pi _{2}^{a}\hat{N}_{a}.
\end{equation*}%
We now select $\Pi _{2}^{a}\in \mathcal{RH}_{\infty }$ so that $\Pi _{2}^{a}%
\hat{N}_{a}$ is subject to 
\begin{equation}
\Pi _{2}^{a}\hat{N}_{a}=\frac{Z_{0}}{P_{0}}I,\frac{Z_{0}}{P_{0}}=\frac{\beta
_{\vartheta }z^{\vartheta }+\cdots +\beta _{0}}{\gamma _{\eta }z^{\eta
}+\cdots +\gamma _{0}},  \label{eq4-51}
\end{equation}%
where $Z_{0}$ contains all zeros of $\hat{N}_{a}$ in the closed right
half-plane, $\eta -\vartheta \geq 0$ (depending on the relative degree of $%
\hat{N}_{a}$), which ensures the existence of (\ref{eq4-51}), and $\gamma
_{i},i=0,\cdots ,\eta ,$ are arbitrarily selectable, so far the stability of 
$\Pi _{2}^{a}\hat{N}_{a}$ is guaranteed. Next, set 
\begin{gather*}
\Pi _{1}^{a}=\frac{\alpha _{\eta }z^{\eta }+\cdots +\alpha _{0}}{P_{0}}I\in 
\mathcal{RH}_{\infty }\Longrightarrow  \\
I-\Pi _{1}^{a}+\Pi _{2}^{a}\hat{N}_{a}=\frac{\Xi _{0}}{P_{0}}I,I-\Pi
_{1}^{a}=\frac{\Pi _{0}}{P_{0}}I, \\
\Xi _{0}=\left( \gamma _{\eta }-\alpha _{\eta }\right) z^{\eta }+\cdots
+\left( \gamma _{\vartheta +1}-\alpha _{\vartheta +1}\right) z^{\vartheta
+1}+ \\
\left( \gamma _{\vartheta }+\beta _{\vartheta }-\alpha _{\vartheta }\right)
z^{\vartheta }+\cdots +\gamma _{0}+\beta _{0}-\alpha _{0}, \\
\Pi _{0}=\left( \gamma _{\eta }-\alpha _{\eta }\right) z^{\eta }+\cdots
+\gamma _{0}-\alpha _{0},
\end{gather*}%
and determine $\alpha _{i},\gamma _{i},i=0,\cdots ,\eta ,$ such that the
zeros of polynomial $\Xi _{0}$ are in the left half-plane, and $\Pi _{0}$
contains zeros in the closed right half-plane. As a result, 
\begin{equation*}
\left( I-\left[ 
\begin{array}{cc}
\Pi _{1}^{a} & \text{ }\Pi _{2}^{a}%
\end{array}%
\right] \left[ 
\begin{array}{c}
X_{a} \\ 
-\hat{N}_{a}%
\end{array}%
\right] M_{a}\right) ^{-1}=\frac{P_{0}}{\Xi _{0}}I\in \mathcal{RH}_{\infty },
\end{equation*}%
and $\left( I-\Pi _{1}^{a}\right) ^{-1}=\frac{P_{0}}{\Pi _{0}}I$ is unstable.
\end{example}

\section{Detection, control, and system privacy}

We now attend to the simultaneous detection of the faults and cyber-attacks,
their tolerant and resilient control. At the end of this section, we will
address system privacy issues as well. For our purpose, the fault model (\ref%
{eq2-24}) and attack model (\ref{eq2-21}) are under consideration.

\subsection{Simultaneous detection of faults and cyber-attacks\label%
{SubsectionV-A}}

In industrial CPCSs, faults and attacks may occur concurrently, and their
effects on system operations are often similar and can be mutually masking.
Consequently, conventional diagnosis methods that detect faults and
cyber-attacks separately may lead to misdiagnosis, false alarms, and even
overlook threats. Congruously, simultaneous detection of faults and
cyber-attacks attracts considerable attention in industrial applications 
\cite{Manandhar2014,Ensansefat2014,ZKPP-CSL-2021,Taheri2024}. On the other
hand, it can be observed that endeavours to research simultaneous detection
of faults and attacks, in comparison with the current enthusiasm for attack
detection, and in particular attack design, are remarkably limited \cite%
{Zaman2022,Baroumand2023,RA-EJC-2023,Xue2025}. In the previous sections, it
is revealed that (i) there exists an attack information potential $IP_{a}$
on the plant and control station sides, (ii) there is no fault information
difference for the two information patterns, since the output residual $%
r_{y} $ identically affects $\left( u,y^{a}\right) $ and $\left(
u^{a},y\right) .$ In particular, Theorem \ref{Theo4-4} clearly showcases
that a simultaneous detection of faults and attacks in the information
pattern $\left( u,y^{a}\right) ,$ i.e. detection on the control station, is
challenging due to the strong coupling of faults and attacks in the
closed-loop data $\left( u,y^{a}\right) .$ These results motivate us to
leverage the unified framework to achieve the simultaneous detection in
different information patterns.

\subsubsection{Simultaneous detection scheme A}

It follows from Theorems \ref{Theo4-1}, \ref{Theo4-3} and (\ref{eq4-3}) that
the closed-loop dynamic under simultaneous existence of the faults and
attacks is governed by 
\begin{equation}
\left[ 
\begin{array}{c}
u^{a} \\ 
y%
\end{array}%
\right] =\left[ 
\begin{array}{c}
M \\ 
N%
\end{array}%
\right] \left( v+a_{r_{u}}\right) +\left[ 
\begin{array}{c}
-\hat{Y} \\ 
\hat{X}%
\end{array}%
\right] \left( 
\begin{array}{c}
\Phi ^{f}\left( \hat{N}_{d}d+\bar{f}\right) \\ 
+\Psi ^{f}\left( v+a_{r_{u}}\right)%
\end{array}%
\right) .  \label{eq4-20}
\end{equation}%
That means, in the information pattern $\left( u^{a},y\right) ,$ (i) the
faults exclusively act on the residual subspace, (ii) the cyber-attacks act
on the image and residual subspaces, whereby the influence on the residual
subspace is coupled with the (multiplicative) fault $\Psi ^{f}$. These
distinguishing features allow us to detect the faults and attacks
simultaneously on the plant side. To be specific, on the assumption that the
reference signal $v$ is known or equal to zero, an assumption mostly adopted
in the literature, detecting the injected cyber-attack $a_{r_{u}}$ by means
of the residual $r_{u},$%
\begin{equation}
r_{u}=Xu^{a}+Yy=v+a_{r_{u}},
\end{equation}%
is straightforward, namely $a_{r_{u}}=r_{u}-v.$ Considering that, during
fault-free operations, the dynamic of the residual generator is subject to 
\begin{equation}
r_{y}=\hat{M}y-\hat{N}u^{a}=\hat{N}_{d}d,
\end{equation}%
where the unknown input vector $d$ is either a stochastic process or an $%
\ell _{2}$-bounded signal and modelled by (\ref{eq2-23a}), an optimal fault
detection is achieved by adding an post-filter $R$ to the residual $r_{y},$%
\begin{equation*}
r=Rr_{y}=R\hat{N}_{d}d,
\end{equation*}%
whose design and the corresponding threshold setting are summarized in
Algorithm 1.

\begin{algorithm}
\textbf{Algorithm 1: Detection scheme A}
\begin{enumerate}
\item Do a co-inner-outer factorization of $\hat{N}_{d},$ 
\begin{equation}
\hat{N}_{d}=\hat{N}_{d}^{co}\hat{N}_{d}^{in};  \label{eq5-30}
\end{equation}
\item set $R$ as 
\begin{equation}
R=\left( \hat{N}_{d}^{co}\right) ^{-1};  \label{eq5-31}
\end{equation}
\item corresponding to the residual evaluation function,%
\begin{equation*}
J=\left\{ 
\begin{array}{l}
r^{T}(k)\Sigma _{r}^{-1}r(k),d=\left[ 
\begin{array}{c}
w \\ 
v%
\end{array}%
\right] \\ 
\left\Vert r\right\Vert _{2},d=\left[ 
\begin{array}{c}
d_{P} \\ 
d_{S}%
\end{array}%
\right] \text{,}%
\end{array}%
\right.
\end{equation*}%
set the threshold $J_{th}$ as%
\begin{equation*}
J_{th}=\left\{ 
\begin{array}{l}
\chi _{\alpha }^{2}\left( m\right) ,d=\left[ 
\begin{array}{c}
w \\ 
v%
\end{array}%
\right] \\ 
\delta _{d},d=\left[ 
\begin{array}{c}
d_{P} \\ 
d_{S}%
\end{array}%
\right]%
\end{array}%
\right.
\end{equation*}%
where $\alpha \in (0,1) $ is pre-defined false alarm rate.
\end{enumerate}
\end{algorithm}The control-theoretic background of the above algorithm is
the application of a co-inner-outer factorization of a transfer function
matrix to optimize an observer-based fault detection system with white
noises or an $\ell _{2}$-bounded unknown input vector. The reader is
referred to \cite{Ding2008,Ding2020} for details. As delineated in \cite%
{Ding2020}, the residual generation system satisfies

\begin{equation*}
\left\{ 
\begin{array}{l}
r\sim \mathcal{N}\left( 0,\Sigma _{r}\right) ,\text{and white},d=\left[ 
\begin{array}{c}
w \\ 
v%
\end{array}%
\right] \\ 
R\hat{N}_{d}\text{ is co-inner, }d=\left[ 
\begin{array}{c}
d_{P} \\ 
d_{S}%
\end{array}%
\right] .%
\end{array}%
\right.
\end{equation*}%
As a result, 
\begin{equation*}
\left\{ 
\begin{array}{l}
r^{T}(k)\Sigma _{r}^{-1}r(k)\sim \chi ^{2}\left( m\right) ,d=\left[ 
\begin{array}{c}
w \\ 
v%
\end{array}%
\right] \\ 
\left\Vert r\right\Vert _{2}\leq d,d=\left[ 
\begin{array}{c}
d_{P} \\ 
d_{S}%
\end{array}%
\right] .%
\end{array}%
\right.
\end{equation*}

\begin{remark}
In the literature, the $\chi ^{2}$ test statistic with the threshold setting 
$\chi _{\alpha }^{2}$ is widely applied for attack detection, even though
the corresponding residual signal is a color noise series. It leads to
higher false alarm rate and lower detectability. The co-inner-outer
factorization described in Algorithm 1 ensures the whiteness of the residual
signal. Correspondingly, the $\chi ^{2}$ test statistic delivers the maximal
detectability \cite{Ding2020}.
\end{remark}

It is worth emphasizing that the generation of the residual pair $\left(
r_{u},r_{y}\right) $ is realized by means of a single observer, as described
in Subsection \ref{SubsecII-A},%
\begin{equation*}
\left\{ 
\begin{array}{l}
\hat{x}(k+1)=\left( A-LC\right) \hat{x}(k)-\left( B-LD\right) u^{a}(k)-Ly(k)
\\ 
\left[ 
\begin{array}{c}
r_{u}(k) \\ 
r_{y}(k)%
\end{array}%
\right] =\left[ 
\begin{array}{c}
F \\ 
C%
\end{array}%
\right] \hat{x}(k)+\left[ 
\begin{array}{c}
I \\ 
-D%
\end{array}%
\right] u^{a}(k)+\left[ 
\begin{array}{c}
0 \\ 
I%
\end{array}%
\right] y(k).%
\end{array}%
\right.
\end{equation*}%
and implemented on the plant side. So far, the associated online
computations are moderate.

\subsubsection{Simultaneous detection scheme B, an improved auxiliary
system-based detection method \label{subsectionV-A-2}}

Roughly speaking, the basic idea of the auxiliary subsystem based detection
methods is to augment the plant with an auxiliary/monitoring subsystem whose
dynamics are unknown to the attacker, but similar to the one of the system.
The auxiliary subsystem is so constructed that deviations induced by an
attack become observable in auxiliary residuals \cite%
{MT-method-CDC2015,Zhang-CDC2017,MT-method-IEEE-TAC2021}. The auxiliary
system-based detection methods are mainly applied for detecting stealthy
attacks. In this regard, it can be plainly seen from the system dynamic (\ref%
{eq4-20}) and Theorem \ref{Theo4-3} that the input residual generator, 
\begin{equation}
r_{u}=Xu+Yy=v+\left[ 
\begin{array}{cc}
X & \text{ }Y%
\end{array}%
\right] \left[ 
\begin{array}{c}
a_{u} \\ 
a_{y}%
\end{array}%
\right] ,  \label{eq5-52}
\end{equation}%
is an ideal auxiliary subsystem for the detection purpose. Recall that on
the control station side, 
\begin{equation*}
\left[ 
\begin{array}{c}
u \\ 
y^{a}%
\end{array}%
\right] =\left[ 
\begin{array}{c}
M \\ 
N%
\end{array}%
\right] v+\left[ 
\begin{array}{c}
-\hat{Y} \\ 
\hat{X}%
\end{array}%
\right] \left( r_{y}-a_{r_{y}}\right) .
\end{equation*}%
Hence, in case of an undetectable faults (stealthy attacks), $a_{r_{y}}=0,$
leading to 
\begin{equation*}
\left[ 
\begin{array}{c}
u \\ 
y^{a}%
\end{array}%
\right] =\left[ 
\begin{array}{c}
M \\ 
N%
\end{array}%
\right] v+\left[ 
\begin{array}{c}
-\hat{Y} \\ 
\hat{X}%
\end{array}%
\right] r_{y}.
\end{equation*}%
Subsequently, simultaneous detection of faults and (stealthy) attacks can be
realized by (i) attack detection on the plant side using the detection
system (\ref{eq5-52}), (ii) the detection information is transmitted to the
control station via a well-encrypted channel, and (iii) fault detection by
means of the output residual system $r_{y}$ and Algorithm 1 running on the
control station.

\subsubsection{Simultaneous detection scheme C}

We now address the situation that attack detection is to be realized on the
control station side. For instance, the latent variable $v$ as a control
reference signal is calculated real-time. In this case, the detection scheme
A cannot be realized. In this regard, Theorem \ref{Theo4-3} plainly depicts
that the attacks exclusively act on the residual subspace and are tightly
coupled with the faults. The consequence is that an attack detection based
on the residual 
\begin{equation*}
\hat{M}y^{a}-\hat{N}u=r_{y}+\hat{M}a_{y}+\hat{N}a_{u}
\end{equation*}%
raises two problems, (i) attackers may easily construct stealthy attacks
when they possess system knowledge $\left( A,B,C,D\right) ,$ (ii) the strong
coupling between the residual $r_{y}$ and the attack function $\hat{M}a_{y}+%
\hat{N}a_{u}$ make it impossible to distinguish the faults from the attacks.
The following detection scheme is proposed to manage these two issues. Let 
\begin{equation*}
u^{a}=u_{0}+Q_{u}r_{u}+a_{u_{0}},Q_{u}\in \mathcal{RH}_{\infty },
\end{equation*}%
where $u_{0}=Ky^{a}+r_{u,0}$ is the input signal computed at the control
station and sent to the plant, $a_{u_{0}}$ is the attack injected into the
control channel, 
\begin{equation*}
r_{u}=Xu^{a}+Yy
\end{equation*}%
is computed on the plant side with $Q_{u}$ to be specified subsequently.
Note that $Q_{u}r_{u}$ is constructed and computed on the plant side and not
transmitted over the network. Moreover, concerning the system nominal
control performance, $Q_{u}r_{u}$ solely changes the system response to the
reference signal $v$, which can be fully recovered by re-setting the
reference (refer to (\ref{eq4-22})-(\ref{eq4-22a})). We now examine the
kernel-based model from the viewpoint of the control station, 
\begin{equation}
\left\{ 
\begin{array}{l}
Xu_{0}+Yy^{a}=Xr_{u,0}=v \\ 
\hat{M}y^{a}-\hat{N}u_{0}=r_{y}+\hat{M}a_{y}+\hat{N}a_{u_{0}}+\hat{N}%
Q_{u}r_{u}.%
\end{array}%
\right.
\end{equation}%
Subsequently, the information pattern at the control station is 
\begin{equation*}
\left[ 
\begin{array}{c}
u_{0} \\ 
y^{a}%
\end{array}%
\right] =\left[ 
\begin{array}{c}
M \\ 
N%
\end{array}%
\right] v+\left[ 
\begin{array}{c}
-\hat{Y} \\ 
\hat{X}%
\end{array}%
\right] \left( r_{y}+\hat{M}a_{y}+\hat{N}a_{u_{0}}+\hat{N}Q_{u}r_{u}\right) ,
\end{equation*}%
and the system dynamic is governed by 
\begin{gather*}
\left[ 
\begin{array}{c}
u^{a} \\ 
y%
\end{array}%
\right] =\left[ 
\begin{array}{c}
u_{0} \\ 
y^{a}%
\end{array}%
\right] +\left[ 
\begin{array}{c}
Q_{u}r_{u} \\ 
0%
\end{array}%
\right] +\left[ 
\begin{array}{c}
a_{u_{0}} \\ 
-a_{y}%
\end{array}%
\right] \\
=\left[ 
\begin{array}{c}
M \\ 
N%
\end{array}%
\right] \left( v+XQ_{u}r_{u}+Xa_{u_{0}}-Ya_{y}\right) +\left[ 
\begin{array}{c}
-\hat{Y} \\ 
\hat{X}%
\end{array}%
\right] r_{y}.
\end{gather*}%
The last equation is attributed to (\ref{eq4-4})-(\ref{eq4-4a}). Observe
that 
\begin{equation*}
r_{u}=Xu+Yy^{a}+Xa_{u_{0}}-Ya_{y}+XQ_{u}r_{u}.
\end{equation*}%
Suppose that $Q_{u}$ is so selected that $\left( I-XQ_{u}\right) ^{-1}\in 
\mathcal{RH}_{\infty }.$ It turns out%
\begin{equation}
r_{u}=\left( I-XQ_{u}\right) ^{-1}\left( Xa_{u_{0}}-Ya_{y}+v\right) .
\end{equation}%
As a result, 
\begin{gather}
\left[ 
\begin{array}{c}
u_{0} \\ 
y^{a}%
\end{array}%
\right] =\left( \left[ 
\begin{array}{c}
M \\ 
N%
\end{array}%
\right] +\left[ 
\begin{array}{c}
-\hat{Y} \\ 
\hat{X}%
\end{array}%
\right] Q\right) v  \notag \\
+\left[ 
\begin{array}{c}
-\hat{Y} \\ 
\hat{X}%
\end{array}%
\right] \left( r_{y}+\hat{M}a_{y}+\hat{N}a_{u_{0}}+Q\left(
Xa_{u_{0}}-Ya_{y}\right) \right) ,  \label{eq4-21} \\
\left[ 
\begin{array}{c}
u^{a} \\ 
y%
\end{array}%
\right] =\left[ 
\begin{array}{c}
M \\ 
N%
\end{array}%
\right] \bar{v}+\left[ 
\begin{array}{c}
-\hat{Y} \\ 
\hat{X}%
\end{array}%
\right] r_{y}  \notag \\
+\left[ 
\begin{array}{c}
M \\ 
N%
\end{array}%
\right] \left( I-XQ_{u}\right) ^{-1}\left( Xa_{u_{0}}-Ya_{y}\right) ,
\label{eq4-22} \\
Q=\hat{N}Q_{u}\left( I-XQ_{u}\right) ^{-1},v=\left( I-XQ_{u}\right) \bar{v}.
\label{eq4-22a}
\end{gather}%
Equation (\ref{eq4-22}) describes the loop dynamic. It is clear that adding $%
Q_{u}r_{u}$ in the control signal results in no change in the system nominal
(attack-free) dynamic, when $\bar{v}$ is set according to (\ref{eq4-22a}) on
the control station side. Subsequently, 
\begin{equation}
r_{y}=\hat{M}y-\hat{N}u^{a}=\Phi ^{f}\left( \hat{N}_{d}d+\bar{f}\right)
+\Psi ^{f}v,  \label{eq4-23}
\end{equation}%
attributed to Theorem \ref{Theo4-1}. Thus, detecting the faults can be
realized by running Algorithm 1 on the plant side, and it is decoupled from
the cyber-attacks. If a fault is detected, an alarm is released and the
corresponding message is sent to the control station.

To detect the cyber-attacks on the control station side, the residual signal 
$r_{u,ad}$ is built, 
\begin{align}
r_{u,ad}& =\left[ 
\begin{array}{cc}
-\hat{N} & \text{ }\hat{M}%
\end{array}%
\right] \left( \left[ 
\begin{array}{c}
u_{0} \\ 
y^{a}%
\end{array}%
\right] -\left( \left[ 
\begin{array}{c}
M \\ 
N%
\end{array}%
\right] +\left[ 
\begin{array}{c}
-\hat{Y} \\ 
\hat{X}%
\end{array}%
\right] Q\right) v\right)  \notag \\
& =\hat{M}y^{a}-\hat{N}u_{0}-Qv  \label{eq5-53}
\end{align}%
which, according to (\ref{eq4-21}), yields%
\begin{equation}
r_{u,ad}=r_{y}+\hat{M}a_{y}+\hat{N}a_{u_{0}}+Q\left(
Xa_{u_{0}}-Ya_{y}\right) .  \label{eq4-24}
\end{equation}%
Concerning stealthy attacks, consider 
\begin{align*}
a_{u,y}& :=\hat{M}a_{y}+\hat{N}a_{u_{0}}+Q\left( Xa_{u_{0}}-Ya_{y}\right) \\
& =\left[ 
\begin{array}{cc}
Q & \text{ }I%
\end{array}%
\right] \left[ 
\begin{array}{cc}
X & \text{ }Y \\ 
-\hat{N} & \text{ }\hat{M}%
\end{array}%
\right] \left[ 
\begin{array}{c}
a_{u_{0}} \\ 
-a_{y}%
\end{array}%
\right] .
\end{align*}%
Although attackers possess system knowledge of $\left( A,B,C,D\right) ,$
they are limited to access the process data $\left( u_{0},y\right) $ rather
than $\left( u,y\right) .$ Consequently, it is hard for attackers to
identify $Q_{u}$ and thus $Q.$ In other words, the missing information about 
$Q$ poses a hard existence condition, i.e.%
\begin{equation*}
a_{u,y}=\hat{M}a_{y}+\hat{N}a_{u_{0}}+Q\left( Xa_{u_{0}}-Ya_{y}\right) =0,
\end{equation*}%
for realizing stealthy attacks.

As far as no alarm for a fault is released, the attacks can be detected
based on the residual signal (\ref{eq4-24}), 
\begin{equation}
r_{u,ad}=\hat{N}_{d}d+a_{u,y},a_{u,y}\neq 0,
\end{equation}%
using Algorithm 1. It is apparent that in the case of a detected fault, it
holds 
\begin{equation}
r_{u,ad}=\Phi ^{f}\left( \hat{N}_{d}d+\bar{f}\right) +\Psi ^{f}v+r_{u,ad}.
\end{equation}%
Thus, it is unrealistic and, from the engineering viewpoint, also
unnecessary to detect the attacks. The implementation of the detection
scheme C is summarized in Algorithm 2.

\begin{algorithm}
\textbf{Algorithm 2: Detection scheme C}
\begin{enumerate}
\item Implementation of $Q_{u}r_{u}$ and $u=u_{0}^{a}+Q_{u}r_{u}$ on the
plant side;
\item Implementation of $r_{y}=\hat{M}y-\hat{N}u^{a}$ and Algorithm 1 for
fault detection, transmission of the fault information once the faults are detected;
\item Implementation of the residual generator (\ref{eq5-53}) and Algorithm
1 for attack detection.
\end{enumerate}
\end{algorithm}It is emphasized again that the core of generating the
residual pair $\left( r_{u,ad},r_{y}\right) $ is two observer-based residual
generators driven by $\left( u^{a},y\right) $ and $\left( u_{0},y^{a}\right) 
$ and realized on the plant and control station sides, respectively, as
described in Subsection \ref{SubsecII-A}.

At the end of this subsection, we would like to remark that (i) in
engineering applications, it is possible to enhance the detection
reliability by a combined use of the proposed detections schemes, (ii)
thanks to the advantage of the use of the identical observer-based residual
generators, the demanded online computation is less costly, (iii) aiming at
fault detection, the (output) residual generated on the plant side is
reliable and capable, in particular, when cyber-attacks are dedicated to the
fault detection system running at the control station (refer to Remark \ref%
{Rem4-2}), and (iv) there exists no change of the nominal system control
performance.

\subsection{Fault-tolerant and attack resilient feedback control \label%
{SubsectionV-B}}

As described in the section of \textit{Problem Formulation,} there are few
publications dedicated to simultaneous fault-tolerant and attack resilient
feedback control. In this subsection, we leverage the unified framework and
robust control methods to propose an integrated fault-tolerant and attack
resilient control scheme\ aiming at mitigating the control performance
degradation caused by the faults and attacks. The basic idea behind this
scheme is to configure the CPCS in the structure form of Variation II
described in Subsection \ref{subsection3-3-1}, which enables us to optimally
make use of (i) the attack information potential $IP_{a}$, (ii) the
additional degree of design freedom, and (iii) the duality between faulty
and attack dynamics, to obviate performance degradation.

\subsubsection{The control scheme}

Consider the attack information potential $IP_{a}$ (\ref{eq4-3b}), the loop
dynamics on the plant side (\ref{eq4-3}) and on the control station side (%
\ref{eq4-5}). It becomes obvious that the impact of $a_{r_{u}}$ on the
information pattern $\left( u^{a},y\right) $ and $a_{r_{y}}$ on the
information pattern $\left( u,y^{a}\right) $ results in the attack
information potential. In order to reduce $IP_{a}$ and the influence of $%
a_{r_{u}}$ on the system dynamic $\left( u^{a},y\right) ,$ compensating $%
a_{r_{u}}$ by feeding back $a_{r_{y}}$ is a potential solution. This is the
basic idea behind the control scheme described below. The controller
comprises two sub-controllers, respectively located on the both sides of the
CPCS,%
\begin{align}
u_{1}& =Ky^{a}+r_{u,0},  \label{eq4-28} \\
K& =-\left( X+Q_{1}\hat{N}\right) ^{-1}\left( Y-Q_{1}\hat{M}\right) ,  \notag
\\
u_{2}& =Q_{2}\left( \hat{M}y-\hat{N}u^{a}\right) ,Q_{1},Q_{2}\in \mathcal{RH}%
_{\infty }.
\end{align}%
The sub-controller $u_{1}$ is implemented on the control station side, while 
$u_{2}$ is embedded in the plant side. The overall controller $u$ acting on
the plant is 
\begin{equation}
u^{a}=u_{1}+u_{2}+a_{u_{1}}.  \label{eq4-26}
\end{equation}%
Notice that the controller (\ref{eq4-26}) is a stabilizing controller and
PnP-configured (refer to Theorem \ref{theo3-2}). During attack-free
operations, it is equivalent to 
\begin{align*}
u& =Ky+r_{u,0}, \\
K& =-\left( X+Q\hat{N}\right) ^{-1}\left( Y-Q\hat{M}\right) , \\
Q& =Q_{1}+\left( X+Q_{1}\hat{N}\right) Q_{2},
\end{align*}%
whose observer-based realization is given by%
\begin{equation*}
u=F\hat{x}+\left( Q_{1}+Q_{2}\right) \left( \hat{M}y-\hat{N}u\right) .
\end{equation*}%
It is of considerable importance to emphasize that two output residuals are
embedded in the controller and they serve for different control purposes.
While the residual $\hat{M}y^{a}-\hat{N}u$ is generated on the control
station side and used for resilient control of cyber-attacks, the residual $%
r_{y}=\hat{M}y-\hat{N}u^{a},$ added to the controller on the plant side, is
applied to enhance the fault-tolerance. To highlight this core idea, we now
examine the closed-loop dynamic.

Consider the kernel-based model of the closed-loop,%
\begin{gather*}
\hat{M}y-\hat{N}u^{a}=r_{y}, \\
\left( X+Q_{1}\hat{N}\right) u^{a}+\left( Y-Q_{1}\hat{M}\right) y \\
=\left[ 
\begin{array}{cc}
X+Q_{1}\hat{N} & \text{ }Y-Q_{1}\hat{M}%
\end{array}%
\right] \left( \left[ 
\begin{array}{c}
u \\ 
y^{a}%
\end{array}%
\right] +\left[ 
\begin{array}{c}
a_{u_{1}} \\ 
a_{y}%
\end{array}%
\right] \right) \\
=v+\bar{Q}_{2}r_{y}+\left( X+Q_{1}\hat{N}\right) a_{u_{1}}-\left( Y-Q_{1}%
\hat{M}\right) a_{y}, \\
v=\left( X+Q_{1}\hat{N}\right) r_{u,0},\bar{Q}_{2}=\left( X+Q_{1}\hat{N}%
\right) Q_{2}.
\end{gather*}%
It follows from (\ref{eq2-6a}) that 
\begin{gather}
\left[ 
\begin{array}{c}
u^{a} \\ 
y%
\end{array}%
\right] =\left[ 
\begin{array}{c}
M \\ 
N%
\end{array}%
\right] v+\left( \left[ 
\begin{array}{c}
-\hat{Y} \\ 
\hat{X}%
\end{array}%
\right] +\left[ 
\begin{array}{c}
M \\ 
N%
\end{array}%
\right] Q\right) r_{y}  \notag \\
+\left[ 
\begin{array}{c}
M \\ 
N%
\end{array}%
\right] \left( \left[ 
\begin{array}{cc}
X & \text{ }Y%
\end{array}%
\right] -Q_{1}\left[ 
\begin{array}{cc}
-\hat{N} & \text{ }\hat{M}%
\end{array}%
\right] \right) \left[ 
\begin{array}{c}
a_{u_{1}} \\ 
-a_{y}%
\end{array}%
\right] .  \label{eq4-27}
\end{gather}%
It is apparent that solving the MMP 
\begin{equation}
J_{1}=\inf_{Q_{1}\in \mathcal{RH}_{\infty }}\left\Vert \left[ 
\begin{array}{cc}
X & \text{ }Y%
\end{array}%
\right] -Q_{1}\left[ 
\begin{array}{cc}
-\hat{N} & \text{ }\hat{M}%
\end{array}%
\right] \right\Vert _{\infty }  \label{eq4-29}
\end{equation}%
leads to the optimal resilience to the attacks, while the fault-tolerance is
enhanced by solving the following MMP 
\begin{equation}
\inf_{Q_{2}\in \mathcal{RH}_{\infty }}\left\Vert \left[ 
\begin{array}{c}
-\hat{Y} \\ 
\hat{X}%
\end{array}%
\right] +\left[ 
\begin{array}{c}
M \\ 
N%
\end{array}%
\right] \left( Q_{1}+\left( X+Q_{1}\hat{N}\right) Q_{2}\right) \right\Vert
_{\infty }  \label{eq4-30}
\end{equation}%
for given $Q_{1}.$ It is of considerable interest to study the case 
\begin{equation}
\left( X+Q_{1}\hat{N}\right) ^{-1}\in \mathcal{RH}_{\infty },  \label{eq4-36}
\end{equation}%
that is, the controller $K$ is stable. Subsequently, for $\bar{Q}_{1}=\left(
X+Q_{1}\hat{N}\right) ^{-1}\left( \bar{Q}_{1}-Q_{1}\right) ,$ the MMP (\ref%
{eq4-30}) is equivalent to 
\begin{equation}
\bar{J}_{1}=\inf_{\bar{Q}_{1}\in \mathcal{RH}_{\infty }}\left\Vert \left[ 
\begin{array}{c}
-\hat{Y} \\ 
\hat{X}%
\end{array}%
\right] +\left[ 
\begin{array}{c}
M \\ 
N%
\end{array}%
\right] \bar{Q}_{1}\right\Vert _{\infty }.  \label{eq4-31}
\end{equation}%
The fault-tolerant and attack resilient controller (\ref{eq4-26}) is
triggered when faults or/and attacks are detected. The system operation
modes and the corresponding controllers are summarized below.

\begin{algorithm}
\textbf{System operation modes and the corresponding controllers}
\begin{enumerate}
\item Fault- and attack-free operation: $u=Ky+r_{u,0},$ running on the
control station side with $K$ subject to (\ref{eq2-18}); 
\item Faulty and attack-free operation: $u=Ky+r_{u,0}+Q_{2}r_{y},K$ is
subject to (\ref{eq2-18}), and $u_{2}=Q_{2}r_{y}$ is added on the plant side;
\item Operation under attacks but fault-free: $u=Ky+r_{u,0},$ running on the
control station side with $K$ subject to (\ref{eq4-28});
\item Operation under attacks and faults: $u=Ky+r_{u,0}+Q_{2}r_{y},K$ is subject
to (\ref{eq4-28}), and $u_{2}=Q_{2}r_{y}$ is added on the plant side.
\end{enumerate}
\end{algorithm}

\subsubsection{On controller design}

Recall that $\left( \hat{M},\hat{N}\right) $ and $\left( M,N\right) $ are
the dual coprime factorization pairs of the plant, while $\left( X,Y\right) $
and $\left( \hat{X},\hat{Y}\right) $ are the dual pairs of the equivalent
"controller", and they are subject to the Bezout identity (\ref{eq2-6}). In
this context, the MMPs (\ref{eq4-29}) and (\ref{eq4-31}) are presented in a
dual form. In other words, the proposed control scheme enables us to solve
the fault-tolerant control and attack resilient control as two dual and
independent MMPs. This fact motivates us to have a close look at these two
dual MMPs. To begin with, a special case with the normalized LCF and RCF of $%
G,\left( \hat{M}_{0},\hat{N}_{0}\right) $ and $\left( M_{0},N_{0}\right) ,$
is considered. The corresponding LCF and RCF of the controller are denoted
by $\left( X_{0},Y_{0}\right) $ and $\left( \hat{X}_{0},\hat{Y}_{0}\right) ,$
respectively. Observe that 
\begin{gather*}
\left\Vert \left[ 
\begin{array}{cc}
X_{0} & \text{ }Y_{0}%
\end{array}%
\right] -Q_{1}\left[ 
\begin{array}{cc}
-\hat{N}_{0} & \text{ }\hat{M}_{0}%
\end{array}%
\right] \right\Vert _{\infty } \\
=\left\Vert \left( \left[ 
\begin{array}{cc}
X_{0} & \text{ }Y_{0}%
\end{array}%
\right] -Q_{1}\left[ 
\begin{array}{cc}
-\hat{N}_{0} & \text{ }\hat{M}_{0}%
\end{array}%
\right] \right) \left[ 
\begin{array}{cc}
-\hat{N}_{0}^{\ast } & M_{0} \\ 
\hat{M}_{0}^{\ast } & N_{0}%
\end{array}%
\right] \right\Vert _{\infty } \\
=\left\Vert \left[ 
\begin{array}{cc}
Y_{0}\hat{M}_{0}^{\ast }-X_{0}\hat{N}_{0}^{\ast }-Q_{1} & \text{ }I%
\end{array}%
\right] \right\Vert _{\infty }, \\
\left\Vert \left[ 
\begin{array}{c}
-\hat{Y} \\ 
\hat{X}%
\end{array}%
\right] +\left[ 
\begin{array}{c}
M_{0} \\ 
N_{0}%
\end{array}%
\right] \bar{Q}_{1}\right\Vert _{\infty } \\
=\left\Vert \left[ 
\begin{array}{cc}
M_{0}^{\ast } & \text{ }N_{0}^{\ast } \\ 
-\hat{N}_{0} & \text{ }\hat{M}_{0}%
\end{array}%
\right] \left( \left[ 
\begin{array}{c}
-\hat{Y}_{0} \\ 
\hat{X}_{0}%
\end{array}%
\right] +\left[ 
\begin{array}{c}
M_{0} \\ 
N_{0}%
\end{array}%
\right] \right) \bar{Q}_{1}\right\Vert _{\infty } \\
=\left\Vert \left[ 
\begin{array}{c}
N_{0}^{\ast }\hat{X}_{0}-M_{0}^{\ast }\hat{Y}_{0}+\bar{Q}_{1} \\ 
I%
\end{array}%
\right] \right\Vert _{\infty }.
\end{gather*}%
Hence, it holds%
\begin{align*}
J_{1}& =\left( J_{0}^{2}+1\right) ^{1/2},J_{0}=\inf_{Q_{1}\in \mathcal{RH}%
_{\infty }}\left\Vert Y_{0}\hat{M}_{0}^{\ast }-X_{0}\hat{N}_{0}^{\ast
}-Q_{1}\right\Vert _{\infty }, \\
\bar{J}_{1}& =\left( \bar{J}_{0}^{2}+1\right) ^{1/2},\bar{J}_{0}=\inf_{\bar{Q%
}_{1}\in \mathcal{RH}_{\infty }}\left\Vert N_{0}^{\ast }\hat{X}%
_{0}-M_{0}^{\ast }\hat{Y}_{0}+\bar{Q}_{1}\right\Vert _{\infty }.
\end{align*}%
Next, we repeat the above procedure to examine the general case. It follows
from Corollary \ref{Co3-1} that%
\begin{gather}
\left( \left[ 
\begin{array}{cc}
X & \text{ }Y%
\end{array}%
\right] -Q_{1}\left[ 
\begin{array}{cc}
-\hat{N} & \text{ }\hat{M}%
\end{array}%
\right] \right) \left[ 
\begin{array}{cc}
-\hat{N}_{0}^{\ast } & M_{0} \\ 
\hat{M}_{0}^{\ast } & N_{0}%
\end{array}%
\right]  \notag \\
=\left( \left[ 
\begin{array}{cc}
X_{0} & \text{ }Y_{0}%
\end{array}%
\right] -Q_{0}\left[ 
\begin{array}{cc}
-\hat{N}_{0} & \text{ }\hat{M}_{0}%
\end{array}%
\right] \right) \left[ 
\begin{array}{cc}
-\hat{N}_{0}^{\ast } & M_{0} \\ 
\hat{M}_{0}^{\ast } & N_{0}%
\end{array}%
\right]  \notag \\
=\left[ 
\begin{array}{cc}
Y_{0}\hat{M}_{0}^{\ast }-X_{0}\hat{N}_{0}^{\ast }-Q_{0} & \text{ }I%
\end{array}%
\right] ,Q_{0}=Q_{1}R_{0}^{-1}+T_{0}^{-1}\bar{T}_{0},  \label{eq4-35a} \\
\left[ 
\begin{array}{cc}
M_{0}^{\ast } & \text{ }N_{0}^{\ast } \\ 
-\hat{N}_{0} & \text{ }\hat{M}_{0}%
\end{array}%
\right] \left( \left[ 
\begin{array}{c}
-\hat{Y} \\ 
\hat{X}%
\end{array}%
\right] +\left[ 
\begin{array}{c}
M \\ 
N%
\end{array}%
\right] \bar{Q}_{1}\right)  \notag \\
=\left[ 
\begin{array}{cc}
M_{0}^{\ast } & \text{ }N_{0}^{\ast } \\ 
-\hat{N}_{0} & \text{ }\hat{M}_{0}%
\end{array}%
\right] \left( \left[ 
\begin{array}{c}
-\hat{Y}_{0} \\ 
\hat{X}_{0}%
\end{array}%
\right] +\left[ 
\begin{array}{c}
M_{0} \\ 
N_{0}%
\end{array}%
\right] \bar{Q}_{0}\right)  \notag \\
=\left[ 
\begin{array}{c}
N_{0}^{\ast }\hat{X}_{0}-M_{0}^{\ast }\hat{Y}_{0}+\bar{Q}_{0} \\ 
I%
\end{array}%
\right] ,\bar{Q}_{0}=\bar{R}_{0}R_{0}^{-1}+T_{0}^{-1}\bar{Q}_{1},
\label{eq4-35b}
\end{gather}%
where $R_{0}^{-1},\bar{R}_{0},T_{0},\bar{T}_{0}$ are defined in Corollary %
\ref{Co3-1}. Accordingly, the closed-loop dynamic (\ref{eq4-27}) is
equivalent to 
\begin{gather}
\left[ 
\begin{array}{c}
u^{a} \\ 
y%
\end{array}%
\right] =\left[ 
\begin{array}{c}
M_{0} \\ 
N_{0}%
\end{array}%
\right] v_{0}+\left( \left[ 
\begin{array}{c}
-\hat{Y}_{0} \\ 
\hat{X}_{0}%
\end{array}%
\right] +\left[ 
\begin{array}{c}
M_{0} \\ 
N_{0}%
\end{array}%
\right] \bar{Q}_{0}\right) r_{0}  \notag \\
+\left[ 
\begin{array}{c}
M_{0} \\ 
N_{0}%
\end{array}%
\right] \left( \left[ 
\begin{array}{cc}
X_{0} & \text{ }Y_{0}%
\end{array}%
\right] -Q_{0}\left[ 
\begin{array}{cc}
-\hat{N}_{0} & \text{ }\hat{M}_{0}%
\end{array}%
\right] \right) \left[ 
\begin{array}{c}
a_{u_{1}} \\ 
-a_{y}%
\end{array}%
\right] ,  \label{eq4-34} \\
v_{0}=\left[ 
\begin{array}{cc}
X_{0} & \text{ }Y_{0}%
\end{array}%
\right] \left[ 
\begin{array}{c}
u^{a} \\ 
y%
\end{array}%
\right] ,r_{0}=\left[ 
\begin{array}{cc}
-\hat{N}_{0} & \text{ }\hat{M}_{0}%
\end{array}%
\right] \left[ 
\begin{array}{c}
u^{a} \\ 
y%
\end{array}%
\right] .  \notag
\end{gather}%
Since the reference signal and the residual signal can be constructed by the
designer, the dynamic (\ref{eq4-34}) is identical with (\ref{eq4-27}). The
above results are summarized as in the form of a theorem.

\begin{theorem}
\label{Theo5-0}Given the control system model (\ref{eq2-20}) with the
controller $u$ (\ref%
{eq4-26}), the optimal controller design problems (%
\ref{eq4-29}) and (\ref%
{eq4-31}) are equivalent to 
\begin{align}%
J_{1}& =\left( J_{0}^{2}+1\right) ^{1/2},J_{0}=\inf_{Q_{0}\in \mathcal{RH}%
_{\infty }}\left\Vert Y_{0}\hat{M}_{0}^{\ast }-X_{0}\hat{N}%
_{0}^{\ast
}+Q_{0}\right\Vert _{\infty },  \label{eq4-37} \\
\bar{J}_{1}&
=\left( \bar{J}_{0}^{2}+1\right) ^{1/2},\bar{J}_{0}=\inf_{\bar{Q%
}_{0}\in 
\mathcal{RH}_{\infty }}\left\Vert N_{0}^{\ast }\hat{X}%
_{0}-M_{0}^{\ast }%
\hat{Y}_{0}+\bar{Q}_{0}\right\Vert _{\infty }.
\label{eq4-38}
\end{align}
\end{theorem}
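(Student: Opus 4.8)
The plan is to transport both minimum-norm problems onto the \emph{normalized} coprime-factor data, where a square all-pass factor reduces the matrix inside each norm to a single free block together with a fixed identity block. This is exactly the route indicated by the two displays preceding the statement; what remains to justify is that the relevant all-pass factor is norm-preserving, that the free parameter still sweeps all of $\mathcal{RH}_{\infty}$, and the elementary identity $\left\Vert \left[ \begin{array}{cc} A & I \end{array} \right] \right\Vert _{\infty }=\left( \left\Vert A\right\Vert _{\infty }^{2}+1\right) ^{1/2}$.

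For problem (\ref{eq4-29}) I would post-multiply $\left[ \begin{array}{cc} X & Y \end{array} \right] -Q_{1}\left[ \begin{array}{cc} -\hat{N} & \hat{M} \end{array} \right]$ by
\begin{equation*}
\Lambda :=\left[
\begin{array}{cc}
-\hat{N}_{0}^{\ast } & M_{0} \\
\hat{M}_{0}^{\ast } & N_{0}
\end{array}
\right] .
\end{equation*}
The crucial observation is that $\Lambda $ is square and inner: the diagonal blocks of $\Lambda ^{\ast }\Lambda $ reduce to $I$ by the normalization identities (\ref{eq2-15a})-(\ref{eq2-15b}), and the off-diagonal blocks vanish by $\hat{N}_{0}M_{0}=\hat{M}_{0}N_{0}$ (the $(2,1)$ block of (\ref{eq2-6}), equivalently the statement that $\left( M_{0},N_{0}\right) $ and $\left( \hat{M}_{0},\hat{N}_{0}\right) $ factor the same $G$), so $\Lambda ^{\ast }\Lambda =I$. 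Hence $\Lambda $ is unitary on $\left\vert z\right\vert =1$ and right multiplication by $\Lambda $ preserves the $\mathcal{RH}_{\infty }$-norm, since $\left( G\Lambda \right) \left( G\Lambda \right) ^{\ast }=GG^{\ast }$. Invoking Corollary \ref{Co3-1} to re-express the general data through the normalized data and carrying out the product (as in (\ref{eq4-35a})) collapses the argument to $\left[ \begin{array}{cc} Y_{0}\hat{M}_{0}^{\ast }-X_{0}\hat{N}_{0}^{\ast }-Q_{0} & I \end{array} \right] $ with $Q_{0}=Q_{1}R_{0}^{-1}+T_{0}^{-1}\bar{T}_{0}$. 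Since $R_{0}^{-1}$ is a unit of $\mathcal{RH}_{\infty }$ and $T_{0}^{-1}\bar{T}_{0}\in \mathcal{RH}_{\infty }$, the affine map $Q_{1}\mapsto Q_{0}$ is a bijection of $\mathcal{RH}_{\infty }$, so the infimum over $Q_{1}$ equals that over $Q_{0}$ (the sign in front of $Q_{0}$ being immaterial, which matches the ``$+Q_{0}$'' in (\ref{eq4-37})). The norm identity then finishes the case: pointwise $\left[ \begin{array}{cc} A & I \end{array} \right] \left[ \begin{array}{cc} A & I \end{array} \right] ^{\ast }=AA^{\ast }+I$, so the largest singular value obeys $\bar{\sigma}\left( \left[ \begin{array}{cc} A & I \end{array} \right] \right) ^{2}=\bar{\sigma}\left( A\right) ^{2}+1$, and taking the supremum over the circle commutes with adding the constant, giving $J_{1}=\left( J_{0}^{2}+1\right) ^{1/2}$, i.e. (\ref{eq4-37}).

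The dual problem (\ref{eq4-31}) is treated symmetrically by left multiplication by
\begin{equation*}
\Xi :=\left[
\begin{array}{cc}
M_{0}^{\ast } & N_{0}^{\ast } \\
-\hat{N}_{0} & \hat{M}_{0}
\end{array}
\right] ,
\end{equation*}
which the analogous block computation shows to be co-inner ($\Xi \Xi ^{\ast }=I$), hence norm-preserving under left multiplication. After the reduction via Corollary \ref{Co3-1} (as in (\ref{eq4-35b})) and using the Bezout relation $\hat{N}_{0}\hat{Y}_{0}+\hat{M}_{0}\hat{X}_{0}=I$ from (\ref{eq2-4}), the argument collapses to the column $\left[ \begin{array}{c} N_{0}^{\ast }\hat{X}_{0}-M_{0}^{\ast }\hat{Y}_{0}+\bar{Q}_{0} \\ I \end{array} \right] $ with $\bar{Q}_{0}=\bar{R}_{0}R_{0}^{-1}+T_{0}^{-1}\bar{Q}_{1}$, again a bijective reparameterization of $\mathcal{RH}_{\infty }$. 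The column version of the norm identity, now from $\left[ \begin{array}{c} A \\ I \end{array} \right] ^{\ast }\left[ \begin{array}{c} A \\ I \end{array} \right] =A^{\ast }A+I$, yields $\bar{J}_{1}=\left( \bar{J}_{0}^{2}+1\right) ^{1/2}$, i.e. (\ref{eq4-38}).

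I expect the only load-bearing steps to be verifying the all-pass property of $\Lambda $ and $\Xi $ and confirming that the parameter substitutions are genuine bijections of $\mathcal{RH}_{\infty }$; the block multiplications are routine and already sketched before the statement. The apparent coupling of the two problems through the shared $Q_{1}$ is harmless, because $J_{1}$ and $\bar{J}_{1}$ are two independent one-block optimizations and the reparameterizations $Q_{1}\mapsto Q_{0}$ and $\bar{Q}_{1}\mapsto \bar{Q}_{0}$ are independent affine bijections.
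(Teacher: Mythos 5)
Your proposal is correct and takes essentially the same route as the paper: the paper's proof is a one-line pointer to (\ref{eq4-34}) and (\ref{eq4-35a})--(\ref{eq4-35b}), and you reproduce exactly those reductions while making explicit the three facts the paper leaves implicit --- the all-pass property of the two normalized factor matrices, the bijectivity over $\mathcal{RH}_{\infty }$ of the affine reparameterizations $Q_{1}\mapsto Q_{0}$ and $\bar{Q}_{1}\mapsto \bar{Q}_{0}$, and the pointwise singular-value identity yielding $\left( J_{0}^{2}+1\right) ^{1/2}$.
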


\begin{proof}
The proof follows from (\ref{eq4-34}) and (\ref{eq4-35a})-(\ref{eq4-35b}).
\end{proof}

\subsubsection{System performance}

We analyze the system performance under two aspects, control difference and
stability. For the simplicity and without loss of application generality, it
is supposed that the condition (\ref{eq4-36}), i.e. a stable controller,
holds, and the optimal fault-tolerant and attack-resilient controller is
applied, namely, $Q_{0}$ and $\bar{Q}_{0}$ are the solutions of the
optimization problems (\ref{eq4-37}) and (\ref{eq4-38}). We begin with
systems under additive faults and cyber-attacks. It follows from (\ref%
{eq4-34}), the fault and attack models (\ref{eq2-21}) and (\ref{eq2-24})
that the control difference is given by 
\begin{gather}
\left[ 
\begin{array}{c}
e_{u} \\ 
e_{y}%
\end{array}%
\right] =\left[ 
\begin{array}{c}
u^{a} \\ 
y%
\end{array}%
\right] -\left[ 
\begin{array}{c}
M_{0} \\ 
N_{0}%
\end{array}%
\right] v_{0}=  \notag \\
\left[ 
\begin{array}{c}
-\hat{Y}_{0,\bar{Q}_{0}} \\ 
\hat{X}_{0,\bar{Q}_{0}}%
\end{array}%
\right] \left( \hat{N}_{d,0}d+f\right) +\left[ 
\begin{array}{c}
M_{0} \\ 
N_{0}%
\end{array}%
\right] \left[ 
\begin{array}{cc}
X_{0,Q_{0}} & \text{ \ }Y_{0,Q_{0}}%
\end{array}%
\right] \left[ 
\begin{array}{c}
\eta _{u} \\ 
-\eta _{y}%
\end{array}%
\right]  \label{eq5-1} \\
\left[ 
\begin{array}{c}
-\hat{Y}_{0,\bar{Q}_{0}} \\ 
\hat{X}_{0,\bar{Q}_{0}}%
\end{array}%
\right] =\left[ 
\begin{array}{c}
-\hat{Y}_{0} \\ 
\hat{X}_{0}%
\end{array}%
\right] +\left[ 
\begin{array}{c}
M_{0} \\ 
N_{0}%
\end{array}%
\right] \bar{Q}_{0},  \notag \\
\left[ 
\begin{array}{cc}
X_{0,Q_{0}} & \text{ \ }Y_{0,Q_{0}}%
\end{array}%
\right] =\left[ 
\begin{array}{cc}
X_{0} & \text{ \ }Y_{0}%
\end{array}%
\right] -Q_{0}\left[ 
\begin{array}{cc}
-\hat{N}_{0} & \text{ \ }\hat{M}_{0}%
\end{array}%
\right] ,  \notag \\
\hat{N}_{d,0}=\hat{M}_{0}G_{d}=\left(
A-L_{0}C,E_{d}-F_{d}L_{0},W_{0}C,W_{0}F_{d}\right) .  \notag
\end{gather}

\begin{theorem}
\label{Theo5-1}Given the system model (\ref{eq2-20}) with the normalized LCR 
$\left( \hat{M}_{0},\hat{N}_{0}\right) $, RCF $\left( M_{0},N_{0}\right) $
of the plant, the corresponding LCP, RCP $\left( X_{0},Y_{0}\right) ,\left( 
\hat{X}_{0},\hat{Y}_{0}\right) ,$ and the controller (\ref{eq4-26}), then
the control difference (\ref{eq5-1}) satisfies%
\begin{gather}
\left\Vert \left[ 
\begin{array}{c}
e_{u} \\ 
e_{y}%
\end{array}%
\right] \right\Vert _{2}^{2}=\left\Vert \left[ 
\begin{array}{c}
e_{u,f} \\ 
e_{y,f}%
\end{array}%
\right] \right\Vert _{2}^{2}+\left\Vert \left[ 
\begin{array}{c}
e_{u,a} \\ 
e_{y,a}%
\end{array}%
\right] \right\Vert _{2}^{2}  \label{eq5-2} \\
\leq \left( J_{0}^{2}+1\right) \left\Vert \left[ 
\begin{array}{c}
\eta _{u} \\ 
-\eta _{y}%
\end{array}%
\right] \right\Vert _{2}^{2}+\left( \bar{J}_{0}^{2}+1\right) \left\Vert \hat{%
N}_{d,0}d+f\right\Vert _{2}^{2},  \label{eq5-2a} \\
\left[ 
\begin{array}{c}
e_{u,f} \\ 
e_{y,f}%
\end{array}%
\right] =\left[ 
\begin{array}{c}
-\hat{Y}_{0,\bar{Q}_{0}} \\ 
\hat{X}_{0,\bar{Q}_{0}}%
\end{array}%
\right] \left( \hat{N}_{d,0}d+f\right) ,  \notag \\
\left[ 
\begin{array}{c}
e_{u,a} \\ 
e_{y,a}%
\end{array}%
\right] =\left[ 
\begin{array}{c}
M_{0} \\ 
N_{0}%
\end{array}%
\right] \left[ 
\begin{array}{cc}
X_{0,Q_{0}} & \text{ \ }Y_{0,Q_{0}}%
\end{array}%
\right] \left[ 
\begin{array}{c}
\eta _{u} \\ 
-\eta _{y}%
\end{array}%
\right] .  \notag
\end{gather}
\end{theorem}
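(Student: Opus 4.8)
The plan is to work directly from the explicit expression (\ref{eq5-1}), which already presents the control difference as the sum of a fault/disturbance term $\left[\begin{smallmatrix} e_{u,f} \\ e_{y,f}\end{smallmatrix}\right]$, driven by $g:=\hat{N}_{d,0}d+f$ and lying in the (generalized) residual subspace, and an attack term $\left[\begin{smallmatrix} e_{u,a} \\ e_{y,a}\end{smallmatrix}\right]$, driven by $w:=\left[\begin{smallmatrix}\eta_u \\ -\eta_y\end{smallmatrix}\right]$ and lying in the image subspace $\mathcal{I}_G$. The proof then separates into two independent parts: the quadratic identity (\ref{eq5-2}) and the norm bounds (\ref{eq5-2a}). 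Both parts rely on the normalization (\ref{eq2-15b}), i.e. $M_0^{\ast}M_0+N_0^{\ast}N_0=I$, which makes $\left[\begin{smallmatrix}M_0 \\ N_0\end{smallmatrix}\right]$ an isometry on $\mathcal{L}_2$ and norm-preserving under left multiplication.

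First I would establish the bounds (\ref{eq5-2a}). Since left multiplication by the inner factor preserves the $\mathcal{RH}_\infty$-norm,
\begin{equation*}
\left\Vert \left[ \begin{array}{c} e_{u,a} \\ e_{y,a}\end{array}\right] \right\Vert_2 \le \left\Vert \left[ \begin{array}{cc} X_{0,Q_0} & Y_{0,Q_0}\end{array}\right] \right\Vert_\infty \left\Vert w\right\Vert_2 = J_1\left\Vert w\right\Vert_2,
\end{equation*}
and, using the all-pass completion already employed in (\ref{eq4-35a})--(\ref{eq4-35b}),
\begin{equation*}
\left\Vert \left[ \begin{array}{c} e_{u,f} \\ e_{y,f}\end{array}\right] \right\Vert_2 \le \left\Vert \left[ \begin{array}{c} -\hat{Y}_{0,\bar{Q}_0} \\ \hat{X}_{0,\bar{Q}_0}\end{array}\right] \right\Vert_\infty \left\Vert g\right\Vert_2 = \bar{J}_1\left\Vert g\right\Vert_2,
\end{equation*}
where the last equalities use that the optimal $Q_0,\bar{Q}_0$ are applied. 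Squaring and invoking Theorem \ref{Theo5-0}, which gives $J_1^2=J_0^2+1$ and $\bar{J}_1^2=\bar{J}_0^2+1$, yields exactly (\ref{eq5-2a}).

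Next I would prove the quadratic identity (\ref{eq5-2}). Using the isometry adjoint $\left[\begin{smallmatrix}M_0^{\ast} & N_0^{\ast}\end{smallmatrix}\right]$ and (\ref{eq2-15b}), the cross term collapses to
\begin{equation*}
\left\langle \left[ \begin{array}{c} e_{u,a} \\ e_{y,a}\end{array}\right], \left[ \begin{array}{c} e_{u,f} \\ e_{y,f}\end{array}\right] \right\rangle = \left\langle \left[ \begin{array}{cc} X_{0,Q_0} & Y_{0,Q_0}\end{array}\right] w, \left( N_0^{\ast}\hat{X}_0-M_0^{\ast}\hat{Y}_0+\bar{Q}_0\right) g\right\rangle ,
\end{equation*}
a fixed bilinear pairing (through the very operator that defines $\bar{J}_0$ in Theorem \ref{Theo5-0}) of the attack source $w$ against the fault/disturbance source $g$. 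Invoking that the injected signals $(\eta_u,\eta_y)$ are, by the modelling assumption following (\ref{eq2-21}), independent of $(u,y)$ and hence uncorrelated with the exogenous fault/disturbance group $(d,f)$, this cross term vanishes, leaving the Pythagorean decomposition (\ref{eq5-2}).

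The main obstacle is precisely this cancellation. The two summands do \emph{not} lie in mutually orthogonal subspaces of $\mathcal{H}_2$: once $\bar{Q}_0\neq 0$ the splitting $\mathcal{I}_G\oplus\mathcal{R}_G$ is direct but not orthogonal, so (\ref{eq5-2}) cannot follow from subspace orthogonality alone and genuinely rests on the decorrelation of the attack and fault/disturbance sources. I would therefore state this independence explicitly as the standing hypothesis under which the equality (\ref{eq5-2}) holds, and remark that without it only the triangle-type inequality $\Vert[\,e_u^{T}\ e_y^{T}\,]^{T}\Vert_2\le\Vert[\,e_{u,f}^{T}\ e_{y,f}^{T}\,]^{T}\Vert_2+\Vert[\,e_{u,a}^{T}\ e_{y,a}^{T}\,]^{T}\Vert_2$ survives, which would still deliver a (slightly weaker) performance bound of the same qualitative form.
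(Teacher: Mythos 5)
Your treatment of the bound (\ref{eq5-2a}) is fine and matches the paper: both invoke the norm-preserving property of the normalized factors and Theorem \ref{Theo5-0} to get $J_1^2=J_0^2+1$, $\bar J_1^2=\bar J_0^2+1$. The gap is in your argument for the equality (\ref{eq5-2}). You assert that the two summands ``do \emph{not} lie in mutually orthogonal subspaces of $\mathcal{H}_{2}$'' once $\bar{Q}_{0}\neq 0$, and you then make the cross term vanish by appealing to the independence of $(\eta_u,\eta_y)$ from $(d,f)$. This is the wrong mechanism, and it does not work: the norms in (\ref{eq5-2}) are deterministic $\mathcal{H}_{2}$/$\ell_{2}$ norms, and statistical (or functional) independence of two signal sources does not make their deterministic inner product $\sum_k e_{a}^{T}(k)e_{f}(k)$ equal to zero -- at best its expectation vanishes for zero-mean uncorrelated stochastic signals, which is not what is being claimed here. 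Adding ``decorrelation'' as a standing hypothesis, as you propose, would change the statement rather than prove it.

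The point you missed is that the theorem is stated under the assumption that $\bar{Q}_{0}$ is the \emph{optimizer} of (\ref{eq4-38}), and it is precisely this choice that restores the orthogonality you deny. The paper's proof computes
\begin{equation*}
\mathcal{P}_{\mathcal{I}_{G}}\left( \left[
\begin{array}{c}
-\hat{Y}_{0} \\
\hat{X}_{0}
\end{array}
\right] +\left[
\begin{array}{c}
M_{0} \\
N_{0}
\end{array}
\right] \bar{Q}_{0}\right) =\left[
\begin{array}{c}
M_{0} \\
N_{0}
\end{array}
\right] \left( \mathcal{P}_{\mathcal{H}_{2}}\left( N_{0}^{\ast }\hat{X}
_{0}-M_{0}^{\ast }\hat{Y}_{0}\right) +\bar{Q}_{0}\right) =0
\end{equation*}
using $\bar{Q}_{0}^{\ast }=-\mathcal{P}_{\mathcal{H}_{2}}\left( N_{0}^{\ast }\hat{X}_{0}-M_{0}^{\ast }\hat{Y}_{0}\right) $, so that $\bigl[-\hat{Y}_{0,\bar{Q}_{0}}^{T}\ \hat{X}_{0,\bar{Q}_{0}}^{T}\bigr]^{T}$ maps into the orthogonal complement of $\mathcal{I}_{G}$ while $\bigl[e_{u,a}^{T}\ e_{y,a}^{T}\bigr]^{T}$ lies in $\mathcal{I}_{G}$; the Pythagorean relation (\ref{eq3-50}) then gives (\ref{eq5-2}) for \emph{arbitrary} driving signals $g$ and $w$, with no independence assumption needed. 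For a generic $\bar{Q}_{0}$ your observation is correct and only the triangle inequality survives, but that case is outside the hypotheses of the theorem. To repair your proof, replace the decorrelation argument by this projection identity for the optimal $\bar{Q}_{0}$.
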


\begin{proof}
We firstly prove that 
\begin{gather*}
\left[ 
\begin{array}{c}
-\hat{Y}_{0} \\ 
\hat{X}_{0}%
\end{array}%
\right] +\left[ 
\begin{array}{c}
M_{0} \\ 
N_{0}%
\end{array}%
\right] \bar{Q}_{0}^{\ast }\text{ and }\left[ 
\begin{array}{c}
M_{0} \\ 
N_{0}%
\end{array}%
\right] , \\
\bar{Q}_{0}^{\ast }=\arg \inf_{\bar{Q}_{0}\in \mathcal{RH}_{\infty
}}\left\Vert Y_{0}\hat{M}_{0}^{\ast }-X_{0}\hat{N}_{0}^{\ast }+\bar{Q}%
_{0}\right\Vert _{\infty }
\end{gather*}%
build two orthogonal subspaces. As described in Subsection \ref{Subsec3-1},
let 
\begin{equation*}
\mathcal{P}_{\mathcal{I}_{G}}:=\left[ 
\begin{array}{c}
M_{0} \\ 
N_{0}%
\end{array}%
\right] \mathcal{P}_{\mathcal{H}_{2}}\left[ 
\begin{array}{c}
M_{0} \\ 
N_{0}%
\end{array}%
\right] ^{\ast }:\mathcal{H}_{2}\rightarrow \mathcal{H}_{2},
\end{equation*}%
be the operator of an orthogonal projection onto the system image subspace $%
\mathcal{I}_{G}$. Examining 
\begin{gather*}
\left[ 
\begin{array}{c}
M_{0} \\ 
N_{0}%
\end{array}%
\right] \mathcal{P}_{\mathcal{H}_{2}}\left[ 
\begin{array}{c}
M_{0} \\ 
N_{0}%
\end{array}%
\right] ^{\ast }\left( \left[ 
\begin{array}{c}
-\hat{Y}_{0} \\ 
\hat{X}_{0}%
\end{array}%
\right] +\left[ 
\begin{array}{c}
M_{0} \\ 
N_{0}%
\end{array}%
\right] \bar{Q}_{0}\right) \\
=\left[ 
\begin{array}{c}
M_{0} \\ 
N_{0}%
\end{array}%
\right] \left( \mathcal{P}_{\mathcal{H}_{2}}\left( N_{0}^{\ast }\hat{X}%
_{0}-M_{0}^{\ast }\hat{Y}_{0}\right) +\bar{Q}_{0}\right) =0
\end{gather*}%
verifies the claim. Here, the last equation is attributed to the optimal
solution of (\ref{eq4-38}), 
\begin{equation*}
\bar{Q}_{0}^{\ast }=-\mathcal{P}_{\mathcal{H}_{2}}\left( N_{0}^{\ast }\hat{X}%
_{0}-M_{0}^{\ast }\hat{Y}_{0}\right) .
\end{equation*}%
As a result of the orthogonality, the Pythagorean equation (\ref{eq4-50}),
and (\ref{eq4-37})-(\ref{eq4-38}), we have (\ref{eq5-2}). The inequality (%
\ref{eq5-2a}) follows from Theorem \ref{Theo5-0}.
\end{proof}

Theorem \ref{Theo5-1} gives the minimum $\mathcal{L}_{2}$-norm of the
control difference $\left( e_{u},e_{y}\right) $ achieved by the optimal
controller. It is obvious that the additive faults and attacks do not affect
the system stability.

Next, the system performance under the multiplicative faults and attacks is
closely examined. At first, the fault-free systems under the attacks are
considered. Analogous to the analysis and computations in Theorem \ref%
{Theo4-3} and Corollary \ref{Co3-1}, some routine computations yield%
\begin{gather}
\left[ 
\begin{array}{c}
u^{a} \\ 
y%
\end{array}%
\right] =\left[ 
\begin{array}{c}
M_{0} \\ 
N_{0}%
\end{array}%
\right] v_{0}+\left[ 
\begin{array}{c}
-\hat{Y}_{0,\bar{Q}_{0}} \\ 
\hat{X}_{0,\bar{Q}_{0}}%
\end{array}%
\right] r_{y,0}  \notag \\
+\left[ 
\begin{array}{c}
M_{0} \\ 
N_{0}%
\end{array}%
\right] \left[ 
\begin{array}{cc}
X_{0,Q_{0}} & \text{ }Y_{0,Q_{0}}%
\end{array}%
\right] \left( \Phi ^{a}\left[ 
\begin{array}{c}
u^{a} \\ 
y%
\end{array}%
\right] +\left[ 
\begin{array}{c}
\bar{\eta}_{u} \\ 
-\bar{\eta}_{y}%
\end{array}%
\right] \right) \Longrightarrow  \notag \\
\left[ 
\begin{array}{c}
u^{a} \\ 
y%
\end{array}%
\right] =\left[ 
\begin{array}{c}
M_{0} \\ 
N_{0}%
\end{array}%
\right] \Lambda _{Q_{0}}^{a}v_{0}+\bar{\Lambda}_{Q_{0}}^{a}\left[ 
\begin{array}{c}
-\hat{Y}_{0,\bar{Q}_{0}} \\ 
\hat{X}_{0,\bar{Q}_{0}}%
\end{array}%
\right] r_{y,0}  \notag \\
+\left[ 
\begin{array}{c}
M_{0} \\ 
N_{0}%
\end{array}%
\right] \Psi _{Q_{0}}^{a}\left[ 
\begin{array}{c}
\bar{\eta}_{u} \\ 
-\bar{\eta}_{y}%
\end{array}%
\right] ,  \label{eq5-4} \\
\bar{\Lambda}_{Q_{0}}^{a}=\left( I-\left[ 
\begin{array}{c}
M_{0} \\ 
N_{0}%
\end{array}%
\right] \left[ 
\begin{array}{cc}
X_{0,Q_{0}} & \text{ }Y_{0,Q_{0}}%
\end{array}%
\right] \Phi ^{a}\right) ^{-1},  \notag \\
\Lambda _{Q_{0}}^{a}=\left( I-\left[ 
\begin{array}{cc}
X_{0,Q_{0}} & \text{ }Y_{0,Q_{0}}%
\end{array}%
\right] \Phi ^{a}\left[ 
\begin{array}{c}
M_{0} \\ 
N_{0}%
\end{array}%
\right] \right) ^{-1},  \notag \\
\Psi _{Q_{0}}^{a}=\Lambda _{Q_{0}}^{a}\left[ 
\begin{array}{cc}
X_{0,Q_{0}} & \text{ }Y_{0,Q_{0}}%
\end{array}%
\right] .  \notag
\end{gather}%
It turns out%
\begin{gather}
\left[ 
\begin{array}{c}
e_{u} \\ 
e_{y}%
\end{array}%
\right] =\left[ 
\begin{array}{c}
M_{0} \\ 
N_{0}%
\end{array}%
\right] \left( v_{0,\Delta }+r_{y,0,\Delta }+\Psi _{Q_{0}}^{a}\left[ 
\begin{array}{c}
\bar{\eta}_{u} \\ 
-\bar{\eta}_{y}%
\end{array}%
\right] \right)  \notag \\
+\left[ 
\begin{array}{c}
-\hat{Y}_{0,\bar{Q}_{0}} \\ 
\hat{X}_{0,\bar{Q}_{0}}%
\end{array}%
\right] r_{y,0}  \label{eq5-3a} \\
v_{0,\Delta }=\Psi _{Q_{0}}^{a}\Phi ^{a}\left[ 
\begin{array}{c}
M_{0} \\ 
N_{0}%
\end{array}%
\right] v_{0},r_{y,0,\Delta }=\Psi _{Q_{0}}^{a}\Phi ^{a}\left[ 
\begin{array}{c}
-\hat{Y}_{0,\bar{Q}_{0}} \\ 
\hat{X}_{0,\bar{Q}_{0}}%
\end{array}%
\right] r_{y,0}.  \notag
\end{gather}

\begin{theorem}
\label{Theo5-2}Given the system as described in Theorem \ref{Theo5-1}, then
the closed-loop dynamic under the multiplicative attacks (\ref{eq2-21}) is
stable, if%
\begin{equation}
\left\Vert \Phi ^{a}\right\Vert _{\infty }=\left\Vert \Pi ^{a}\left( I+\left[
\begin{array}{cc}
I & \text{ }0 \\ 
0 & \text{ }0%
\end{array}%
\right] \Pi ^{a}\right) ^{-1}\right\Vert _{\infty }<\left(
J_{0}^{2}+1\right) ^{-1/2}.  \label{eq5-3}
\end{equation}%
Moreover, 
\begin{gather}
\left\Vert \left[ 
\begin{array}{c}
e_{u} \\ 
e_{y}%
\end{array}%
\right] \right\Vert _{2}^{2}\leq \left( \bar{J}_{0}^{2}+1\right) \left\Vert
r_{y,0}\right\Vert _{2}^{2}+  \notag \\
\left( 
\begin{array}{c}
\left\Vert \Psi _{Q_{0}}^{a}\Phi ^{a}\right\Vert _{\infty }\left( \left( 
\bar{J}_{0}^{2}+1\right) ^{1/2}\left\Vert r_{y,0}\right\Vert _{2}+\left\Vert
v_{0}\right\Vert _{2}\right) + \\ 
\left\Vert \Psi _{Q_{0}}^{a}\right\Vert _{\infty }\left\Vert \left[ 
\begin{array}{c}
\bar{\eta}_{u} \\ 
-\bar{\eta}_{y}%
\end{array}%
\right] \right\Vert _{2}%
\end{array}%
\right) ^{2}.  \label{eq5-5}
\end{gather}
\end{theorem}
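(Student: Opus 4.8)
The plan is to take the closed-loop form (\ref{eq5-4}) and the control difference (\ref{eq5-3a}) as given, and to handle the stability claim and the estimate (\ref{eq5-5}) separately, using throughout that the normalized factor $\left[\begin{smallmatrix} M_{0} \\ N_{0}\end{smallmatrix}\right]$ is co-inner by (\ref{eq2-15b}), hence an $\mathcal{L}_{2}$-isometry with $\left\Vert\left[\begin{smallmatrix} M_{0} \\ N_{0}\end{smallmatrix}\right]\right\Vert_{\infty}=1$.

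For stability, I would note that the only factors in (\ref{eq5-4}) that can destroy membership in $\mathcal{RH}_{\infty}$ are the resolvents $\bar{\Lambda}_{Q_{0}}^{a}$ and $\Lambda_{Q_{0}}^{a}$. By the small-gain theorem both lie in $\mathcal{RH}_{\infty}$ once the loop gain $\left\Vert\left[\begin{smallmatrix} M_{0} \\ N_{0}\end{smallmatrix}\right]\left[X_{0,Q_{0}}\ Y_{0,Q_{0}}\right]\Phi^{a}\right\Vert_{\infty}<1$ (the two orderings of the three factors share the same submultiplicative bound). By submultiplicativity and the isometry this reduces to bounding $\left\Vert\left[X_{0,Q_{0}}\ Y_{0,Q_{0}}\right]\right\Vert_{\infty}$. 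Here I would reuse the all-pass right-multiplication from Theorem \ref{Theo5-0}: it turns $\left[X_{0,Q_{0}}\ Y_{0,Q_{0}}\right]$ into $\left[\,Y_{0}\hat{M}_{0}^{\ast}-X_{0}\hat{N}_{0}^{\ast}-Q_{0}\ \ I\,\right]$, and the elementary identity $\left\Vert\left[A\ I\right]\right\Vert_{\infty}^{2}=\left\Vert A\right\Vert_{\infty}^{2}+1$ (from $\left[A\ I\right]\left[A\ I\right]^{\ast}=AA^{\ast}+I$) gives, at the optimizer of (\ref{eq4-37}), $\left\Vert\left[X_{0,Q_{0}}\ Y_{0,Q_{0}}\right]\right\Vert_{\infty}=(J_{0}^{2}+1)^{1/2}$. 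The loop gain is therefore at most $(J_{0}^{2}+1)^{1/2}\left\Vert\Phi^{a}\right\Vert_{\infty}$, which is strictly below $1$ exactly under (\ref{eq5-3}); this also certifies $\Psi_{Q_{0}}^{a}=\Lambda_{Q_{0}}^{a}\left[X_{0,Q_{0}}\ Y_{0,Q_{0}}\right]\in\mathcal{RH}_{\infty}$.

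For the bound (\ref{eq5-5}) I would start from (\ref{eq5-3a}) and split $\left[\begin{smallmatrix} e_{u} \\ e_{y}\end{smallmatrix}\right]$ orthogonally. The first summand carries $\left[\begin{smallmatrix} M_{0} \\ N_{0}\end{smallmatrix}\right]$ as a left factor, so it lies in $\mathcal{I}_{G}$; the second, $\left[\begin{smallmatrix} -\hat{Y}_{0,\bar{Q}_{0}} \\ \hat{X}_{0,\bar{Q}_{0}}\end{smallmatrix}\right]r_{y,0}$, lies in $\mathcal{I}_{G}^{\bot}$ at the optimal $\bar{Q}_{0}$, which is precisely the orthogonality verified inside the proof of Theorem \ref{Theo5-1} via $\bar{Q}_{0}^{\ast}=-\mathcal{P}_{\mathcal{H}_{2}}\left(N_{0}^{\ast}\hat{X}_{0}-M_{0}^{\ast}\hat{Y}_{0}\right)$. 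The Pythagorean relation (\ref{eq3-50}) then splits $\left\Vert\left[\begin{smallmatrix} e_{u} \\ e_{y}\end{smallmatrix}\right]\right\Vert_{2}^{2}$ into two squared norms. The complementary one is at most $\left\Vert\left[\begin{smallmatrix} -\hat{Y}_{0,\bar{Q}_{0}} \\ \hat{X}_{0,\bar{Q}_{0}}\end{smallmatrix}\right]\right\Vert_{\infty}^{2}\left\Vert r_{y,0}\right\Vert_{2}^{2}=(\bar{J}_{0}^{2}+1)\left\Vert r_{y,0}\right\Vert_{2}^{2}$, the gain being $\bar{J}_{1}$ by the same $\left[A\ I\right]$ computation applied to (\ref{eq4-38}).

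For the image part I would drop the isometry $\left[\begin{smallmatrix} M_{0} \\ N_{0}\end{smallmatrix}\right]$, apply the triangle inequality to peel off the $\Psi_{Q_{0}}^{a}\left[\begin{smallmatrix} \bar{\eta}_{u} \\ -\bar{\eta}_{y}\end{smallmatrix}\right]$ term, and pull $\left\Vert\Psi_{Q_{0}}^{a}\Phi^{a}\right\Vert_{\infty}$ out of $v_{0,\Delta}+r_{y,0,\Delta}=\Psi_{Q_{0}}^{a}\Phi^{a}\left(\left[\begin{smallmatrix} M_{0} \\ N_{0}\end{smallmatrix}\right]v_{0}+\left[\begin{smallmatrix} -\hat{Y}_{0,\bar{Q}_{0}} \\ \hat{X}_{0,\bar{Q}_{0}}\end{smallmatrix}\right]r_{y,0}\right)$. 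The inner signal norm is bounded, again by (\ref{eq3-50}) together with $(a^{2}+b^{2})^{1/2}\le a+b$, by $\left\Vert v_{0}\right\Vert_{2}+(\bar{J}_{0}^{2}+1)^{1/2}\left\Vert r_{y,0}\right\Vert_{2}$, and assembling the two parts reproduces (\ref{eq5-5}). The main obstacle I anticipate is the orthogonal bookkeeping: one must confirm that the entire perturbation $v_{0,\Delta}+r_{y,0,\Delta}+\Psi_{Q_{0}}^{a}\left[\begin{smallmatrix} \bar{\eta}_{u} \\ -\bar{\eta}_{y}\end{smallmatrix}\right]$ remains in $\mathcal{I}_{G}$ (it does, being left-multiplied by $\left[\begin{smallmatrix} M_{0} \\ N_{0}\end{smallmatrix}\right]$) and that the optimal $(Q_{0},\bar{Q}_{0})$ simultaneously deliver the gains $J_{1}$ and $\bar{J}_{1}$; the remaining inequalities are then routine.
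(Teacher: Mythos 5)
Your proposal is correct and takes essentially the same route as the paper: the small-gain argument with the inner factor $\left[ M_{0}^{T}\ \ N_{0}^{T}\right] ^{T}$ stripped off and $\left\Vert \left[ X_{0,Q_{0}}\ \ Y_{0,Q_{0}}\right] \right\Vert _{\infty }=\left( J_{0}^{2}+1\right) ^{1/2}$ for the stability condition, and the orthogonal (Pythagorean) split inherited from Theorem \ref{Theo5-1} combined with the gains of Theorem \ref{Theo5-0} for the bound (\ref{eq5-5}). You merely spell out the triangle-inequality bookkeeping that the paper compresses into ``a straightforward result of (\ref{eq5-2})''.
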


\begin{proof}
According to Small Gain Theorem (SGT), the closed-loop system (\ref{eq5-4})
is stable, if 
\begin{gather*}
\left\Vert \left[ 
\begin{array}{c}
M_{0} \\ 
N_{0}%
\end{array}%
\right] \left[ 
\begin{array}{cc}
X_{0,Q_{0}} & \text{ }Y_{0,Q_{0}}%
\end{array}%
\right] \Phi ^{a}\right\Vert _{\infty }=\left\Vert \left[ 
\begin{array}{cc}
X_{0,Q_{0}} & \text{ }Y_{0,Q_{0}}%
\end{array}%
\right] \Phi ^{a}\right\Vert _{\infty } \\
\leq \left\Vert \left[ 
\begin{array}{cc}
X_{0,Q_{0}} & \text{ }Y_{0,Q_{0}}%
\end{array}%
\right] \right\Vert _{\infty }\left\Vert \Phi ^{a}\right\Vert _{\infty }<1.
\end{gather*}%
It follows from (\ref{eq4-37}) that%
\begin{gather*}
\left\Vert \left[ 
\begin{array}{cc}
X_{0,Q_{0}} & \text{ }Y_{0,Q_{0}}%
\end{array}%
\right] \right\Vert _{\infty }\left\Vert \Phi ^{a}\right\Vert _{\infty
}=\left( J_{0}^{2}+1\right) ^{1/2}\left\Vert \Phi ^{a}\right\Vert _{\infty }
\\
\Longrightarrow \left\Vert \left[ 
\begin{array}{cc}
X_{0,Q_{0}} & \text{ }Y_{0,Q_{0}}%
\end{array}%
\right] \right\Vert _{\infty }\left\Vert \Phi ^{a}\right\Vert _{\infty }<1 \\
\Longleftrightarrow \left\Vert \Phi ^{a}\right\Vert _{\infty }<\left(
J_{0}^{2}+1\right) ^{-1/2}.
\end{gather*}%
Concerning inequality (\ref{eq5-5}), it is a straightforward result of (\ref%
{eq5-2}) on account of (\ref{eq4-37})-(\ref{eq4-38}).
\end{proof}

Now, we analyze the system dynamic simultaneously under the faults and
attacks. For our purpose, the dynamic of the residual $r_{y,0}$ with respect
to the faults is firstly analyzed. It turns out, 
\begin{equation*}
r_{y,0}=\left[ 
\begin{array}{cc}
-\hat{N}_{0} & \text{ }\hat{M}_{0}%
\end{array}%
\right] \left[ 
\begin{array}{c}
u^{a} \\ 
y%
\end{array}%
\right] =\Pi ^{f}\left[ 
\begin{array}{c}
{u}^{a} \\ 
y%
\end{array}%
\right] +{\bar{f}}.
\end{equation*}%
It leads to%
\begin{gather}
\left[ 
\begin{array}{c}
u^{a} \\ 
y%
\end{array}%
\right] =\left[ 
\begin{array}{c}
M_{0} \\ 
N_{0}%
\end{array}%
\right] v_{0}+\left[ 
\begin{array}{cc}
M_{0} & \text{ }-\hat{Y}_{0,\bar{Q}_{0}} \\ 
N_{0} & \text{ }\hat{X}_{0,\bar{Q}_{0}}%
\end{array}%
\right] \left[ 
\begin{array}{c}
\bar{\eta} \\ 
{\bar{f}}%
\end{array}%
\right]  \notag \\
+\left[ 
\begin{array}{cc}
M_{0} & \text{ }-\hat{Y}_{0,\bar{Q}_{0}} \\ 
N_{0} & \text{ }\hat{X}_{0,\bar{Q}_{0}}%
\end{array}%
\right] \left[ 
\begin{array}{c}
\bar{\Phi}^{a} \\ 
\Pi ^{f}%
\end{array}%
\right] \left[ 
\begin{array}{c}
u^{a} \\ 
y%
\end{array}%
\right] \Longrightarrow  \notag \\
\left[ 
\begin{array}{c}
u^{a} \\ 
y%
\end{array}%
\right] =\left( I-\left[ 
\begin{array}{cc}
M_{0} & \text{ }-\hat{Y}_{0,\bar{Q}_{0}} \\ 
N_{0} & \text{ }\hat{X}_{0,\bar{Q}_{0}}%
\end{array}%
\right] \left[ 
\begin{array}{c}
\bar{\Phi}^{a} \\ 
\Pi ^{f}%
\end{array}%
\right] \right) ^{-1}\left[ 
\begin{array}{c}
M_{0} \\ 
N_{0}%
\end{array}%
\right] v_{0}+  \notag \\
\left( I-\left[ 
\begin{array}{cc}
M_{0} & \text{ }-\hat{Y}_{0,\bar{Q}_{0}} \\ 
N_{0} & \text{ }\hat{X}_{0,\bar{Q}_{0}}%
\end{array}%
\right] \left[ 
\begin{array}{c}
\bar{\Phi}^{a} \\ 
\Pi ^{f}%
\end{array}%
\right] \right) ^{-1}\left[ 
\begin{array}{cc}
M_{0} & \text{ }-\hat{Y}_{0,\bar{Q}_{0}} \\ 
N_{0} & \text{ }\hat{X}_{0,\bar{Q}_{0}}%
\end{array}%
\right] \left[ 
\begin{array}{c}
\bar{\eta} \\ 
{\bar{f}}%
\end{array}%
\right]  \label{eq5-6} \\
\bar{\Phi}^{a}=\left[ 
\begin{array}{cc}
X_{0,Q_{0}} & \text{ }Y_{0,Q_{0}}%
\end{array}%
\right] \Phi ^{a},  \notag \\
\bar{\eta}=\left[ 
\begin{array}{cc}
X_{0,Q_{0}} & \text{ }Y_{0,Q_{0}}%
\end{array}%
\right] \left[ 
\begin{array}{c}
\bar{\eta}_{u} \\ 
-\bar{\eta}_{y}%
\end{array}%
\right] .  \notag
\end{gather}

\begin{theorem}
\label{Theo5-3}Given the system as described in Theorem \ref{Theo5-1}, then
the closed-loop dynamic under the faults (\ref{eq2-24}) and attacks (\ref%
{eq2-21}) is stable, if%
\begin{equation}
b:=\left( \left( J_{0}^{2}+1\right) \left\Vert \Phi ^{a}\right\Vert _{\infty
}^{2}+\left( \bar{J}_{0}^{2}+1\right) \left\Vert \Pi ^{f}\right\Vert
_{\infty }^{2}\right) ^{1/2}<1.  \label{eq5-7}
\end{equation}%
Moreover, 
\begin{gather}
\left\Vert \left[ 
\begin{array}{c}
e_{u} \\ 
e_{y}%
\end{array}%
\right] \right\Vert _{2}\leq \frac{b}{1-b}\left\Vert v_{0}\right\Vert _{2}+ 
\notag \\
\frac{1}{1-b}\left( \left( J_{0}^{2}+1\right) \left\Vert \left[ 
\begin{array}{c}
\bar{\eta}_{u} \\ 
-\bar{\eta}_{y}%
\end{array}%
\right] \right\Vert _{2}^{2}+\left( \bar{J}_{0}^{2}+1\right) \left\Vert {%
\bar{f}}\right\Vert _{2}^{2}\right) ^{1/2}.  \label{eq5-8}
\end{gather}
\end{theorem}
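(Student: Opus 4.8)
The plan is to read the pre-inversion form of (\ref{eq5-6}) as a feedback interconnection driven by the exogenous signals $v_{0},\bar{\eta},\bar{f}$, whose loop-gain operator is
\[
\mathcal{G}:=\left[
\begin{array}{cc}
M_{0} & -\hat{Y}_{0,\bar{Q}_{0}} \\
N_{0} & \hat{X}_{0,\bar{Q}_{0}}
\end{array}
\right] \left[
\begin{array}{c}
\bar{\Phi}^{a} \\
\Pi ^{f}
\end{array}
\right] =\left[
\begin{array}{c}
M_{0} \\
N_{0}
\end{array}
\right] \bar{\Phi}^{a}+\left[
\begin{array}{c}
-\hat{Y}_{0,\bar{Q}_{0}} \\
\hat{X}_{0,\bar{Q}_{0}}
\end{array}
\right] \Pi ^{f},
\]
with $\bar{\Phi}^{a}=\left[ \begin{array}{cc} X_{0,Q_{0}} & Y_{0,Q_{0}}\end{array}\right] \Phi ^{a}$. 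I would then show that the $\mathcal{RH}_{\infty }$-gain of $\mathcal{G}$ is bounded by $b$ of (\ref{eq5-7}), so that the Small Gain Theorem yields $\left( I-\mathcal{G}\right) ^{-1}\in \mathcal{RH}_{\infty }$ whenever $b<1$, giving stability; the performance bound (\ref{eq5-8}) then follows from the Neumann-series estimates $\left\Vert \left( I-\mathcal{G}\right) ^{-1}\right\Vert _{\infty }\leq \left( 1-b\right) ^{-1}$ and $\left\Vert \left( I-\mathcal{G}\right) ^{-1}\mathcal{G}\right\Vert _{\infty }\leq b\left( 1-b\right) ^{-1}$ applied to $\left[ e_{u}^{T}\ e_{y}^{T}\right] ^{T}=\left( \left( I-\mathcal{G}\right) ^{-1}-I\right) \left[ M_{0}^{T}\ N_{0}^{T}\right] ^{T}v_{0}+\left( I-\mathcal{G}\right) ^{-1}\left[ M_{0}\ -\hat{Y}_{0,\bar{Q}_{0}};N_{0}\ \hat{X}_{0,\bar{Q}_{0}}\right] \left[ \bar{\eta}^{T}\ \bar{f}^{T}\right] ^{T}$.

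The crux is the gain estimate $\left\Vert \mathcal{G}\right\Vert _{\infty }\leq b$, and here I would exploit the orthogonal split established in the proof of Theorem \ref{Theo5-1}: with the optimal $\bar{Q}_{0}$, the columns of $\left[ -\hat{Y}_{0,\bar{Q}_{0}}^{T}\ \hat{X}_{0,\bar{Q}_{0}}^{T}\right] ^{T}$ lie in the residual subspace $\mathcal{R}_{G}=\mathcal{I}_{G}^{\bot }$, while those of $\left[ M_{0}^{T}\ N_{0}^{T}\right] ^{T}$ span $\mathcal{I}_{G}$. Consequently, for every $w$ the two additive contributions of $\mathcal{G}w$ are orthogonal, irrespective of the inner factors $\bar{\Phi}^{a},\Pi ^{f}$, so the Pythagorean identity (\ref{eq3-50}) gives
\[
\left\Vert \mathcal{G}w\right\Vert _{2}^{2}=\left\Vert \left[
\begin{array}{c}
M_{0} \\
N_{0}
\end{array}
\right] \bar{\Phi}^{a}w\right\Vert _{2}^{2}+\left\Vert \left[
\begin{array}{c}
-\hat{Y}_{0,\bar{Q}_{0}} \\
\hat{X}_{0,\bar{Q}_{0}}
\end{array}
\right] \Pi ^{f}w\right\Vert _{2}^{2}.
\]
Using that $\left[ M_{0}^{T}\ N_{0}^{T}\right] ^{T}$ is an isometry (by the normalization (\ref{eq2-15b})), together with $\left\Vert \left[ X_{0,Q_{0}}\ Y_{0,Q_{0}}\right] \right\Vert _{\infty }=\left( J_{0}^{2}+1\right) ^{1/2}$ and $\left\Vert \left[ -\hat{Y}_{0,\bar{Q}_{0}}^{T}\ \hat{X}_{0,\bar{Q}_{0}}^{T}\right] ^{T}\right\Vert _{\infty }=\left( \bar{J}_{0}^{2}+1\right) ^{1/2}$ from (\ref{eq4-37})--(\ref{eq4-38}) in Theorem \ref{Theo5-0}, and submultiplicativity to peel off $\Phi ^{a}$ and $\Pi ^{f}$, each summand is bounded by $\left( J_{0}^{2}+1\right) \left\Vert \Phi ^{a}\right\Vert _{\infty }^{2}\left\Vert w\right\Vert _{2}^{2}$ and $\left( \bar{J}_{0}^{2}+1\right) \left\Vert \Pi ^{f}\right\Vert _{\infty }^{2}\left\Vert w\right\Vert _{2}^{2}$, respectively, so that $\left\Vert \mathcal{G}w\right\Vert _{2}\leq b\left\Vert w\right\Vert _{2}$ and hence $\left\Vert \mathcal{G}\right\Vert _{\infty }\leq b$.

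For the bound (\ref{eq5-8}) I would apply the resolvent estimates above and note, again by the isometry, $\left\Vert \left[ M_{0}^{T}\ N_{0}^{T}\right] ^{T}v_{0}\right\Vert _{2}=\left\Vert v_{0}\right\Vert _{2}$, yielding the first term $\tfrac{b}{1-b}\left\Vert v_{0}\right\Vert _{2}$; for the exogenous term the identical orthogonal decomposition of $\left[ M_{0}\ -\hat{Y}_{0,\bar{Q}_{0}};N_{0}\ \hat{X}_{0,\bar{Q}_{0}}\right] \left[ \bar{\eta}^{T}\ \bar{f}^{T}\right] ^{T}$ into an $\mathcal{I}_{G}$-part of norm $\leq \left( J_{0}^{2}+1\right) ^{1/2}\left\Vert \left[ \bar{\eta}_{u}^{T}\ -\bar{\eta}_{y}^{T}\right] ^{T}\right\Vert _{2}$ and an $\mathcal{R}_{G}$-part of norm $\leq \left( \bar{J}_{0}^{2}+1\right) ^{1/2}\left\Vert \bar{f}\right\Vert _{2}$ gives, via Pythagoras and the factor $\left( 1-b\right) ^{-1}$, exactly the second term of (\ref{eq5-8}). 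The main obstacle I anticipate is not the small-gain machinery but justifying the \emph{Pythagorean} (root-sum-of-squares) form of $b$: this hinges entirely on the orthogonality of the image and residual subspaces holding for the optimal $\bar{Q}_{0}$, and on checking that the product structure of $\mathcal{G}$ (outer coprime-factor columns times inner $\Phi ^{a},\Pi ^{f}$) preserves that orthogonal split so that a sum-of-norms (i.e. $\ell _{1}$) bound can be sharpened to the stated $\ell _{2}$ bound.
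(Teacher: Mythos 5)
Your proposal is correct and follows essentially the same route as the paper's own proof: a Small Gain argument on the loop operator $\left[\begin{smallmatrix} M_{0} & -\hat{Y}_{0,\bar{Q}_{0}} \\ N_{0} & \hat{X}_{0,\bar{Q}_{0}}\end{smallmatrix}\right]\left[\begin{smallmatrix}\bar{\Phi}^{a} \\ \Pi^{f}\end{smallmatrix}\right]$, with the root-sum-of-squares bound $b$ obtained from the image/residual orthogonality under the optimal $\bar{Q}_{0}$ together with the Pythagorean identity and the norm values from Theorem \ref{Theo5-0}, followed by the Neumann-series resolvent estimates and the same orthogonal split of the exogenous term. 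Your explicit attention to why the orthogonality survives composition with the inner factors $\bar{\Phi}^{a},\Pi^{f}$ is, if anything, slightly more careful than the paper's one-line equality at that step.
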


\begin{proof}
The proof of (\ref{eq5-7}) follows from (\ref{eq5-6}) using SGT, that is,
the closed-loop system (\ref{eq5-6}) is stable, if 
\begin{gather*}
\left\Vert \left[ 
\begin{array}{cc}
M_{0} & \text{ }-\hat{Y}_{0,\bar{Q}_{0}} \\ 
N_{0} & \text{ }\hat{X}_{0,\bar{Q}_{0}}%
\end{array}%
\right] \left[ 
\begin{array}{c}
\bar{\Phi}^{a} \\ 
\Pi ^{f}%
\end{array}%
\right] \right\Vert _{\infty } \\
=\left( \left\Vert \left[ 
\begin{array}{c}
M_{0} \\ 
N_{0}%
\end{array}%
\right] \bar{\Phi}^{a}\right\Vert _{\infty }^{2}+\left\Vert \left[ 
\begin{array}{c}
-\hat{Y}_{0,\bar{Q}_{0}} \\ 
\hat{X}_{0,\bar{Q}_{0}}%
\end{array}%
\right] \Pi ^{f}\right\Vert _{\infty }^{2}\right) ^{1/2} \\
=\left( \left( J_{0}^{2}+1\right) \left\Vert \Phi ^{a}\right\Vert _{\infty
}^{2}+\left( \bar{J}_{0}^{2}+1\right) \left\Vert \Pi ^{f}\right\Vert
_{\infty }^{2}\right) ^{1/2}<1.
\end{gather*}%
Notice that under the condition (\ref{eq5-7}), 
\begin{gather*}
\left\Vert \left( I-\left[ 
\begin{array}{cc}
M_{0} & \text{ }-\hat{Y}_{0,\bar{Q}_{0}} \\ 
N_{0} & \text{ }\hat{X}_{0,\bar{Q}_{0}}%
\end{array}%
\right] \left[ 
\begin{array}{c}
\bar{\Phi}^{a} \\ 
\Pi ^{f}%
\end{array}%
\right] \right) ^{-1}-I\right\Vert _{\infty } \\
\leq \frac{\left\Vert \left[ 
\begin{array}{cc}
M_{0} & \text{ }-\hat{Y}_{0,\bar{Q}_{0}} \\ 
N_{0} & \text{ }\hat{X}_{0,\bar{Q}_{0}}%
\end{array}%
\right] \left[ 
\begin{array}{c}
\bar{\Phi}^{a} \\ 
\Pi ^{f}%
\end{array}%
\right] \right\Vert _{\infty }}{1-\left\Vert \left[ 
\begin{array}{cc}
M_{0} & \text{ }-\hat{Y}_{0,\bar{Q}_{0}} \\ 
N_{0} & \text{ }\hat{X}_{0,\bar{Q}_{0}}%
\end{array}%
\right] \left[ 
\begin{array}{c}
\bar{\Phi}^{a} \\ 
\Pi ^{f}%
\end{array}%
\right] \right\Vert _{\infty }}\leq \frac{b}{1-b}, \\
\left\Vert \left( I-\left[ 
\begin{array}{cc}
M_{0} & \text{ }-\hat{Y}_{0,\bar{Q}_{0}} \\ 
N_{0} & \text{ }\hat{X}_{0,\bar{Q}_{0}}%
\end{array}%
\right] \left[ 
\begin{array}{c}
\bar{\Phi}^{a} \\ 
\Pi ^{f}%
\end{array}%
\right] \right) ^{-1}\right\Vert _{\infty }\leq \frac{1}{1-b}.
\end{gather*}%
It yields 
\begin{gather*}
\left\Vert \left[ 
\begin{array}{c}
e_{u} \\ 
e_{y}%
\end{array}%
\right] \right\Vert _{2}\leq \frac{1}{1-b}\left( 
\begin{array}{c}
b\left\Vert \left[ 
\begin{array}{c}
M_{0} \\ 
N_{0}%
\end{array}%
\right] v_{0}\right\Vert _{2} \\ 
+\left\Vert \left[ 
\begin{array}{cc}
M_{0} & \text{ }-\hat{Y}_{0,\bar{Q}_{0}} \\ 
N_{0} & \text{ }\hat{X}_{0,\bar{Q}_{0}}%
\end{array}%
\right] \left[ 
\begin{array}{c}
\bar{\eta} \\ 
{\bar{f}}%
\end{array}%
\right] \right\Vert _{2}%
\end{array}%
\right) \\
\leq \frac{1}{1-b}\left( b\left\Vert v_{0}\right\Vert _{2}+\sqrt{\left\Vert %
\left[ 
\begin{array}{c}
M_{0} \\ 
N_{0}%
\end{array}%
\right] \bar{\eta}\right\Vert _{2}^{2}+\left\Vert \left[ 
\begin{array}{c}
-\hat{Y}_{0,\bar{Q}_{0}} \\ 
\hat{X}_{0,\bar{Q}_{0}}%
\end{array}%
\right] {\bar{f}}\right\Vert _{2}^{2}}\right) .
\end{gather*}%
Observe that%
\begin{equation*}
\left\Vert \left[ 
\begin{array}{c}
M_{0} \\ 
N_{0}%
\end{array}%
\right] \bar{\eta}\right\Vert _{2}=\left\Vert \left[ 
\begin{array}{cc}
X_{0,Q_{0}} & \text{ }Y_{0,Q_{0}}%
\end{array}%
\right] \left[ 
\begin{array}{c}
\bar{\eta}_{u} \\ 
-\bar{\eta}_{y}%
\end{array}%
\right] \right\Vert _{2}.
\end{equation*}%
Hence, following Theorem \ref{Theo5-0} we finally obtain (\ref{eq5-8}).
\end{proof}

Theorem \ref{Theo5-3} provides us with a sufficient condition of the
closed-loop stability under the worst case operation condition, the system
simultaneously under attacks and faults. Notice that, in the condition (\ref%
{eq5-7}), $J_{0}$ and $\bar{J}_{0}$ are the only two values designed by the
designer, which reach the minimum, when the optimal feedback gains $Q_{0}$
and $\bar{Q}_{0}$ are adopted. In this case, a tighter upper-bound of the $%
\mathcal{L}_{2}$-norm of the control difference $\left( e_{u},e_{y}\right) $
is reached as well.

\subsection{A subspace-based privacy-preserving scheme \label{SubsectionV-C}}

The last few years have seen a sharp rise in interest in privacy-preserving
control, driven by the growing use of networked, cloud-based, and
data-driven control architectures. These settings expose internal states,
system parameters, and user data to potential leakage, making system-level
privacy a core concern. Congruously, new research directions have emerged
that integrate differential privacy, encryption, and privacy-aware
controller design into classical control theory. This has produced a rapidly
expanding body of work aiming to ensure reliable control while protecting
sensitive system information \cite%
{HanPrivacy2018,LUreviewARC2019,Kawano2020TAC,Survey-Segovia-Ferreira2024}.
For instance, recent advances in privacy-performance trade-off
characterizations \cite{KAWANO2021automatica} and co-design of controllers
with privacy mechanisms \cite{Kawano2020TAC}.

In this subsection, we present a subspace-based privacy-preserving scheme in
the unified framework of control and detection. To protect sensitive system
information, the system transfer function $G$ and its state space
representation $\left( A,B,C,D\right) $, we introduce a privacy filter,
which generates an auxiliary signal added to the system output. The modified
output defines an augmented image subspace whose deviation from the original
subspace quantifies privacy via the gap metric. This approach yields a
system-theoretic privacy layer that geometrically characterizes possible
information leakage and, importantly, fully preserves the control
performance.

\subsubsection{Auxiliary signal and privacy filter}

We now consider the system modelled by (\ref{eq2-1}) whose kernel-based
closed-loop model is subject to\textbf{\ }%
\begin{equation}
\left\{ 
\begin{array}{c}
r_{y}(z)=\hat{M}\left( z\right) y(z)-\hat{N}(z)u(z)=\hat{M}\left( z\right)
r_{y,0}\left( z\right) \\ 
r_{u}(z)=X(z)u\left( z\right) +Y\left( z\right) y\left( z\right)
=X(z)r_{u,0}\left( z\right) .%
\end{array}%
\right.  \label{eq5-09}
\end{equation}%
For our purpose, an auxiliary signal $p$ generated by the privacy filter
implemented on the plant side, 
\begin{equation}
p=\Pi \left[ 
\begin{array}{c}
u \\ 
y%
\end{array}%
\right] ,\Pi \in \mathcal{RH}_{\infty },  \label{eq5-11}
\end{equation}%
is added to the plant output $y$, yielding the signal $y^{p},$ 
\begin{equation*}
y^{p}:=y+p=y+\Pi \left[ 
\begin{array}{c}
u \\ 
y%
\end{array}%
\right] .
\end{equation*}%
$y^{p}$ (instead of $y$) is transmitted to the control station over the
communication network. It follows from Theorem \ref{Theo4-2} that 
\begin{gather}
y^{p}=G_{\Delta }u+\bar{r}_{y}  \label{eq5-12} \\
G_{\Delta }=\left( N+\Delta _{N}\right) \left( M+\Delta _{M}\right) ^{-1} \\
=\left( \hat{M}+\Delta _{\hat{M}}\right) ^{-1}\left( \hat{N}+\Delta _{\hat{N}%
}\right) ,  \label{eq5-13} \\
\left( \Delta _{M},\Delta _{N}\right) =\left( -\hat{Y}\Psi ,\hat{X}\Psi
\right) ,\left[ 
\begin{array}{cc}
\Delta _{\hat{N}} & -\Delta _{\hat{M}}%
\end{array}%
\right] =\Pi ,  \notag \\
\Psi =\left( I-\Pi \left[ 
\begin{array}{c}
-\hat{Y} \\ 
\hat{X}%
\end{array}%
\right] \right) ^{-1}\Pi \left[ 
\begin{array}{c}
M \\ 
N%
\end{array}%
\right] ,  \notag \\
\bar{r}_{y}=\left( \hat{M}+\Delta _{\hat{M}}\right) ^{-1}\hat{N}_{d}d. 
\notag
\end{gather}%
To ensure the system stability, $\Pi $ is to be selected so that 
\begin{equation*}
\left( I-\Pi \left[ 
\begin{array}{c}
-\hat{Y} \\ 
\hat{X}%
\end{array}%
\right] \right) ^{-1}\in \mathcal{RH}_{\infty }.
\end{equation*}%
It is natural that the the injected signal $p$ will cause degradation of the
system performance, which is now specified. Let's recast the closed-loop
dynamic under the injected signal $p,$ 
\begin{equation}
\left\{ 
\begin{array}{l}
\hat{M}y=\hat{N}u+r_{y} \\ 
Xu+Yy^{p}=v,%
\end{array}%
\right.
\end{equation}%
into the form%
\begin{equation*}
\left[ 
\begin{array}{cc}
X & \text{ }Y \\ 
-\hat{N} & \text{ }\hat{M}%
\end{array}%
\right] \left[ 
\begin{array}{c}
u \\ 
y%
\end{array}%
\right] =\left[ 
\begin{array}{c}
v-Y\Pi \left[ 
\begin{array}{c}
u \\ 
y%
\end{array}%
\right] \\ 
r_{y}%
\end{array}%
\right] .
\end{equation*}%
It turns out%
\begin{gather*}
\left[ 
\begin{array}{c}
u \\ 
y%
\end{array}%
\right] =\left[ 
\begin{array}{c}
M \\ 
N%
\end{array}%
\right] \left( v-Y\Pi \left[ 
\begin{array}{c}
u \\ 
y%
\end{array}%
\right] \right) +\left[ 
\begin{array}{c}
-\hat{Y} \\ 
\hat{X}%
\end{array}%
\right] r_{y}\Longrightarrow \\
\left[ 
\begin{array}{c}
u \\ 
y%
\end{array}%
\right] =\left[ 
\begin{array}{c}
M \\ 
N%
\end{array}%
\right] \Phi v+\left( I+\left[ 
\begin{array}{c}
M \\ 
N%
\end{array}%
\right] Y\Pi \right) ^{-1}\left[ 
\begin{array}{c}
-\hat{Y} \\ 
\hat{X}%
\end{array}%
\right] r_{y}, \\
\Phi =\left( I+Y\Pi \left[ 
\begin{array}{c}
M \\ 
N%
\end{array}%
\right] \right) ^{-1}.
\end{gather*}%
Thus, the control difference, 
\begin{gather*}
\left[ 
\begin{array}{c}
e_{u} \\ 
e_{y}%
\end{array}%
\right] =\left[ 
\begin{array}{c}
u \\ 
y%
\end{array}%
\right] -\left[ 
\begin{array}{c}
M \\ 
N%
\end{array}%
\right] v-\left[ 
\begin{array}{c}
-\hat{Y} \\ 
\hat{X}%
\end{array}%
\right] r_{y} \\
=-\left[ 
\begin{array}{c}
M \\ 
N%
\end{array}%
\right] \Phi Y\Pi \left[ 
\begin{array}{c}
M \\ 
N%
\end{array}%
\right] v \\
-\left( I+\left[ 
\begin{array}{c}
M \\ 
N%
\end{array}%
\right] Y\Pi \right) ^{-1}\left[ 
\begin{array}{c}
M \\ 
N%
\end{array}%
\right] Y\Pi \left[ 
\begin{array}{c}
-\hat{Y} \\ 
\hat{X}%
\end{array}%
\right] r_{y},
\end{gather*}%
models the control performance degradation.

\subsubsection{Performance-preserving controller}

In order to fully recover the system performance, the controller (on the
plant side) is extended by adding a feedback of the residual signal $%
r_{y}^{p}$ as 
\begin{align}
Xu& =-Yy^{p}+v+Qr_{y}^{p}  \label{eq5-10} \\
& =\left( Q\hat{M}-Y\right) y^{p}-Q\hat{N}u+v,Q\in \mathcal{RH}_{\infty }, 
\notag \\
r_{y}^{p}& =\hat{M}y^{p}-\hat{N}u=r_{y}+\hat{M}\Pi \left[ 
\begin{array}{c}
u \\ 
y%
\end{array}%
\right] .  \notag
\end{align}%
It yields%
\begin{gather*}
\left[ 
\begin{array}{c}
u \\ 
y%
\end{array}%
\right] =\left[ 
\begin{array}{c}
M \\ 
N%
\end{array}%
\right] \left( v+\left( Q\hat{M}-Y\right) \Pi \left[ 
\begin{array}{c}
u \\ 
y%
\end{array}%
\right] \right) \\
+\left( \left[ 
\begin{array}{c}
-\hat{Y} \\ 
\hat{X}%
\end{array}%
\right] +\left[ 
\begin{array}{c}
M \\ 
N%
\end{array}%
\right] Q\right) r_{y} \\
\Longrightarrow \left[ 
\begin{array}{c}
u \\ 
y%
\end{array}%
\right] =\left[ 
\begin{array}{c}
M \\ 
N%
\end{array}%
\right] \left( I-\left( Q\hat{M}-Y\right) \Pi \left[ 
\begin{array}{c}
M \\ 
N%
\end{array}%
\right] \right) ^{-1}v \\
+\left( I-\left[ 
\begin{array}{c}
M \\ 
N%
\end{array}%
\right] \left( Q\hat{M}-Y\right) \Pi \right) ^{-1}\left( \left[ 
\begin{array}{c}
M \\ 
N%
\end{array}%
\right] Q+\left[ 
\begin{array}{c}
-\hat{Y} \\ 
\hat{X}%
\end{array}%
\right] \right) r_{y}.
\end{gather*}%
Some routine computations lead to%
\begin{gather}
\left[ 
\begin{array}{c}
u \\ 
y%
\end{array}%
\right] =\left[ 
\begin{array}{c}
-\hat{Y} \\ 
\hat{X}%
\end{array}%
\right] r_{y}+\left[ 
\begin{array}{c}
M \\ 
N%
\end{array}%
\right] \Phi ^{p}v  \notag \\
+\left[ 
\begin{array}{c}
M \\ 
N%
\end{array}%
\right] \Phi ^{p}\left( Q+\left( Q\hat{M}-Y\right) \Pi \left[ 
\begin{array}{c}
-\hat{Y} \\ 
\hat{X}%
\end{array}%
\right] \right) r_{y},  \label{eq5-9} \\
\Phi ^{p}=\left( I-\left( Q\hat{M}-Y\right) \Pi \left[ 
\begin{array}{c}
M \\ 
N%
\end{array}%
\right] \right) ^{-1}.  \notag
\end{gather}

\begin{theorem}
\label{Theo5-4}Given the plant model (\ref{eq2-1}) and controller (\ref%
{eq5-10}), setting 
\begin{gather}
\bar{v}=\left( I-\left( Q\hat{M}-Y\right) \Pi \left[ 
\begin{array}{c}
M \\ 
N%
\end{array}%
\right] \right) ^{-1}v,  \label{eq5-11a} \\
Q=\left( I+\hat{M}\Pi \left[ 
\begin{array}{c}
-\hat{Y} \\ 
\hat{X}%
\end{array}%
\right] \right) ^{-1}Y\Pi \left[ 
\begin{array}{c}
-\hat{Y} \\ 
\hat{X}%
\end{array}%
\right] ,  \label{eq5-11b}
\end{gather}%
and $\Pi $ ensuring 
\begin{eqnarray}
\left( I-\Pi \left[ 
\begin{array}{c}
-\hat{Y} \\ 
\hat{X}%
\end{array}%
\right] \right) ^{-1} &\in &\mathcal{RH}_{\infty },  \label{eq5-13a} \\
\left( I+\hat{M}\Pi \left[ 
\begin{array}{c}
-\hat{Y} \\ 
\hat{X}%
\end{array}%
\right] \right) ^{-1} &\in &\mathcal{RH}_{\infty },  \label{eq5-13b} \\
\left( I-\left( Q\hat{M}-Y\right) \Pi \left[ 
\begin{array}{c}
M \\ 
N%
\end{array}%
\right] \right) ^{-1} &\in &\mathcal{RH}_{\infty }  \label{eq5-13c}
\end{eqnarray}%
results in%
\begin{equation}
\left[ 
\begin{array}{c}
u \\ 
y%
\end{array}%
\right] =\left[ 
\begin{array}{c}
M \\ 
N%
\end{array}%
\right] \bar{v}+\left[ 
\begin{array}{c}
-\hat{Y} \\ 
\hat{X}%
\end{array}%
\right] r_{y}.  \label{eq5-14}
\end{equation}
\end{theorem}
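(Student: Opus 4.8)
The plan is to start from the already-derived closed-loop expression (\ref{eq5-9}), which records how the privacy filter $\Pi$ and the residual feedback $Qr_y^p$ in the controller (\ref{eq5-10}) act on the transmitted data $(u,y)$, and to show that the two design choices (\ref{eq5-11a}) and (\ref{eq5-11b}) collapse (\ref{eq5-9}) back to the nominal image/residual decomposition (\ref{eq5-14}). Comparing (\ref{eq5-9}) with the target (\ref{eq5-14}), there are exactly two terms to reconcile: the reference term weighted by $\left[\begin{smallmatrix} M\\ N\end{smallmatrix}\right]$ and the $r_y$-term. First I would dispatch the reference term: since $\bar v$ is defined in (\ref{eq5-11a}) to be $\Phi^p v$ with $\Phi^p=\left(I-(Q\hat M-Y)\Pi\left[\begin{smallmatrix} M\\ N\end{smallmatrix}\right]\right)^{-1}$, the middle term of (\ref{eq5-9}) is $\left[\begin{smallmatrix} M\\ N\end{smallmatrix}\right]\bar v$ by definition, matching (\ref{eq5-14}).

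It then remains to annihilate the residual correction term, i.e. to show that the coefficient $\left[\begin{smallmatrix} M\\ N\end{smallmatrix}\right]\Phi^p\left(Q+(Q\hat M-Y)\Pi\left[\begin{smallmatrix}-\hat Y\\ \hat X\end{smallmatrix}\right]\right)$ multiplying $r_y$ in (\ref{eq5-9}) vanishes, leaving only the nominal $\left[\begin{smallmatrix}-\hat Y\\ \hat X\end{smallmatrix}\right]r_y$. Here I would exploit that $\left[\begin{smallmatrix} M\\ N\end{smallmatrix}\right]$ is left-invertible: by the Bezout identity (\ref{eq2-6}) its left inverse is $\left[\begin{smallmatrix} X & Y\end{smallmatrix}\right]$, since $XM+YN=I$. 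Multiplying the vanishing requirement on the left by $\left[\begin{smallmatrix} X & Y\end{smallmatrix}\right]$ and then by the invertible $\left(\Phi^p\right)^{-1}$ (admissible by (\ref{eq5-13c})) reduces the whole condition to the operator equation $Q+(Q\hat M-Y)\Pi\left[\begin{smallmatrix}-\hat Y\\ \hat X\end{smallmatrix}\right]=0$.

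Finally I would solve this equation for $Q$. Grouping the $Q$-terms gives $Q\left(I+\hat M\Pi\left[\begin{smallmatrix}-\hat Y\\ \hat X\end{smallmatrix}\right]\right)=Y\Pi\left[\begin{smallmatrix}-\hat Y\\ \hat X\end{smallmatrix}\right]$, and inverting the bracketed factor — whose invertibility over $\mathcal{RH}_\infty$ is exactly hypothesis (\ref{eq5-13b}) — yields the prescribed filter (\ref{eq5-11b}). Condition (\ref{eq5-13a}) guarantees well-posedness of the privacy loop itself, so that $y^p$ and hence $r_y^p$ are well-defined stable signals, while (\ref{eq5-13b})–(\ref{eq5-13c}) guarantee that $Q$ and $\Phi^p$ live in $\mathcal{RH}_\infty$, so that performance recovery is achieved without sacrificing internal stability. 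Substituting the resulting $Q$ and $\bar v$ back into (\ref{eq5-9}) then reproduces (\ref{eq5-14}).

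The routine algebra is light; the step demanding the most care is the cancellation. One must justify rigorously that a weighted coefficient of the form $\left[\begin{smallmatrix} M\\ N\end{smallmatrix}\right]\Phi^p(\cdot)$ can be forced to zero only by forcing its inner factor to zero — this rests on the left-invertibility of the image factor and the invertibility of $\Phi^p$, \emph{not} on any cancellation of $r_y$ itself — and one must keep the non-commutative ordering and block dimensions of $Q$, $\hat M$, $Y$ and $\Pi\left[\begin{smallmatrix}-\hat Y\\ \hat X\end{smallmatrix}\right]$ consistent while solving for $Q$. The accompanying bookkeeping is to confirm that each inverse invoked along the way is precisely one of the three conditions (\ref{eq5-13a})–(\ref{eq5-13c}), so that the recovered controller is both realizable and stabilizing.
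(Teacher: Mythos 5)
Your proposal is correct and follows essentially the same route as the paper, whose entire proof is the single sentence that the claim ``follows from (\ref{eq5-9}) immediately'': you simply supply the omitted algebra, and your justification of the cancellation via the left-invertibility of $\left[\begin{smallmatrix}M\\ N\end{smallmatrix}\right]$ (from $XM+YN=I$) together with the invertibility of $\Phi^{p}$ is exactly the right way to make that one-liner rigorous. One caveat: your equation $Q\bigl(I+\hat{M}\Pi\left[\begin{smallmatrix}-\hat{Y}\\ \hat{X}\end{smallmatrix}\right]\bigr)=Y\Pi\left[\begin{smallmatrix}-\hat{Y}\\ \hat{X}\end{smallmatrix}\right]$ solves to $Q=Y\Pi\left[\begin{smallmatrix}-\hat{Y}\\ \hat{X}\end{smallmatrix}\right]\bigl(I+\hat{M}\Pi\left[\begin{smallmatrix}-\hat{Y}\\ \hat{X}\end{smallmatrix}\right]\bigr)^{-1}$, with the inverse on the \emph{right}, whereas (\ref{eq5-11b}) as printed places it on the left --- a form that is dimensionally inconsistent unless $p=m$ and does not coincide with yours in general. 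Your version is the correct one (the paper's appears to be a typo), so you should not assert without comment that ``inverting the bracketed factor yields the prescribed filter (\ref{eq5-11b})''.
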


\begin{proof}
The proof follows from (\ref{eq5-9}) immediately.
\end{proof}

Theorem \ref{Theo5-4} showcases that the controller (\ref{eq5-10}) enables
fully recovering the system performance, as far as we are able to design the
privacy filter $\Pi $ so that the conditions (\ref{eq5-13a})-(\ref{eq5-13c})
are satisfied. Herewith, it is worth remarking that the conditions (\ref%
{eq5-13a})-(\ref{eq5-13b}) are necessary conditions for the system stability
and the realization of the residual feedback controller $Qr_{y}^{p},$ while
the necessity of the condition (\ref{eq5-13c}) depends on the use of the
reference signal $v.$ For instance, for $v=0,$ the existence of (\ref%
{eq5-13c}) becomes nonrelevant. Below, the realization of the
privacy-preserving system with a performance-preserving controller is
summarized.

\begin{algorithm}
\textbf{Realization of the privacy-preserving system}
\begin{enumerate}
\item Implement the privacy system (\ref{eq5-11}) (on the plant side), add
the auxiliary signal $p$ to $y$ and send $y^{p}$ to the control station;
\item Set the controller (on the control station side) for given $v,$ either
as an observer-based state feedback controller%
\begin{equation*}
u=F\hat{x}+\bar{v}+Qr_{y}^{p} \label{eq5-10a},
\end{equation*}%
or equivalently 
\begin{align*}
u& =Ky^{p}+\left( X+Q\hat{N}\right) ^{-1}\bar{v}+Qr_{y}^{p}, \\
K& =-\left( X+Q\hat{N}\right) ^{-1}\left( Y-Q\hat{M}\right) ,
\end{align*}%
where $\bar{v}$ is set according to (\ref{eq5-11a}).
\end{enumerate}
\end{algorithm}

\subsubsection{System-level privacy and privacy filter design}

We now attend to designing $\Pi $ in the regard of system-level privacy. To
this end, a measure of system privacy in terms of the similarity between $G$
and $G_{\Delta }$ or equivalently between the image subspaces $\mathcal{I}%
_{G}$ and $\mathcal{I}_{G_{\Delta }},$ 
\begin{equation*}
\mathcal{I}_{G_{\Delta }}=\left\{ \left[ 
\begin{array}{c}
u \\ 
y%
\end{array}%
\right] :\left[ 
\begin{array}{c}
u \\ 
y%
\end{array}%
\right] =\left[ 
\begin{array}{c}
M-\hat{Y}\Psi \\ 
N+\hat{X}\Psi%
\end{array}%
\right] v_{\Delta },v_{\Delta }\in \mathcal{H}_{2}\right\}
\end{equation*}%
is introduced. As introduced in Subsection \ref{Subsec3-1}, the gap metric
between $\mathcal{I}_{G}$ and $\mathcal{I}_{G_{\Delta }}$ defined by%
\begin{equation*}
\delta \left( \mathcal{I}_{G},\mathcal{I}_{G_{\Delta }}\right) =\left\Vert 
\mathcal{P}_{\mathcal{I}_{G}}-\mathcal{P}_{\mathcal{I}_{G_{\Delta
}}}\right\Vert .
\end{equation*}%
is a well-established mathematical tool serving for such a purpose.

\begin{definition} \label{Def5-1}
Given the plant model (\ref{eq2-1}) and controller (\ref{eq5-10a}), the
system-level privacy induced by signal $p$ given in (\ref{eq5-11}) is
defined as the gap metric $\delta \left( \mathcal{I}_{G},\mathcal{I}%
_{G_{\Delta }}\right) $.
\end{definition}

It follows from (\ref{eq3-51a}) and (\ref{eq3-51b}) immediately that (i)
when $\Pi =0,$ privacy level is zero, (ii) if $\mathcal{I}_{G}\bot \mathcal{I%
}_{G_{\Delta }},$ then the maximal privacy level is achieved, i.e. $\delta
\left( \mathcal{I}_{G},\mathcal{I}_{G_{\Delta }}\right) =1.$ Notice that $%
\mathcal{I}_{G}\bot \mathcal{I}_{G_{\Delta }}$ implies $\forall \left[ 
\begin{array}{c}
u \\ 
y%
\end{array}%
\right] \in \mathcal{I}_{G},\left[ 
\begin{array}{c}
u \\ 
y^{p}%
\end{array}%
\right] \in \mathcal{I}_{G_{\Delta }},$ 
\begin{equation}
\left\langle \left[ 
\begin{array}{c}
u \\ 
y%
\end{array}%
\right] ,\left[ 
\begin{array}{c}
u \\ 
y^{p}%
\end{array}%
\right] \right\rangle =\dsum\limits_{k=0}^{\infty }\left(
u^{T}(k)u(k)+y^{T}(k)y^{p}(k)\right) =0.  \label{eq5-54}
\end{equation}%
Consequently, it is impossible to identify the model $G=NM^{-1},$ or
equivalently $\left( A,B,C,D\right) ,$ by means of the data $\left(
u,y^{p}\right) $ belonging to $\mathcal{I}_{G_{\Delta }}.$ Generally
speaking, the gap metric between system subspaces is a system-theoretic
privacy metric, quantifying how much additional information that the
augmented signal $y^{p}$ reveals about the system dynamic.

\begin{remark}
Generally speaking, system identification algorithms are computations using
process data collected over a finite time interval. On the other hand, the
inner product (\ref{eq5-54}) is defined over $\left[ 0,\infty \right) .$ In
order to preserve the privacy level, an inner product over finite time
interval $\left[ k_{0},k_{1}\right] ,$%
\begin{equation*}
\left\langle \left[ 
\begin{array}{c}
u \\ 
y%
\end{array}%
\right] ,\left[ 
\begin{array}{c}
u \\ 
y^{p}%
\end{array}%
\right] \right\rangle =\dsum\limits_{k=k_{0}}^{k_{1}}\left(
u^{T}(k)u(k)+y^{T}(k)y^{p}(k)\right) ,
\end{equation*}%
is to be defined. As a result, time-varying (normalized) coprime
factorizations that lead to time-varying post- and pre-filters are applied.
This issue will not be addressed in this work. The reader is referred to 
\cite{Feintuch_book,LDZ2024} for the corresponding concepts and computation
algorithms. 
\end{remark}

Next, we study the computation of the gap metric induced privacy-level and
design of the privacy filter (\ref{eq5-11}). It is known in the literature
that 
\begin{gather}
\vec{\delta}\left( \mathcal{I}_{G},\mathcal{I}_{G_{\Delta }}\right)
=\inf_{Q\in \mathcal{RH}_{\infty }}\left\Vert I_{G_{0}}-I_{G_{0,\Delta
}}Q\right\Vert _{\infty },  \label{eq5-16} \\
I_{G_{0}}=\left[ 
\begin{array}{c}
M_{0} \\ 
N_{0}%
\end{array}%
\right] ,I_{G_{0,\Delta }}=\left[ 
\begin{array}{c}
M_{0}-\hat{Y}_{0}\Psi _{0} \\ 
N_{0}+\hat{X}_{0}\Psi _{0}%
\end{array}%
\right] \Theta _{0},
\end{gather}%
where $\left( M_{0},N_{0}\right) $ is the normalized RCP of $G,\left( \hat{X}%
_{0},\hat{Y}_{0}\right) $ the corresponding RCP of the controller, and 
\begin{equation*}
\Psi _{0}=\left( I-\Pi \left[ 
\begin{array}{c}
-\hat{Y}_{0} \\ 
\hat{X}_{0}%
\end{array}%
\right] \right) ^{-1}\Pi \left[ 
\begin{array}{c}
M_{0} \\ 
N_{0}%
\end{array}%
\right]
\end{equation*}%
with the pre-filter $\Theta _{0}\in \mathcal{RH}_{\infty }$ normalizing $%
I_{G_{0,\Delta }},$ i.e. $\ I_{G_{0,\Delta }}^{\ast }I_{G_{0,\Delta }}=I.$

The optimization (\ref{eq5-16}) is an MMP whose solution requires
considerable computational effort. Remembering our final goal of designing
the privacy filter (\ref{eq5-11}), solving the MMP (\ref{eq5-16}) remarkably
complicates the design procedure. For our purpose, the subsequent known
results are presented, which provides us with a tight lower bound of the gap
metric $\delta \left( \mathcal{I}_{G},\mathcal{I}_{G_{\Delta }}\right) .$ On
account of the fact that the privacy level $\delta \left( \mathcal{I}_{G},%
\mathcal{I}_{G_{\Delta }}\right) $ should be larger than zero and $\delta
\left( \mathcal{I}_{G},\mathcal{I}_{G_{\Delta }}\right) $ is smaller than
one due to the relation between $\mathcal{I}_{G}$ and $\mathcal{I}%
_{G_{\Delta }},$ below only the case $0<\delta \left( \mathcal{I}_{G},%
\mathcal{I}_{G_{\Delta }}\right) <1$ is addressed.

\begin{lemma}
\label{Le5-1}\cite{Vinnicombe-book} Given the normalized SIRs and SKRs,%
\begin{equation*}
I_{G_{i}}=\left[ 
\begin{array}{c}
M_{i} \\ 
N_{i}%
\end{array}%
\right] ,K_{G_{i}}=\left[ 
\begin{array}{cc}
-\hat{N}_{i} & \text{ }\hat{M}_{i}%
\end{array}%
\right] ,i=1,2,
\end{equation*}%
and assume that $0<\delta \left( \mathcal{I}_{G_{i}},\mathcal{I}%
_{G_{j}}\right) <1,i\neq j,$ then it holds%
\begin{equation}
\delta \left( \mathcal{I}_{G_{i}},\mathcal{I}_{G_{j}}\right) \geq \left\Vert
K_{G_{i}}I_{G_{j}}\right\Vert _{\infty }=\inf_{Q\in \mathcal{RL}_{\infty
}}\left\Vert I_{G_{i}}-I_{G_{j}}Q\right\Vert _{\infty }.  \label{eq5-17}
\end{equation}
\end{lemma}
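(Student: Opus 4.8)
The plan is to reduce the stated gap bound to a single-block distance computation by exploiting that, under normalization, the image and kernel generators of each plant assemble into a square all-pass (para-unitary) matrix. Concretely, the normalizations $I_{G_j}^{\ast}I_{G_j}=I$ and $K_{G_j}K_{G_j}^{\ast}=I$ together with the orthogonality $K_{G_j}I_{G_j}=-\hat{N}_{j}M_{j}+\hat{M}_{j}N_{j}=0$ (which is merely $\hat{M}_{j}N_{j}=\hat{N}_{j}M_{j}$) show that $U_{j}:=\left[\begin{array}{cc} I_{G_j} & K_{G_j}^{\ast}\end{array}\right]$ satisfies $U_{j}^{\ast}U_{j}=I$ on $\left\vert z\right\vert =1$, hence is norm-preserving under left multiplication. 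First I would use this to write, for any $Q\in\mathcal{RL}_{\infty}$, $\left\Vert I_{G_i}-I_{G_j}Q\right\Vert _{\infty}=\left\Vert U_{j}^{\ast}\left(I_{G_i}-I_{G_j}Q\right)\right\Vert _{\infty}$, whose two stacked blocks are $I_{G_j}^{\ast}I_{G_i}-Q$ and $K_{G_j}I_{G_i}$.

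The second block is independent of $Q$, so $\left\Vert I_{G_i}-I_{G_j}Q\right\Vert _{\infty}\geq\left\Vert K_{G_j}I_{G_i}\right\Vert _{\infty}$ for every $Q$; and the choice $Q=I_{G_j}^{\ast}I_{G_i}$, which lies in $\mathcal{RL}_{\infty}$ (a stable factor times an anti-stable conjugate factor, with no poles on $\left\vert z\right\vert =1$ thanks to the stability of $A_{F_i},A_{F_j}$), annihilates the first block and attains the bound. Hence $\inf_{Q\in\mathcal{RL}_{\infty}}\left\Vert I_{G_i}-I_{G_j}Q\right\Vert _{\infty}=\left\Vert K_{G_j}I_{G_i}\right\Vert _{\infty}$. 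To convert this into the stated $\left\Vert K_{G_i}I_{G_j}\right\Vert _{\infty}$, I would invoke an all-pass symmetry: the product $U_{i}^{\ast}U_{j}$ is pointwise unitary, and its off-diagonal blocks $K_{G_i}I_{G_j}$ and $I_{G_i}K_{G_j}^{\ast}=\left(K_{G_j}I_{G_i}\right)^{\ast}$ have, at each frequency, identical largest singular values, since the diagonal block $A=I_{G_i}^{\ast}I_{G_j}$ is square so $I-AA^{\ast}$ and $I-A^{\ast}A$ are cospectral. Taking suprema over frequency gives $\left\Vert K_{G_j}I_{G_i}\right\Vert _{\infty}=\left\Vert K_{G_i}I_{G_j}\right\Vert _{\infty}$, which is the claimed equality.

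For the inequality I would recall the Georgiou characterization of the directed gap already used in Subsection~\ref{Subsec3-1}, namely $\vec{\delta}\left(\mathcal{I}_{G_i},\mathcal{I}_{G_j}\right)=\inf_{Q\in\mathcal{RH}_{\infty}}\left\Vert I_{G_i}-I_{G_j}Q\right\Vert _{\infty}$. Since $\mathcal{RH}_{\infty}\subset\mathcal{RL}_{\infty}$, enlarging the feasible set only decreases the infimum, so $\vec{\delta}\left(\mathcal{I}_{G_i},\mathcal{I}_{G_j}\right)\geq\inf_{Q\in\mathcal{RL}_{\infty}}\left\Vert I_{G_i}-I_{G_j}Q\right\Vert _{\infty}=\left\Vert K_{G_i}I_{G_j}\right\Vert _{\infty}$; finally $\delta\left(\mathcal{I}_{G_i},\mathcal{I}_{G_j}\right)=\max\{\vec{\delta}\left(\mathcal{I}_{G_i},\mathcal{I}_{G_j}\right),\vec{\delta}\left(\mathcal{I}_{G_j},\mathcal{I}_{G_i}\right)\}\geq\vec{\delta}\left(\mathcal{I}_{G_i},\mathcal{I}_{G_j}\right)$ closes the chain. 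The hypothesis $0<\delta\left(\mathcal{I}_{G_i},\mathcal{I}_{G_j}\right)<1$ guarantees (as noted after the gap definition) that the two directed gaps coincide, keeping the reduction well posed and the attaining $Q$ genuinely in $\mathcal{RL}_{\infty}$.

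The step I expect to be the main obstacle is the index-swapping symmetry $\left\Vert K_{G_j}I_{G_i}\right\Vert _{\infty}=\left\Vert K_{G_i}I_{G_j}\right\Vert _{\infty}$: the $\mathcal{RL}_{\infty}$ distance computation naturally produces the ``wrong'' ordering of indices, and bridging to the stated one rests on the pointwise-unitary singular-value argument rather than on any algebraic identity between the transfer matrices themselves. A secondary technical point worth stating explicitly is the admissibility of the minimizer $I_{G_j}^{\ast}I_{G_i}$ as an $\mathcal{RL}_{\infty}$ element, i.e. the absence of poles on $\left\vert z\right\vert =1$.
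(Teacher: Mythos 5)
The paper does not actually prove Lemma \ref{Le5-1}: it is imported verbatim from \cite{Vinnicombe-book}, so there is no internal proof to compare against. Your argument is, in substance, the standard proof of this result, and it is correct. The chain is sound: complete the normalized image symbol to the square all-pass matrix $U_{j}=\left[ \begin{array}{cc} I_{G_{j}} & K_{G_{j}}^{\ast }\end{array}\right] $ using $I_{G_{j}}^{\ast }I_{G_{j}}=I,$ $K_{G_{j}}K_{G_{j}}^{\ast }=I$ and $K_{G_{j}}I_{G_{j}}=0;$ use it as a pointwise isometry to turn $\left\Vert I_{G_{i}}-I_{G_{j}}Q\right\Vert _{\infty }$ into a two-block problem whose lower block $K_{G_{j}}I_{G_{i}}$ is $Q$-free; annihilate the upper block with $Q=I_{G_{j}}^{\ast }I_{G_{i}};$ and swap the indices through the pointwise unitarity of $U_{i}^{\ast }U_{j},$ whose off-diagonal blocks have equal largest singular values at each frequency because $I-AA^{\ast }$ and $I-A^{\ast }A$ are cospectral for the square diagonal block $A=I_{G_{i}}^{\ast }I_{G_{j}}.$ That index swap is indeed the only non-routine step, and you identify and resolve it correctly; combined with the directed-gap formula quoted at (\ref{eq5-16}), the inclusion $\mathcal{RH}_{\infty }\subset \mathcal{RL}_{\infty }$ and $\delta \geq \vec{\delta},$ this gives (\ref{eq5-17}).

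Two minor remarks. First, under the paper's definition $\mathcal{RL}_{\infty }$ consists of \emph{proper} rational matrices without poles on $\left\vert z\right\vert =1,$ and $I_{G_{j}}^{\ast }I_{G_{i}}$ can fail to be proper when $A_{F_{j}}$ is singular; this does not change the value of the infimum, since your lower bound holds for every admissible $Q$ and is approached arbitrarily closely by proper elements, but strictly speaking the exact minimizer may sit outside the stated class. Second, the characterization $\vec{\delta}=\inf_{Q\in \mathcal{RH}_{\infty }}\left\Vert I_{G_{i}}-I_{G_{j}}Q\right\Vert _{\infty }$ that you invoke is the one recorded at (\ref{eq5-16}), not anything established in Subsection \ref{Subsec3-1}; and the hypothesis $0<\delta <1$ is never used in the direction you prove --- it matters only for upgrading the inequality to an equality of gaps, which the lemma does not claim.
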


It is known \cite{Vinnicombe-book} that $\inf_{Q\in \mathcal{RL}_{\infty
}}\left\Vert I_{G_{i}}-I_{G_{j}}Q\right\Vert _{\infty }$ is the gap metric
measuring the similarity between two $\ell _{2}$ image subspaces 
\begin{equation*}
\mathcal{I}_{G_{i}}^{\mathcal{L}_{2}}=\left\{ \left[ 
\begin{array}{c}
u \\ 
y%
\end{array}%
\right] :\left[ 
\begin{array}{c}
u \\ 
y%
\end{array}%
\right] =\left[ 
\begin{array}{c}
M_{i} \\ 
N_{i}%
\end{array}%
\right] v,v\in \ell _{2}\right\} ,
\end{equation*}%
$i=1,2.$ We denote it by $\delta ^{\mathcal{L}_{2}}\left( \mathcal{I}%
_{G_{i}},\mathcal{I}_{G_{j}}\right) .$ The equation in (\ref{eq5-17}) means
that 
\begin{equation*}
\delta ^{\mathcal{L}_{2}}\left( \mathcal{I}_{G_{i}},\mathcal{I}%
_{G_{j}}\right) =\left\Vert K_{G_{i}}I_{G_{j}}\right\Vert _{\infty }.
\end{equation*}%
In this context, the concept of system-level $\mathcal{L}_{2}$-privacy is
introduced.

\begin{definition}
\label{Def5-2}Given the plant model (\ref{eq2-1}) and controller (\ref%
{eq5-10}), the gap metric,%
\begin{equation*}
\delta ^{\mathcal{L}_{2}}\left( \mathcal{I}_{G},\mathcal{I}_{G_{\Delta
}}\right) =\inf_{Q\in \mathcal{RL}_{\infty }}\left\Vert
I_{G_{0}}-I_{G_{0,\Delta }}Q\right\Vert _{\infty },
\end{equation*}%
is called $\mathcal{L}_{2}$-privacy induced by signal $p$ given in (\ref%
{eq5-11}). 
\end{definition}

\begin{theorem}
\label{Theo5-5}Given the plant model (\ref{eq2-1}) and controller (\ref%
{eq5-10}), the $\mathcal{L}_{2}$-privacy preserved by means of the privacy
filter (\ref{eq5-11}) is given by%
\begin{gather}
\delta ^{\mathcal{L}_{2}}\left( \mathcal{I}_{G},\mathcal{I}_{G_{\Delta
}}\right) =\left\Vert \Pi _{0}\left[ 
\begin{array}{c}
M_{0} \\ 
N_{0}%
\end{array}%
\right] \right\Vert _{\infty },  \label{eq5-24} \\
\Pi _{0}=R_{N}\left( \left[ 
\begin{array}{cc}
-\hat{N}_{0} & \text{ }\hat{M}_{0}%
\end{array}%
\right] -\Pi \right) ,  \label{eq5-25} \\
R_{N}^{*}R_{N}=\left( \left( \left[ 
\begin{array}{cc}
-\hat{N}_{0} & \text{ }\hat{M}_{0}%
\end{array}%
\right] -\Pi \right) \left( \left[ 
\begin{array}{cc}
-\hat{N}_{0} & \text{ }\hat{M}_{0}%
\end{array}%
\right] -\Pi \right) ^{*}\right) ^{-1}.  \notag
\end{gather}
\end{theorem}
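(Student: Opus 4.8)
The plan is to reduce (\ref{eq5-24}) to a single application of Lemma~\ref{Le5-1}, after identifying a \emph{normalized} stable kernel representation (SKR) of the perturbed plant $G_{\Delta }$. By Definition~\ref{Def5-2} and the identity recorded just after Lemma~\ref{Le5-1}, the $\mathcal{L}_{2}$-privacy equals $\left\Vert K_{G_{0}}I_{G_{0,\Delta }}\right\Vert _{\infty }$; since the $\mathcal{L}_{2}$-gap is symmetric in its two arguments (the two directed gaps coincide for $0<\delta <1$, cf. the properties of $\vec{\delta}$ listed in Subsection~\ref{Subsec3-1}), this is the same as $\left\Vert K_{G_{0,\Delta }}I_{G_{0}}\right\Vert _{\infty }$, where $I_{G_{0}}=\left[ M_{0};N_{0}\right] $ is the normalized SIR of $G$ and $K_{G_{0,\Delta }}$ is the normalized SKR of $G_{\Delta }$. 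It therefore suffices to show $K_{G_{0,\Delta }}=\Pi _{0}$, since then $K_{G_{0,\Delta }}I_{G_{0}}=\Pi _{0}\left[ M_{0};N_{0}\right] $ is exactly the right-hand side of (\ref{eq5-24}).

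First I would read off an unnormalized kernel representation of $G_{\Delta }$ directly from Theorem~\ref{Theo4-2}. Equation (\ref{eq5-13}) together with the identification $\left[ \Delta _{\hat{N}},-\Delta _{\hat{M}}\right] =\Pi $ shows that $(\hat{M}+\Delta _{\hat{M}},\hat{N}+\Delta _{\hat{N}})$ is an LCF of $G_{\Delta }$; splitting $\Pi =\left[ \Pi _{1},\Pi _{2}\right] $ so that $\Delta _{\hat{N}}=\Pi _{1}$ and $\Delta _{\hat{M}}=-\Pi _{2}$, the associated kernel pair is $\left[ -(\hat{N}_{0}+\Delta _{\hat{N}}),\ \hat{M}_{0}+\Delta _{\hat{M}}\right] =\left[ -\hat{N}_{0},\ \hat{M}_{0}\right] -\Pi $. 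Thus $\tilde{K}:=\left[ -\hat{N}_{0},\ \hat{M}_{0}\right] -\Pi $ is an unnormalized SKR of $G_{\Delta }$, with the stability condition (\ref{eq5-13a}) ensuring that $\Pi $ is admissible, so that left-coprimeness and membership in $\mathcal{RH}_{\infty }$ survive the perturbation.

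Second, I would normalize $\tilde{K}$: the normalized SKR is $\Pi _{0}=R_{N}\tilde{K}$, where $R_{N}$ is a stable, stably invertible left factor chosen so that $\Pi _{0}$ is co-inner, i.e. $\Pi _{0}\Pi _{0}^{\ast }=R_{N}\tilde{K}\tilde{K}^{\ast }R_{N}^{\ast }=I$. This is precisely the spectral-factorization condition $R_{N}^{\ast }R_{N}=(\tilde{K}\tilde{K}^{\ast })^{-1}$ stated in the theorem, which recovers (\ref{eq5-25}) and its accompanying equation. With $\Pi _{0}=K_{G_{0,\Delta }}$ established, the substitution $K_{G_{0,\Delta }}I_{G_{0}}=\Pi _{0}\left[ M_{0};N_{0}\right] $, combined with the reduction of the opening paragraph, gives $\delta ^{\mathcal{L}_{2}}(\mathcal{I}_{G},\mathcal{I}_{G_{\Delta }})=\left\Vert \Pi _{0}\left[ M_{0};N_{0}\right] \right\Vert _{\infty }$, which is (\ref{eq5-24}).

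I expect the principal obstacle to lie in the well-posedness of this normalized kernel representation rather than in the algebra. One must verify that $\tilde{K}$ is genuinely left-coprime, i.e. that $\Pi $ introduces no unstable hidden modes, which is where (\ref{eq5-13a}) enters, and that the co-inner normalizer $R_{N}$ exists in $\mathcal{RH}_{\infty }$; the latter requires $\tilde{K}\tilde{K}^{\ast }$ to have no zeros on $\left\vert z\right\vert =1$, so that its spectral factor, and hence $R_{N}$, is invertible over $\mathcal{RH}_{\infty }$. The symmetry of the $\mathcal{L}_{2}$-gap and the elementary substitution $K_{G_{0,\Delta }}I_{G_{0}}=\Pi _{0}\left[ M_{0};N_{0}\right] $ are then routine, so the analytic weight of the proof sits entirely in justifying that $\Pi _{0}$ is a legitimate normalized SKR of $G_{\Delta }$.
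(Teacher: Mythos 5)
Your proposal is correct and follows essentially the same route as the paper: both identify $\left[\begin{array}{cc}-\hat{N}_{0} & \hat{M}_{0}\end{array}\right]-\Pi$ as an (unnormalized) SKR of $G_{\Delta}$, normalize it by the spectral factor $R_{N}$ so that $\Pi_{0}\Pi_{0}^{\ast}=I$, and then invoke Lemma~\ref{Le5-1} to read off $\delta^{\mathcal{L}_{2}}=\left\Vert \Pi_{0}I_{G_{0}}\right\Vert_{\infty}$. The only cosmetic difference is that you obtain the perturbed kernel pair from the LCF perturbation of Theorem~\ref{Theo4-2}, whereas the paper obtains it by annihilating the perturbed SIR $\left(I-\left[\begin{array}{c}-\hat{Y}_{0}\\ \hat{X}_{0}\end{array}\right]\Pi\right)^{-1}\left[\begin{array}{c}M_{0}\\ N_{0}\end{array}\right]$ via the Bezout identity; you are in fact slightly more careful than the paper in flagging the symmetry of the $\mathcal{L}_{2}$-gap and the well-posedness of the normalizer $R_{N}$.
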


\begin{proof}
Denote the normalized SKR of $G_{\Delta }$ by $K_{G_{0,\Delta }}.$ According
to Lemma \ref{Le5-1}, 
\begin{equation*}
\delta ^{\mathcal{L}_{2}}\left( \mathcal{I}_{G},\mathcal{I}_{G_{\Delta
}}\right) =\left\Vert K_{G_{0,\Delta }}I_{G_{0}}\right\Vert _{\infty }.
\end{equation*}%
We now determine $K_{G_{0,\Delta }}.$ Observe that 
\begin{equation*}
\left[ 
\begin{array}{c}
M_{0}-\hat{Y}_{0}\Psi _{0} \\ 
N_{0}+\hat{X}_{0}\Psi _{0}%
\end{array}%
\right] =\left( I-\left[ 
\begin{array}{c}
-\hat{Y}_{0} \\ 
\hat{X}_{0}%
\end{array}%
\right] \Pi \right) ^{-1}\left[ 
\begin{array}{c}
M_{0} \\ 
N_{0}%
\end{array}%
\right] .
\end{equation*}%
It is apparent that 
\begin{align*}
K_{G_{\Delta }}& :=R_{N}\left[ 
\begin{array}{cc}
-\hat{N}_{0} & \text{ }\hat{M}_{0}%
\end{array}%
\right] \left( I-\left[ 
\begin{array}{c}
-\hat{Y}_{0} \\ 
\hat{X}_{0}%
\end{array}%
\right] \Pi \right) \\
& =R_{N}\left( \left[ 
\begin{array}{cc}
-\hat{N}_{0} & \text{ }\hat{M}_{0}%
\end{array}%
\right] -\Pi \right)
\end{align*}%
is an SKR of $G_{\Delta },$ where $R_{N}\in \mathcal{RH}_{\infty }$ is a
post-filter to be selected. Now, let $R_{N}$ satisfy 
\begin{align*}
R_{N}^{*}R_{N}& =\left( \left( \left[ 
\begin{array}{cc}
-\hat{N}_{0} & \text{ }\hat{M}_{0}%
\end{array}%
\right] -\Pi \right) \left( \left[ 
\begin{array}{cc}
-\hat{N}_{0} & \text{ }\hat{M}_{0}%
\end{array}%
\right] -\Pi \right) ^{*}\right) ^{-1}, \\
\Pi _{0}& =R_{N}\left( \left[ 
\begin{array}{cc}
-\hat{N}_{0} & \text{ }\hat{M}_{0}%
\end{array}%
\right] -\Pi \right) =:K_{G_{0,\Delta }}
\end{align*}%
so that $\Pi _{0}$ is the normalized SKR of $G_{\Delta }$, i.e. $\Pi _{0}\Pi
_{0}^{*}=I.$ As a result, 
\begin{equation*}
\left\Vert K_{G_{0,\Delta }}I_{G_{0}}\right\Vert _{\infty }=\left\Vert \Pi
_{0}I_{G_{0}}\right\Vert _{\infty }.
\end{equation*}
\end{proof}

\begin{remark}
As an SKR of $G_{\Delta },$ $\left[ 
\begin{array}{cc}
-\hat{N}_{0} & \text{ }\hat{M}_{0}%
\end{array}%
\right] -\Pi $ is stable and right invertible. By a spectral factorization 
\cite{Zhou96} 
\begin{equation*}
\left( \left[ 
\begin{array}{cc}
-\hat{N}_{0} & \text{ }\hat{M}_{0}%
\end{array}%
\right] -\Pi \right) \left( \left[ 
\begin{array}{cc}
-\hat{N}_{0} & \text{ }\hat{M}_{0}%
\end{array}%
\right] -\Pi \right) ^{*}=\Theta _{0}\Theta _{0}^{*},
\end{equation*}%
where $\Theta _{0}$ is invertible over $\mathcal{RH}_{\infty },$ we have $R_{N}=\Theta _{0}^{-1}$. 
\end{remark}

Theorem \ref{Theo5-5} considerably simplifies the design of the privacy
filter (\ref{eq5-11}) and enables us to perform the selection of $\Pi _{0}.$
To this end, let us firstly parametrize $\Pi _{0}$ as%
\begin{equation*}
\Pi _{0}=\left[ 
\begin{array}{cc}
\Pi _{0,1} & \text{ }\Pi _{0,2}%
\end{array}%
\right] \left[ 
\begin{array}{cc}
X_{0} & \text{ }Y_{0} \\ 
-\hat{N}_{0} & \text{ }\hat{M}_{0}%
\end{array}%
\right] .
\end{equation*}%
It turns out 
\begin{equation*}
\Pi _{0}I_{G_{0}}=\Pi _{0,1},\Pi _{0}\left[ 
\begin{array}{c}
-\hat{Y}_{0} \\ 
\hat{X}_{0}%
\end{array}%
\right] =\Pi _{0,2}.
\end{equation*}%
Recall the conditions (\ref{eq5-13a})-(\ref{eq5-13c}) in Theorem \ref%
{Theo5-4} for the system stability, and settings of $Q,\bar{v}$. Attributed
to Corollary \ref{Co3-1}, they are further written as%
\begin{gather*}
\left( I-\Pi \left( \left[ 
\begin{array}{c}
-\hat{Y}_{0} \\ 
\hat{X}_{0}%
\end{array}%
\right] R_{0}+\left[ 
\begin{array}{c}
M_{0} \\ 
N_{0}%
\end{array}%
\right] \bar{R}_{0}\right) \right) ^{-1}\in \mathcal{RH}_{\infty }, \\
\left( I+\hat{M}\Pi \left( \left[ 
\begin{array}{c}
-\hat{Y}_{0} \\ 
\hat{X}_{0}%
\end{array}%
\right] R_{0}+\left[ 
\begin{array}{c}
M_{0} \\ 
N_{0}%
\end{array}%
\right] \bar{R}_{0}\right) \right) ^{-1}\in \mathcal{RH}_{\infty }, \\
\left( I-\left( Q\hat{M}-Y\right) \Pi \left[ 
\begin{array}{c}
M_{0} \\ 
N_{0}%
\end{array}%
\right] T_{0}^{-1}\right) ^{-1}\in \mathcal{RH}_{\infty }
\end{gather*}%
with $\bar{R}_{0},R_{0}$ and $T_{0}$ defined in Corollary \ref{Co3-1}.
Substituting 
\begin{gather}
\Pi =\left( I-\bar{\Pi}_{0,2}\right) \left[ 
\begin{array}{cc}
-\hat{N}_{0} & \text{ }\hat{M}_{0}%
\end{array}%
\right] -\bar{\Pi}_{0,1}\left[ 
\begin{array}{cc}
X_{0} & \text{ }Y_{0}%
\end{array}%
\right] ,  \label{eq5-27} \\
\left[ 
\begin{array}{cc}
\bar{\Pi}_{0,1} & \text{ }\bar{\Pi}_{0,2}%
\end{array}%
\right] =R_{N}^{-1}\left[ 
\begin{array}{cc}
\Pi _{0,1} & \text{ }\Pi _{0,2}%
\end{array}%
\right] ,  \notag \\
\hat{\Pi}_{0,2}=\left( I-\bar{\Pi}_{0,2}\right) R_{0}
\end{gather}%
into them allow us to express the conditions (\ref{eq5-13a})-(\ref{eq5-13c})
respectively by%
\begin{gather}
\left( I-\hat{\Pi}_{0,2}-\bar{\Pi}_{0,1}\bar{R}_{0}\right) ^{-1}\in \mathcal{%
RH}_{\infty },  \label{eq5-23} \\
\left( I+\hat{M}\left( \hat{\Pi}_{0,2}-\bar{\Pi}_{0,1}\bar{R}_{0}\right)
\right) ^{-1}\in \mathcal{RH}_{\infty },  \label{eq5-24} \\
\left( I+\left( Q\hat{M}-Y\right) \bar{\Pi}_{0,1}T_{0}^{-1}\right) ^{-1}\in 
\mathcal{RH}_{\infty }.  \label{eq5-25}
\end{gather}%
Consequently, the design of the privacy filter (\ref{eq5-11}) is
equivalently formulated as%
\begin{gather}
\sup_{\Pi _{0,1}}\delta ^{\mathcal{L}_{2}}\left( \mathcal{I}_{G},\mathcal{I}%
_{G_{\Delta }}\right) =\sup_{\Pi _{0,1}}\left\Vert \Pi _{0,1}\right\Vert
_{\infty }  \label{eq5-19} \\
\text{s.t. (\ref{eq5-23})-(\ref{eq5-25}).}  \label{eq5-26}
\end{gather}%
It is of interest to notice that $\delta ^{\mathcal{L}_{2}}\left( \mathcal{I}%
_{G},\mathcal{I}_{G_{\Delta }}\right) $ reaches its maximum for $\Pi
_{0,2}=0.$ In that case, $\Pi _{0,1}$ normalizes $\left[ 
\begin{array}{cc}
X_{0} & \text{ }Y_{0}%
\end{array}%
\right] ,$ i.e. 
\begin{equation*}
\Pi _{0,1}\left[ 
\begin{array}{cc}
X_{0} & \text{ }Y_{0}%
\end{array}%
\right] \left[ 
\begin{array}{c}
X_{0}^{\ast } \\ 
Y_{0}^{\ast }%
\end{array}%
\right] \Pi _{0,1}^{\ast }=I.\text{ }
\end{equation*}%
On the other hand, setting $\Pi _{0,2}=0$ leads to, for instance, 
\begin{equation*}
I-\left( I-\bar{\Pi}_{0,2}\right) R_{0}-\bar{\Pi}_{0,1}\bar{R}_{0}=I-R_{0}-%
\bar{\Pi}_{0,1}\bar{R}_{0},
\end{equation*}%
which implies, the condition (\ref{eq5-23}) is probably not satisfied.
Nevertheless, this observation inspires the subsequent algorithm for a
practical sub-optimal solution of the design problem (\ref{eq5-19})-(\ref%
{eq5-26}).

\begin{algorithm}
\textbf{Privacy filter design: a sub-optimal solution}
\begin{enumerate}
\item Select a sufficiently small $0<\varepsilon <<1$ and set%
\begin{equation}
\bar{\Pi}_{0,2}=\varepsilon I\Longrightarrow \Pi _{0,2}=\varepsilon R_{N}
\label{eq5-21}
\end{equation}%
so that the conditions (\ref{eq5-23})-(\ref{eq5-25}) are satisfied;

\item Compute $\hat{\Pi}_{0,1}$ that normalizes $\left[ 
\begin{array}{cc}
X_{0} & \text{ }Y_{0}%
\end{array}%
\right] $ by a spectral factorization, 
\begin{equation*}
\left[ 
\begin{array}{cc}
X_{0} & \text{ }Y_{0}%
\end{array}%
\right] \left[ 
\begin{array}{cc}
X_{0} & \text{ }Y_{0}%
\end{array}%
\right] ^{\ast }=\hat{\Pi}_{0,1}^{-1}\left( \hat{\Pi}_{0,1}^{-1}\right)
^{\ast }
\end{equation*}%
and set 
\begin{equation}
\Pi _{0,1}=\sqrt{(1-\varepsilon ^{2})}\hat{\Pi}_{0,1};  \label{eq5-22}
\end{equation}

\item Set $\Pi $ according to (\ref{eq5-27}).
\end{enumerate}
\end{algorithm}

We now examine the condition $\Pi _{0}\Pi _{0}^{\ast }=I$ and (\ref{eq5-19}%
). It holds 
\begin{gather*}
\Pi _{0}\Pi _{0}^{\ast }=\varepsilon ^{2}R_{N}\left[ 
\begin{array}{cc}
-\hat{N}_{0} & \text{ }\hat{M}_{0}%
\end{array}%
\right] \left[ 
\begin{array}{c}
-\hat{N}_{0}^{\ast } \\ 
\hat{M}_{0}^{\ast }%
\end{array}%
\right] R_{N}^{\ast } \\
+(1-\varepsilon ^{2})\hat{\Pi}_{0,1}\left[ 
\begin{array}{cc}
X_{0} & \text{ }Y_{0}%
\end{array}%
\right] \left[ 
\begin{array}{c}
X_{0}^{\ast } \\ 
Y_{0}^{\ast }%
\end{array}%
\right] \hat{\Pi}_{0,1}^{\ast } \\
+\varepsilon \sqrt{(1-\varepsilon ^{2})}\left( \Xi +\Xi ^{\ast }\right)
=I+\varepsilon \sqrt{(1-\varepsilon ^{2})}\left( \Xi +\Xi ^{\ast }\right) ,
\\
\Xi =\hat{\Pi}_{0,1}\left[ 
\begin{array}{cc}
X_{0} & \text{ }Y_{0}%
\end{array}%
\right] \left[ 
\begin{array}{c}
-\hat{N}_{0}^{\ast } \\ 
\hat{M}_{0}^{\ast }%
\end{array}%
\right] R_{N}^{\ast }.
\end{gather*}%
Considering that 
\begin{equation*}
\left\Vert \left[ 
\begin{array}{cc}
\hat{\Pi}_{0,1}X_{0} & \text{ }\hat{\Pi}_{0,1}Y_{0} \\ 
-\hat{N}_{0} & \text{ }\hat{M}_{0}%
\end{array}%
\right] \right\Vert _{\infty }\leq \sqrt{2},
\end{equation*}%
for sufficiently small $\varepsilon ,\Pi _{0}\Pi _{0}^{\ast }$ can be well
approximated by $\Pi _{0}\Pi _{0}^{\ast }\approx I.$ As a result, we have a
sub-optimal solution%
\begin{equation*}
\delta ^{\mathcal{L}_{2}}\left( \mathcal{I}_{G},\mathcal{I}_{G_{\Delta
}}\right) =\sqrt{(1-\varepsilon ^{2})}\left\Vert \hat{\Pi}_{0,1}\right\Vert
_{\infty }.
\end{equation*}%
Concerning the conditions (\ref{eq5-23})-(\ref{eq5-25}), whether they hold
depends on $\left( \hat{M}_{0},R_{0},\bar{R}_{0},T_{0}\right) ,$ or
equivalently the parameters $\left( F,L,V,W\right) $ of the applied
controller. To illustrate this claim, we present the following example.

\begin{example}
Suppose that $\left( F,L,V,W\right) $ are set equal to%
\begin{gather*}
\left( F,L,T,W\right) =\left( F_{0},L_{0},V_{0},W_{0}\right)  \\
\Longrightarrow \left( M,N\right) =\left( M_{0},N_{0}\right) ,\left( \hat{M},%
\hat{N}\right) =\left( \hat{M}_{0},\hat{N}_{0}\right) .
\end{gather*}%
Here, we would like to mention the interpretation of the normalized SIR and
SKR as an LQ controller and an LS estimator, respectively \cite{Tay1998}. It
follows from Corollary \ref{Co3-1}, Lemma \ref{Le2-1} and (\ref{eq5-21})
that 
\begin{gather*}
\bar{R}_{0}=0,R_{0}=I,T_{0}=I,\hat{M}=\hat{M}_{0},\hat{\Pi}%
_{0,2}=(1+\varepsilon )I\Longrightarrow  \\
\left( I-\left( I-\bar{\Pi}_{0,2}\right) R_{0}-\bar{\Pi}_{0,1}\bar{R}%
_{0}\right) ^{-1}=\varepsilon ^{-1}I\in \mathcal{RH}_{\infty }, \\
I+\hat{M}\left( I-\bar{\Pi}_{0,2}\right) R_{0}-\bar{\Pi}_{0,1}\bar{R}%
_{0}^{-1}=I+\hat{M}_{0}(1-\varepsilon ).
\end{gather*}%
Since $\left\Vert \hat{M}_{0}\right\Vert _{\infty }\leq 1\Longrightarrow
\left\Vert \hat{M}_{0}(1-\varepsilon )\right\Vert _{\infty }<1,$ it holds $%
\left( I+\hat{M}_{0}(1-\varepsilon )\right) ^{-1}\in \mathcal{RH}_{\infty }.$
As a result, the conditions (\ref{eq5-23})-(\ref{eq5-24}) are guaranteed,
which assures the system stability and the existence of the controller $%
Qr_{y}^{p}.$ Concerning the condition (\ref{eq5-25}), it is equivalent to%
\begin{gather}
\left( I-\left( Q\hat{M}-Y\right) \Pi \left[ 
\begin{array}{c}
M_{0} \\ 
N_{0}%
\end{array}%
\right] T_{0}^{-1}\right) ^{-1}  \notag \\
=\left( I+\left( Q\hat{M}_{0}-Y_{0}\right) R_{N}^{-1}\Pi _{0,1}\right)
^{-1}\in \mathcal{RH}_{\infty }.  \label{eq5-28}
\end{gather}%
Thus, when (\ref{eq5-28}) is true, $v$ is arbitrarily selectable. Otherwise, 
$v$ is to be selected properly so that $\bar{v}$ can be realized according
to (\ref{eq5-11a}). 
\end{example}

\section{Conclusions}

In this work, we have leveraged the unified framework of control and
detection to explore analysis, simultaneous detection, fault-tolerant and
resilient control of cyber-physical control systems under attacks and
faults. In a three-step procedure, we have firstly built the
control-theoretic foundation for the subsequent work. Specifically, we have
introduced, among others, the image, kernel and residual subspaces in the
system input-output data space $\left( u,y\right) ,$ gap metric as the
distance between two subspaces, and the Bezout identity-based
transformations between $\left( u,y\right) $ and $\left( r_{u},r_{y}\right) $
as well as the image and residual subspaces, where Theorem \ref{Theo3-1} and
Corollary \ref{Co3-1} function as a basic mathematical tool. A further
result is the PnP structure that builds an essential system configuration
adopted in the proposed fault-tolerant and attack-resilient control schemes.
We have then reviewed the MTD, watermark methods, the auxiliary system-based
detection method, and characterized zero dynamics attacks in the unified
framework. On this basis, the alternative and enhanced forms of these
methods have been proposed.

The major part of our endeavours has been devoted to establishing a
control-theoretic paradigm, in which analysis, simultaneous detection of
faults and attacks, fault-tolerant and attack resilient control of CPCSs are
unified addressed. Concerning system analysis, three essential information
and structural properties have been revealed. They are the existence of (i)
an attack information potential between the plant and control station and
(ii) the duality between the faulty dynamic and attack dynamic, and (iii)
the interpretation of cyber-attacks as uncertainty in the controller. Aiming
at a general and detection method-independent definition of stealthy
attacks, the concept of attack detectability has been introduced, which is
complemented by the concept of actuability of attacks. The latter is
overlooked in the literature, but important for addressing attack design
issues. The proof of the existence conditions of undetectable and
actuable/unactuable attacks have been provided as well. An immediate
application of the aforementioned results is the introduction of image
attacks as a general form of undetectable (stealthy) attacks that provides a
detection method-independent platform for (stealthy) attacks design. The
last of part of the system analysis has been dedicated to system
vulnerability. To this end, the concept of attack stability margin as a
measure of the system vulnerability has been introduced, and a design
procedure of image attacks and the proof of a sufficient condition of the
system vulnerability under image attacks have been provided.

As the last part of our work, simultaneous detection of faults and attacks
as well as fault-tolerant and attack-resilient control have been explored.
Specifically, three simultaneous detection schemes have been presented. In
the PnP structure, a fault-tolerant and attack-resilient control system has
been designed, whose control performance and stability conditions have been
analyzed. Concerning system privacy-preserving, we have proposed a
subspace-based privacy-preserving scheme whose core is a privacy filter
generating an auxiliary signal. For our purpose, the concepts of
system-level privacy and $\mathcal{L}_{2}$-privacy based on the similarity
of two system subspaces and measured by gap metric have been introduced. In
this regard, the $\mathcal{L}_{2}$-privacy-filter design problem has been
formulated and solved in the form of an optimization problem.

At the end of this paper, we would like to highlight three key properties
that have been for the first time revealed and utilized in this work,

\begin{itemize}
\item the attack information potential between the control station and plant,

\item the duality between the faulty dynamic and attack dynamic, and

\item the two-site control and detection architecture for configuring CPCSs.
\end{itemize}

Most of the significant results presented in this paper, including
simultaneous detection of faults and attack, fault-tolerant and
attack-resilient control, and subspace-based privacy-preserving,
vulnerability analysis-aided image attack design schemes, have been achieved
based on these three properties. Note that the last two design schemes are
indeed the consequence of the duality between the faulty and attack
dynamics, although they have not been explicitly introduced in this regard.
We are convinced that these three system properties are of enormous
potentials to facilitate the development of advanced methods towards secure
and safe CPCSs.

%\bibliographystyle{IEEEtran}
%\bibliography{ieeepespdf,ieeepespb,ieeepesp}
% Generated by IEEEtran.bst, version: 1.14 (2015/08/26)

\end{document}